\listfiles
\documentclass[11pt]{article}

\usepackage{latexsym}
\usepackage{fullpage}

% \setlength{\topmargin}{0cm}
% \setlength{\textwidth}{16.4cm}
% \setlength{\textheight}{25.5cm}
%24 orig

% % To center
% \setlength{\oddsidemargin}{0.4cm}
% \setlength{\evensidemargin}{0.4cm}
%

%\def\qed{\rule{0.4em}{1.4ex}}

%\usepackage[nohead,margin=2.3cm,includefoot]{geometry}
\usepackage{amssymb,amsmath,amsfonts,bm}
\usepackage{mathtools}
\usepackage{subfigure}
\usepackage{epic}
\usepackage{eepic}
\usepackage{enumerate}
\usepackage{amsthm}
\usepackage{vinayak}
\usepackage{graphicx}
\usepackage{verbatim}
\usepackage[mathscr]{eucal}
\usepackage{xspace}
\usepackage{stmaryrd}
\usepackage{paralist}
\usepackage[boxed, nokwfunc]{algorithm2e}
 \SetAlgoInsideSkip{smallskip}
%\setalcapskip{9pt}
\SetAlCapSkip{10pt}
 \SetAlFnt{\normalsize\rm}

% Swap the definition of \abs* and \norm*, so that \abs
% and \norm resizes the size of the brackets, and the
% starred version does not.
\makeatletter
\let\oldabs\abs
\def\abs{\@ifstar{\oldabs}{\oldabs*}}
\let\oldnorm\norm
\def\norm{\@ifstar{\oldnorm}{\oldnorm*}}
\makeatother

\title{%A Polynomial-time Algorithm for Computing Skorokhod Distance}
Computing the Skorokhod Distance between Polygonal Traces
%\thanks{This research was funded in part by the Humboldt Foundation,  and by a contract
% from Toyota Motors.}
}
\author{
Rupak Majumdar \qquad
Vinayak S. Prabhu \\
\normalsize MPI-SWS\\
{\tt $\{$rupak,vinayak$\}$@mpi-sws.org}
% \and
% Jyotirmoy Deshmukh\\
% \normalsize Toyota Technical Center\\
% \normalsize {\tt jyotirmoy.deshmukh@tema.toyota.com}
}
 \date{}

\sloppy

\begin{document}
\maketitle

\thispagestyle{empty}

\begin{abstract}
The \emph{Skorokhod distance} is a  natural metric on traces of continuous and
hybrid systems.
For two traces, from $[0,T]$ to values in a metric space $\myO$,
it measures the best match between the traces when allowed continuous bijective timing
distortions.
Formally, it computes the infimum, over all timing distortions, of the
maximum of two components: the first component quantifies
the {\em timing discrepancy} of the timing distortion, and the second quantifies
the mismatch (in the metric space $\myO$) of the values under the timing distortion.
%  
% The Skorokhod distance between two traces $x$ and $y$ from times 
% $[0, T]$ to values in $\reals^n$ 
% % has a value in $[0, \infty)$, and
% is the best match between the traces, while allowing for continuous bijective 
% timing distortions when matching values:
% the value of $x$ at time $t$  can be matched
% by a value of $y$ at time $t'$ where $t'$ is allowed to  be different from $t$.
% Given a norm on $\reals^n$, the distance  incorporates 
% two components: the first component quantifies  the \emph{timing
% discrepancy} of the optimal timing distortion; 
% the second  quantifies the maximal 
% \emph{trace  value discrepancy} $\norm{x(t) - y(t')}$ for the optimal  timing distortion.
Skorokhod distances appear in various fundamental hybrid systems analysis concerns: 
from definitions of  hybrid systems semantics and notions of equivalence,
 to practical problems such as checking the closeness of models or
the quality of simulations.
Despite its popularity and extensive theoretical  use, the \emph{computation} problem for 
the Skorokhod distance
between two finite sampled-time hybrid traces has remained open. 

We address in this work
 the problem of computing the Skorokhod distance between two polygonal traces 
(these traces arise 
when sampled-time traces are completed by linear interpolation between sample points).
We provide the first 
algorithm to compute the exact Skorokhod distance 
when trace values are in $\reals^n$ for the  $L_1$, $L_2$, and $L_{\infty}$ norms.
Our algorithm, based on a reduction to \frechet distances, is fully polynomial-time,
 and incorporates novel polynomial-time procedures 
for a set of geometric primitives in $\reals^n$ over the three norms. 

\end{abstract}

\section{Introduction}

In approximation theories, we aim to quantify the difference between hybrid systems
by defining a metric on traces.
Given finite system traces 
$x: [0,T] \mapsto {\myO}$ mapping the time interval
$[0,T]$ to some metric space ${\myO}$, 
one simple method to obtain a metric
is to take the pointwise trace value difference: the difference between
two traces $x$ and $y$ is then $\sup_{t}\dist\left(x(t), y(t)\right)$ where
$\dist$ is the metric associated with the metric space ${\myO}$.
The restriction that we compare the value of $x$ at time $t$ to the  value of $y$
at the \emph{same}
time $t$ often too restrictive: a trace can have a large distance from its infinitesimally 
time-shifted version.
This motivates the study of the 
\emph{Skorokhod} metric~\cite{skoro}, which
allows for ``wiggle-room'' in both the trace values \emph{and} in the timeline.
The distortion of the timeline is specified by a \emph{retiming} function $\retime$ which
is a continuous bijective strictly increasing function from $\reals_+$ to  $\reals_+$.
Using the retiming function, we obtain the \emph{retimed trace} $x\left(\retime(t)\right)$ from the
original trace $x(t)$.
Intuitively,  in the retimed trace $x\left(\retime(t)\right)$, 
we see exactly the same values as before, in 
exactly the same order,  but the 
time duration between two values might now be different than
the corresponding duration in the original trace.
The amount of distortion for the retiming $\retime$  is given by $\sup_{t\geq 0} \abs{\retime(t)-t}$.
Using retiming functions, the Skorokhod distance between two traces $x$ and $y$ is defined to be
the least value over all possible retimings $\retime$ of:
\[\max\left(\sup_{t\in[0,T]} \abs{\retime(t)-t},\,  \sup_{t\in [0,T]}\dist\big(x\left(\retime(t)\right), y(t)\big)
\right).\]
The Skorokhod distance thus 
incorporates two components: the first component quantifies
the {\em timing discrepancy} of the timing distortion required to ``match'' two traces, 
and the second quantifies the  \emph{value mismatch}  (in the metric space ${\myO}$) 
of the values under the timing distortion.

The Skorokhod metric has been widely used in the semantics and analysis of
continuous, boolean, stochastic, and hybrid systems~\cite{CaspiB02,Broucke98}.
Despite its popularity, the \emph{computation} of the Skorokhod metric has not been studied,
except for some discrete-time cases (where there is a folklore dynamic programming algorithm); even the
exact computability for continuous piecewise-linear  traces (called {\em polygonal traces}), which arise when time-sampled traces are completed by linear interpolation, 
 was open.
We note that even for linear traces, the space of retiming functions is infinite; and that
a linear trace, after retiming, need not remain linear.

\smallskip\noindent\textbf{Our Contributions.}
In this paper, we show a fully polynomial-time algorithm for the computation of the Skorokhod
metric for polygonal traces for $L_1$, $L_2$, and $L_{\infty}$ norms. 
Our algorithm reduces the computation of  Skorokhod distances between continuous traces
in a normed space to computing a related distance, called the \emph{\frechet distance}, 
between curves (in a different normed space).
\frechet distances have been studied extensively in computational 
geometry (see \emph{e.g.}~\cite{BuchinBW08, Chambers2010, MaheshwariSSZ11}).
A celebrated paper by Alt and Godau~\cite{AltG95} gave a sketch of  a polynomial-time
algorithm for polygonal curves in $\reals^2$ in the $L_2$ norm.
We provide a generalization of this algorithm 
parameterized by a set of geometric 
primitives over the underlying metric space of curves.
We  also provide polynomial-time algorithms for these geometric primitives in $\reals^n$ 
for the $L_1$, $L_2$, and $L_{\infty}$ metrics, as well as for two 
derived metrics $L_1^{\skoro}$ and $L_2^{\skoro}$
which are required for computing the Skorokhod distance.
These algorithms  involve techniques from  linear programming (for $L_1$,  $L_{\infty}$
and $L_1^{\skoro}$ norms), and from vector algebra and convex geometry 
(for the $L_2$ and $L_2^{\skoro}$ norms).
Together, we get a fully polynomial time  algorithm for the Skorokhod distance in 
$\reals^n$ 
for the $L_1$, $L_2$, and $L_{\infty}$ metrics.
In addition, our constructions also provide a fully polynomial time algorithm for computing the
\frechet distance between finite polygonal curves in $\reals^n$ for these metrics.
For practical applications where only constant window retimings are of interest
(\emph{i.e.}, where the $k$-th  affine segment of $y$ can only be
matched to the portion of $x$ between the $(k-W)$-th and  $(k+W)$-th
 affine segments, for $W$ a  constant), our algorithm for the 
Skorokhod distance runs in time ${\myO}\left(m\vdot\log(m)\right)$, 
for a constant dimensional space $\reals^n$,
where $m$ is the number of affine line segments
in the polygonal traces.
The corresponding decision problem runs in linear time.

Our treatment in this paper is self-contained -- we do not assume any background
in computational geometry.
We derive, and prove where necessary,  missing results, and generalizations of the
results, that were sketched in~ \cite{AltG95} for $\reals^2$ and
$L_2$,
which are needed in our algorithm for computing the
Skorokhod distances in $\reals^n$ for the $L_1$, $L_2$, and $L_{\infty}$ metrics.
These steps require us to do a careful and detailed analysis of the
the~\cite{AltG95}  algorithm sketch, filling in missing details, and correcting  inaccuracies, in order to generalize to higher dimensions for the various norms.

% \noindent\textbf{Our Contributions.}
% Our main contributions are as follows.
% \begin{compactenum}[$\star$]
% \item 
% We show that the computation of  Skorokhod distances between continuous traces
% in a normed space can be reduced to computing \frechet distances between curves
% (in a different normed space).
% \item We perform a detailed analysis of the  sketch of the \frechet distance 
% algorithm of~\cite{AltG95} for polygonal curves  in $\reals^2$ 
% with the $L_2$ norm, and construct geometric primitives 
% the computation of which  enables us to generalize the  algorithm to
%  higher order normed  spaces with any norm (not necessarily $L_2$).
% \item We present algorithms for computing the
% derived geometric primitives for $\reals^n$ for the  $L_1, L_2$ and $L_{\infty}$ norms;
% and also for  two other norms  $L_1^{\skoro}$ and  $L_2^{\skoro}$.
% Computation of the geometric primitives for the $L_1^{\skoro}$, $L_2^{\skoro}$ and
% $L_{\infty}$ norms
% allows us to compute 
% Skorokhod distances for the $L_1, L_2$ and $L_{\infty}$ norms respectively.
% \item We present a complete algorithm for computing the exact Skorokhod distance
% between two finite polygonal traces in $\reals^n$, for any $n$, 
%  for the  $L_1, L_2$ and $L_{\infty}$ norms.
% Our algorithm is fully polynomial time.
% \item  We also obtain a  fully polynomial time complete procedure
%  for computing the \frechet  distance
% between two finite polygonal curves in $\reals^n$, for any $n$,  for the  
% $L_1, L_2, L_1^{\skoro}, L_2^{\skoro}$ and $L_{\infty}$ norms.
% 
% \end{compactenum}

\smallskip\noindent\textbf{Related Work.}
Metrics between traces are the basis for robustness and abstraction for hybrid systems
\cite{Broucke98,CaspiB02,TabuadaBook}.
More recently, distance metrics have been used to guide test generation \cite{STaliro} and
for conformance testing between different models
of a system \cite{FainekosMEMOCODE,Fainekos14}.

Metrics related to the Skorokhod metric have been studied in the context of timed systems
\cite{ChatterjeeP13, ChatterjeeIM14}; our results are orthogonal.
% \mynote{RM: these distances seem to be further away from the work in this paper.
% Should we omit these related notions? Or, add some discussion in the conclusion
% about related distances?}
% The algorithm for computing the discrete \frechet distance is simpler requiring
% no geometric primitives,  and uses
% dynamic programming~\cite{Wien94computingdiscrete}.
\emph{Dynamic Time Warping}~\cite{DTN, Berndt96} 
is a discrete \emph{sum} measure (it aggregate the discrepancies over the timeline), 
as opposed to the max-measure of
\frechet distances, between  discrete time sequences
and has been used heavily in signal processing and data mining.
Sum measures take the sum of the trace differences after retiming, as opposed to
considering the maximal (retimed) trace difference.
The discrete time  Dynamic Time Warping distance
be computed  using dynamic programming, and
has  approximation algorithms which are efficient~\cite{Salvador2007}.
The continuous analog of the Dynamic Time Warping sum measure between curves is
explored, and an algorithm presented for polygonal curves, 
in~\cite{Efrat_curvematching}.

\smallskip\noindent\textbf{Outline of the Paper.}
In Sections~\ref{section:SkorokhodSetting} and~\ref{section:FrechetSetting} we
introduce the Skorokhod and \frechet metrics, and show how the  computation of the
Skorokhod metric between two continuous traces in a normed space can be
reduced to the computation of the \frechet metric between two corresponding curves
in a related normed space.
In Section~\ref{section:FrechetMain} we present the general 
algorithms for polygonal curves in  $\reals^n$ (for  five  different norms)
  for the \frechet distance \emph{decision problem}; and for
 the computation of the \emph{value} of the  \frechet distance.
The value computation algorithm is parameterized by a set of geometric 
primitives over the underlying metric space of curves.
The computation procedures for these geometric primitives are
obtained in Section~\ref{section:Geometry} for the five norms.
Section~\ref{section:SkorokhodFinal} ties everything together for
the  Skorokhod distance computation problem for the norms $L_1, L_2$, and
$L_{\infty}$.
We conclude with a discussion of the paper in Section~\ref{section:Conclusion}.

\begin{comment}
We wish to quantify the difference between two given time sampled traces.
This difference can be seen to have two components: 
(1)~the difference in  the ``matched''  sampled values: and
(2)~the amount the second trace needs to be ``jittered'' in order to match well 
with the first trace.
As an example, consider two sampled traces consisting of just two time-points:
(1) for the first trace, the first  sample is $x=0$ at time $t_x=0$ and the  second
sample is $x=100$ at time $t_x=10$;
(2) for the second trace, the first  sample is $y=0$ at time $t_y=0$ and the  second
sample is $x=100$ at time $t_y=11$.
Taking the point by point difference between the two traces in sample and hold semantics 
(assuming $x$ and $y$ refer to
the same output), 
we get a difference of $100$ at time $10$ (the first trace has value $100$ while the second 
trace still has value $0$.  
However, if we are allowed to ``jitter'' the second trace,  we jitter it by $1$ time unit, moving 
 the second sample point left by one time unit; and then the traces match exactly.
This combination of matching output values and jittering the time line can be captured using
the \emph{Skorokhod metric}.
\end{comment}

%\section{Setting}

\section{Skorokhod Distances between  Traces}
\label{section:SkorokhodSetting}

We begin by defining the Skorokhod distance between finite traces.
A (finite) \emph{trace} $x: [T_i,T_e] \rightarrow \myO $ is a  mapping from a 
finite time interval $[T_i,T_e]$,
with $0 \leq T_i < T_e$,
 to some observable  metric space  $\myO$ %%%%% where the space  $\myO$ has an
 with the associated metric $\dist_{\myO}$.
In this work we restrict our attention to traces which are continuous
with respect to time.
A \emph{retiming} $\retime: [T_i,T_e] \rightarrow  [T_i',T_e'] $  is an order preserving continuous 
bijective function from  
$[T_i,T_e]$ to $[T_i',T_e']$; %that is $\retime$ is a bijective function
thus if $t<t'$ then $\retime(t) < \retime(t')$.
Let the class of retiming functions from $[T_i,T_e]$ to $ [T_i',T_e'] $ 
be denoted as $\retimeclass_{[T_i,T_e] \rightarrow  [T_i',T_e'] }$, and let 
$\iden$ be the identity retiming.
Intuitively, retiming can be thought of as follows: imagine a stretchable and compressible 
rubber bar; a retiming of the bar gives a configuration of the bar where some parts have been
stretched, and some compressed, without the bar having been broken.
Given a trace  $x: [T_i^x,T_e^x] \rightarrow \myO $, and a retiming
$\retime: [T_i,T_e] \rightarrow  [T_i^x,T_e^x] $; the function
$x\circ \retime$ is another trace from $[T_i,T_e]$ to $\myO$.

%\smallskip\noindent\textbf{Skorokhod Metric.}
\begin{definition}[Skorokhod Metric]
Given a retiming $\retime:  [T_i,T_e] \rightarrow  [T_i',T_e'] $, let $||\retime-\iden||_{\sup}$ be defined as 
\[
||\retime-\iden||_{\sup} =\sup_{t\in [T_i,T_e]}|\retime(t)-t|.\]
Given two traces $x: [T_i^x,T_e^x] \rightarrow \myO $ and $y: [T_i^y,T_e^y] \rightarrow \myO $, 
where $\myO$ is a metric space with the associated 
metric $ \dist_{\myO}$, 
 and a retiming $\retime: [T_i^x,T_e^x] \rightarrow  [T_i^y,T_e^y] $, let
$\norm{x\,-\, y\circ \retime}_{\sup}$  be defined as
\[
\norm{x\,-\, y\circ \retime}_{\sup}
=
\sup_{t\in [T_i^x,T_e^x]} \dist_{\myO}\big(\,  x(t)\ ,\ y\left(\retime(t)\right)\,  \big).\]
The \emph{Skorokhod distance} between the traces  $x()$ and $y()$  is defined to be:
\[
\skoro(x,y) = \inf_{r\in  \retimeclass_{[T_i^x,T_e^x] \rightarrow  [T_i^y,T_e^y]}} 
\max(\norm{\retime-\iden}_{\sup} \, ,\,  \norm{x\,-\, y\circ \retime}_{\sup}).\]
\end{definition}

Intuitively, the  Skorokhod distance 
incorporates two components: the first component quantifies
the {\em timing discrepancy} of the timing distortion required to ``match'' two traces, 
and the second quantifies the  \emph{value mismatch}  (in the metric space $\myO$) 
of the values under the timing distortion.
In the retimed trace $ y\circ \retime$, we see exactly the same values as in $y$, in 
exactly the same order, but the times at which the value are seen can be different.

\begin{remark}
 The two components
of the Skorokhod distance (the retiming, and the value difference components) can
be weighed with different weights -- this simply corresponds to a change of scale.
\end{remark}

\begin{example}[Retimed Traces]
We illustrate retimings and retimed traces in the next example.
Let $x$ be a trace $x: [0,1] \mapsto \reals^2$ defined by
$x(t) = \tuple{10\vdot t, \sqrt{t}}$.
Consider a retiming $r: [0,1] \mapsto [0,1]$ defined by:
\[
\retime(t) =
\begin{cases}
t^ 2& \text{ for } 0\leq t\leq 1/2\\
\frac{3}{2}\vdot t - \frac{1}{2} & \text{ for } 1/2\leq t \leq 1
\end{cases}
\]
The maximum timing distortion of $\retime$, \emph{i.e.}, $\sup_{t\in [0,1]} \abs{\retime(t)-t}$,
is $1/4$.
The retimed trace $x\left(\retime(t)\right)$ is given by
$x\left(\retime(t)\right)  = \tuple{x^{\retime}_1(t),\,  x^{\retime}_1(t)}$, where
 \[
x^{\retime}_1(t) =
\begin{cases}
10\vdot t^2 & \text{ for } 0\leq t\leq 1/2\\
10\vdot\left(\frac{3}{2}\vdot t - \frac{1}{2}\right) & \text{ for } 1/2\leq t \leq 1
\end{cases}
\]
and
\[
x^{\retime}_2(t) = 
\begin{cases}
t & \text{ for } 0\leq t\leq 1/2\\
\sqrt{\frac{3}{2}\vdot t - \frac{1}{2}} & \text{ for } 1/2\leq t \leq 1.
\end{cases}
\]
The original and the retimed traces are depicted in
Figures~\ref{figure:ExampleRetimingOrig} and~\ref{figure:ExampleRetimingRetimed}.
\qed
\end{example}

\begin{figure}[t]
%\vspace{-1em}
\strut\centerline{\setlength{\unitlength}{0.00052493in}
\begingroup\makeatletter\ifx\SetFigFont\undefined%
\gdef\SetFigFont#1#2#3#4#5{%
  \reset@font\fontsize{#1}{#2pt}%
  \fontfamily{#3}\fontseries{#4}\fontshape{#5}%
  \selectfont}%
\fi\endgroup%
{\renewcommand{\dashlinestretch}{30}
\begin{picture}(7362,3108)(0,-10)
\path(465,3057)(465,357)(3165,357)
\path(465,357)(3165,3057)
\path(4650,3057)(4650,357)(7350,357)
\path(4650,357)(4651,360)(4652,365)
	(4655,376)(4659,392)(4665,414)
	(4672,442)(4681,475)(4691,514)
	(4703,557)(4715,602)(4727,649)
	(4740,696)(4753,743)(4766,789)
	(4778,833)(4790,875)(4802,915)
	(4813,952)(4824,988)(4835,1022)
	(4845,1053)(4855,1084)(4865,1112)
	(4875,1140)(4885,1167)(4895,1194)
	(4905,1219)(4915,1245)(4926,1271)
	(4937,1296)(4949,1322)(4961,1348)
	(4973,1374)(4986,1400)(4999,1427)
	(5013,1453)(5028,1480)(5043,1507)
	(5058,1534)(5075,1561)(5091,1587)
	(5108,1614)(5125,1640)(5143,1666)
	(5160,1691)(5178,1716)(5197,1740)
	(5215,1764)(5233,1788)(5252,1810)
	(5271,1832)(5289,1854)(5308,1875)
	(5328,1896)(5348,1917)(5366,1936)
	(5386,1955)(5405,1975)(5426,1994)
	(5447,2014)(5469,2033)(5491,2053)
	(5514,2073)(5538,2094)(5563,2114)
	(5588,2135)(5613,2155)(5639,2176)
	(5666,2196)(5692,2216)(5719,2236)
	(5746,2256)(5773,2275)(5800,2294)
	(5826,2313)(5853,2331)(5879,2349)
	(5905,2366)(5930,2383)(5955,2399)
	(5980,2414)(6004,2429)(6028,2444)
	(6052,2458)(6075,2472)(6100,2487)
	(6125,2501)(6150,2515)(6176,2529)
	(6201,2543)(6227,2557)(6254,2571)
	(6280,2585)(6307,2599)(6334,2613)
	(6362,2626)(6389,2640)(6416,2653)
	(6444,2666)(6471,2679)(6498,2692)
	(6524,2704)(6550,2716)(6576,2728)
	(6601,2739)(6625,2750)(6649,2760)
	(6672,2770)(6695,2780)(6717,2790)
	(6738,2799)(6759,2808)(6780,2817)
	(6804,2827)(6828,2837)(6852,2848)
	(6876,2858)(6901,2868)(6926,2879)
	(6953,2891)(6982,2903)(7011,2915)
	(7043,2928)(7076,2942)(7109,2956)
	(7144,2971)(7179,2985)(7213,3000)
	(7245,3013)(7273,3025)(7298,3035)
	(7318,3044)(7332,3050)(7342,3054)
	(7348,3056)(7350,3057)
\put(420,87){\makebox(0,0)[lb]{\smash{{\SetFigFont{9}{10.8}{\familydefault}{\mddefault}{\updefault}$0$}}}}
\put(1635,87){\makebox(0,0)[lb]{\smash{{\SetFigFont{9}{10.8}{\familydefault}{\mddefault}{\updefault}$t\longrightarrow$}}}}
\put(15,2922){\makebox(0,0)[lb]{\smash{{\SetFigFont{9}{10.8}{\familydefault}{\mddefault}{\updefault}$10$}}}}
\put(60,357){\makebox(0,0)[lb]{\smash{{\SetFigFont{9}{10.8}{\familydefault}{\mddefault}{\updefault}$0$}}}}
\put(2895,87){\makebox(0,0)[lb]{\smash{{\SetFigFont{9}{10.8}{\familydefault}{\mddefault}{\updefault}$1$}}}}
\put(1590,2292){\makebox(0,0)[lb]{\smash{{\SetFigFont{9}{10.8}{\familydefault}{\mddefault}{\updefault}$x_1(t)$}}}}
\put(7215,87){\makebox(0,0)[lb]{\smash{{\SetFigFont{9}{10.8}{\familydefault}{\mddefault}{\updefault}$1$}}}}
\put(4605,87){\makebox(0,0)[lb]{\smash{{\SetFigFont{9}{10.8}{\familydefault}{\mddefault}{\updefault}$0$}}}}
\put(5820,87){\makebox(0,0)[lb]{\smash{{\SetFigFont{9}{10.8}{\familydefault}{\mddefault}{\updefault}$t\longrightarrow$}}}}
\put(4245,357){\makebox(0,0)[lb]{\smash{{\SetFigFont{9}{10.8}{\familydefault}{\mddefault}{\updefault}$0$}}}}
\put(6270,2247){\makebox(0,0)[lb]{\smash{{\SetFigFont{9}{10.8}{\familydefault}{\mddefault}{\updefault}$x_2(t)$}}}}
\put(4200,2922){\makebox(0,0)[lb]{\smash{{\SetFigFont{9}{10.8}{\familydefault}{\mddefault}{\updefault}$1$}}}}
\end{picture}
}}
%\vspace*{-4mm}
 \caption{Original trace $x(t) = \tuple{x_1(t), x_2(t)}$.}
\label{figure:ExampleRetimingOrig}
% \vspace{-1em}
\end{figure}

\begin{figure}[t]
%\vspace{-1em}
\strut\centerline{\setlength{\unitlength}{0.00052493in}
\begingroup\makeatletter\ifx\SetFigFont\undefined%
\gdef\SetFigFont#1#2#3#4#5{%
  \reset@font\fontsize{#1}{#2pt}%
  \fontfamily{#3}\fontseries{#4}\fontshape{#5}%
  \selectfont}%
\fi\endgroup%
{\renewcommand{\dashlinestretch}{30}
\begin{picture}(7407,3108)(0,-10)
\path(465,3057)(465,357)(3165,357)
\path(1815,1032)(3165,3057)
\path(4650,357)(6000,1707)
\path(4650,3057)(4650,357)(7395,357)
\path(465,357)(468,358)(473,360)
	(484,364)(499,370)(520,378)
	(547,388)(577,400)(611,413)
	(646,427)(682,441)(719,455)
	(754,469)(788,482)(820,494)
	(850,506)(878,517)(905,528)
	(930,537)(954,547)(977,556)
	(999,565)(1021,573)(1043,582)
	(1066,591)(1089,601)(1113,610)
	(1137,620)(1162,630)(1186,641)
	(1211,651)(1236,662)(1262,673)
	(1287,684)(1312,695)(1336,705)
	(1360,716)(1383,727)(1406,738)
	(1427,748)(1448,759)(1468,769)
	(1486,779)(1504,788)(1521,798)
	(1538,807)(1557,818)(1575,830)
	(1593,842)(1611,855)(1629,868)
	(1647,883)(1667,899)(1687,916)
	(1708,934)(1729,953)(1750,971)
	(1769,989)(1785,1004)(1798,1016)
	(1807,1025)(1813,1030)(1815,1032)
\path(6000,1707)(6001,1710)(6003,1717)
	(6006,1729)(6010,1746)(6017,1769)
	(6024,1796)(6033,1827)(6042,1858)
	(6052,1890)(6062,1921)(6072,1951)
	(6083,1979)(6093,2006)(6104,2032)
	(6115,2056)(6126,2080)(6138,2103)
	(6151,2126)(6165,2149)(6176,2168)
	(6189,2187)(6201,2206)(6215,2226)
	(6229,2246)(6245,2267)(6261,2288)
	(6278,2310)(6296,2333)(6314,2356)
	(6334,2379)(6354,2402)(6375,2425)
	(6396,2448)(6417,2471)(6439,2494)
	(6461,2516)(6483,2538)(6505,2559)
	(6528,2579)(6550,2599)(6572,2619)
	(6594,2637)(6616,2655)(6638,2673)
	(6660,2689)(6682,2706)(6705,2722)
	(6728,2738)(6753,2754)(6778,2769)
	(6804,2785)(6833,2802)(6863,2818)
	(6894,2836)(6928,2853)(6964,2872)
	(7002,2891)(7041,2911)(7082,2930)
	(7123,2950)(7163,2969)(7201,2988)
	(7237,3004)(7268,3019)(7295,3032)
	(7316,3041)(7332,3049)(7342,3053)
	(7347,3056)(7350,3057)
\put(2895,87){\makebox(0,0)[lb]{\smash{{\SetFigFont{9}{10.8}{\familydefault}{\mddefault}{\updefault}$1$}}}}
\put(7215,87){\makebox(0,0)[lb]{\smash{{\SetFigFont{9}{10.8}{\familydefault}{\mddefault}{\updefault}$1$}}}}
\put(4605,87){\makebox(0,0)[lb]{\smash{{\SetFigFont{9}{10.8}{\familydefault}{\mddefault}{\updefault}$0$}}}}
\put(5820,87){\makebox(0,0)[lb]{\smash{{\SetFigFont{9}{10.8}{\familydefault}{\mddefault}{\updefault}$t\longrightarrow$}}}}
\put(4245,357){\makebox(0,0)[lb]{\smash{{\SetFigFont{9}{10.8}{\familydefault}{\mddefault}{\updefault}$0$}}}}
\put(4200,2922){\makebox(0,0)[lb]{\smash{{\SetFigFont{9}{10.8}{\familydefault}{\mddefault}{\updefault}$1$}}}}
\put(420,87){\makebox(0,0)[lb]{\smash{{\SetFigFont{9}{10.8}{\familydefault}{\mddefault}{\updefault}$0$}}}}
\put(1635,87){\makebox(0,0)[lb]{\smash{{\SetFigFont{9}{10.8}{\familydefault}{\mddefault}{\updefault}$t\longrightarrow$}}}}
\put(15,2922){\makebox(0,0)[lb]{\smash{{\SetFigFont{9}{10.8}{\familydefault}{\mddefault}{\updefault}$10$}}}}
\put(60,357){\makebox(0,0)[lb]{\smash{{\SetFigFont{9}{10.8}{\familydefault}{\mddefault}{\updefault}$0$}}}}
\put(1590,2292){\makebox(0,0)[lb]{\smash{{\SetFigFont{9}{10.8}{\familydefault}{\mddefault}{\updefault}$x^{\retime}_1(t)$}}}}
\put(6585,2292){\makebox(0,0)[lb]{\smash{{\SetFigFont{9}{10.8}{\familydefault}{\mddefault}{\updefault}$x^{\retime}_2(t)$}}}}
\end{picture}
}}
%\vspace*{-4mm}
 \caption{Retimed trace $x\left(\retime(t)\right) = \tuple{x^{\retime}_1(t), x^{\retime}_2(t)}$.}
\label{figure:ExampleRetimingRetimed}
% \vspace{-1em}
\end{figure}

\smallskip\noindent\textbf{Polygonal and Discrete-Time Traces.}
A polygonal trace $x: [T_i, T_e] \mapsto \myO$ where $\myO$ is a vector space 
with the scalar field  $\reals$  is a trace such that
there exists a finite sequence $T_i= t_0 < t_1 < \dots < t_m = T_e$ such that 
the trace segment between $t_k$ and $t_{k+1}$ is affine for all $0\leq k < m$,
\emph{i.e.} for $t_k \leq t \leq t_{k+1}$ we have 
$x(t) = x(t_k) + \frac{t- t_k}{t_{k+1}-t_k}\vdot \left(x( t_{k+1}) - x(t_k)\right)$.
Polygonal traces are obtained when discrete-time traces are completed by
linear interpolation.
%&\smallskip\noindent\textbf{Linear Interpolation Traces.}
A \emph{discrete-time}  trace $\dt{x_d}$  is a finite sequence 
$\tuple{x_0,t_0}, \, \tuple{x_1,t_1}, \, \dots  \tuple{x_m,t_m}$ such that
for all $0\leq k \leq m$, we have $x_k\in \myO$;  and that the $t_k$ sequence is a strictly
increasing sequence of time points with values in $\reals_+$.
If $\myO$ is a vector space with the scalar field  $\reals$ ,
the \emph{linear interpolation trace} $\LI(\dt{x_d}) ()$ of  the discrete time trace
$\dt{x_d}$  is the continuous time polygonal trace $x()$ over $[t_0, t_n]$ defined as
$x(t) =  x_k   +\lambda\vdot (x_{k+1} - x_k)$ for $t_k\leq t \leq t_{k+1}$ where
$t=  t_k   +\lambda\vdot (t_{k+1} - t_k)$ and $0\leq \lambda \leq 1$.

\smallskip\noindent\textbf{Continuous time 
Skorokhod Distance given Discrete-Time  Traces.}
Given two discrete-time traces $\dt{x_d}, \dt{y_d}$ the (continuous time)
Linear Interpolation
Skorokhod distance between them is defined to be the Skorokhod distance between 
the polygonal traces obtained by linear interpolation.
 \[
\skoro_{\LI}(\dt{x_d}, \dt{y_d}) = \skoro\left(\LI(\dt{x_d}) , \, \LI(\dt{y_d}) \right)\]
That is, we assume that given the discrete-time samples of the traces, the actual 
continuous time traces are polygonal traces where the affine segments can be obtained
by linear interpolation of the sampled values; the Skorokhod distance  is computed 
for these polygonal traces.

\begin{remark}
Let $y() =  \LI(\dt{y_d})$ be a linear interpolation trace.
We remark that after retiming, the retimed trace $y \circ \retime$ \emph{need not} 
be piecewise linear.
As an example, let $y: [0,1] \rightarrow [0,100]$ be a linear trace 
defined by $y(t)= 100\vdot t$.
Let the retiming  $\retime: [0,1] \rightarrow  [0,1]$ be defined by $\retime(t) = t^2$.
Then $y\circ \retime$ is the trace $z: [0,1] \rightarrow [0,100]$ where
$z(t) = 100 \vdot t^2$.
Figure~\ref{figure:retiming} illustrates four valid 
retimed traces from the  original trace $y()$ in the center.
\begin{figure}[h]
%\vspace{-1em}
\strut\centerline{\setlength{\unitlength}{0.00043745in}
\begingroup\makeatletter\ifx\SetFigFont\undefined%
\gdef\SetFigFont#1#2#3#4#5{%
  \reset@font\fontsize{#1}{#2pt}%
  \fontfamily{#3}\fontseries{#4}\fontshape{#5}%
  \selectfont}%
\fi\endgroup%
{\renewcommand{\dashlinestretch}{30}
\begin{picture}(5247,5259)(0,-10)
\path(540.000,5007.000)(510.000,5127.000)(480.000,5007.000)
\path(510,5127)(510,492)(5235,492)
\path(5115.000,462.000)(5235.000,492.000)(5115.000,522.000)
\path(510,492)(1455,672)(2130,1032)
	(2940,1167)(3345,1662)(4020,1887)
	(4245,2292)(4470,2382)(5010,4992)
\thicklines
\path(510,492)(5010,4992)
\thinlines
\path(555,537)(1095,2337)(3795,3102)
	(4110,4767)(4965,4992)
\path(510,537)(510,540)(511,545)
	(513,556)(516,572)(519,593)
	(524,620)(529,651)(535,686)
	(541,722)(548,760)(554,797)
	(561,834)(568,869)(574,903)
	(580,935)(586,966)(592,995)
	(598,1022)(604,1049)(611,1075)
	(617,1101)(623,1126)(630,1152)
	(636,1176)(643,1200)(650,1225)
	(658,1250)(666,1276)(674,1302)
	(683,1329)(692,1356)(702,1383)
	(713,1411)(723,1439)(735,1467)
	(746,1494)(759,1521)(771,1548)
	(784,1574)(797,1599)(810,1624)
	(823,1647)(837,1669)(851,1691)
	(865,1711)(879,1730)(893,1748)
	(908,1766)(923,1782)(939,1799)
	(956,1815)(974,1830)(993,1844)
	(1012,1858)(1033,1872)(1054,1885)
	(1076,1897)(1098,1909)(1122,1921)
	(1146,1932)(1170,1942)(1194,1952)
	(1219,1961)(1244,1970)(1268,1979)
	(1293,1987)(1317,1995)(1341,2002)
	(1364,2010)(1387,2017)(1410,2023)
	(1433,2030)(1455,2037)(1477,2044)
	(1500,2051)(1523,2059)(1546,2067)
	(1569,2075)(1593,2084)(1617,2093)
	(1641,2103)(1665,2114)(1690,2125)
	(1714,2137)(1738,2149)(1762,2162)
	(1785,2176)(1808,2190)(1829,2204)
	(1850,2219)(1871,2235)(1890,2250)
	(1908,2267)(1925,2283)(1942,2301)
	(1958,2318)(1973,2337)(1987,2356)
	(2001,2377)(2014,2398)(2027,2420)
	(2040,2444)(2052,2468)(2064,2494)
	(2076,2520)(2088,2547)(2099,2575)
	(2110,2603)(2121,2632)(2131,2660)
	(2141,2689)(2150,2717)(2159,2745)
	(2168,2772)(2176,2798)(2184,2823)
	(2192,2848)(2199,2872)(2206,2894)
	(2213,2916)(2220,2937)(2228,2961)
	(2237,2985)(2245,3008)(2254,3030)
	(2264,3052)(2274,3074)(2284,3095)
	(2295,3116)(2306,3136)(2318,3155)
	(2331,3173)(2344,3191)(2357,3207)
	(2370,3223)(2384,3237)(2399,3251)
	(2413,3264)(2428,3275)(2444,3286)
	(2460,3297)(2477,3307)(2495,3317)
	(2514,3327)(2534,3336)(2555,3345)
	(2576,3355)(2599,3364)(2623,3372)
	(2647,3381)(2671,3389)(2695,3397)
	(2720,3405)(2743,3412)(2767,3419)
	(2789,3426)(2811,3432)(2831,3438)
	(2851,3444)(2870,3449)(2888,3454)
	(2909,3461)(2930,3467)(2950,3474)
	(2970,3481)(2989,3488)(3008,3496)
	(3026,3504)(3044,3512)(3061,3521)
	(3078,3531)(3093,3541)(3107,3551)
	(3121,3562)(3134,3573)(3146,3585)
	(3158,3597)(3168,3609)(3178,3622)
	(3188,3635)(3199,3650)(3209,3666)
	(3219,3682)(3230,3700)(3240,3718)
	(3250,3737)(3259,3756)(3268,3776)
	(3277,3795)(3285,3814)(3292,3833)
	(3299,3852)(3305,3870)(3310,3887)
	(3315,3904)(3320,3924)(3324,3943)
	(3328,3963)(3332,3983)(3336,4003)
	(3340,4024)(3343,4045)(3347,4065)
	(3351,4086)(3355,4105)(3360,4124)
	(3365,4142)(3370,4159)(3376,4175)
	(3383,4190)(3390,4204)(3398,4218)
	(3408,4232)(3418,4246)(3430,4259)
	(3443,4272)(3458,4285)(3474,4298)
	(3491,4310)(3510,4321)(3529,4332)
	(3548,4341)(3569,4350)(3589,4358)
	(3610,4365)(3631,4372)(3653,4377)
	(3670,4381)(3689,4385)(3708,4388)
	(3728,4391)(3749,4394)(3770,4397)
	(3792,4400)(3814,4403)(3837,4405)
	(3859,4408)(3882,4411)(3904,4414)
	(3925,4417)(3946,4420)(3966,4423)
	(3985,4427)(4003,4431)(4019,4435)
	(4035,4440)(4050,4444)(4067,4451)
	(4084,4459)(4100,4468)(4115,4478)
	(4129,4488)(4143,4500)(4157,4512)
	(4169,4526)(4181,4540)(4192,4554)
	(4202,4568)(4212,4583)(4221,4597)
	(4229,4611)(4237,4625)(4245,4639)
	(4253,4654)(4261,4668)(4269,4683)
	(4279,4699)(4289,4715)(4300,4731)
	(4312,4748)(4325,4764)(4339,4780)
	(4355,4795)(4371,4810)(4389,4824)
	(4407,4838)(4427,4850)(4448,4861)
	(4470,4872)(4486,4879)(4503,4885)
	(4521,4891)(4541,4898)(4562,4904)
	(4586,4910)(4611,4916)(4640,4922)
	(4670,4928)(4703,4934)(4739,4941)
	(4776,4948)(4815,4954)(4855,4961)
	(4894,4967)(4930,4973)(4964,4978)
	(4993,4983)(5017,4986)(5034,4989)
	(5046,4991)(5052,4992)(5055,4992)
\path(510,492)(510,494)(511,499)
	(513,508)(515,521)(519,540)
	(524,565)(530,595)(537,631)
	(546,671)(557,715)(568,761)
	(581,809)(595,859)(611,908)
	(627,956)(644,1004)(662,1049)
	(681,1092)(700,1133)(721,1172)
	(743,1208)(766,1242)(791,1274)
	(816,1304)(844,1331)(873,1357)
	(904,1382)(937,1405)(972,1427)
	(1010,1447)(1050,1467)(1081,1481)
	(1113,1494)(1146,1508)(1180,1521)
	(1216,1533)(1253,1546)(1292,1559)
	(1332,1571)(1373,1584)(1415,1597)
	(1459,1609)(1503,1622)(1549,1635)
	(1596,1649)(1644,1662)(1692,1676)
	(1742,1690)(1791,1704)(1842,1719)
	(1893,1734)(1944,1749)(1995,1765)
	(2046,1781)(2097,1798)(2148,1815)
	(2199,1832)(2248,1850)(2298,1869)
	(2346,1888)(2394,1907)(2441,1927)
	(2487,1948)(2531,1969)(2575,1991)
	(2617,2013)(2658,2036)(2698,2060)
	(2737,2084)(2774,2109)(2810,2135)
	(2844,2162)(2877,2189)(2909,2218)
	(2940,2247)(2969,2278)(2998,2309)
	(3025,2342)(3051,2376)(3076,2412)
	(3101,2448)(3125,2486)(3147,2525)
	(3170,2566)(3191,2607)(3212,2650)
	(3232,2693)(3252,2738)(3271,2784)
	(3290,2831)(3308,2878)(3325,2927)
	(3342,2976)(3359,3025)(3375,3075)
	(3391,3125)(3407,3176)(3422,3226)
	(3437,3277)(3452,3327)(3466,3377)
	(3480,3427)(3494,3476)(3508,3524)
	(3522,3572)(3536,3619)(3549,3665)
	(3563,3710)(3577,3755)(3591,3798)
	(3605,3840)(3619,3881)(3633,3921)
	(3648,3960)(3663,3997)(3678,4034)
	(3694,4069)(3710,4104)(3728,4137)
	(3748,4174)(3770,4210)(3792,4245)
	(3816,4279)(3842,4311)(3869,4343)
	(3898,4373)(3928,4403)(3961,4432)
	(3995,4461)(4032,4490)(4071,4518)
	(4113,4546)(4157,4574)(4203,4602)
	(4252,4631)(4303,4659)(4356,4686)
	(4411,4714)(4467,4742)(4525,4769)
	(4582,4795)(4639,4821)(4695,4846)
	(4749,4869)(4800,4890)(4848,4910)
	(4892,4928)(4930,4944)(4963,4957)
	(4991,4968)(5013,4976)(5030,4983)
	(5042,4987)(5049,4990)(5053,4991)(5055,4992)
\put(2310,87){\makebox(0,0)[lb]{\smash{{\SetFigFont{9}{10.8}{\familydefault}{\mddefault}{\updefault}$t$}}}}
\put(105,1122){\makebox(0,0)[lb]{\smash{{\SetFigFont{9}{10.8}{\familydefault}{\mddefault}{\updefault}$y$}}}}
\put(195,447){\makebox(0,0)[lb]{\smash{{\SetFigFont{9}{10.8}{\familydefault}{\mddefault}{\updefault}$0$}}}}
\put(4965,132){\makebox(0,0)[lb]{\smash{{\SetFigFont{9}{10.8}{\familydefault}{\mddefault}{\updefault}$1$}}}}
\put(15,5037){\makebox(0,0)[lb]{\smash{{\SetFigFont{9}{10.8}{\familydefault}{\mddefault}{\updefault}$100$}}}}
\put(375,132){\makebox(0,0)[lb]{\smash{{\SetFigFont{9}{10.8}{\familydefault}{\mddefault}{\updefault}$0$}}}}
\end{picture}
}}
%\vspace*{-4mm}
 \caption{Retimed traces from $y(t) = 100\vdot t$.}
\label{figure:retiming}
% \vspace{-1em}
\end{figure}
\end{remark}

%\subsection{Computing Linear Interpolation Skorokhod Distances}
\section{From Skorokhod to \frechet Distances}
\label{section:FrechetSetting}
We apply the work on \frechet distances~\cite{AltG95} towards computing
Skorokhod distances between polygonal traces.
The work of~\cite{AltG95} deals with polygonal  \emph{curves}. 

\begin{definition}[Curves]
A \emph{curve} is a continuous mapping $f: [a,b] \rightarrow \myV$ where $a,b\in \reals_+$ with
$a<b$, and
${\myV}$ is a vector space with the scalar field $\reals$ .
A \emph{polygonal curve} is a curve $P: [0, m] \rightarrow {\myV}$ with  $m\in \nat$ , such that for
$0\leq i < m$ the segment of $P$ over $[i, i+1] $ is obtained by linear interpolation, i.e.,
$P(i+\lambda) =  (1-\lambda)\vdot P(i)   +\lambda\vdot P( i+1)$ for $0\leq \lambda \leq 1$.

\end{definition}

\noindent Given two points $x.x'$, we denote the affine segment between the two points as
$\myline(x,x')$.

\begin{definition}[\frechet distance]
\label{def:frechet}
Let $f:[a_f, b_f] \rightarrow {\myV}$ and $g:[a_g, b_g] \rightarrow {\myV}$ be curves. 
The \frechet distance between the two curves is defined to be
\[
\fre(f,g) = \inf_{\substack{
\alpha_f: [0,1] \rightarrow [a_f, b_f] \\
\alpha_g: [0,1] \rightarrow [a_g, b_g] }}\quad
\max_{0\leq \theta \leq 1} \norm {f\left(\alpha_f(\theta)\right) - g\left(\alpha_g(\theta)\right)}
\]
 where $\lVert \, \rVert$ is the norm  over ${\myV}$, and where $\alpha_f, \alpha_g$ range over  continuous and strictly increasing bijective functions 
onto $ [a_f, b_f] $ and  $[a_g, b_g] $ respectively.
\end{definition}

Intuitively, the reparameterizations $\alpha_f, \alpha_g$ control the ``speed'' of traversal along the
two curves by two entities. 
The positions of the two entities in the two curves at ``time'' $\theta$  is given by
$\alpha_f(\theta)$ and $\alpha_g(\theta)$ respectively; with the value of the curves
at those positions being
$ f\left(\alpha_f(\theta)\right) $ and  $g\left(\alpha_g(\theta)\right) $ respectively.
The two entities always have a strictly greater than $0$ speed.

Observe that unlike the Skorokhod distance, the \frechet distance does not impose
a penalty for the amount of position mismatch, in particular, almost the whole curve $g$ can be matched to a tiny portion of $f$, with no penalty.
We show that, still,  the Skorokhod distance between continuous traces
can be obtained using the \frechet  distance between two derived curves.
First we obtain a curve given a continuous time trace.

\begin{definition}[Curve of a Trace]
\label{def:CurveFromTrace}
Let $\myO$ be a vector space with the scalar field $\reals$.
Let $x: [T_i^x, T_e^x] \rightarrow \myO$ be %and $y: [T_i^y, T_e^y] \rightarrow O$ be
a continuous trace.
The \emph{curve} of the trace $x()$ is the curve 
$\curve_x: [T_i^x, T_e^x] \mapsto \myO\times [T_i^x, T_e^x]$ 
defined by the following parameterization.
%over the interval $[T_i^x, T_e^x]$ to the set $O\times [T_i^x, T_e^x]$:
For $\rho \in [T_i^x, T_e^x] $ we define
(a)~$\curve_x^{\myO}(\rho) = x(\rho)$; and
(b)~$\curve_x^\mytime(\rho) = \rho$.
The curve $\curve_x$ is given by $\tuple{\curve_x^{\myO}(\rho), \curve_x^{\mytime}(\rho)} $ 
over $[T_i^x, T_e^x]$ with values in  $\myO\times \reals$.
We defined the norm of the space $\myO\times \reals$ as 
$\lVert \tuple{o,t} \rVert = \max(\lVert o\rVert, |t|)$.
We refer to this norm as the $L_{\myO}^{\skoro}$ norm, where
$L_{\myO}$ is the norm of  $\myO$,  and to the space 
$\myO\times \reals$  as the $\myO^{\skoro}$ space.
\end{definition}

We note that $\tuple{\curve_x^{\myO}(\rho), \curve_x^{\mytime}(\rho)} $ is a continuous mapping from
$[T_i^x, T_e^x]$ to  $\myO\times [T_i^x, T_e^x]$ since $x()$ is a continuous trace.
The next lemma shows that  $L_{\myO}^{\skoro}$ is a norm.
Thus, the previous definition is sound.

\begin{lemma}
Let $\myO$ be a vector space with the scalar field $\reals$.
Consider the space  $\myO\times \reals$. 
The function
$L_{\myO}^{\skoro} : \myO\times \reals \mapsto \reals_+$ defined by
 $L_{\myO}^{\skoro}( \tuple{p,t} )= \max(\norm{p}_{\myO}, |t|)$ is a norm on $\myO\times \reals$.
\end{lemma}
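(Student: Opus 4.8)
The plan is to verify the three defining axioms of a norm for $L_{\myO}^{\skoro}$ on the vector space $\myO \times \reals$, taking as the only inputs the fact that $\norm{\cdot}_{\myO}$ is a norm on $\myO$ and that $\abs{\cdot}$ is the usual absolute value (itself a norm) on $\reals$. Recall that addition and scalar multiplication on $\myO \times \reals$ are componentwise, so $\tuple{p,t} + \tuple{p',t'} = \tuple{p+p',\, t+t'}$ and $\lambda \vdot \tuple{p,t} = \tuple{\lambda p,\, \lambda t}$.

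First I would dispatch positive definiteness. Since both $\norm{p}_{\myO} \geq 0$ and $\abs{t} \geq 0$, their maximum is non-negative. For the separation property, $L_{\myO}^{\skoro}(\tuple{p,t}) = 0$ forces both arguments of the maximum to vanish, whence $\norm{p}_{\myO} = 0$ and $\abs{t} = 0$; the former gives $p = 0$ by definiteness of $\norm{\cdot}_{\myO}$ and the latter gives $t = 0$, so $\tuple{p,t}$ is the zero vector. The converse is immediate. Next I would check absolute homogeneity by pulling $\abs{\lambda}$ out of each argument of the maximum: $L_{\myO}^{\skoro}(\lambda \vdot \tuple{p,t}) = \max(\norm{\lambda p}_{\myO},\, \abs{\lambda t}) = \max(\abs{\lambda}\norm{p}_{\myO},\, \abs{\lambda}\abs{t}) = \abs{\lambda}\max(\norm{p}_{\myO},\, \abs{t})$, using homogeneity of $\norm{\cdot}_{\myO}$ in the first coordinate and of $\abs{\cdot}$ in the second.

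The main (though still routine) step is the triangle inequality. I would apply the triangle inequalities of the two component norms coordinatewise and then invoke the elementary fact that $\max(a+c,\, b+d) \leq \max(a,b) + \max(c,d)$ for non-negative reals $a,b,c,d$ (each of $a+c$ and $b+d$ is individually bounded by the right-hand side). Concretely,
\[
\max(\norm{p+p'}_{\myO},\, \abs{t+t'}) \;\leq\; \max\big(\norm{p}_{\myO} + \norm{p'}_{\myO},\ \abs{t} + \abs{t'}\big) \;\leq\; \max(\norm{p}_{\myO},\, \abs{t}) + \max(\norm{p'}_{\myO},\, \abs{t'}),
\]
which is exactly $L_{\myO}^{\skoro}(\tuple{p,t} + \tuple{p',t'}) \leq L_{\myO}^{\skoro}(\tuple{p,t}) + L_{\myO}^{\skoro}(\tuple{p',t'})$.

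No step presents a genuine obstacle, since $L_{\myO}^{\skoro}$ is simply the $L_{\infty}$-combination of two norms; the only point requiring a moment's care is the scalar maximum inequality above, in which all the content of the triangle inequality is concentrated.
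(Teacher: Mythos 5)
Your proof is correct and follows essentially the same route as the paper: homogeneity and definiteness are dispatched directly, and the triangle inequality is obtained from the componentwise triangle inequalities together with the observation that a sum of maxima dominates both $\norm{p}_{\myO}+\norm{p'}_{\myO}$ and $\abs{t}+\abs{t'}$ (your inequality $\max(a+c,b+d)\leq\max(a,b)+\max(c,d)$ is exactly the paper's chain of inequalities read in the opposite direction). The only difference is cosmetic: you spell out the definiteness and homogeneity checks that the paper leaves as ``easily seen.''
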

\begin{proof}
It can be easily seen that 
$L_{\myO}^{\skoro}( \tuple{a\vdot p, a\vdot t}) =
 \abs{a} \vdot L_{\myO}^{\skoro}( \tuple{\vdot p,t} $ for $a\in \reals$, and
$L_{\myO}^{\skoro}( \tuple{p,t} ) = 0$ iff $p=\vec{0}$ and $t=0$, where $\vec{0}$ is the zero vector 
of ${\myO}$.
We now check for the triangle inequality.
\begin{align*}
L_{\myO}^{\skoro}\left( \tuple{p,t_p} \right) +   L_{\myO}^{\skoro}\left( \tuple{q,t_q}\right) 
& = \max\left(\norm{p}_{\myO}, \abs{t_p}\right) +  \max\left(\norm{q}_{\myO}, \abs{t_q}\right) \\
&\geq
\begin{array}{l} 
 \norm{p}_{\myO} + \norm{q}_{\myO}; \text{ and}\\
 \abs{t_p} + \abs{t_q} 
\end{array}
\text{ both.}\\
%\end{align*}
%\begin{align*}
\text{Thus, }
L_{\myO}^{\skoro}\left( \tuple{p,t_p} \right) +   L_{\myO}^{\skoro}\left( \tuple{q,t_q}\right)
& \geq \max\left( 
\norm{p}_{\myO} + \norm{q}_{\myO}\, ,\, 
 \abs{t_p} +  \abs{t_q}
\right)  \\
& \geq\max\left(
 \norm{p+q}_{\myO},  \abs{t_p + t_q}
\right)\\
&= L_{\myO}^{\skoro}\left( \tuple{p+q,t_p+t_q} \right)\\
& = L_{\myO}^{\skoro}\left( \tuple{p,t_p} +  \tuple{q,t_q} \right)
\end{align*}
Thus the triangle inequality holds and this concludes the proof.
%\qed
\end{proof}

The next result shows that the
Skorokhod distance betwee two continuous traces is equal to the
\frechet distance between two related curves in a corresponding normed space.
\begin{proposition}[From Skorokhod to \frechet]
\label{proposition:SkoroToFrechet}
Let ${\myO}$ be a vector space with the scalar field $\reals$.
Let $x: [T_i^x, T_e^x] \rightarrow {\myO}$ and $y: [T_i^y, T_e^y] \rightarrow {\myO}$ be
two  continuous traces.
We have
\[
\skoro(x,y) =\fre(\curve_x, \curve_y)
 \]
where the curves $\curve_x$ and $ \curve_y$ are the curves in the space
${\myO}^{\skoro}$ corresponding to
the traces $x$ and $y$  as defined in Definition~\ref{def:CurveFromTrace}.
\end{proposition}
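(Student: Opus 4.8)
The plan is to prove the two inequalities $\fre(\curve_x,\curve_y)\le\skoro(x,y)$ and $\skoro(x,y)\le\fre(\curve_x,\curve_y)$ separately, in each case by exhibiting an explicit value-preserving correspondence between retimings (on the Skorokhod side) and pairs of reparameterizations (on the \frechet side). The computation driving both directions is identical: since $\curve_x(\rho)=\tuple{x(\rho),\rho}$ and $\curve_y(\sigma)=\tuple{y(\sigma),\sigma}$, the definition of the $L_{\myO}^{\skoro}$ norm gives $\norm{\curve_x(\rho)-\curve_y(\sigma)}=\max\!\big(\norm{x(\rho)-y(\sigma)}_{\myO},\,|\rho-\sigma|\big)$. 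Thus the ``max of two components'' in the Skorokhod definition is literally realized as the single $L_{\myO}^{\skoro}$ norm of the difference of the two curves, once we identify $\rho$ with $t$ and $\sigma$ with $\retime(t)$ and use $\dist_{\myO}(a,b)=\norm{a-b}_{\myO}$. Throughout I would also use the elementary identity $\max(\sup A,\sup B)=\sup\max(A,B)$ to move freely between a max of two suprema and a supremum of a pointwise max.

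For $\fre\le\skoro$, I would fix any retiming $\retime:[T_i^x,T_e^x]\to[T_i^y,T_e^y]$ and build an admissible \frechet pair realizing the same objective. Let $\alpha_f:[0,1]\to[T_i^x,T_e^x]$ be the affine (hence continuous, strictly increasing, bijective) parameterization and set $\alpha_g=\retime\circ\alpha_f$. Since $\retime$ and $\alpha_f$ are continuous strictly increasing bijections, so is $\alpha_g:[0,1]\to[T_i^y,T_e^y]$, so $(\alpha_f,\alpha_g)$ is admissible. Substituting $t=\alpha_f(\theta)$, the \frechet objective for this pair equals $\max\big(\sup_t|\retime(t)-t|,\ \sup_t\norm{x(t)-y(\retime(t))}_{\myO}\big)$, which is exactly the Skorokhod objective for $\retime$. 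Taking the infimum over $\retime$ yields $\fre\le\skoro$.

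For the reverse inequality I would fix admissible reparameterizations $\alpha_f,\alpha_g$ and define the retiming $\retime=\alpha_g\circ\alpha_f^{-1}$. Because $\alpha_f$ is a continuous strictly increasing bijection onto $[T_i^x,T_e^x]$, its inverse is one onto $[0,1]$, and composing with $\alpha_g$ produces a continuous strictly increasing bijection $[T_i^x,T_e^x]\to[T_i^y,T_e^y]$, i.e.\ a legal retiming. Re-indexing by $\theta=\alpha_f^{-1}(t)$, so that $\retime(t)=\alpha_g(\theta)$ and $t=\alpha_f(\theta)$, shows the Skorokhod objective for $\retime$ coincides with the \frechet objective for $(\alpha_f,\alpha_g)$; taking infima gives $\skoro\le\fre$, and the two inequalities together establish the claimed equality.

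I do not expect a serious obstacle; the content is essentially bookkeeping. The points requiring care are purely structural: verifying that each constructed map is genuinely a continuous strictly increasing bijection onto the correct interval (in particular that inverses and compositions of such maps remain in the class), and addressing the apparent asymmetry that the \frechet distance uses two independent reparameterizations whereas a retiming is a single function. This asymmetry is resolved precisely by the composition $\alpha_g\circ\alpha_f^{-1}$, which ``factors out'' the common reparameterization of $\theta$; conversely, in the first direction the freedom to choose $\alpha_f$ as the affine parameterization shows that restricting one of the \frechet parameterizations loses no generality. Finally, since all relevant functions are continuous on compact intervals, the suprema in the Skorokhod definition are attained and match the maxima in the \frechet definition, so no limiting argument is needed.
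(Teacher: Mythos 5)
Your proposal is correct and follows essentially the same route as the paper's proof: the same correspondence $\retime=\alpha_g\circ\alpha_f^{-1}$ in one direction, the affine choice of $\alpha_f$ with $\alpha_g=\retime\circ\alpha_f$ in the other, and the same key computation identifying the $L_{\myO}^{\skoro}$ norm of the curve difference with the pointwise max of the two Skorokhod components, followed by interchanging the suprema.
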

\begin{proof}
We prove $\skoro(x,y)  \leq \fre(\curve_x, \curve_y)$, and that
$\fre(\curve_x, \curve_y) \leq \skoro(x,y)$.
Let $f=\curve_x$ and $g=C_y$.
%We have $a_f = T_i^x$ and $b_f = T_e^x$; and
%$a_g = T_i^x$ and $b_g = T_e^y$.
\begin{enumerate}
\item $\skoro(x,y)  \leq \fre(\curve_x, \curve_y)$.\
Let $\alpha_f, \alpha_g$ be the reparametrizations as in Definition~\ref{def:frechet}.
Consider a retiming $ \retime : [T_i^x, T_e^x] \rightarrow [T_i^y, T_e^y]$ defined as
$\retime(t) = \alpha_g\left(\alpha^{-1}_f\left(t\right)\right)$.
It can be checked that the function as defined is a valid retiming.
Now,
\begin{align*}
 \max(\lVert \retime-\iden \rVert_{\sup} \, ,\,  \lVert x\,-\, y\circ \retime\rVert_{\sup})  & =\  
\max(\lVert\retime-\iden\rVert_{\sup} \,  , \, \max_{ T_i^x \leq t \leq T_e^x} 
\lVert x(t) - y\left(\alpha_g\left(\alpha^{-1}_f\left(t\right)\right)\right) \rVert )\\
& =\ 
\max\left(
\lVert\retime-\iden\rVert_{\sup} \,  ,  \, 
\max_{ 0 \leq \theta \leq 1} 
\left\lVert 
x\left(\alpha_f(\theta)\right) - y\left(\alpha_g\left(\theta\right)\right) 
\right\rVert
\right)\\
& \text{by setting }t=\alpha_f(\theta)  \text{in the last equation.}\\
& = 
\max\left(
\max_{T_i^x \leq t \leq T_e^x}|\alpha_g\left(\alpha^{-1}_f\left(t\right)\right)-t| \,  ,  \, 
\max_{ 0 \leq \theta \leq 1} 
\left\lVert 
x\left(\alpha_f(\theta)\right) - y\left(\alpha_g\left(\theta\right)\right) 
\right\rVert
\right)\\
& = 
\max\left(
\max_{ 0 \leq \theta \leq 1} |\alpha_g(\theta) - \alpha_f(\theta)|\, , \, 
\max_{ 0 \leq \theta \leq 1} 
\left\lVert 
x\left(\alpha_f(\theta)\right) - y\left(\alpha_g\left(\theta\right)\right) 
\right\rVert
\right)\\
& =
\max_{ 0 \leq \theta \leq 1}
\max\left(|\alpha_g(\theta) - \alpha_f(\theta)|\, , \, 
\left\lVert 
x\left(\alpha_f(\theta)\right) - y\left(\alpha_g\left(\theta\right)\right) 
\right\rVert
\right)\\
& \text{by interchanging the order of taking maximums}\\
&\text{ in the previous equation.}\\
& =
\max_{ 0 \leq \theta \leq 1} \lVert \tuple{
x\left(\alpha_f(\theta)\right) - y\left(\alpha_g\left(\theta\right)\right) \, ,\, 
\alpha_f(\theta) - \alpha_g(\theta)
}
\rVert
\\
&=
\max_{0\leq \theta \leq 1} \left\lVert f\left(\alpha_f(\theta)\right) - g\left(\alpha_g(\theta)\right) \right\rVert
\end{align*}
Thus, for every valid choice of $\alpha_f, \alpha_g$, there is a valid retiming
function $\retime$ such that 
$ \max(\lVert \retime-\iden \rVert_{\sup} \, ,\,  \lVert x\,-\, y\circ \retime\rVert_{\sup}) 
\ = \ \max_{0\leq \theta \leq 1} \left\lVert f\left(\alpha_f(\theta)\right) - g\left(\alpha_g(\theta)\right) \right\rVert$.
Hence $\skoro(x,y)  \leq \fre(\curve_x, \curve_y)$.

\item $\skoro(x,y)  \geq \fre(\curve_x, \curve_y)$.\\
Let $ \retime : [T_i^x, T_e^x] \rightarrow [T_i^y, T_e^y]$  be a retiming.
Let $\alpha_f: [0,1] \rightarrow  [T_i^x, T_e^x] $ be defined as
$\alpha_f(\theta) = (1-\theta)\vdot T_i^x + \theta\vdot  T_e^x$.
Let $\alpha_g: [0,1] \rightarrow  [T_i^y, T_e^y] $ be defined as
$\alpha_g(\theta) = 
\retime\big(\left(1-\theta\right)\vdot T_i^x + \theta\vdot  T_e^x\big)$.
Thus, $\alpha_g(\theta) =  \retime(\alpha_f(\theta))$.
Observe that $\alpha_f, \alpha_g$ satisfy the conditions of  Definition~\ref{def:frechet}.
We have,
\begin{align*}
\max_{0\leq \theta \leq 1} \left\lVert f\left(\alpha_f(\theta)\right) - g\left(\alpha_g(\theta)\right) \right\rVert & =
\max_{ 0 \leq \theta \leq 1}
\max\big(|\alpha_f(\theta) - \alpha_g(\theta)|\, , \, 
\left\lVert 
x\left(\alpha_f(\theta)\right) - y\left(\alpha_g\left(\theta\right)\right) 
\right\rVert
\big)\\
& =
\max\left(
\max_{ 0 \leq \theta \leq 1} |\alpha_f(\theta) - \alpha_g(\theta)|\, , \, 
\max_{ 0 \leq \theta \leq 1} 
\left\lVert 
x\left(\alpha_f(\theta)\right) - y\left(\alpha_g\left(\theta\right)\right) 
\right\rVert
\right)\\
& \text{by  interchanging the order of taking maximums}\\
&\text{ in the previous equation.}\\
& =
\max\left(
\max_{T_i^x \leq t \leq T_e^x} |t- r(t) | \, , \, 
\max_{T_i^x \leq t \leq T_e^x}\left\lVert x(t) - y\left(r\left(t\right)\right) \right\rVert
\right)\\
& \text{by setting }t=\alpha_f(\theta)  \text{in the last equation.}\\
& = \max(\lVert \iden - \retime\rVert_{\sup} \, ,\,  \lVert x\,-\, y\circ \retime\rVert_{\sup})
\end{align*}
Thus, for every valid retiming function $\retime$, there 
is a valid choice of $\alpha_f, \alpha_g$ such that
$ \max(\lVert \retime-\iden \rVert_{\sup} \, ,\,  \lVert x\,-\, y\circ \retime\rVert_{\sup}) 
\ = \ \max_{0\leq \theta \leq 1} \left\lVert f\left(\alpha_f(\theta)\right) - g\left(\alpha_g(\theta)\right) \right\rVert$.
Hence $\fre(\curve_x, \curve_y) \leq \skoro(x,y)$.
%\qed
\end{enumerate}
\end{proof}

\section{Computation of the \frechet Distance}
\label{section:FrechetMain}

In this section we  explore in detail   the algorithm sketch of~\cite{AltG95} 
for computing \frechet distances in $\reals^2$ for the $L_2$ norm 
(see also~\cite{Knauer:PhD} and~\cite{Wenk:PhD}).
We derive, and prove where necessary,  missing results, and generalizations of the
results
which are needed in our algorithm for computing the
Skorokhod distances in higher dimensions  for the $L_1$, $L_2$, and $L_{\infty}$ metrics.
% These steps require us to do a careful and detailed analysis of the
% the~\cite{AltG95}  algorithm sketch, filling in missing details, and correcting  inaccuracies, in order to generalize to higher dimensions for the various norms.
%We fill in missing details from the algorithm sketch, and
% derive various quantities required for deciding the decision problem
We first solve for the \emph{decision problem} in $\reals^n$ 
for the different norms $L_1, L_2$ and $L_{\infty}$, and also for
$L_1^{\skoro}, L_2^{\skoro}$ and $L_{\infty}^{\skoro}$ 
(the three new norms are required for solving the Skorokhod distance problem in the 
three standard norms).
We then solve for the \emph{value computation} problem by 
obtaining  a geometric characterization for quantities which are necessary for
computing the value of the \frechet distance.
Using this geometric characterization, the desired quantities required in the
\frechet distance  are computed by solving
geometric problems for the various norms in Section~\ref{section:Geometry}.

\subsection{The Free Space}
This subsection present the notion of \emph{Free Space} from~\cite{AltG95}
and its characterization for $L_1, L_2$, $L_{\infty}$
and also for  $L_1^{\skoro}, L_2^{\skoro}$ and $L_{\infty}^{\skoro}$  norms in the $\reals^n$  space.
The Free Space concept allows us to reduce reasoning about the \frechet 
distance in any dimension to
reasoning about paths in $\reals^2$.

Given a polygonal curve $f: [0, n] \rightarrow {\myV}$, we let
$f_{[i]}$ denote the curve segment between $[i, i+1]$.
Thus, $f_{[i]}: [i, i+1] \rightarrow  {\myV}$ and
$f(\rho) =  f_{[i]}(\rho)$ for $i\leq \rho\leq i+1$.

\begin{definition}[Free Space~\cite{AltG95}]
Given polygonal 
curves $f:[0, m_f] \rightarrow \myV$ and $g:[0, m_g] \rightarrow {\myV}$, and a
real number $\delta \geq 0$, \emph{$\delta$-Free Space} of $f,g$ is the set
\[
\free_{\delta}(f,g) = \set{(\rho_f, \rho_g) \in [0, m_f]\times [0,m_g]\,  \mid
\text{we have } \left\lVert f(\rho_f) - g(\rho_g)\right\rVert_{\myV} \leq \delta}
\]
\end{definition}
The tuples $(\rho_f, \rho_g) $ belonging to $\free_{\delta}(f,g) $
denote the positions in the two curves such that
the difference in the values of the two curves is less than $\delta$.
A pictorial representation of the free space is referred to as the 
\emph{free space diagram}.
The space $[0, m_f]\times [0,m_g]$ can be viewed as consisting of 
$m_f\vdot m_g$  \emph{cells}, with  cell $i,j$ being $[i,i+1]\times [j,j+1]$
for $0\leq i<m_f$, and $0\leq j<m_g$,
Observe that $\free_{\delta}(f,g) $ intersected with cell $i,j$ is just the free space
corresponding to the curve segments  $f_{[i]}, g_{[j]}$;
\emph{i.e} the intersection of the cell $i,j$ with $\free_{\delta}(f,g) $ is equal to
 $\free_{\delta}(f_{[i]}, g_{[j]})$.

\begin{proposition}[\cite{AltG95}]
\label{proposition:FreeSpaceCurve}
Given two polygonal curves $f,g$, we have $\fre(f,g) \leq \delta$
if there is a curve $\alpha: [0,1] \rightarrow  [0, m_f]\times [0,m_g]$
 in $\free_{\delta}(f,g) $ from $(0,0)$ to $(m_f, m_g)$ which is
strictly increasing in both coordinates
\footnote{The corresponding result (stated without proof)  in~\cite{AltG95} 
also includes the  ``only if'' direction.
However,~\cite{AltG95} only requires
\emph{non-decreasing} reparametrizations, as opposed to our formulation
which requires strictly increasing reparametrizations necessary for utilizing
\frechet distances in computing Skorokhod distances.
A careful analysis shows that the ``only if'' direction of the propositon also holds in case of
non-decreasing  reparametrizations, but not for strictly
increasing reparametrizations.
We address this issue in detail later in the paper.}.
\end{proposition}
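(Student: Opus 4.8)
The plan is to prove the stated ``if'' direction by directly exhibiting a single admissible pair of reparametrizations witnessing that the \frechet distance is at most $\delta$, using the given monotone path $\alpha$ as the common source of both. The governing observation is that Definition~\ref{def:frechet} takes an \emph{infimum} over pairs $(\alpha_f,\alpha_g)$, so it suffices to produce one admissible pair for which the pointwise norm never exceeds $\delta$; the path $\alpha$ supplies exactly such a pair through its two coordinate projections.

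Concretely, first I would write $\alpha(\theta) = \tuple{\alpha_f(\theta),\alpha_g(\theta)}$, defining $\alpha_f = \pi_1\circ\alpha$ and $\alpha_g = \pi_2\circ\alpha$ to be the first and second coordinates of $\alpha$. The second step is to check that $\alpha_f$ and $\alpha_g$ meet the requirements of Definition~\ref{def:frechet}: each is continuous, being a coordinate of the continuous curve $\alpha$; each is strictly increasing, which is precisely the hypothesis that $\alpha$ is strictly increasing in both coordinates; and each has the correct endpoints, namely $\alpha_f(0)=0$, $\alpha_f(1)=m_f$, $\alpha_g(0)=0$, $\alpha_g(1)=m_g$, inherited from the requirement that $\alpha$ runs from $(0,0)$ to $(m_f,m_g)$. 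A continuous strictly increasing function on $[0,1]$ with these endpoints is automatically a bijection onto $[0,m_f]$ (resp.\ $[0,m_g]$) by the intermediate value theorem, so $\alpha_f$ and $\alpha_g$ are admissible reparametrizations with the correct domains and codomains.

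The third step uses free-space membership to obtain the uniform bound. For every $\theta\in[0,1]$ the point $\alpha(\theta)=\tuple{\alpha_f(\theta),\alpha_g(\theta)}$ lies in $\free_{\delta}(f,g)$ by hypothesis, which unwinds by the definition of the free space to exactly $\norm{f(\alpha_f(\theta)) - g(\alpha_g(\theta))}\leq\delta$. Taking the maximum over $\theta$ yields $\max_{0\leq\theta\leq 1}\norm{f(\alpha_f(\theta)) - g(\alpha_g(\theta))}\leq\delta$, and since $\fre(f,g)$ is the infimum of this quantity over all admissible pairs, I conclude $\fre(f,g)\leq\delta$.

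I expect no real obstacle here: the entire content is unwinding the two definitions and matching them up, and the only point requiring a moment's care is the bijectivity of the coordinate projections, which must be \emph{derived} from strict monotonicity together with the matching endpoints rather than assumed. I would also state explicitly that this is only the ``if'' direction; as flagged in the accompanying footnote, the converse fails for strictly increasing reparametrizations (a monotone free-space path may be forced to stay locally constant in one coordinate), so no claim about the reverse implication is made at this point.
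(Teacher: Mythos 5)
Your proof is correct and follows essentially the same route as the paper, which (citing~\cite{AltG95}) simply remarks after the proposition that the coordinates of $\alpha$ can be viewed as the reparametrization functions $\alpha_f,\alpha_g$ of Definition~\ref{def:frechet}; your write-up just makes that unwinding explicit, including the one point worth checking (bijectivity of the projections via strict monotonicity, continuity, and the endpoint conditions). Nothing is missing.
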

The curve $\alpha$ can be thought of as the parameterized curve $(\alpha_f, \alpha_f)$, 
with
$\alpha_f:  [0,1] \rightarrow  [0, m_f]$, $\alpha_g:  [0,1] \rightarrow  [0, m_g] $.
These functions $\alpha_f, \alpha_g$ can be viewed as  the reparametrization functions
in Definition~\ref{def:frechet}.
An example of the free space for two polygonal curves is given in
Figure~\ref{figure:FreeSpace}.
The unshaded portion is the free space.
The figure also includes a continuous curve which is 
strictly increasing in both coordinates, from $(0,0)$ to $(m_f, m_g)$.
We now analyze the properties of $\free_{\delta}(f_{[i]}, g_{[j]})$.

\begin{figure}[h]
%\vspace{-1em}
\strut\centerline{\setlength{\unitlength}{0.00034996in}
\begingroup\makeatletter\ifx\SetFigFont\undefined%
\gdef\SetFigFont#1#2#3#4#5{%
  \reset@font\fontsize{#1}{#2pt}%
  \fontfamily{#3}\fontseries{#4}\fontshape{#5}%
  \selectfont}%
\fi\endgroup%
{\renewcommand{\dashlinestretch}{30}
\begin{picture}(8148,6012)(0,-10)
\path(915,4164)(8115,4164)
\path(2715,5964)(2715,564)
\path(4515,5964)(4515,564)
\path(6315,5964)(6315,564)
\texture{55888888 88555555 5522a222 a2555555 55888888 88555555 552a2a2a 2a555555 
	55888888 88555555 55a222a2 22555555 55888888 88555555 552a2a2a 2a555555 
	55888888 88555555 5522a222 a2555555 55888888 88555555 552a2a2a 2a555555 
	55888888 88555555 55a222a2 22555555 55888888 88555555 552a2a2a 2a555555 }
\shade\path(915,5964)(8115,5964)(8115,564)
	(915,564)(915,5964)
\path(915,5964)(8115,5964)(8115,564)
	(915,564)(915,5964)
\path(915,2364)(8115,2364)
\whiten\path(915,1464)(1365,1914)(2265,2364)
	(2715,1914)(2715,1239)(2265,789)
	(1590,564)(915,564)(915,1464)
\path(915,1464)(1365,1914)(2265,2364)
	(2715,1914)(2715,1239)(2265,789)
	(1590,564)(915,564)(915,1464)
\whiten\path(2715,1914)(3840,2364)(4290,2364)
	(4515,1914)(4515,1014)(3390,564)
	(2715,1239)(2715,1914)
\path(2715,1914)(3840,2364)(4290,2364)
	(4515,1914)(4515,1014)(3390,564)
	(2715,1239)(2715,1914)
\whiten\path(4515,1914)(4965,2139)(6090,1464)
	(5415,789)(4515,1014)(4515,1914)
\path(4515,1914)(4965,2139)(6090,1464)
	(5415,789)(4515,1014)(4515,1914)
\whiten\path(4515,3714)(5415,4164)(6315,4164)
	(6315,3039)(5640,2589)(4515,2814)(4515,3714)
\path(4515,3714)(5415,4164)(6315,4164)
	(6315,3039)(5640,2589)(4515,2814)(4515,3714)
\whiten\path(6315,3039)(7215,2589)(8115,3264)
	(7440,4164)(6315,4164)(6315,3039)
\path(6315,3039)(7215,2589)(8115,3264)
	(7440,4164)(6315,4164)(6315,3039)
\whiten\path(2715,3714)(2715,3264)(3840,2364)
	(4290,2364)(4515,2814)(4515,3714)
	(3840,4164)(3165,4164)(2715,3714)
	(2715,3264)(2715,3714)
\path(2715,3714)(2715,3264)(3840,2364)
	(4290,2364)(4515,2814)(4515,3714)
	(3840,4164)(3165,4164)(2715,3714)
	(2715,3264)(2715,3714)
\whiten\path(6315,4839)(7215,5964)(8115,5964)
	(8115,5064)(7440,4164)(6315,4164)(6315,4839)
\path(6315,4839)(7215,5964)(8115,5964)
	(8115,5064)(7440,4164)(6315,4164)(6315,4839)
\whiten\path(5415,4164)(4515,4839)(4515,5964)
	(5865,5964)(6315,4839)(6315,4164)(5415,4164)
\path(5415,4164)(4515,4839)(4515,5964)
	(5865,5964)(6315,4839)(6315,4164)(5415,4164)
\whiten\path(3165,4164)(3615,5964)(4515,5964)
	(4515,4839)(3840,4164)(3165,4164)
\path(3165,4164)(3615,5964)(4515,5964)
	(4515,4839)(3840,4164)(3165,4164)
\thicklines
\path(915,564)(2715,1464)(3840,1689)
	(4065,2364)(4515,3039)(5640,4164)
	(6315,4389)(7440,4614)(8115,5964)
\thinlines
\path(2715,564)(2715,5964)
\path(6315,5964)(6315,564)
\path(915,4164)(8115,4164)
\path(4515,564)(4515,5964)
\put(1140,114){\makebox(0,0)[lb]{\smash{{\SetFigFont{9}{10.8}{\familydefault}{\mddefault}{\updefault}$\rho_f$}}}}
\put(15,789){\makebox(0,0)[lb]{\smash{{\SetFigFont{9}{10.8}{\familydefault}{\mddefault}{\updefault}$\rho_g$}}}}
\end{picture}
}}
%\vspace*{-4mm}
 \caption{The Free Space $\free_{\delta}(f,g)$.}
\label{figure:FreeSpace}
% \vspace{-1em}
\end{figure}

\begin{comment}
\begin{proposition}
Let $x,y$ be traces with values in ${\myO}$, and let 
$\curve_x= ({\myO}_x, \mytime_x)$ and $ \curve_y=  ({\myO}_y, \mytime_y)$ 
be the corresponding curves as in
Definition~\ref{def:CurveFromTrace}.
We have 
$\free_{\delta}(\curve_x, \curve_y) = 
\free_{\delta}({\myO}_x, {\myO}_y) \, \cap\, \free_{\delta}(\mytime_x, \mytime_y)$; with the norm of the
$\mytime$ component being the absolute value of the real number.
\end{proposition}
\begin{proof}
The proof follows from the fact that the ${\myO}$ and the $\reals$-valued components
are combined with a $\max$, that is $\lVert \tuple{o,t} \rVert = \max(\lVert o\rVert, |t|)$.
We have 
\begin{gather*}
(\rho_{\curve_x}, \rho_{\curve_y})  \in \free_{\delta}(\curve_x, \curve_y)\\
\text{iff }
%(\rho_{\curve_x}, \rho_{\curve_y})  \in
 \lVert \, \tuple{{\myO}_x(\rho_x), \mytime_x(\rho_x)} -  \tuple{{\myO}_y(\rho_y), \mytime_y(\rho_y)}
\rVert_{{\myO}^{\skoro}} \leq \delta\\
\text{iff }
\max\left( \lVert \, {\myO}_x(\rho_x) -  {\myO}_y(\rho_y)\rVert_{{\myO}} \, , \, 
|\mytime_x(\rho_x)- \mytime_y(\rho_y)| \right) \leq \delta\\
\text{iff } \lVert \, {\myO}_x(\rho_x) -  {\myO}_y(\rho_y)\rVert_{{\myO}} \leq \delta \text{ and }
|\mytime_x(\rho_x)- \mytime_y(\rho_y)|  \leq \delta
\end{gather*}
The result follows.%\qed
\end{proof}
\end{comment}

\begin{definition}
For $(\rho_f, \rho_g) \in [ i, i+1] \times  [j, j+1]$, we denote
$\rho_f^* = \rho_f-i$, and $\rho_g^* = \rho_g-j$, \emph{i.e.},
we move the origin to $(i,j)$.\qed
\end{definition}

\begin{proposition}[Free Space for Affine Line Segments in $\reals^1$]
\label{prop:FreeSpaceLines}
Let $\mytime_x: [i, i+1] \rightarrow  \reals$ and 
$\mytime_y: [j, j+1] \rightarrow  \reals$ be affine line segments. 
The set $ \free_{\delta}(\mytime_x, \mytime_y)$ for the absolute value norm  is
a portion of
$[i,i+1]\times [j,j+1] $ which lies in the intersection of two half-planes, the boundaries of
which are parallel.
\begin{gather}
\mytime_x(i) -  \mytime_y(j) + \left(\mytime_x(i+1) -  \mytime_x(i) \right) \vdot \rho_x^* +
 \left(\mytime_y(j) -  \mytime_y(j+1) \right)  \vdot \rho_y^*\leq \delta\\
\mytime_x(i) -  \mytime_y(j) + \left(\mytime_x(i+1) -  \mytime_x(i) \right) \vdot \rho_x ^*+
 \left(\mytime_y(j) -  \mytime_y(j+1) \right)    \vdot \rho_y^* \geq -\delta
\end{gather}
\end{proposition}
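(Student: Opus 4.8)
The plan is to substitute the affine parameterizations of $\mytime_x$ and $\mytime_y$ into the defining condition $\bigl|\mytime_x(\rho_x) - \mytime_y(\rho_y)\bigr| \le \delta$ of the free space and read off the two stated inequalities directly. Since $\mytime_x$ is affine over $[i,i+1]$, for $\rho_x \in [i,i+1]$ with $\rho_x^* = \rho_x - i \in [0,1]$ we have $\mytime_x(\rho_x) = \mytime_x(i) + \bigl(\mytime_x(i+1) - \mytime_x(i)\bigr)\rho_x^*$, and analogously $\mytime_y(\rho_y) = \mytime_y(j) + \bigl(\mytime_y(j+1) - \mytime_y(j)\bigr)\rho_y^*$ for $\rho_y^* = \rho_y - j \in [0,1]$.

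First I would form the difference, writing $u$ for it,
\[
u \;=\; \mytime_x(\rho_x) - \mytime_y(\rho_y) \;=\; \bigl(\mytime_x(i) - \mytime_y(j)\bigr) + \bigl(\mytime_x(i+1) - \mytime_x(i)\bigr)\rho_x^* + \bigl(\mytime_y(j) - \mytime_y(j+1)\bigr)\rho_y^* ,
\]
where I have used $-\bigl(\mytime_y(j+1) - \mytime_y(j)\bigr) = \mytime_y(j) - \mytime_y(j+1)$; note that $u$ is an \emph{affine} function of $(\rho_x^*, \rho_y^*)$. The point $(\rho_x, \rho_y)$ lies in $\free_{\delta}(\mytime_x, \mytime_y)$ precisely when it lies in the cell $[i,i+1]\times[j,j+1]$ and $|u| \le \delta$. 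Using the elementary equivalence $|u| \le \delta \iff -\delta \le u \le \delta$, this single condition splits into $u \le \delta$ and $u \ge -\delta$, which are exactly the two inequalities in the statement.

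Finally I would observe that each of $u \le \delta$ and $u \ge -\delta$ is a closed half-plane in the $(\rho_x^*, \rho_y^*)$-plane, and that the two bounding lines $u = \delta$ and $u = -\delta$ share the same linear part, namely the common normal vector $\bigl(\mytime_x(i+1) - \mytime_x(i),\, \mytime_y(j) - \mytime_y(j+1)\bigr)$, so they are parallel. Hence $\free_{\delta}(\mytime_x, \mytime_y)$ is the intersection of this parallel slab with the cell $[i,i+1]\times[j,j+1]$, as claimed.

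There is no genuine obstacle here: the content is entirely the linearity of affine interpolation together with the piecewise-linear unfolding of the absolute value. The only points demanding care are the bookkeeping of signs, so that the coefficient of $\rho_y^*$ comes out as $\mytime_y(j) - \mytime_y(j+1)$ rather than its negation, and remembering to intersect with the cell, so that the free space is a bounded portion of the strip rather than the full (unbounded) region between the two parallel lines.
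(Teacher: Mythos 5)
Your proof is correct and follows essentially the same route as the paper's: substitute the affine parameterizations, split $\abs{u}\leq\delta$ into the two linear inequalities, and note that the bounding lines are parallel. Your explicit remark that the two lines share the common normal $\bigl(\mytime_x(i+1)-\mytime_x(i),\,\mytime_y(j)-\mytime_y(j+1)\bigr)$ is a small improvement over the paper, which asserts parallelism without spelling out the reason.
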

\begin{proof}
We have 
$\mytime_x(i+\rho_x) = (1-\rho_x)\vdot \mytime_x(i) + \rho_x\vdot \mytime_x(i+1)$
for $0\leq \rho_x\leq 1$;
and similarly
$\mytime_y(j+\rho_y) = (1-\rho_y)\vdot \mytime_y(j) + \rho_y\vdot \mytime_y(j+1)$
for $0\leq \rho_y\leq 1$.
Thus, 
\[ 
\mytime_x(i+\rho_x) - \mytime_y(j+\rho_y) =
 \mytime_x(i) -  \mytime_y(j) + \rho_x\vdot\left(\mytime_x(i+1) -  \mytime_x(i) \right) -
 \rho_y\vdot\left(\mytime_y(j+1) -  \mytime_y(j) \right)
\] 
We have $|\mytime_x(i+\rho_x) - \mytime_y(j+\rho_y) |\leq \delta$
iff $\mytime_x(i+\rho_x) - \mytime_y(j+\rho_y) \leq \delta$ and
$\mytime_y(j+\rho_y) -\mytime_x(i+\rho_x) \leq \delta$.
Thus, the points $(i+\rho_x, j+\rho_y)$ which are in the intersection of the 
two half-planes of the proposition constitute the set $ \free_{\delta}(\mytime_x, \mytime_y)$.
% \begin{gather*}
% \mytime_x(i) -  \mytime_y(j) + \rho_x\vdot\left(\mytime_x(i+1) -  \mytime_x(i) \right) -
%  \rho_y\vdot\left(\mytime_y(j+1) -  \mytime_y(j) \right) \leq \delta\\
% \mytime_x(i) -  \mytime_y(j) + \rho_x\vdot\left(\mytime_x(i+1) -  \mytime_x(i) \right) -
%  \rho_y\vdot\left(\mytime_y(j+1) -  \mytime_y(j) \right) \geq -\delta
% \end{gather*}
We conclude the proof by noting that the two lines denoting the half-plane
boundaries are parallel.
%\qed
\end{proof}

\begin{proposition}[Convexity of Free Space]
\label{proposition:FreeSpaceConvex}
Let $f_{[i]}:[i, i+1] \rightarrow \reals^n$ and $g_{[j]}:[j, j+1] \rightarrow \reals^n$ be 
straight line segments of two polygonal curves
Given any $\delta \geq 0$, the set  $\free_{\delta}(f_{[i]}, g_{[j]})$ is convex for any norm.
\end{proposition}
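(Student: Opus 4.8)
The plan is to reduce the statement to two elementary facts: that the vector-valued displacement between the two segments is an \emph{affine} function of the pair of parameters, and that every norm is a \emph{convex} function. Composing a convex function with an affine map yields a convex function, whose sublevel sets are convex; intersecting with the (convex) cell then gives the result. The argument is deliberately norm-agnostic, which matches what the proposition claims (``for any norm''), so no separate treatment of $L_1, L_2, L_\infty$ is required.

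First I would parameterize the two segments in the shifted coordinates $\rho_f^* = \rho_f - i$ and $\rho_g^* = \rho_g - j$ introduced just before the proposition, so that for $\rho_f^*, \rho_g^* \in [0,1]$ we have $f_{[i]}(\rho_f) = f(i) + \rho_f^* \vdot (f(i+1) - f(i))$ and $g_{[j]}(\rho_g) = g(j) + \rho_g^* \vdot (g(j+1) - g(j))$. Subtracting gives
\[
f_{[i]}(\rho_f) - g_{[j]}(\rho_g) = \big(f(i) - g(j)\big) + \rho_f^* \vdot \big(f(i+1) - f(i)\big) - \rho_g^* \vdot \big(g(j+1) - g(j)\big),
\]
which is manifestly an affine map $A : \reals^2 \to \reals^n$ in the variables $(\rho_f^*, \rho_g^*)$ (a constant vector plus a linear combination of the two fixed direction vectors of the segments).

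Next I would invoke convexity of the norm: by the triangle inequality together with positive homogeneity, $\lVert \cdot \rVert$ is a convex function on $\reals^n$, so the composition $\phi(\rho_f^*, \rho_g^*) = \lVert A(\rho_f^*, \rho_g^*) \rVert_{\myV}$ is convex on $\reals^2$, being a convex function precomposed with an affine map. The free space is exactly $\free_{\delta}(f_{[i]}, g_{[j]}) = \set{(\rho_f, \rho_g) \in [i,i+1]\times[j,j+1] \mid \phi(\rho_f^*, \rho_g^*) \leq \delta}$, i.e. the $\delta$-sublevel set of $\phi$ intersected with the rectangular cell. Since sublevel sets of convex functions are convex, the cell $[i,i+1]\times[j,j+1]$ is convex, and the intersection of two convex sets is convex, the free space is convex.

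There is no genuine obstacle here; the argument is entirely structural. The only points requiring a line of justification are the convexity of the norm (immediate from its axioms) and the standard fact that $\psi \circ A$ is convex whenever $\psi$ is convex and $A$ is affine, which follows directly from $A(\lambda u + (1-\lambda) v) = \lambda A(u) + (1-\lambda) A(v)$ and the definition of convexity of $\psi$. If anything, the subtlety is purely presentational: one must be careful to phrase the step over an \emph{arbitrary} norm rather than a specific one, so that the same proof yields all five cases needed later.
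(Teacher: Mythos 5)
Your proposal is correct and takes essentially the same approach as the paper: both proofs rest on the affineness of the two line segments together with the triangle inequality (i.e., convexity) of an arbitrary norm, and both are deliberately norm-agnostic. The difference is purely presentational --- the paper directly checks that a convex combination of two points of $\free_{\delta}(f_{[i]}, g_{[j]})$ stays in the set via a chain of inequalities, whereas you factor the identical computation through the standard facts that a norm composed with an affine map is convex and that sublevel sets of convex functions (intersected with the convex cell) are convex.
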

\begin{proof}
Let $0\leq \lambda\leq 1$, and let $(\rho_f^0, \rho_g^0)\in [i, i+1]\times[j,j+1]$ and
 $(\rho_f^1, \rho_g^1)  \in [i, i+1]\times[j,j+1] $ be in $\free_{\delta}(f, g)$.
Consider $(1-\lambda)\vdot (\rho_f^0, \rho_g^0) + \lambda\vdot (\rho_f^1, \rho_g^1)$.\\
We have $\lVert
f_{[i]}\left( (1-\lambda)\vdot\rho_f^0  +  \lambda\vdot \rho_f^1 \right)\, - \, 
g_{[j]}\left( (1-\lambda)\vdot\rho_g^0  +  \lambda\vdot \rho_g^1 \right)
\rVert = $
\begin{align*}
% \lVert
% f_{[i]}\left( (1-\lambda)\vdot\rho_f^0  +  \lambda\vdot \rho_f^1 \right)\, - \, 
% g_{[j]}\left( (1-\lambda)\vdot\rho_g^0  +  \lambda\vdot \rho_g^1 \right)
% \rVert &=
&  \left\lVert\, 
(1-\lambda)\vdot f_{[i]}( \rho_f^0)  +  \lambda\vdot f_{[i]}(\rho_f^1) \, - \, 
\Big((1-\lambda)\vdot g_{[j]}( \rho_g^0)  +  \lambda\vdot g_{[j]}(\rho_g^1)\Big)  \, 
\right\rVert\\
& \text{ Since } f_{[i]} \text{ and } g_{[j]} \text{ are  line segments containing }\\
& \qquad f_{[i]}( \rho_f^0) \text{ and }  f_{[i]}( \rho_f^1); \text{ and }   
 g_{[j]}( \rho_g^0) \text{ and }  g_{[j]}( \rho_g^1) \text{ respectively.}\\
& =  \left\lVert\, 
(1-\lambda)\vdot \Big(f_{[i]}( \rho_f^0)   - g_{[j]}( \rho_g^0)  \Big) +
\lambda\vdot \Big(f_{[i]}( \rho_f^1)   - g_{[j]}( \rho_g^1)  \Big) 
\right\rVert\\
& \leq (1-\lambda)\vdot  \left\lVert\, f_{[i]}( \rho_f^0)   - g_{[j]}( \rho_g^0) \right\rVert
\, +\, 
\lambda\vdot  \left\lVert\, f_{[i]}( \rho_f^1)   - g_{[j]}( \rho_g^1) \right\rVert\\
& \leq  (1-\lambda)\vdot \delta + \lambda\vdot\delta
\end{align*}
Thus, $(1-\lambda)\vdot (\rho_f^0, \rho_g^0) + \lambda\vdot (\rho_f^1, \rho_g^1)$
belongs to  $\free_{\delta}(f_{[i]}, g_{[j]})$.
%\qed
\end{proof}

\subsection{Characterizing the Free Space for Different Norms}
We give characterizations of 
%$\free_{\delta}({\myO}_x, {\myO}_y)$ 
$\free_{\delta}(f_{[i]}, g_{[j]}$ for affine line segments
$f_{[i]}, g_{[j]}$ for the three
standard norms $L_1, L_2$, and$L_{\infty}$, as well as for
the three other norms  $L_1^{\skoro}, L_{\infty}^{\skoro}$ and $L_2^{\skoro}$.
%For ${\myO} = \reals^n$, we look at the following norms.

Recall that
the $L_1$ norm defines $\lVert (d_1, \dots, d_n) \rVert_{L_1}$ to be
$\sum_{k=1}^n |d_k|$;
the $L_2$ norm defines $\lVert (d_1, \dots, d_n) \rVert_{L_2}$ to be
$\sqrt{\sum_{k=1}^n d_k^2}$; and
the $L_{\infty}$ norm defines $\lVert (d_1, \dots, d_n) \rVert_{L_{\infty}}$ to be
$\max_k |d_k|$.
The  $L_1^{\skoro}, L_{\infty}^{\skoro}$ and $L_2^{\skoro}$ norms are obtained from
these three standard norms as defined in Definition~\ref{def:CurveFromTrace},
thus,  $\lVert\tuple{o,t} \rVert_{\chi^{\skoro}} = \max(\lVert o\rVert_{\chi}, |t|)$
for $\chi\in \set{L_1, L_2, L_{\infty}}$.
If $f$ is a curve $f:  [a_f, b_f] \rightarrow \reals^n$, we denote by $f_k$ 
the $k$-th dimension component
of $f$ for $0\leq k\leq n$.
Note that  $f_k$ is also a curve $f_k: [a_f, b_f] \rightarrow \reals$
In the following we use the following two affine identities for the affine
portions of polygonal curves.
For $(\rho_f, \rho_g) \in [ i, i+1] \times  [j, j+1]$, 
and  $(\rho_f^*, \rho_g^*) = (\rho_f-i, \rho_g-j) $
we have:
\begin{equation}
\label{equation:Affine}
\begin{array}{l}
f(\rho_f^*) = f(i) +\left (f(i+1)-f(i) \right) \vdot \rho_f^*\\ 
g(\rho_g^*) = g(j) +\left (g(j+1)-g(j) \right) \vdot \rho_g^*
\end{array}
\end{equation}

\begin{proposition}[Free Space for $L_{\infty}$]
\label{prop:LinftyGeneral}
Let $f:[a_f, b_f] \rightarrow \reals^n$ and $g:[a_g, b_g] \rightarrow \reals^n$ be curves.
If the space  $\reals^n$ has the $L_{\infty}$ norm, then
given a $\delta \geq 0$ we have 
$\free_{\delta}(f, g) = \cap_{k=1}^m \free_{\delta}(f_k, g_k) $, where $f_k, g_k$ are the 
$k$-th dimension component curves of $f,g$.
\end{proposition}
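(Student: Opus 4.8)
The plan is to prove the set equality by a single chain of logical equivalences, unfolding the definition of the free space together with the defining property of the $L_{\infty}$ norm, namely that it is the coordinatewise maximum. First I would note that both $\free_{\delta}(f,g)$ and $\cap_{k=1}^{n}\free_{\delta}(f_k,g_k)$ are by construction subsets of the same product domain $[a_f,b_f]\times[a_g,b_g]$ (the component curves $f_k,g_k$ share the parameter domains of $f,g$), so it suffices to show that an arbitrary pair $(\rho_f,\rho_g)$ lies in the left-hand set if and only if it lies in the right-hand set.

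The core step is the following chain. Fix $(\rho_f,\rho_g)\in[a_f,b_f]\times[a_g,b_g]$. By the definition of $\free_{\delta}$, we have $(\rho_f,\rho_g)\in\free_{\delta}(f,g)$ iff $\lVert f(\rho_f)-g(\rho_g)\rVert_{L_{\infty}}\leq\delta$. Expanding the $L_{\infty}$ norm, the left-hand side equals $\max_{1\leq k\leq n}\lvert f_k(\rho_f)-g_k(\rho_g)\rvert$, so the condition reads $\max_{1\leq k\leq n}\lvert f_k(\rho_f)-g_k(\rho_g)\rvert\leq\delta$. Since this is a maximum of \emph{finitely many} nonnegative reals, it is at most $\delta$ if and only if each summand is at most $\delta$, i.e.\ iff $\lvert f_k(\rho_f)-g_k(\rho_g)\rvert\leq\delta$ for every $k\in\{1,\dots,n\}$. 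But $\lvert f_k(\rho_f)-g_k(\rho_g)\rvert\leq\delta$ is precisely the membership condition for $(\rho_f,\rho_g)\in\free_{\delta}(f_k,g_k)$, where the one-dimensional component curves $f_k,g_k$ carry the absolute-value norm on $\reals$. Conjoining this over all $k$ yields $(\rho_f,\rho_g)\in\cap_{k=1}^{n}\free_{\delta}(f_k,g_k)$, completing the equivalence and hence the set equality.

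There is no substantive obstacle here: the proof is a direct unfolding of definitions. The only two points that require care are, first, that the component free spaces are taken with respect to the absolute-value norm on $\reals$ rather than the ambient $L_{\infty}$ norm on $\reals^n$ (this is automatic, since $f_k,g_k$ are real-valued), and second, that the finiteness of the dimension $n$ is exactly what licenses the ``$\max\leq\delta$ iff each term $\leq\delta$'' step. I would also remark that the argument never uses piecewise-linearity, so the characterization holds for arbitrary curves $f,g$; specializing to affine segments $f_{[i]},g_{[j]}$ it exhibits $\free_{\delta}(f_{[i]},g_{[j]})$ for the $L_{\infty}$ norm as an intersection of $n$ parallel-boundary slabs, each of the form given in Proposition~\ref{prop:FreeSpaceLines}.
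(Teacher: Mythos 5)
Your proof is correct and is essentially the same as the paper's: both unfold the definition of $\free_{\delta}$, use that the $L_{\infty}$ norm is the coordinatewise maximum, and observe that a finite maximum is at most $\delta$ iff each term is, yielding the intersection of the component free spaces. Your version merely spells out the details (domains, the absolute-value norm on the components, finiteness of $n$) that the paper leaves implicit.
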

\begin{proof}
We have
$(\rho_{f}, \rho_{g})  \in \free_{\delta}(f, g)$ iff
$\max_k |f_k(\rho_f)- g_k(\rho_g)| \leq \delta$.
This holds iff $|f_k(\rho_f)- g_k(\rho_g)| \leq \delta$ for all $0\leq k \leq n$.
The result follows.%\qed
\end{proof}

\begin{proposition}[Polygonal Free Space for $L_{\infty}$]
Let $f:[0, m_f] \rightarrow \reals^n$ and $g:[0, m_g] \rightarrow \reals^n$ be polygonal
curves.
If the space  $\reals^n$ has the $L_{\infty}$ norm, then
given a $\delta \geq 0$ we have 
$\free_{\delta}(f, g) \cap \left( [i, i+1]  \times [j, j+1]\right)$ for $i<m_f$ and
$j<m_g$ to be the intersection of the regions between  $n$ pairs of parallel lines.
Each such region is the intersection of the following two half planes for $1\leq k \leq m$,
and $(\rho_f^*, \rho_g^*) \in [ 0, 1]^2$:
\begin{gather}
f_k(i) -  g_k(j) + \left(f_k(i+1) -  f_k(i) \right) \vdot \rho_x^* +
 \left(g_k(j) -  g_k(j+1) \right)  \vdot \rho_y^*\leq \delta\\
f_k(i) -  g_k(j) + \left(f_k(i+1) -  f_k(i) \right) \vdot \rho_x^*  +
 \left(g_k(j) -  g_k(j+1) \right)    \vdot \rho_y^* \geq -\delta
\end{gather}
\end{proposition}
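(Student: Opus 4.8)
The plan is to assemble the statement directly from the two preceding propositions, using the fact that each coordinate of a polygonal curve is affine on every cell, so that no new geometric argument is needed.

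First I would observe that since $f$ and $g$ are polygonal, for the fixed cell indices $i<m_f$ and $j<m_g$ the restrictions $f_{[i]}$ and $g_{[j]}$ are affine line segments; consequently each scalar component curve $f_k$ restricted to $[i,i+1]$ and $g_k$ restricted to $[j,j+1]$ is an affine line segment in $\reals^1$. This is exactly the content of the affine identities~\eqref{equation:Affine} read coordinatewise, and it is what licenses the application of Proposition~\ref{prop:FreeSpaceLines} later. I would also recall the observation noted earlier that the intersection of the cell with $\free_{\delta}(f,g)$ coincides with $\free_{\delta}(f_{[i]}, g_{[j]})$.

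Next I would apply Proposition~\ref{prop:LinftyGeneral}, which gives $\free_{\delta}(f,g) = \cap_{k=1}^n \free_{\delta}(f_k, g_k)$ under the $L_{\infty}$ norm. Intersecting both sides with the cell $[i,i+1]\times[j,j+1]$ and using the fact that intersection with a fixed set distributes over an arbitrary intersection, I obtain $\free_{\delta}(f,g)\cap\left([i,i+1]\times[j,j+1]\right) = \cap_{k=1}^n\left(\free_{\delta}(f_k,g_k)\cap\left([i,i+1]\times[j,j+1]\right)\right)$. For each dimension $k$, the set $\free_{\delta}(f_k,g_k)\cap\left([i,i+1]\times[j,j+1]\right)$ is precisely the free space of the two scalar affine segments $f_k|_{[i,i+1]}$ and $g_k|_{[j,j+1]}$ for the absolute-value norm. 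I would then invoke Proposition~\ref{prop:FreeSpaceLines} with the substitution $\mytime_x = f_k$, $\mytime_y = g_k$, $\rho_x^* = \rho_f^*$, $\rho_y^* = \rho_g^*$; that proposition states that this set is exactly the region between the two parallel lines given by the stated pair of half-plane inequalities. Taking the intersection over $k = 1, \dots, n$ yields the claimed description as the intersection of $n$ slabs between parallel lines, concluding the proof.

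I do not expect any genuine obstacle here: the result is a routine combination of Proposition~\ref{prop:LinftyGeneral} (which decomposes the $L_{\infty}$ free space coordinatewise) and Proposition~\ref{prop:FreeSpaceLines} (which describes the one-dimensional free space). The only points requiring care are verifying that each coordinate restricted to a cell is genuinely affine, so that Proposition~\ref{prop:FreeSpaceLines} is applicable, and checking that the half-plane inequalities produced by the substitution match those in the statement verbatim (modulo the index range, which should run over the $n$ dimensions rather than over $m$).
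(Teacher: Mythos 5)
Your proposal is correct and follows essentially the same route as the paper, which simply notes that the result follows from the ideas in Proposition~\ref{prop:LinftyGeneral} (coordinatewise decomposition of the $L_{\infty}$ free space) and Proposition~\ref{prop:FreeSpaceLines} (the one-dimensional free space of affine segments). Your write-up is in fact more detailed than the paper's one-line proof, applying those two propositions as black boxes with the explicit substitution and cell intersection, and correctly flags the statement's index typo ($m$ should be $n$).
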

\begin{proof}
The result follows using similar ideas as in the proofs of  Proposition~\ref{prop:LinftyGeneral} and
Proposition~\ref{prop:FreeSpaceLines}.
%\qed
\end{proof}

\begin{proposition}[Polygonal Free Space for $L_{1}$ norm]
\label{proposition:FreeSpaceLOne}
Let $f:[0, m_f] \rightarrow \reals^n$ and $g:[0, m_g] \rightarrow \reals^n$ be polygonal
curves.
If the space  $\reals^n$ has the $L_{\infty}$ norm, then
given a $\delta \geq 0$ we have 
$\free_{\delta}(f, g)  \cap \left( [i, i+1]  \times [j, j+1]\right)$ for $i<m_f$ and
$j<m_g$ to be the  part of a polytope,
formed by the intersection of the following $2^ n$ half-planes $H_{(q_1,\dots, q_n)}$,
which lies inside $[i, i+1]  \times [j, j+1]$:
% \[
% \sum_{k=1}^m (-1)^{q_k} \vdot (f_k(\rho_f)-g_k(\rho_g)) \ \leq \ \delta
%\]
\[
\sum_{k=1}^n (-1)^{q_k} \vdot \Big(
 f_k(i) -  g_k(j) + \left(f_k(i+1) -  f_k(i) \right) \vdot \rho_f^*\ +\ 
 \left(g_k(j) -  g_k(j+1) \right)\vdot \rho_g^*
\Big)\ \leq \ \delta
\]

for $(q_1,\dots, q_n) \in \set{1,2}^n$; or equivalently the region of points 
$(\rho_f, \rho_g) \in [i, i+1]  \times [j, j+1]$ 
 satisfying the
following inequality:
\[
\sum_{k=1}^n \left\lvert \Big(
 f_k(i) -  g_k(j) + \left(f_k(i+1) -  f_k(i) \right) \vdot \rho_f^*\ +\ 
 \left(g_k(j) -  g_k(j+1) \right)\vdot \rho_g^*
\right\rvert \ \leq \delta
\]
% \[
% \sum_{k=1}^m |f_k(\rho_f)-g_k(\rho_g)| \ \leq \ \delta.
%\]
where  $(\rho_f^*, \rho_g^*) = (\rho_f-i, \rho_g-j) $.
\end{proposition}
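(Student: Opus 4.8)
The plan is to work within a single cell $[i,i+1]\times[j,j+1]$ and translate the $L_1$ free-space membership condition into a conjunction of linear inequalities, using the standard fact that an $L_1$ ball is the intersection of the halfspaces cut out by the $2^n$ sign patterns. (Note that the hypothesis in the statement should read ``$L_1$ norm''.) First I would unfold the definitions: by the definition of $\free_\delta$ and of the $L_1$ norm, a point $(\rho_f,\rho_g)\in[i,i+1]\times[j,j+1]$ lies in $\free_\delta(f,g)$ if and only if $\sum_{k=1}^n |f_k(\rho_f)-g_k(\rho_g)|\le\delta$, where $f_k,g_k$ are the coordinate curves, each affine on the cell.

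Next I would substitute the affine identities~\ref{equation:Affine} coordinatewise. With $\rho_f^*=\rho_f-i$ and $\rho_g^*=\rho_g-j$, set
\[
E_k(\rho_f^*,\rho_g^*) = f_k(i)-g_k(j) + \left(f_k(i+1)-f_k(i)\right)\rho_f^* + \left(g_k(j)-g_k(j+1)\right)\rho_g^*,
\]
which is affine in $(\rho_f^*,\rho_g^*)$. Then the membership condition reads exactly $\sum_{k=1}^n |E_k(\rho_f^*,\rho_g^*)|\le\delta$, which is the absolute-value form asserted in the proposition.

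The key step is to linearize the absolute values via the identity
\[
\sum_{k=1}^n |a_k| \;=\; \max_{s\in\{+1,-1\}^n} \sum_{k=1}^n s_k\, a_k ,
\]
so that $\sum_k |E_k|\le\delta$ holds if and only if $\sum_k s_k E_k\le\delta$ for every sign vector $s$. Writing $s_k=(-1)^{q_k}$ with $q_k\in\{1,2\}$ yields precisely the $2^n$ inequalities defining the half-planes $H_{(q_1,\dots,q_n)}$. Each is affine in $(\rho_f,\rho_g)$ and hence bounds a genuine half-plane in $\reals^2$; their common intersection, clipped to the cell, is the claimed polytope, in agreement with the convexity already established in Proposition~\ref{proposition:FreeSpaceConvex}.

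The only point requiring care --- which I would flag as the crux rather than a genuine obstacle --- is the identity $\sum_k|a_k|=\max_s\sum_k s_k a_k$. This is elementary: for a fixed point the choice $s_k=\mathrm{sign}(E_k)$ attains the value $\sum_k|E_k|$, while no sign vector can exceed it, so the $2^n$ linear constraints are jointly equivalent to the single $L_1$ constraint. Everything else is routine substitution and bookkeeping.
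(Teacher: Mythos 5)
Your proposal is correct and follows essentially the same route as the paper's own proof: both reduce membership in $\free_{\delta}(f,g)$ within a cell to the $L_1$ condition $\sum_k |E_k| \leq \delta$ via the affine identities, and both linearize it by observing that the sum of absolute values is the maximum over the $2^n$ sign patterns (the paper phrases this with a sign function $q^*(d_k)$ and the inequality $\sum_k (-1)^{q_k} d_k \leq \sum_k q^*(d_k)\, d_k$ with equality for one pattern, which is exactly your $\max_{s}$ identity). Your observation that the statement's ``$L_{\infty}$ norm'' is a typo for ``$L_1$ norm'' is also right.
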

\begin{proof}
We have 
\[
(\rho_{f}, \rho_{g})  \in \free_{\delta}(f, g) \text{ iff }
\sum_{k=1}^n |f_k(\rho_f)-g_k(\rho_g)| \ \leq \ \delta.\]
Let $f_k(\rho_f)-g_k(\rho_g) = d_k$.
Let $q^*(d_k) = 1$ if $d_k \geq 0$ and $-1$ otherwise.
Then we have 
\begin{equation}
\label{proof:LoneA}
\sum_{k=1}^n |d_k| \leq \delta \text{ iff }
\sum_{k=1}^n q^*(d_k)\vdot d_k \leq \delta. 
\end{equation}
Now observe the following. 
For any $(q_1,\dots, q_n) \in \set{1,2}^n$, we have
\begin{equation}
\label{proof:LoneB}
\sum_{k=1}^n (-1)^{q_k} \vdot  d_k\  \leq\  \sum_{k=1}^n q^*(d_k)\vdot d_k.
\end{equation}
This is because for any $q_k\in\set{1,2}$, we have 
$(-1)^{q_k} \vdot  d_k \leq  q^*(d_k)\vdot d_k $.
Moreover, we have that there exists a $(q_1,\dots, q_n) \in \set{1,2}^n$ such that
an exact inequality holds in
Equation~\ref{proof:LoneB}.
Using these facts, we have that Equation~\ref{proof:LoneA} holds iff
 the following $2^ n$ equations hold, one for each choice of 
$(q_1,\dots, q_n) \in \set{1,2}^n$:
\[
\sum_{k=1}^n (-1)^{q_k} \vdot (f_k(\rho_f)-g_k(\rho_g)) \ \leq \ \delta
\]
The result follows using Equations~\ref{equation:Affine}
%\qed
\end{proof}

\begin{proposition}[Polygonal Free Space for $L_{2}$]
\label{proposition:Ltwo}
Let $f:[0, m_f] \rightarrow \reals^n$ and $g:[0, m_g] \rightarrow \reals^n$ be polygonal
curves.
If the space  $\reals^n$ has the $L_{2}$ norm, then
given a $\delta \geq 0$ we have the boundary of the the free space
$\free_{\delta}(f, g) \cap \left( [i, i+1]  \times [j, j+1]\right)$ for $i<m_f$ and
$j<m_g$ to be the intersection of an ellipse, or a parabola, with $ [i, i+1]  \times [j, j+1]$.
\end{proposition}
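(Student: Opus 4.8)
The plan is to fix a single cell $[i,i+1]\times[j,j+1]$ and to write the squared $L_2$ distance between the two affine segments as an explicit quadratic polynomial in the local coordinates $(\rho_f^*, \rho_g^*)$. Using the affine identities of Equation~\ref{equation:Affine}, I set $u = f(i+1)-f(i)$, $v = g(j+1)-g(j)$, and $w = f(i)-g(j)$, all vectors in $\reals^n$. Then within the cell $f(\rho_f) - g(\rho_g) = w + u\vdot \rho_f^* - v\vdot \rho_g^*$, so the membership condition $\norm{f(\rho_f)-g(\rho_g)}_{L_2}\leq \delta$ is equivalent to $\norm{w + u\vdot\rho_f^* - v\vdot\rho_g^*}_{L_2}^2 \leq \delta^2$.

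Next I would expand the squared norm using the inner product, obtaining a quadratic $Q(\rho_f^*,\rho_g^*) = \norm{u}^2 (\rho_f^*)^2 - 2\langle u,v\rangle\,\rho_f^*\rho_g^* + \norm{v}^2(\rho_g^*)^2 + 2\langle w,u\rangle\,\rho_f^* - 2\langle w,v\rangle\,\rho_g^* + \norm{w}^2$, whose homogeneous second-degree part has the symmetric matrix $A = \begin{pmatrix} \norm{u}^2 & -\langle u,v\rangle \\ -\langle u,v\rangle & \norm{v}^2\end{pmatrix}$. The free space in the cell is the sublevel set $\set{Q\leq\delta^2}$, so its boundary is the part of the level curve $Q=\delta^2$ lying in the cell, which is a conic section whose type is governed by the sign of $\det A = \norm{u}^2\norm{v}^2 - \langle u,v\rangle^2$.

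The key observation is the Cauchy--Schwarz inequality, which gives $\det A \geq 0$ with equality iff $u$ and $v$ are linearly dependent, i.e., the segments $f_{[i]}$ and $g_{[j]}$ are parallel. If the segments are not parallel then $\det A > 0$ and, since $\norm{u}^2 > 0$, the form $A$ is positive definite; hence $Q$ is a positive-definite quadratic and its level set $Q=\delta^2$ is an ellipse. If the segments are parallel then $\det A = 0$, the quadratic form degenerates to rank one, and I would substitute $v = \mu\vdot u$ and rewrite $Q$ as a quadratic in the single variable $\rho_f^* - \mu\vdot\rho_g^*$; this exhibits the level set as a pair of parallel lines, the degenerate parabola, consistent with Proposition~\ref{prop:FreeSpaceLines} for $\reals^1$. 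Intersecting the resulting conic with the cell $[i, i+1]\times[j, j+1]$ then yields the claimed boundary.

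The main obstacle I anticipate is the degenerate case: verifying that the rank-one quadratic really collapses to parallel lines rather than producing a genuine parabola. This follows because, when $v=\mu\vdot u$, the linear coefficients $(2\langle w,u\rangle,\, -2\langle w,v\rangle)$ are proportional to $(1,-\mu)$, which is orthogonal to the null vector $(\mu,1)$ of $A$; thus no parabolic term survives along the degenerate direction. I would also record the benign sub-cases (an empty region, a single point, or a single line) that can arise depending on $\delta$, $\norm{w}$, and the roots of the one-variable quadratic, so that the characterization of the curved boundary as an ellipse or a (degenerate) parabola is complete.
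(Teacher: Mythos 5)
Your proposal is correct and follows essentially the same route as the paper's proof: expand the squared $L_2$ distance into a quadratic in the local cell coordinates $(\rho_f^*,\rho_g^*)$ and classify the resulting conic by applying the Cauchy--Schwarz inequality to its discriminant. Your handling of the degenerate case is in fact slightly sharper than the paper's, which simply calls it a ``parabola'': by observing that the linear part of $Q$ is orthogonal to the null direction of the rank-one quadratic form, you show the boundary there is a pair of parallel lines (a degenerate parabola), consistent with Proposition~\ref{prop:FreeSpaceLines}.
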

\begin{proof}
We have $(\rho_{f}, \rho_{g})  \in \free_{\delta}(f, g) $ iff
\[
\sum_{k=1}^m \left( f(\rho_f) - g(\rho_g) \right)^2 \leq \delta^2
\]
Using Equations~\ref{equation:Affine}, the above is equivalent to
\[
\sum_{k=1}^n \Big(
 f_k(i) -  g_k(j) + \left(f_k(i+1) -  f_k(i) \right) \vdot \rho_f^*\ +\ 
 \left(g_k(j) -  g_k(j+1) \right)\vdot \rho_g^*
\Big)^2
\leq \delta^2
\]

Let $d_k^0 =  f_k(i) -  g_k(j)$; $d_k^f= f_k(i+1) -  f_k(i)$;
and $d_k^g=g_k(j) -  g_k(j+1)$.
The previous equation can then be written as
\[
\sum_{k=1}^n \Big(
d_k^0 + d_k^f\vdot  \rho_f^* + d_k^g\vdot  \rho_g^*
\Big)^2
\leq \delta^2
\]
Expanding the above, we get
\[
\sum_{k=1}^n \Big(
(d_k^0)^2 + (d_k^f)^2\vdot  (\rho_f^*)^2 + (d_k^g)^2\vdot  (\rho_g^*)^2 +
2\vdot\left(
d_k^0 \vdot d_k^f\vdot  \rho_f^*\, + \, 
d_k^f\vdot   d_k^g\vdot  \rho_f^*\vdot \rho_g^* \, +\, 
d_k^0 \vdot d_k^g\vdot  \rho_g^*
\right) 
\Big)
\ \leq \delta^2
\]
Expanding the summation and rearranging, we get
\begin{multline}
\label{Equation:Conic}
\left(\sum_{k=1}^n(d_k^f)^2\right)\vdot  (\rho_f^*)^2 \, + 
\left(\sum_{k=1}^n(d_k^g)^2\right) \vdot   (\rho_g^*)^2\,  +
\left(\sum_{k=1}^n 2\vdot d_k^f\vdot   d_k^g\right) \vdot  \rho_f^*\vdot \rho_g^*\,  +
\left(\sum_{k=1}^n 2\vdot d_k^0 \vdot d_k^f\right)\vdot  \rho_f^* \, +\\
% \qquad  \qquad \qquad  \qquad  \qquad \qquad 
\hspace{9cm}
\left(\sum_{k=1}^n 2\vdot d_k^0 \vdot d_k^g\right)\vdot  \rho_g^* \, +
 \sum_{k=1}^n (d_k^0)^2 -  \delta^2 
\quad \leq\quad 0
\end{multline}

Recall that the type of the conic $Ax^2 + B xy + C^2y + Fx + Gy +H =0$ depends
on the sign of $B^2-4AC$ (see \emph{e.g.} \cite{BrannanBook}).
This difference in our case is
\[
\left(\sum_{k=1}^n 2\vdot d_k^f\vdot   d_k^g\right)^2 \, -\, 
4\vdot \left(\sum_{k=1}^n(d_k^f)^2\right) \vdot \left(\sum_{k=1}^n(d_k^g)^2\right)
\]
To simplify the notation, we let $d_k^f = x_k$, and
$d_k^g = y_k$. 
The above term is then, dropping the factor of $4$,
\[
\left(\sum_{k=1}^n x_k\vdot y_k \right)^2 \, -\, 
 \left(\sum_{k=1}^n x_k^2\right)\vdot  \left(\sum_{k=1}^n y_k^2\right)
\]
This term is $\leq 0$ by the Cauchy-Schwartz inequality.
Thus, the boundary of the free space inside a cell is either an ellipse, or a parabola
(it is a parabola iff there is  constant $P$ such that for all $k$  we have
 $d_k^g = P\vdot d_k^f $).
The equation of the boundary inside the cell is Equation~\ref{Equation:Conic}.
% Expanding, we get,
% \[
% \sum_{k=1}^m x_k^2\vdot y_k^2\, +\, 
% 2\vdot \sum_{1\leq i <j\leq m} x_i\vdot y_i\vdot x_j\vdot y_j\, -\, 
% \left( \sum_{k=1}^m x_k^2\vdot y_k^2 \, +\, 
% \sum_{1\leq i, j\leq m;\,  i\neq j} x_i^2\vdot  y_j^2 \right)
% \]
% Simplifying,
% \[
% -\left(\sum_{1\leq i, j\leq m;\,  i\neq j} x_i^2\vdot  y_j^2 -
% \sum_{1\leq i, j\leq m;\,  i\neq j} x_i\vdot y_i\vdot x_j\vdot y_j
% \right)
% \]
%\qed
\end{proof}

\begin{proposition}[Free Space for $L_1^{\skoro}, L_2^{\skoro}, L_{\infty}^{\skoro}$ norms]
\label{prop:SkoroNormsGeneral}
Let $f:[a_f, b_f] \rightarrow \reals^{n}$ and  $f_{n+1}:[a_f, b_f] \rightarrow \reals$, and
$g:[a_g, b_g] \rightarrow \reals^{n}$ and $g:[a_g, b_g] \rightarrow \reals$ 
 be  curves.
Note that $\tuple{f,f_{n+1}}$ is a curve from $[a_f, b_f] $ to $\reals^{n}\times\reals$.
If the space  $\reals^n\times \reals$ has the norm  ${\chi}^{\skoro}$ for 
${\chi}\in \set{L_1, L_2, L_{\infty}}$, then
given a $\delta \geq 0$ we have 
\[
\free_{\delta}^{{\chi}^{\skoro}}\left(\tuple{f,f_{n+1}}\, , \,\tuple{g,g_{n+1}}\right)
\ = \ \free_{\delta}^{\chi}(f, g) \, \cap\, \free^{L_1}_{\delta}(f_{n+1}, g_{n+1})
\]
where
$\free_{\delta}^{\chi}()$ denote the free space for norm ${\chi}$.
\end{proposition}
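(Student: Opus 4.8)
The plan is to unfold the definition of the free space and then invoke the defining property of the $\chi^{\skoro}$ norm as a maximum over the spatial and temporal components. Fix $\chi \in \set{L_1, L_2, L_{\infty}}$ and an arbitrary parameter pair $(\rho_f, \rho_g) \in [a_f, b_f] \times [a_g, b_g]$. By the norm definition in Definition~\ref{def:CurveFromTrace} we have $\norm{\tuple{o,t}}_{\chi^{\skoro}} = \max(\norm{o}_{\chi}, \abs{t})$, so, using that subtraction in the product space $\reals^n \times \reals$ acts componentwise, the displacement between the two combined curves at $(\rho_f, \rho_g)$ rewrites as
\[
\norm{\tuple{f(\rho_f), f_{n+1}(\rho_f)} - \tuple{g(\rho_g), g_{n+1}(\rho_g)}}_{\chi^{\skoro}}
= \max\Big(\norm{f(\rho_f) - g(\rho_g)}_{\chi},\ \abs{f_{n+1}(\rho_f) - g_{n+1}(\rho_g)}\Big).
\]
This first step is a purely mechanical rewriting that isolates the spatial part (measured in $\chi$) from the scalar temporal part (measured in absolute value).

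Next I would apply the elementary observation that for nonnegative reals $a, b$ one has $\max(a,b) \leq \delta$ if and only if $a \leq \delta$ \emph{and} $b \leq \delta$. Applied to the right-hand side above, this shows that $(\rho_f, \rho_g)$ lies in $\free_{\delta}^{\chi^{\skoro}}(\tuple{f, f_{n+1}}, \tuple{g, g_{n+1}})$ precisely when the two conditions $\norm{f(\rho_f) - g(\rho_g)}_{\chi} \leq \delta$ and $\abs{f_{n+1}(\rho_f) - g_{n+1}(\rho_g)} \leq \delta$ hold simultaneously. By definition, the first condition is exactly membership of $(\rho_f, \rho_g)$ in $\free_{\delta}^{\chi}(f, g)$, and the second is exactly membership in $\free_{\delta}^{L_1}(f_{n+1}, g_{n+1})$, where I use that on the one-dimensional space $\reals$ the $L_1$ norm coincides with the absolute value. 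Since membership in the intersection is by definition the conjunction of the two memberships, this yields the claimed set equality.

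There is no genuine obstacle in this argument: all the content sits in the fact that the $\chi^{\skoro}$ norm couples the spatial and temporal coordinates through a $\max$, which decouples cleanly under a threshold test of the form $\leq \delta$. The only point demanding a moment's care is the identification of the temporal constraint $\abs{f_{n+1}(\rho_f) - g_{n+1}(\rho_g)} \leq \delta$ with the free space $\free_{\delta}^{L_1}(f_{n+1}, g_{n+1})$ rather than with some other $L_p$ free space — justified because every $L_p$ norm on $\reals$ reduces to $\abs{\cdot}$, so the choice of label is immaterial and $L_1$ is picked only for uniformity. This is the same decoupling that was recorded (in commented form) for the curve-of-a-trace construction, and the argument transfers verbatim once the spatial norm $\chi$ is allowed to range over $L_1, L_2, L_{\infty}$.
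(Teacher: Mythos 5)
Your proof is correct and follows essentially the same route as the paper's: unfold the $\chi^{\skoro}$ norm as a maximum of the spatial $\chi$-norm and the temporal absolute value, then use that $\max(a,b)\leq\delta$ iff both $a\leq\delta$ and $b\leq\delta$ to split membership in the combined free space into the intersection of the two component free spaces. Your extra remark that the $L_1$ label on the one-dimensional temporal component is immaterial (since every $L_p$ norm on $\reals$ is the absolute value) is a fine clarification the paper leaves implicit.
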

\begin{proof}
We have
$(\rho_{f}, \rho_{g})  \in \free_{\delta}(f, g)$ iff
$\max\left(\norm{f(\rho_f)- g(\rho_g)}_{\chi},  \abs{f_{n+1}(\rho_f)- g_{n+1}(\rho_g)} \right)
 \leq \delta$.
This holds iff both
$\norm{f(\rho_f)- g(\rho_g)}_{\chi}\leq \delta$ and 
$ \abs{f_{n+1}(\rho_f)- g_{n+1}(\rho_g)} \leq\delta$.
The result follows.%\qed
\end{proof}

\subsection{Computing Free Space Cell Boundaries}

The algorithm for computing the \frechet distance requires computation of
the free space at the cell boundaries, \emph{i.e}
$\big(\set{i}\times[j, j+1]\big) \cap \free_{\delta}(f,g)$ for
$0\leq i \leq m_f$ and $0\leq j \leq m_g-1$;
and
$\big([i, i+1] \times \set{j}\big) \cap \free_{\delta}(f,g)$ for
$0\leq i \leq m_f-1$ and $0\leq j \leq m_g$.
Using  Proposition~\ref{proposition:FreeSpaceConvex}, we get that thus
the free space at the cell boundaries is also convex, and hence just
a line.
Hence, it suffices to just compute maximum and minimum coordinate values for
the free space at the cell boundaries.
We do this for the different norms.

\smallskip\noindent\textbf{Computing Free Space Cell Boundaries
for $L_1$.}
Recall from Proposition~\ref{proposition:FreeSpaceLOne} that the
free space for for cell $i,j$ for $0\leq i\leq m_f-1$ and $0\leq j\leq m_g-1$
is given by the inequaltity 
\[
\sum_{k=1}^n \left\lvert \Big(
 f_k(i) -  g_k(j) + \left(f_k(i+1) -  f_k(i) \right) \vdot \rho_f^*\ +\ 
 \left(g_k(j) -  g_k(j+1) \right)\vdot \rho_g^*
\right\rvert \ \leq \delta
\]
where  $(\rho_f^*, \rho_g^*) = (\rho_f-i, \rho_g-j) $, with $i\leq \rho_f \leq i+1$ and
$j\leq \rho_g \leq j+1$.
Letting $r_k^0 =  f_k(i) -  g_k(j) $, and
$r_k^f = f_k(i+1) -  f_k(i) $, and
$r_k^g = g_k(j) -  g_k(j+1) $, and
$x= \rho_f^*$, and 
$y= \rho_g^*$, we obtain the inequality
\begin{equation}
\label{equation:FreeSpaceBoundaryCellLOne}
\sum_{k=1}^n \left\lvert 
r_k^0 + r_k^f \vdot x + r_k^g \vdot y
\right\rvert \ \leq \delta
\end{equation}
with $0\leq x\leq 1$, and $0\leq y\leq 1$.

We show how to obtain the free space boundary at the cell boundary
$0\leq x\leq 1, y=1$. 
The other cases are similar.
We need to compute the maximum, and the minimum values of $0\leq x\leq 1$
such that Equation~\ref{equation:FreeSpaceBoundaryCellLOne} holds with
$y=1$.
Substituting $y =1$, and letting $r_k^1 = r_k^0- r_k^g$, we thus, wish to solve
the following two optimization problems.
\begin{alignat*}{2}
  \text{minimize  }\  & x \\
  \text{subject to }\ & \sum_{k=1}^n\abs{r_k^1 +  r_k^f \vdot x } \leq \delta\\
  &0\leq x\leq 1
\end{alignat*}
and 
\begin{alignat*}{2}
  \text{maximize  }\  & x \\
  \text{subject to }\ & \sum_{k=1}^n\abs{r_k^1 +  r_k^f \vdot x } \leq \delta\\
  &0\leq x\leq 1
\end{alignat*}
where $r_k^1,  r_k^f,$ and $\delta$ are given constants.
We show how to solve the maximization problem (the minimization problem is similar).
Assume none of $ r_k^f$ is zero (if some $ r_k^f$ is zero, remove 
$\abs{r_k^1 +  r_k^f \vdot x }$ from the sum, and change $\delta$ to 
$\delta - \abs{r_k^1}$).
We have:
\begin{equation}
\label{equartion:AbsForm}
\abs{r_k^1 +  r_k^f \vdot x } = 
\begin{cases}
r_k^1 +  r_k^f \vdot x & \text{if } x \geq -\frac{r_k^1}{ r_k^f} \text{ and }  r_k^f > 0\\
-\left(r_k^1 +  r_k^f \vdot x\right)
 & \text{if } x \leq   -\frac{r_k^1}{ r_k^f} \text{ and }  r_k^f > 0\\
r_k^1 +  r_k^f \vdot x & \text{if } x \leq -\frac{r_k^1}{ r_k^f} \text{ and }  r_k^f < 0\\
-\left(r_k^1 +  r_k^f \vdot x\right)  & \text{if } x \geq  -\frac{r_k^1}{ r_k^f} \text{ and }  r_k^f < 0
\end{cases}
\end{equation}
We compute the $m$ values 
%$\left(  -\frac{r_k^1}{ r_k^f},\  \bool(r_k^f > 0)\right)$
$-\frac{r_k^1}{ r_k^f}$
for $1\leq k\leq n$, and sort these values in increasing order into an array
$X^{\dagger}[1..n]$.
We remove all values that are $<0$ or $>1$, add the values $0$ and $1$,
and remove all duplicates to get the sorted array $X[1..n']$ with no duplicates,
and in increasing order,
with $X[1] = 0$, and $X[n'] = 1$.
Observe that for $X[k] \leq x\leq X[k+1]$, each of
$\abs{r_k^1 +  r_k^f \vdot x } $ is either $r_k^1 +  r_k^f \vdot x$, or
$-\left(r_k^1 +  r_k^f \vdot x\right)$, \emph{i.e.}, the value of
$r_k^1 +  r_k^f \vdot x$ is either $\geq 0$ throughout, or $\leq 0$ throughout
the interval.
Using this fact, we determine the form of the function 
$\sum_{k=1}^n\abs{r_k^1 +  r_k^f \vdot x }$ over the interval
 $X[j] \leq x\leq X[j+1]$ as follows.
We have:
\begin{equation}
\label{equation:AbsSumForm}
\sum_{k=1}^n\abs{r_k^1 +  r_k^f \vdot x } \ = \ 
\sum_{k=1}^n (-1)^{\mu_j(k)} \cdot \left( r_k^1 +  r_k^f \vdot x\right)
\text{ over the interval  }X[j] \leq x\leq X[j+1] \text{ where, }
\end{equation}

\begin{equation}
\mu_j(k) =
\begin{cases}
 0 & \text{ if } X[j] \geq  -\frac{r_k^1}{ r_k^f} \text{ and }  r_k^f > 0\\
1 & \text{ if } X[j+1]  \leq   -\frac{r_k^1}{ r_k^f} \text{ and }  r_k^f > 0\\
0 &  \text{ if } X[j+1]  \leq   -\frac{r_k^1}{ r_k^f} \text{ and }  r_k^f < 0\\
 1 & \text{ if } X[j] \geq  -\frac{r_k^1}{ r_k^f} \text{ and }  r_k^f < 0
\end{cases}
\end{equation}
We prove that the function $\mu()$ above is well defined.
\begin{compactenum}
\item First, we show that either
$X[j] \geq  -\frac{r_k^1}{ r_k^f}$, or
$X[j+1]  \leq   -\frac{r_k^1}{ r_k^f}$ hold for all $1\leq k, j \leq n'$.
This is because if this condition does not hold, then both 
$X[j] <-\frac{r_k^1}{ r_k^f}$ and 
$X[j+1]  >  -\frac{r_k^1}{ r_k^f}$ hold, \emph{i.e.},
$X[j] <-\frac{r_k^1}{ r_k^f}   < X[j+1] $, which is impossible since
$X$ is a sorted array which contains $-\frac{r_k^1}{ r_k^f} $.
\item 
We show both $X[j] \geq  -\frac{r_k^1}{ r_k^f}$ and
$X[j+1]  \leq   -\frac{r_k^1}{ r_k^f}$ cannot hold.
If both these hold, then 
$X[j] \geq  -\frac{r_k^1}{ r_k^f} \geq X[j+1] $; which is impossible
since $X$ is an sorted array in increasing order with no duplicates.
\end{compactenum}
The fact that 
$\abs{r_k^1 +  r_k^f \vdot x }$ over the interval
$X[j] \leq x\leq X[j+1]$ equals 
$(-1)^{\mu_j(k)} \cdot \left( r_k^1 +  r_k^f \vdot x\right)$
then follows from Equation~\ref{equartion:AbsForm}.
Thus, Equation~\ref{equation:AbsSumForm} is correct.

Once the function $\mu_j()$ has been computed, the
function $\sum_{k=1}^n\abs{r_k^1 +  r_k^f \vdot x }$
can be written as 
$ \left(\sum_{k=1}^n  (-1)^{\mu_j(k)} \cdot r_k^1\right) + 
\left(\sum_{k=1}^n  (-1)^{\mu_j(k)} \cdot r_k^1\right)\cdot x $
over the interval $X[j] \leq x\leq X[j+1]$.
This function is either monotonically non-decreasing, or
non-increasing.
The maximum value of $\sum_{k=1}^n\abs{r_k^1 +  r_k^f \vdot x }$
over the interval $X[j] \leq x\leq X[j+1]$  is thus,
\[
\begin{cases}
\sum_{k=1}^n (-1)^{\mu_j(k)} \cdot \left( r_k^1 +  r_k^f \vdot X[j+1]\right)
& \text{ if } \left(\sum_{k=1}^n  (-1)^{\mu_j(k)} \cdot r_k^f\right) \geq 0\\
\sum_{k=1}^n (-1)^{\mu_j(k)} \cdot \left( r_k^1 +  r_k^f \vdot X[j]\right)
& \text{ if } \left(\sum_{k=1}^n  (-1)^{\mu_j(k)} \cdot r_k^f\right) < 0
\end{cases}
\]
The maximum value of $\sum_{k=1}^n\abs{r_k^1 +  r_k^f \vdot x }$
over the interval $0\leq x\leq 1$
can then be obtained as the maximum
 of the maximum values of $\sum_{k=1}^n\abs{r_k^1 +  r_k^f \vdot x }$
over the $n'$ intervals  $X[j] \leq x\leq X[j+1]$ for $1\leq j <n'$.

%where $\bool(\varphi) = \true $ if $\varphi = \true$ and $\false$ otherwise.
Computing and sorting the $-r_k^1/ r_k^f$ values takes $O\left(n\vdot\log(n)\right)$
time.
Computing the function $\mu_j()$ takes time $O(n)$ for each $j$,
thus computing all the functions  $\mu_j()$ takes time $O(n^2)$.
This also means that computing the maximums for all
of the $n'$ intervals takes $O(n^2)$ time.
The rest of the steps take time $O(n)$.
Putting everything together, we have the following proposition.

\begin{proposition}
Given  polygonal curves  $f$ and $g$ with values in $\reals^n$,
the intersection of the free space $\free_{\delta}(f,g)$ with the 
boundary of the cell $i,j$ can be computed in time
$O(n^2)$ for the $L_1$ norm.
\qed
\end{proposition}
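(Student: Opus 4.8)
The plan is to exploit convexity to reduce the question to a one-dimensional optimization, and then to solve that optimization by sorting the breakpoints of a piecewise-linear convex function. By Proposition~\ref{proposition:FreeSpaceConvex}, the set $\free_{\delta}(f,g)$ restricted to the cell $i,j$ is convex, so its intersection with any of the four edges of that cell is a (possibly empty) segment; hence it is completely determined by its two extreme coordinates. I would therefore fix one edge, say $\rho_g^* = y = 1$ (the three remaining edges are handled identically after relabeling), and specialize the characterization of Proposition~\ref{proposition:FreeSpaceLOne} to the single constraint $\sum_{k=1}^n \abs{r_k^1 + r_k^f\vdot x} \leq \delta$ in the one variable $x = \rho_f^* \in [0,1]$, where $r_k^1 = r_k^0 - r_k^g$. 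The task then becomes: compute the minimum and the maximum of $x$ over $[0,1]$ subject to this constraint, since these two values cut out the segment.

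The key observation is that the left-hand side $\Phi(x) = \sum_{k=1}^n \abs{r_k^1 + r_k^f\vdot x}$ is a convex, piecewise-linear function of $x$, whose only breakpoints are the values $x = -r_k^1/r_k^f$ at which some affine term changes sign. I would first discard the degenerate terms with $r_k^f = 0$ (each contributes the constant $\abs{r_k^1}$, absorbed by replacing $\delta$ with $\delta - \abs{r_k^1}$), then sort the remaining breakpoints, clip them to $[0,1]$, and adjoin the endpoints $0$ and $1$ to obtain the sorted array $X[1..n']$ with $X[1]=0$ and $X[n']=1$. On each subinterval $[X[j], X[j+1]]$ no breakpoint lies in the interior, so every affine term keeps a fixed sign and $\Phi$ coincides with an affine function $\sum_{k}(-1)^{\mu_j(k)}(r_k^1 + r_k^f x)$ for a suitable sign pattern $\mu_j$.

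The technical heart of the argument, and the step I expect to be the main obstacle, is to show that the sign pattern $\mu_j$ is well-defined, i.e., that on each subinterval every affine term $r_k^1 + r_k^f x$ has constant sign. For a breakpoint $b_k = -r_k^1/r_k^f$ lying in $[0,1]$, I would argue from the sortedness of $X$ that exactly one of $X[j] \geq b_k$ or $X[j+1] \leq b_k$ holds: if neither held we would have $X[j] < b_k < X[j+1]$, contradicting that $b_k$ itself appears in the sorted array; and both cannot hold at once, as that would force $X[j] \geq X[j+1]$, contradicting strict monotonicity of $X$. For a breakpoint with $b_k \notin [0,1]$ the dichotomy is immediate, since all of $[0,1]$, and hence every $X[j]$, lies on one side of $b_k$. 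Combining this dichotomy with the elementary sign table for $\abs{r_k^1 + r_k^f x}$ fixes $(-1)^{\mu_j(k)}$ uniquely, so $\Phi$ equals a single affine form $A_j + B_j x$ on $[X[j], X[j+1]]$.

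Finally, I would use convexity once more to extract the answer. Since $\Phi$ is convex, its sublevel set $\set{x : \Phi(x) \leq \delta}$ is an interval; to find the largest feasible $x$, I would scan the subintervals, evaluate $A_j + B_j x$ at the endpoints, and either return an endpoint that stays within budget or solve $A_j + B_j x = \delta$ on the unique piece where $\Phi$ crosses $\delta$, intersecting the result with $[0,1]$; the smallest feasible $x$ follows symmetrically. For the running time, sorting the $n$ breakpoints costs $O(n\log n)$, and computing each sign pattern $\mu_j$ directly costs $O(n)$ over the $O(n)$ subintervals, giving $O(n^2)$ overall, which dominates the remaining linear bookkeeping and yields the claimed bound. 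I would also note that, because only one term changes sign between consecutive subintervals, the patterns could in fact be maintained incrementally for an $O(n\log n)$ total, but the $O(n^2)$ estimate already suffices for the statement.
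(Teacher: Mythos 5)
Your proposal is correct and follows essentially the same route as the paper: the same reduction to a one-variable constraint on the cell edge, the same sorted-breakpoint array with the identical dichotomy argument for well-definedness of the sign patterns $\mu_j$, and the same $O(n^2)$ accounting. The only (minor, and arguably cleaner) variation is your final extraction step, which invokes convexity of $\Phi$ to solve $A_j + B_j x = \delta$ on the crossing piece, where the paper instead reads off per-interval maxima of $\Phi$ from the monotone affine pieces.
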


\smallskip\noindent\textbf{Computing Free Space Cell Boundaries
for $L_2$.}
We show how to compute the free space boundary at the cell boundary
$0\leq \rho_f^* \leq 1; \rho_g^*= 1$, where $(\rho_f^*, \rho_g^*) = (\rho_f-i, \rho_g-j) $.
The computation for the other boundaries is similar.
Substituting $\rho_g^*= 1$ in Equation~\ref{Equation:Conic}, denoting
$ \rho_f^* $ as $x$, and simplifying, we get:
\[
\left(\sum_{k=1}^n(d_k^f)^2\right)\vdot x^2 + 
\left(\left(\sum_{k=1}^n 2\vdot d_k^f\vdot   d_k^g\right) + 
\left(\sum_{k=1}^n 2\vdot d_k^0 \vdot d_k^f\right)\right)
 \vdot x +
\sum_{k=1}^n\Big((d_k^g)^2 + 2\vdot d_k^0 \vdot d_k^g + (d_k^0)^2\Big)
\, - \delta^2 \ \leq 0.
\]
This is of the form
\[
A\vdot x^2 + B\vdot x + C \leq 0\]
We wish to find the maximum and minimum values of $x\in [0,1]$ which satisfy the above equation.
The equation $A\vdot x^2 + B\vdot x + C= 0$ has two roots given by
$\frac{-b \pm \sqrt{B^2- 4A\vdot C}}{2\vdot A}$.
The following three cases arise.
\begin{compactitem}
\item $B^2- 4A\vdot C < 0$, thus, there are no real roots to $A\vdot x^2 + B\vdot x + C= 0$.
This means that for all $x$, the value of 
$A\vdot x^2 + B\vdot x + C$ is either always greater than $0$, or always less than $0$.
If $C > 0$, then $A\vdot x^2 + B\vdot x + C $ is always greater than $0$ (using $x=0$),
thus, the free space boundary is empty.
If  $C < 0$,  then $A\vdot x^2 + B\vdot x + C $ is always less than $0$, thus,
the free space boundary is the entire segment $0\leq x\leq 1$ at $y=1$.
Note that we cannot have $C=0$ as $x=0$ is not a  solution to
$A\vdot x^2 + B\vdot x + C= 0$
\item $B^2- 4A\vdot C = 0$; thus there is exactly one real root $x^{\dagger}$ to
$A\vdot x^2 + B\vdot x + C= 0$.
This means that for all $x\neq x^{\dagger}$, the value of 
$A\vdot x^2 + B\vdot x + C$ is either always greater than  $0$,
or  always less than  $0$.
\begin{compactitem}
\item 
If  $C \leq  0$,  then $A\vdot x^2 + B\vdot x + C $ is always less than $0$ for
$x\neq x^{\dagger}$, and at  $x= x^{\dagger}$ the value is $0$.
Hence, the free space boundary is the entire segment $0\leq x\leq 1$ at $y=1$.
\item 
If  If $C > 0$, then $A\vdot x^2 + B\vdot x + C $ is always greater than $0$
for $x\neq x^{\dagger}$.
Thus, the only $x$ such that $A\vdot x^2 + B\vdot x + C \leq 0$ is $x= x^{\dagger}$.
Hence the free space boundary is the singeton point $x^{\dagger}$ if
$x^{\dagger} \in [0,1]$; otherwise it is the emptyset.
\end{compactitem}
\item $B^2- 4A\vdot C > 0$;  which means that  there are two real roots to
$A\vdot x^2 + B\vdot x + C= 0$.
The following two situations can arise
\begin{figure}[h]
%\vspace{-1em}
\strut\centerline{\setlength{\unitlength}{0.00043745in}
\begingroup\makeatletter\ifx\SetFigFont\undefined%
\gdef\SetFigFont#1#2#3#4#5{%
  \reset@font\fontsize{#1}{#2pt}%
  \fontfamily{#3}\fontseries{#4}\fontshape{#5}%
  \selectfont}%
\fi\endgroup%
{\renewcommand{\dashlinestretch}{30}
\begin{picture}(7666,1434)(0,-10)
\path(12,687)(3342,687)
\path(4242,687)(7572,687)
\path(192,1362)(192,12)
\path(4692,1407)(4692,57)
\path(57,192)(59,193)(62,197)
	(69,203)(80,212)(95,225)
	(114,242)(138,262)(165,286)
	(197,314)(231,344)(268,375)
	(307,408)(346,442)(385,475)
	(424,509)(463,541)(500,572)
	(536,601)(570,629)(602,656)
	(633,681)(663,704)(691,726)
	(718,746)(743,765)(768,783)
	(792,800)(815,816)(838,831)
	(860,846)(882,859)(909,876)
	(936,891)(963,906)(990,921)
	(1017,935)(1045,948)(1073,961)
	(1101,974)(1129,985)(1158,996)
	(1187,1007)(1216,1017)(1244,1026)
	(1273,1035)(1301,1043)(1329,1050)
	(1357,1056)(1383,1062)(1410,1067)
	(1436,1071)(1461,1075)(1485,1078)
	(1510,1080)(1533,1082)(1556,1083)
	(1580,1084)(1604,1085)(1629,1085)
	(1655,1085)(1680,1084)(1707,1082)
	(1733,1080)(1760,1077)(1788,1074)
	(1816,1070)(1844,1065)(1872,1059)
	(1901,1053)(1929,1046)(1958,1038)
	(1986,1030)(2013,1022)(2041,1012)
	(2067,1003)(2093,992)(2119,982)
	(2144,971)(2169,959)(2193,947)
	(2217,934)(2239,922)(2262,910)
	(2284,896)(2307,882)(2330,867)
	(2354,852)(2378,836)(2402,819)
	(2426,802)(2451,784)(2476,765)
	(2500,747)(2525,728)(2549,709)
	(2573,690)(2596,671)(2619,652)
	(2642,633)(2663,615)(2684,597)
	(2704,580)(2724,563)(2742,546)
	(2760,530)(2778,515)(2795,499)
	(2814,482)(2833,465)(2852,447)
	(2870,430)(2889,412)(2908,394)
	(2928,375)(2949,356)(2970,335)
	(2993,313)(3016,291)(3039,268)
	(3062,245)(3085,223)(3105,203)
	(3123,186)(3138,171)(3149,160)
	(3156,153)(3160,149)(3162,147)
\path(4549,1084)(4551,1083)(4554,1079)
	(4561,1073)(4572,1064)(4587,1051)
	(4606,1034)(4630,1014)(4657,990)
	(4689,962)(4723,932)(4760,901)
	(4799,868)(4838,834)(4877,801)
	(4916,767)(4955,735)(4992,704)
	(5028,675)(5062,647)(5094,620)
	(5125,595)(5155,572)(5183,550)
	(5210,530)(5235,511)(5260,493)
	(5284,476)(5307,460)(5330,445)
	(5352,430)(5374,416)(5401,400)
	(5428,385)(5455,370)(5482,355)
	(5509,341)(5537,328)(5565,315)
	(5593,302)(5621,291)(5650,280)
	(5679,269)(5708,259)(5736,250)
	(5765,241)(5793,233)(5821,226)
	(5849,220)(5875,214)(5902,209)
	(5928,205)(5953,201)(5977,198)
	(6002,196)(6025,194)(6048,193)
	(6072,191)(6096,191)(6121,191)
	(6147,191)(6172,192)(6199,194)
	(6225,196)(6252,199)(6280,202)
	(6308,206)(6336,211)(6364,217)
	(6393,223)(6421,230)(6450,238)
	(6478,246)(6505,254)(6533,264)
	(6559,273)(6585,284)(6611,294)
	(6636,305)(6661,317)(6685,329)
	(6709,341)(6731,354)(6754,366)
	(6776,380)(6799,394)(6822,409)
	(6846,424)(6870,440)(6894,457)
	(6918,474)(6943,492)(6968,511)
	(6992,529)(7017,548)(7041,567)
	(7065,586)(7088,605)(7111,624)
	(7134,643)(7155,661)(7176,679)
	(7196,696)(7216,713)(7234,730)
	(7252,746)(7270,761)(7287,776)
	(7306,794)(7325,811)(7344,829)
	(7362,846)(7381,864)(7400,882)
	(7420,901)(7441,920)(7462,941)
	(7485,963)(7508,985)(7531,1008)
	(7554,1031)(7577,1053)(7597,1073)
	(7615,1090)(7630,1105)(7641,1116)
	(7648,1123)(7652,1127)(7654,1129)
\put(5682,417){\makebox(0,0)[lb]{\smash{{\SetFigFont{8}{9.6}{\familydefault}{\mddefault}{\updefault}$x\longrightarrow$}}}}
\put(732,417){\makebox(0,0)[lb]{\smash{{\SetFigFont{8}{9.6}{\familydefault}{\mddefault}{\updefault}$x\longrightarrow$}}}}
\end{picture}
}}
%\vspace*{-4mm}
 \caption{Graphs of $A\vdot x^2 + B\vdot x + C $.}
\label{figure:quadratic}
% \vspace{-1em}
\end{figure}
Let the two roots of $A\vdot x^2 + B\vdot x + C \leq 0$  be
$x^{\dagger}_1$ and $x^{\dagger}_2$ with $x^{\dagger}_1 <x^{\dagger}_2$.
The derivative at  $A\vdot x^2 + B\vdot x + C $ is $2\vdot A \vdot x + B$,
thus we have the first situation if $2\vdot A \vdot x^{\dagger}_1 + B> 0$, and
the second situation otherwise.
\begin{compactitem}
\item 
If  $2\vdot A \vdot x^{\dagger}_1 + B> 0$, then
 we have the first situation and hence  the free space boundary is
$[0,  x^{\dagger}_1] \cup [x^{\dagger}_2, 1]$.
\item If  $2\vdot A \vdot x^{\dagger}_1 + B< 0$, then
 we have the second situation and hence the free space boundary is
$[0,1] \cap [x^{\dagger}_1, x^{\dagger}_2]$.
\end{compactitem}

\end{compactitem}
We note that the preceeding computation steps take $O(n)$ time in total.

\medskip\noindent\textbf{Computing Free Space Cell Boundaries
for $L_{\infty}$.}
We show how to compute the free space boundary at the cell boundary
$0\leq \rho_f^* \leq 1; \rho_g^*= 1$, where $(\rho_f^*, \rho_g^*) = (\rho_f-i, \rho_g-j) $.
The computation for the other boundaries is similar.
We use Proposition~\ref{prop:LinftyGeneral}.
%Using Equation~\ref{equation:Affine},
We  have that the free space
boundary at $0\leq \rho_f^* \leq 1; \rho_g^*= 1$
for the $k$-th curve components $f_k$ and $g_k$
is the set of $(\rho_f^*, \rho_g^*)$ with $\rho_g^*= 1$ such that
$|f_k(\rho_f^*)- g_k(\rho_g^*)| \leq \delta$.
Using Equation~\ref{equation:Affine}, and substituting $\rho_g^*= 1$,
we get that
\[
\abs{f_k(i) -g_k(j+1) + \left (f(i+1)-f(i) \right) \vdot \rho_f^*} \leq \delta\]
If $f(i+1) = f(i)$, then the above equation gives all $ \rho_f^*$ as being valid
if $ \abs{f_k(i) -g_k(j+1) } \leq \delta$
(thus the free space boundary if the entire segment $0\leq \rho_f^* \leq 1; \rho_g^*= 1$);
 and no  $ \rho_f^*$ as being valid otherwise (thus the free space boundary is the 
emptyset).\\
If $f(i+1) \neq f(i)$, then the equation for the $k$-th components states:
\[
\rho_f^* \in 
\begin{cases}
[\frac{-\left(f_k(i) -g_k(j+1) \right) - \delta}{f(i+1)-f(i)}, 
\frac{-\left(f_k(i) -g_k(j+1) \right)  + \delta}{f(i+1)-f(i)}] & \text{ if } f(i+1)-f(i) > 0\\
[\frac{-\left(f_k(i) -g_k(j+1) \right)  + \delta}{f(i+1)-f(i)}, 
\frac{-\left(f_k(i) -g_k(j+1) \right) - \delta}{f(i+1)-f(i)}] & \text{ if } f(i+1)-f(i) <0

\end{cases}
\]
Thus, in all cases, the  $k$-th component equations give us
the valid $\rho_f^* $  as being the set $ [x_k, x_k']$ with 
$x_k, x_k'$ determined as above.
Using Proposition~\ref{prop:LinftyGeneral}, we get that 
the free space
boundary at $0\leq \rho_f^* \leq 1; \rho_g^*= 1$
is the emptyset if
$[x_k , x_k']\cap [0,1] = \emptyset$ for some $k$,
otherwise, it is 
$[x_f, x_f'] = [0,1]\cap \cap_{k=1}^n[x_k, x_k']$.
This intersection can be determined as follows: 
$x_f = \max\left(0, \max\set{x_k}\right)$, and
$x_f' = \min\left(1, \min\set{x'_k}\right)$.
We note that the preceeding computation steps take $O(n)$ time in total.

%\[$(\rho_f^*, \rho_g^*) \in \cap \set{  

\medskip\noindent\textbf{Computing Free Space Cell Boundaries
for $L_1^{\skoro}, L_2^{\skoro}, L_{\infty}^{\skoro}$.}
Let $\tuple{f,f_{n+1}}$ be a curve from $[a_f, b_f] $ to $\reals^{n}\times\reals$,
and similarly for $\tuple{g,g_{n+1}}$ as in Proposition~\ref{prop:SkoroNormsGeneral}.
To compute the free space boundaries of these two curves for
$L_1^{\skoro}, L_2^{\skoro}$ or $ L_{\infty}^{\skoro}$ norms,
we  first compute the free space boundary for the norm
$L_1, L_2$ or $L_{\infty}$ for the curves $f,g$.
Then we compute the free space corresponding to the last component,
$f_{n+1}, g_{n+1}$ -- this computation is the same as the computation of the
free space for individual coordinate components for $L_{\infty}$.
Then we intersect the two boundaries.
The time taken is $O(n^2)$ for $L_1^{\skoro}$, and
$O(n)$ for $L_2^{\skoro}$ and $ L_{\infty}^{\skoro}$.

\subsection{Algorithm for the \frechet-Distance Decision Problem
Given a Fixed $\delta$}

In this section we solve for the \frechet distance decision problem between
two polygonal curves for
a given fixed $\delta$.
The decision problem is solved with a dynamic programming algorithm on 
the free space diagram.
As noted before, our formulation of the \frechet distance requires the reparametrizations
to be strictly increasing, as opposed to the forumlation of~\cite{AltG95} which
only requires non-decreasing reparametrizations.
This introduces some complications which we address in our solution.

Consider cell $i,j$ in the free space diagram of two polygonal curves $f,g$.
The cell together with the free space inside it is depicted in
Figure~\ref{figure:Cell}
\footnote{We use the convention that the first coordinate $i$ of cell $i,j$
increases in the horizontal direction, and that the second coordinate
in the vertical direction, thus, the same as for the cartesian plane.}. 
The non-shaded portion is the free space.
Let $\mye_{i,j}^0, \mye_{i,j}^1,  \mye_{i,j}^2, $ and $ \mye_{i,j}^3$ denote
the bottom, right, top, and left edges of  cell $i,j$ respectively.
Thus, $\mye_{i,j}^0$ is the edge $[i,i+1]\times \set{j}$, and the other edges
are the ones encountered moving anti-clockwise.
Let $\mya_{i,j}^0$ and $\myb_{i,j}^0$ be the starting and ending points of
edge $\mye_{i,j}^0$, and similarly for the other edges (see 
Figure~\ref{figure:CellBoundaries}).
Given a point $s= (p,q) \in \reals^2$, let $\first(s) = p$ denote the first
coordinate, and $\second(s) = q$ denote the second coordinate.
The points $\mya_{i,j}^q, \myb_{i,j}^q$ for $0\leq q\leq 3$ can be obtained for each cell
using the results of the previous section for the $L_1, L_2, L_{\infty}$
and  $L_1^{\skoro}, L_2^{\skoro}, L_{\infty}^{\skoro}$ norms.
Note that $(\mya_{i+1,j}^3, \myb_{i+1,j}^3) = (\mya_{i,j}^1, \myb_{i,j}^1)$;
and $(\mya_{i,j+1}^0, \myb_{i,j+1}^0) = (\mya_{i,j+1}^2, \myb_{i,j+1}^2)$

\begin{figure}[h]
\label{figure:CellBoundaries}
%\vspace{-1em}
\strut\centerline{\setlength{\unitlength}{0.00052493in}
\begingroup\makeatletter\ifx\SetFigFont\undefined%
\gdef\SetFigFont#1#2#3#4#5{%
  \reset@font\fontsize{#1}{#2pt}%
  \fontfamily{#3}\fontseries{#4}\fontshape{#5}%
  \selectfont}%
\fi\endgroup%
{\renewcommand{\dashlinestretch}{30}
\begin{picture}(5655,5787)(0,-10)
\texture{55888888 88555555 5522a222 a2555555 55888888 88555555 552a2a2a 2a555555 
	55888888 88555555 55a222a2 22555555 55888888 88555555 552a2a2a 2a555555 
	55888888 88555555 5522a222 a2555555 55888888 88555555 552a2a2a 2a555555 
	55888888 88555555 55a222a2 22555555 55888888 88555555 552a2a2a 2a555555 }
\shade\path(780,5136)(5280,5136)(5280,636)
	(780,636)(780,5136)
\path(780,5136)(5280,5136)(5280,636)
	(780,636)(780,5136)
\whiten\path(1680,636)(780,1536)(780,3336)
	(1680,4686)(3030,5136)(3480,5136)
	(3930,5136)(4380,5136)(5280,4236)
	(5280,2436)(4380,1086)(3030,636)
	(1680,636)(1680,636)
\path(1680,636)(780,1536)(780,3336)
	(1680,4686)(3030,5136)(3480,5136)
	(3930,5136)(4380,5136)(5280,4236)
	(5280,2436)(4380,1086)(3030,636)
	(1680,636)(1680,636)
\path(1680,411)(1680,816)
\path(3030,411)(3030,816)
\path(5505,2436)(5100,2436)
\path(5505,4236)(5055,4236)
\path(555,3336)(1005,3336)
\path(555,1536)(1005,1536)
\path(3030,5361)(3030,4956)
\path(4380,5361)(4380,4956)
\put(1545,96){\makebox(0,0)[lb]{\smash{{\SetFigFont{11}{13.2}{\familydefault}{\mddefault}{\updefault}$\mya_{i,j}^0$}}}}
\put(2805,96){\makebox(0,0)[lb]{\smash{{\SetFigFont{11}{13.2}{\familydefault}{\mddefault}{\updefault}$\myb_{i,j}^0$}}}}
\put(5595,2391){\makebox(0,0)[lb]{\smash{{\SetFigFont{11}{13.2}{\familydefault}{\mddefault}{\updefault}$\mya_{i,j}^1$}}}}
\put(2895,5541){\makebox(0,0)[lb]{\smash{{\SetFigFont{11}{13.2}{\familydefault}{\mddefault}{\updefault}$\mya_{i,j}^2$}}}}
\put(15,1536){\makebox(0,0)[lb]{\smash{{\SetFigFont{11}{13.2}{\familydefault}{\mddefault}{\updefault}$\mya_{i,j}^3$}}}}
\put(5640,4191){\makebox(0,0)[lb]{\smash{{\SetFigFont{11}{13.2}{\familydefault}{\mddefault}{\updefault}$\myb_{i,j}^1$}}}}
\put(4290,5541){\makebox(0,0)[lb]{\smash{{\SetFigFont{11}{13.2}{\familydefault}{\mddefault}{\updefault}$\myb_{i,j}^2$}}}}
\put(15,3336){\makebox(0,0)[lb]{\smash{{\SetFigFont{11}{13.2}{\familydefault}{\mddefault}{\updefault}$\myb_{i,j}^3$}}}}
\end{picture}
}}
%\vspace*{-4mm}
 \caption{Cell $i,j$ in $\free_{\delta}(f,g)$.}
\label{figure:Cell}
% \vspace{-1em}
\end{figure}

% \smallskip\noindent\textbf{Subset of $\free_{\delta}(f,g) $  
% Reachable by a Monotone  Curve.}
In order to utilize Proposition~\ref{proposition:FreeSpaceCurve} in order to
check for the existence of reparametrizations which demonstrate that 
 $\fre(f,g) \leq \delta$, we present a dynamic programming
based method
(based on the sketch in~\cite{AltG95}) to determine 
the set of points in $\free_{\delta}(f,g) $ that are reachable from $(0,0)$  by a
monotone curve which is strictly increasing in both coordinates.
%We present a method to 
The method iteratively computes the monotone curve reachable 
set at the
$4$ cell boundaries.
At each cell boundary, this set consists of a closed interval of points,
% This set will consist of  $4$ closed intervals at the cell 
 together with possibly  one corner point.
We define the parameters   $\creach{\mya}_{i,j}^q, \creach{\myb}_{i,j}^q$ 
for $0\leq q\leq 3$, and the set $\cpoint_{i,j}$ for cell $i,j$. 
The set  $\cpoint_{i,j}$ is either empty, or contains the point $(i,j)$.
% Intuitively, the portion of the line
% $\myline({\mya}_{i,j}^q, {\myb}_{i,j}^q) $ that is reachable by a 
% strictly increasing monotone curve will be the captured by the
% the corresponding endpoints $\creach{\mya}_{i,j}^q$ and $\creach{\myb}_{i,j}^q$
% (this sub-portion may be open).
% 
The ranges of the  parameters  $\creach{\mya}_{i,j}^q$ and $\creach{\myb}_{i,j}^q$ 
 are:
% for each cell $i,j$:
%\begin{itemize}
%\item  
\begin{compactitem}
\item $\Big(\big([i, i+1] \times\set{\cdot, +}\big) \times \set{j, j+1}\Big) \, 
\cup\,  \set{\bot}$ for $\creach{\mya}_{i,j}^0, \creach{\myb}_{i,j}^0$,
and $\creach{\mya}_{i,j}^2, \creach{\myb}_{i,j}^2$.
A second coordinate value of  $j$ corresponds to  the bottom cell boundary,
and a value of $j+1$ corresponds to the top cell boundary.

\item  $\Big( \set{i, i+1} \times \big([j, j+1] \times\set{\cdot, +}\big)  \Big) \, 
\cup\,  \set{\bot}$  for $\creach{\mya}_{i,j}^1, \creach{\myb}_{i,j}^1$,
and $\creach{\mya}_{i,j}^3, \creach{\myb}_{i,j}^3$.
A first  coordinate value of  $i$ corresponds to  the left cell boundary,
and a value of $i+1$ corresponds to the right cell boundary.

\end{compactitem} 
% \[\Big(\left([i, i+1] \times\set{\cdot, +}\right) \times
% \left([j, j+1] \times\set{\cdot, +}\right)\Big) \, 
% \cup\,  \set{\bot}.\]
A value $(\tuple{x,+}, y)$ denotes a point which has the first
coordinate value that is $\epsilon$ greater than $x$, for $\epsilon$ arbitrarily small,
and a second coordinate value that is exactly equal to $y$.
%A ``$-$'' in a coordinate value denotes that the value is $\epsilon$  smaller, 
%for $\epsilon$ arbitrarily small.
We need this $\epsilon$-formalism as we only interested in 
monotone curves in the free space which are \emph{strictly} increasing 
in both coordinates.
The  line from $\creach{\mya}_{i,j}^q $ to $ \creach{\myb}_{i,j}^q$ is the
subportion of the line
from ${\mya}_{i,j}^q $ to ${\myb}_{i,j}^q$ that is reachable from a monotone curve 
 from $(0,0)$ (the reachable sub-portion may be open).
The special value $\bot$ is used to indicate that
the  monotone curve reachable portion of cell boundary is empty.
The corner point needs to be treated differently for technical reasons.
% 
%
% \item $\cpoint_{i,j} \subseteq \set{(i,j)}$ contains the corner point $(i,j)$ if it
% is  reachable from a monotone curve 
%  from $(0,0)$.
% \end{itemize}
%
%An algorithm for computing these parameters can be used for the decision problem
%using two results which we present next
The dynamic programming algorithm bases on the sketch from~\cite{AltG95}  for
computing the parameters is presented in  Algorithm~\ref{algo:reach}
\footnote{Note that Algorithm~\ref{algo:reach} computes whether there are
reparametrizations which \emph{achieve} the \frechet distance $\delta$.
This is not the same as determining whether the \frechet distance is at most
$\delta$.}.
The requirement of reparametrizations being strictly increasing, instead of
only being non-decreasing as in~\cite{AltG95} introduces some complications,
\emph{e.g},  now the stritcly increasing  monotone curve portion at a cell boundary
need not be a line segment, rather, it is a line segment togther  with possibly
a corner point.
The procedure for computing the $\creach{\mya}_{i,j}^q, \creach{\myb}_{i,j}^q$
values used in the algorithm  is given in the appendix.
\begin{algorithm}
  \SetKwInOut{Input}{Input}
  \Input{Polygonal curves 
    $f:[0, m_f] \rightarrow \reals^n$ and $g:[0, m_g] \rightarrow \reals^n$; and
    $\delta \geq 0$}
  \lForEach{$0\leq i\leq m_f$ and $ 0\leq j\leq m_g$}{
    compute $\mya_{i,j}^3, \myb_{i,j}^3, \mya_{i,j}^0, \myb_{i,j}^0$ \;}
   \lForEach(\\ \tcc*[f]{Vertical reach boundaries of bottom row}){$0\leq i\leq m_f$}{
     compute $\creach{\mya}_{i,0}^1, \creach{\myb}_{i,0}^1, 
     \creach{\mya}_{i,0}^3, \creach{\myb}_{i,0}^3$ and $\cpoint_{i,0}$}
   \lForEach(\\ \tcc*[f]{Horizontal reach boundaries of  first column}){$0\leq j\leq m_g$}{
     compute $\creach{\mya}_{0, j}^2, \creach{\myb}_{0,j}^2, 
     \creach{\mya}_{0, j}^0, \creach{\myb}_{0, j}^0$ and $\cpoint_{0,j}$ }
   \ForEach{$1\leq  i\leq m_f$}{
     \ForEach{$1\leq  j\leq m_g$}{
       compute $\cpoint_{i,j}$ from 
       $\creach{\mya}_{i-1,j-1}^1, \creach{\myb}_{i-1,j-1}^1$ and
        $\creach{\mya}_{i-1,j-1}^2, \creach{\myb}_{i-1,j-1}^2$\;
       compute $\creach{\mya}_{i,j}^1, \creach{\myb}_{i,j}^1$ from
       $\cpoint_{i-1,j-1}$  and $\creach{\mya}_{i-1,j}^1, \creach{\myb}_{i-1,j}^1$ and
       $\creach{\mya}_{i,j-1}^2, \creach{\myb}_{i,j-1}^2$\;
       compute $\creach{\mya}_{i,j}^2, \creach{\myb}_{i,j}^2$ from
       $\cpoint_{i-1,j-1}$  and $\creach{\mya}_{i,j-1}^2, \creach{\myb}_{i,j-1}^2$ and
       $\creach{\mya}_{i-1,j}^3, \creach{\myb}_{i-1,j}^3$\;
     }
     }
     check if $(m_f, m_g) =  \creach{\myb}_{m_f,m_g}^2$\;
\caption{Dynamic programming algorithm for checking reachability of $(m_f, m_g)$
  by a monotone curve\label{algo:reach}}
\end{algorithm}

\begin{lemma}%[\cite{AltG95}]
\label{lemma:FrechetDecisionParam}
Let $f:[0, m_f] \rightarrow \reals^n$ and $g:[0, m_g] \rightarrow \reals^n$ be polygonal
curves.
 There exists a strictly increasing monotone curve from $(0,0)$ to
$(m_f, m_g)$ in the free space diagram of $f,g$
 iff the point  $(m_f, m_g) $ belongs to the reachable
portion of the top cell boundary of cell  $(m_f-1, m_g-1)$, \emph{i.e.}
$ \creach{\myb}_{m_f,m_g}^2 = (m_f, m_g) $.
\qed
\end{lemma}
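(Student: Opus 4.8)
The plan is to read this lemma as the correctness statement for Algorithm~\ref{algo:reach} and to prove it by establishing an invariant on the dynamic-programming parameters. Call a point $p$ lying on a cell boundary \emph{reachable} if there is a continuous curve $\gamma : [0,1] \rightarrow \free_{\delta}(f,g)$ that is strictly increasing in both coordinates with $\gamma(0) = (0,0)$ and $\gamma(1) = p$. The invariant I would prove is that, once the algorithm has processed cell $i,j$, the segment running from $\creach{\mya}_{i,j}^q$ to $\creach{\myb}_{i,j}^q$ (with the $\epsilon$-annotations read as marking open versus closed endpoints) together with the flag $\cpoint_{i,j}$ is \emph{exactly} the set of reachable points on edge $\mye_{i,j}^q$. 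The lemma then follows by specializing the invariant to the top edge $\mye^2$ of the terminal cell, whose upper endpoint is $(m_f, m_g)$: that point is reachable iff it coincides with $\creach{\myb}_{m_f,m_g}^2$.

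First I would dispatch the base cases, namely the vertical reach boundaries of the bottom row and the horizontal reach boundaries of the first column, which are what the two initialization loops compute; here reachability degenerates to one-dimensional monotone propagation along a single row or column, so the reachable portion of each such edge is the free portion (a single line by Proposition~\ref{proposition:FreeSpaceConvex}) intersected with the points lying strictly up-and-to-the-right of an already-reachable point. For the inductive step on a cell $i,j$ with $i,j \geq 1$, I would use that its bottom edge is the top edge of the cell below and its left edge is the right edge of the cell to the left, both already handled, so the incoming reachable sets are available. The geometric engine is the convexity of $\free_{\delta}(f_{[i]}, g_{[j]})$ (Proposition~\ref{proposition:FreeSpaceConvex}): if $q$ is a reachable point on the left or bottom edge and $p$ is a free point on the right or top edge strictly greater than $q$ in both coordinates, then $\myline(q,p)$ lies in the free space and is strictly increasing, witnessing reachability of $p$; conversely any reachable $p$ on an outgoing edge must cross an incoming edge at a reachable point. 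Since the incoming reachable sets and the outgoing free portion are all intervals, the outgoing reachable set is again an interval, determined by the lowest reachable incoming point, which is exactly what the update rules compute.

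The hard part, and the reason the separate $\cpoint_{i,j}$ bookkeeping and the $+$ annotation are needed, is the interaction of \emph{strict} monotonicity with cell corners. In the non-decreasing formulation of~\cite{AltG95} the reachable portion of every edge is a closed interval and propagation reduces to taking coordinate suprema, but under strict increase a curve may not remain at a fixed coordinate value, so endpoints of reachable intervals can be open and the corner $(i,j)$ can be reachable in a way that an open interval on an adjacent edge fails to record: a strictly increasing curve may pass through the single corner point and then continue, a reachability mode invisible to the interval endpoints alone. I would therefore prove separately that the corner $(i,j)$ is reachable iff it is free and lies strictly above-and-to-the-right of a reachable point on an incoming edge (or is propagated from a reachable corner of cell $i-1,j-1$), and verify that the $\epsilon$-annotations faithfully encode whether each interval endpoint is itself reachable or only a limit of reachable points. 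Checking that the update rules preserve this encoding across every cell, including the degenerate cases where a reachable interval collapses to the corner point, is where the bulk of the careful case analysis lies; once the invariant is shown to be maintained, the stated equivalence is immediate.
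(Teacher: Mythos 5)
Your proposal is correct and follows essentially the same route as the paper: the paper states this lemma without a separate proof body, treating it as immediate from the recursive construction of the parameters $\creach{\mya}_{i,j}^q, \creach{\myb}_{i,j}^q$ and $\cpoint_{i,j}$ in the appendix, which is justified cell-by-cell using exactly your ingredients --- convexity of $\free_{\delta}(f_{[i]}, g_{[j]})$ for propagation, entry into a cell only via its left edge, bottom edge, or corner, and the separate corner-point and $\epsilon$-annotation bookkeeping forced by \emph{strict} monotonicity. Your explicit invariant (the computed parameters describe exactly the reachable portion of each cell boundary) is precisely what that construction maintains, so specializing to the top edge of cell $(m_f-1, m_g-1)$ gives the lemma just as you say.
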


% Observing that there exists a strictly increasing monotone curve from $(0,0)$ to
% $(m_f, m_g)$ iff $(m_f, m_g) =  \creach{\myb}_{m_f-1,m_g-1}^2$, we get the following
% result.

Note that since  require reparametrizations to be \emph{strictly} increasing, 
it might be the case that 
there is no strictly increasing monotone curve from $(0,0)$ to
$(m_f, m_g)$ in $\free_{\delta}(f,g)$, and that such curves exist in
$\free_{\delta+\epsilon'}(f,g)$ for every $\epsilon' > 0$.
This presents a complication in determining whether $\fre(f,g) \leq \delta$,
as a value of $\delta$ might be the limit obtained by a sequence of reparametrizations.
In the free space formulation, this situation can arise when there is only a horizontal
line (\emph{i.e.} a non-decreasing monotone curve) 
 which can cross a cell boundary for $\delta$; with strictly increasing monotone
curves only being available for $\delta'> \delta$.
See Figure~\ref{figure:NonBijective} which depicts this situation for cell $i,j$ in the
two free space diagrams, with $\delta' > \delta$.

\begin{figure}[h]
%\vspace{-1em}
\strut\centerline{\setlength{\unitlength}{0.00061242in}
\begingroup\makeatletter\ifx\SetFigFont\undefined%
\gdef\SetFigFont#1#2#3#4#5{%
  \reset@font\fontsize{#1}{#2pt}%
  \fontfamily{#3}\fontseries{#4}\fontshape{#5}%
  \selectfont}%
\fi\endgroup%
{\renewcommand{\dashlinestretch}{30}
\begin{picture}(5424,2364)(0,-10)
\texture{55888888 88555555 5522a222 a2555555 55888888 88555555 552a2a2a 2a555555 
	55888888 88555555 55a222a2 22555555 55888888 88555555 552a2a2a 2a555555 
	55888888 88555555 5522a222 a2555555 55888888 88555555 552a2a2a 2a555555 
	55888888 88555555 55a222a2 22555555 55888888 88555555 552a2a2a 2a555555 }
\shade\path(12,2337)(1812,2337)(1812,537)
	(12,537)(12,2337)
\path(12,2337)(1812,2337)(1812,537)
	(12,537)(12,2337)
\shade\path(3612,2337)(5412,2337)(5412,537)
	(3612,537)(3612,2337)
\path(3612,2337)(5412,2337)(5412,537)
	(3612,537)(3612,2337)
\whiten\path(3612,1527)(3972,1887)(5052,1887)
	(5412,1617)(5412,1302)(4602,942)
	(4197,942)(3612,1347)(3612,1527)(3612,1527)
\path(3612,1527)(3972,1887)(5052,1887)
	(5412,1617)(5412,1302)(4602,942)
	(4197,942)(3612,1347)(3612,1527)(3612,1527)
\whiten\path(12,1437)(462,1887)(1362,1887)
	(1812,1437)(912,987)(687,987)
	(12,1437)(12,1437)
\path(12,1437)(462,1887)(1362,1887)
	(1812,1437)(912,987)(687,987)
	(12,1437)(12,1437)
\put(462,87){\makebox(0,0)[lb]{\smash{{\SetFigFont{11}{13.2}{\familydefault}{\mddefault}{\updefault}$\free_{\delta}(f,g)$}}}}
\put(3927,87){\makebox(0,0)[lb]{\smash{{\SetFigFont{11}{13.2}{\familydefault}{\mddefault}{\updefault}$\free_{\delta'}(f,g)$}}}}
\end{picture}
}}
%\vspace*{-4mm}
 \caption{Cell $i,j$ in $\free_{\delta}(f,g)$ and $\free_{\delta}(f,g)$; with $\delta' > \delta$.}
\label{figure:NonBijective}
% \vspace{-1em}
\end{figure}

\smallskip\noindent\textbf{The non-bijective \frechet distance.}
Consider a variant of the \frechet distance in which we drop the requirement of the
reparametrizations $\alpha_f, \alpha_g$ to be \emph{strictly} increasing in 
Definition~\ref{def:frechet}; and instead only require them to be continuous and
non-decreasing.
This implies that an entire segment of the curve $f$ can be matched to a single
point of $g$ (and vice versa).
In the free space approach, it means that we can now consider continuous and
non-decreasing curves from  $(0,0)$ to $(m_f, m_g)$.
Let us denote this version of the distance as the non-bijective \frechet distance
\footnote{We use the superscript $^{\nbij}$ for entities relating to
the  non-bijective \frechet distance.}, and the strictly increasing version as
the bijective one.
For this version, we cannot have that there exist
non-decreasing monotone reparametrizations for every $\delta' > \delta^*$, but
not for $\delta^*$.
This is because the free space at the cell boundaries is always a closed interval
for every $\delta$.  
We can show that the endpoints of the intersections of
these intervals for opposite  cell boundaries are continuous functions for
$L_1, L_2, L_{\infty}, L_1^{\skoro}, L_2^{\skoro}, L_{\infty}^{\skoro}$ norms.
Finally, using compactness of $[\delta^*, \delta']$ and continuity, we can show that
the intersection of the opposite cell boundaries will be non-empty for $\delta^*$.
Thus, for the non-bijective variant of the \frechet distance, there exist reparametrizations
which achieve the \frechet distance.
This gives us the following result.

\begin{proposition}[\cite{AltG95}]
Let $f:[0, m_f] \rightarrow \reals^n$ and $g:[0, m_g] \rightarrow \reals^n$ be polygonal
curves.
 There exists a non-decreasing  monotone curve from $(0,0)$ to
$(m_f, m_g)$ in the free space diagram $\free_{\delta}(f,g)$ iff
$\fre^{\nbij}(f,g) \leq \delta$.\qed
\end{proposition}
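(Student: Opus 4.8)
The plan is to prove the two implications separately, disposing of the routine direction first and concentrating the work on showing that the non-bijective infimum is actually attained. For the ``only if'' direction, suppose there is a continuous non-decreasing curve $\alpha:[0,1]\to[0,m_f]\times[0,m_g]$ lying in $\free_{\delta}(f,g)$ with $\alpha(0)=(0,0)$ and $\alpha(1)=(m_f,m_g)$. I would write $\alpha(\theta)=(\alpha_f(\theta),\alpha_g(\theta))$ via the coordinate projections. Monotonicity of $\alpha$ makes each of $\alpha_f,\alpha_g$ continuous and non-decreasing, and the boundary values together with the intermediate value theorem force them to be surjective onto $[0,m_f]$ and $[0,m_g]$; hence they are admissible reparametrizations for $\fre^{\nbij}$. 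Membership of the whole curve in $\free_{\delta}$ is by definition the pointwise bound $\lVert f(\alpha_f(\theta))-g(\alpha_g(\theta))\rVert\le\delta$ for every $\theta$, so the supremum over $\theta$ is $\le\delta$ and therefore $\fre^{\nbij}(f,g)\le\delta$. This is the same bookkeeping already carried out in Proposition~\ref{proposition:SkoroToFrechet} and needs no new idea.

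The ``if'' direction is the substantive one. The difficulty is that $\fre^{\nbij}(f,g)\le\delta$ only asserts that the defining infimum is $\le\delta$, and a priori this infimum need not be attained: there could be monotone curves in $\free_{\delta'}(f,g)$ for every $\delta'>\delta$ yet none in $\free_{\delta}(f,g)$. The plan is to rule this out by proving that the non-bijective infimum is attained, equivalently that the set of levels admitting such a curve is closed from below. Writing $\delta^{*}=\fre^{\nbij}(f,g)$, I would choose reparametrization pairs $(\alpha_f^{k},\alpha_g^{k})$ whose values $v_k=\sup_{\theta}\lVert f(\alpha_f^{k}(\theta))-g(\alpha_g^{k}(\theta))\rVert$ decrease to $\delta^{*}$, and reparametrize each induced monotone curve $\alpha^{k}$ in the free-space diagram by a common parameter (for instance normalized $L_1$ arc length, under which the curves become $1$-Lipschitz). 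The family is then uniformly bounded and equicontinuous, so Arzel\`a--Ascoli yields a uniformly convergent subsequence with continuous limit $\alpha^{*}$. Each coordinate of $\alpha^{*}$ is a uniform limit of non-decreasing functions, hence non-decreasing, and the endpoints $(0,0)$ and $(m_f,m_g)$ are preserved, so $\alpha^{*}$ is an admissible monotone curve. By continuity of $f$, $g$ and the norm, $\lVert f(\alpha_f^{*}(\theta))-g(\alpha_g^{*}(\theta))\rVert=\lim_k\lVert f(\alpha_f^{k}(\theta))-g(\alpha_g^{k}(\theta))\rVert\le\lim_k v_k=\delta^{*}\le\delta$ for every $\theta$; since $\free_{\delta}(f,g)$ is closed (the norm is continuous and the defining inequality is non-strict) this places $\alpha^{*}$ in $\free_{\delta}(f,g)$, as required.

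Alternatively, and closer to the combinatorial machinery developed above, I would argue directly on the free-space diagram: by Proposition~\ref{proposition:FreeSpaceConvex} the trace of $\free_{\delta'}$ on each cell boundary is a closed interval, and the explicit cell-boundary computations show its endpoints vary continuously with $\delta'$; monotone reachability of $(m_f,m_g)$ is then expressed by finitely many interval-intersection and non-emptiness conditions (finiteness coming from the polygonal hypothesis) whose endpoints are continuous in $\delta'$. Letting $\delta'\downarrow\delta^{*}$ and using compactness of $[\delta^{*},\delta']$ together with this continuity, the limiting intersections remain non-empty at $\delta^{*}$, again producing a curve at level $\delta^{*}\le\delta$.

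The main obstacle in either route is precisely this attainment step: upgrading ``$\fre^{\nbij}\le\delta$'' to the existence of a genuine curve at level $\delta$, rather than only at every level strictly above it. It is exactly here that the non-bijective hypothesis is used---closedness of the cell-boundary intervals, and hence of the reachable sets, fails for the strictly increasing variant---which is why the paper keeps the two versions apart and treats the bijective case with the more delicate $\epsilon$-formalism developed around Algorithm~\ref{algo:reach}.
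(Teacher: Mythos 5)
Your proof is correct, but your main argument reaches the crucial attainment step by a genuinely different route than the paper. The paper's justification (following~\cite{AltG95}) never leaves the free-space diagram: it notes that the free space on each cell boundary is a closed interval whose endpoints vary continuously with $\delta$, and then uses compactness of $[\delta^*,\delta']$ to argue that monotone reachability of $(m_f,m_g)$, available for every $\delta'>\delta^*$, persists at $\delta^*$ itself --- which is precisely your secondary, ``alternative'' argument. Your primary argument instead works in function space: normalize the witnessing monotone curves with values $v_k\rightarrow\delta^*$ by $L_1$ arc length so they are uniformly Lipschitz, extract a uniformly convergent subsequence by Arzel\`a--Ascoli, and check that monotonicity, the endpoints, and the pointwise bound $\le\delta^*$ all pass to the limit. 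This is sound; two small points are worth making explicit: the arc-length reparametrization is well defined even where a curve pauses (on such an interval both coordinates are constant, since they are non-decreasing with constant sum), and the normalized curves are $(m_f+m_g)$-Lipschitz rather than $1$-Lipschitz, which is all that equicontinuity requires. The trade-off between the two routes: your compactness argument is self-contained, avoids the continuity-in-$\delta$ claims about cell-boundary endpoints that the paper asserts but does not prove for the six norms, and uses no polygonality whatsoever, so it establishes attainment for arbitrary continuous curves in any norm; the paper's route stays inside the finite cell structure that the algorithm actually manipulates, so its ingredients (closed boundary intervals and their dependence on $\delta$) are reused directly in the critical-value analysis of Theorem~\ref{theorem:FrechetAlgo}. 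Your ``only if'' direction is the same routine projection-plus-IVT bookkeeping as in Proposition~\ref{proposition:SkoroToFrechet}, as you note.
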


The algorithm of the bijective \frechet distance decision problem
 can be easily modified for the non-bijective version.
The next lemma shows that the distance under the two semantics remains the same.

\begin{lemma}
\label{lemma:BothFrechetSame}
Let $f:[a_f, b_f] \rightarrow {\myV}$ and $g:[a_g, b_g] \rightarrow {\myV}$ be curves
such that $ a_f \neq b_f$ and $ a_g \neq b_g$.
We have $\fre^{\nbij}(f,g) = \fre(f,g)$.
% \leq \delta$, then for any $\epsilon > 0$ we have that
% $\fre(f,g) \leq \delta +\epsilon$.
\end{lemma}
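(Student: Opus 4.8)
The plan is to prove the two inequalities $\fre^{\nbij}(f,g) \le \fre(f,g)$ and $\fre(f,g) \le \fre^{\nbij}(f,g)$ separately. The first is immediate: every strictly increasing continuous bijection is in particular continuous and non-decreasing, so the reparametrizations admissible for $\fre(f,g)$ form a subset of those admissible for $\fre^{\nbij}(f,g)$, and an infimum over a smaller set can only be larger. Hence $\fre^{\nbij}(f,g) \le \fre(f,g)$, and all the work lies in the reverse direction.

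For the reverse inequality I would show that any matching realised by non-decreasing reparametrizations can be approximated arbitrarily well by strictly increasing ones. Fix $\epsilon > 0$ and pick continuous non-decreasing surjections $\alpha_f : [0,1] \to [a_f, b_f]$ and $\alpha_g : [0,1] \to [a_g, b_g]$ with $\max_{\theta} \norm{f(\alpha_f(\theta)) - g(\alpha_g(\theta))} \le \fre^{\nbij}(f,g) + \epsilon/2$. Let $\ell_f(\theta) = (1-\theta)\, a_f + \theta\, b_f$ and $\ell_g(\theta) = (1-\theta)\, a_g + \theta\, b_g$ be the linear bijections onto the two domains; these are strictly increasing precisely because $a_f \ne b_f$ and $a_g \ne b_g$, which is exactly where the hypothesis of the lemma is used. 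For $\lambda \in (0,1)$ I would define the perturbed reparametrizations
\[
\beta_f = (1-\lambda)\,\alpha_f + \lambda\,\ell_f, \qquad
\beta_g = (1-\lambda)\,\alpha_g + \lambda\,\ell_g.
\]
Each $\beta$ is a convex combination of a non-decreasing function and a strictly increasing one, hence strictly increasing; and since $\alpha$ and $\ell$ agree at the endpoints of $[0,1]$ (both mapping $0 \mapsto a$ and $1 \mapsto b$ for the respective domains), each $\beta$ remains a continuous bijection onto its domain. Thus $\beta_f, \beta_g$ are admissible for $\fre(f,g)$.

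It then remains to bound the value attained by $\beta_f, \beta_g$. Since $f$ and $g$ are continuous on the compact intervals $[a_f,b_f]$ and $[a_g,b_g]$, they are uniformly continuous. The pointwise displacement satisfies $\abs{\beta_f(\theta) - \alpha_f(\theta)} = \lambda\,\abs{\ell_f(\theta) - \alpha_f(\theta)} \le \lambda\,(b_f - a_f)$ uniformly in $\theta$, and likewise for $g$, so by uniform continuity both $\norm{f(\beta_f(\theta)) - f(\alpha_f(\theta))}$ and $\norm{g(\beta_g(\theta)) - g(\alpha_g(\theta))}$ can be made uniformly smaller than $\epsilon/4$ by taking $\lambda$ small enough. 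A triangle-inequality estimate then yields $\max_\theta \norm{f(\beta_f(\theta)) - g(\beta_g(\theta))} \le \fre^{\nbij}(f,g) + \epsilon$, so $\fre(f,g) \le \fre^{\nbij}(f,g) + \epsilon$. Letting $\epsilon \to 0$ gives $\fre(f,g) \le \fre^{\nbij}(f,g)$, which together with the easy direction establishes equality.

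I expect the only real subtlety — the main obstacle — to be verifying that the single perturbation simultaneously preserves strict monotonicity and surjectivity while keeping the matching value under control; all three rest on the convex-combination-with-a-linear-bijection trick together with uniform continuity, and the non-degeneracy hypotheses $a_f \ne b_f$ and $a_g \ne b_g$ are exactly what guarantee that the linear bijections $\ell_f, \ell_g$ exist and are strictly increasing. Everything else is a routine $\epsilon$--$\lambda$ estimate.
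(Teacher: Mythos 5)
Your proposal is correct and follows essentially the same route as the paper: the inequality $\fre^{\nbij}(f,g) \le \fre(f,g)$ is dismissed as trivial, and the reverse direction is obtained by approximating optimal non-decreasing reparametrizations by strictly increasing ones at an arbitrarily small cost, then letting the error tend to zero. In fact, your convex-combination perturbation $\beta = (1-\lambda)\,\alpha + \lambda\,\ell$ together with the uniform-continuity estimate makes explicit the one step the paper merely asserts (that strictly increasing approximations exist because $f,g$ are continuous on non-degenerate intervals), so your write-up is, if anything, more detailed than the paper's own proof.
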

\begin{proof}
Suppose $\fre^{\nbij}(f,g) \leq \delta$. 
We show $\fre(f,g) \leq \delta$ (the other direction is trivial).
Since $\fre^{\nbij}(f,g) \leq \delta$, given any $\epsilon > 0$, there exist
  reparametrizations
$\alpha_f^{\nbij}, \alpha_g^{\nbij}$ as in Definition~\ref{def:frechet} (for
non-bijective \frechet distance) that are continuous and non-decreasing such that
$\max_{0\leq \theta \leq 1} \left\lVert f\left(\alpha^{\nbij}_f(\theta)\right) - 
g\left(\alpha^{\nbij}_g(\theta)\right) \right\rVert \leq \delta + \epsilon$.
Since $f$ and $g$ are continuous, and each is defined over an interval that is
not a singleton point, given any $\epsilon' > 0$, 
 we can obtain parameterizations $\alpha_f, \alpha_g$ from 
$\alpha_f^{\nbij}, \alpha_g^{\nbij}$  which are continuous and \emph{strictly}
increasing such that 
$\max_{0\leq \theta \leq 1} \left\lVert f\left(\alpha_f(\theta)\right) - g\left(\alpha_g(\theta)\right) \right\rVert \leq \delta + \epsilon + \epsilon'$.
The result follows choosing $\epsilon,  \epsilon' \longrightarrow 0$.
%\qed
\end{proof}

The  parameters $\creach{\mya}_{i,j}^q, \creach{\myb}_{i,j}^q$
%, \cpoint_{i,j} $ 
for the non-bijective \frechet distance variant can be computed inductively as before.
%The procedure is given in the Appendix.
This, together with the results of the present section  gives us the following result.

\begin{proposition}[The \frechet distance decision problem]
\label{proposition:ReachParamDecision}
Let $f:[0, m_f] \rightarrow \reals^n$ and $g:[0, m_g] \rightarrow \reals^n$ be polygonal
curves, and let $ \reals^n$ be equipped with the 
norm $\chi$.
%\in \set{L_1, L_2, L_{\infty}}$ norm.
Given $\delta \geq 0$,  there is an algorithm running in
 $O\left(m_f\vdot m_g \vdot H(\chi) \right)$ 
time which decides whether $\fre(f,g) \leq \delta$, where
$ H(\chi)$ is the time required to determine the parameters
$\mya_{i,j}^q, \myb_{i,j}^q$ for $0\leq q\leq 3$
%, and  $\cpoint_{i,j} $
 for a cell $i,j$ (\emph{i.e.} to compute the free space boundaries
for two lines), 
for   the norm  $\chi$.
%\in \set{L_1, L_2, L_{\infty}}$.
The algorithm also decides whether there exist reparametrizations
$\alpha_f, \alpha_g$ such that
$\max_{0\leq \theta \leq 1} \left\lVert f\left(\alpha_f(\theta)\right) - g\left(\alpha_g(\theta)\right) \right\rVert \leq \delta$.
\end{proposition}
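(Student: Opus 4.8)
The plan is to assemble the pieces already established and reduce both decision questions to a single reachability computation in the free space diagram, run by Algorithm~\ref{algo:reach} (and its non-bijective variant). Note first that the statement asks two genuinely different questions: whether $\fre(f,g)\leq\delta$ (the infimum is bounded), and whether there actually exist reparametrizations attaining $\max_{\theta}\lVert f(\alpha_f(\theta))-g(\alpha_g(\theta))\rVert\leq\delta$. These must be kept separate precisely because the strictly increasing requirement means the infimum can be approached but not attained. For the first question, I would invoke Lemma~\ref{lemma:BothFrechetSame} (both curves have non-degenerate domains) to get $\fre(f,g)=\fre^{\nbij}(f,g)$, and then the non-bijective characterization preceding Lemma~\ref{lemma:BothFrechetSame} to conclude that $\fre^{\nbij}(f,g)\leq\delta$ holds iff there is a \emph{non-decreasing} monotone curve in $\free_{\delta}(f,g)$ from $(0,0)$ to $(m_f,m_g)$; this is exactly what the non-bijective modification of the dynamic program decides. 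For the second question, Proposition~\ref{proposition:FreeSpaceCurve} together with the identification of reparametrization pairs $(\alpha_f,\alpha_g)$ with monotone curves $(\alpha_f,\alpha_g)$ in the diagram shows that attaining reparametrizations exist iff there is a \emph{strictly} increasing monotone curve from $(0,0)$ to $(m_f,m_g)$; by Lemma~\ref{lemma:FrechetDecisionParam} this is equivalent to $\creach{\myb}_{m_f,m_g}^2=(m_f,m_g)$, which is the final test in Algorithm~\ref{algo:reach}.

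The remaining work is to verify that the dynamic program correctly computes the monotone-reachable portion of each cell boundary. I would argue by induction on cells processed in the row/column order of Algorithm~\ref{algo:reach}. The base cases are the vertical reach boundaries of the bottom row and the horizontal reach boundaries of the first column, which are read off directly from the free-space boundaries $\mya_{i,j}^q,\myb_{i,j}^q$. For the inductive step at cell $i,j$, the reachable portion of the right boundary (resp.\ top boundary) is determined from the already-computed reachable portions of the left and bottom boundaries together with the corner point $\cpoint_{i-1,j-1}$, exactly as in the update rules of the algorithm. The crucial geometric input is convexity of the cell free space (Proposition~\ref{proposition:FreeSpaceConvex}): it guarantees that within a cell a monotone curve can reach every point lying between the lowest and highest reachable points on an outgoing boundary, so that each reachable set remains a single interval, possibly augmented by one corner point. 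The $\epsilon$-annotations $\langle x,+\rangle$ and the separate tracking of $\cpoint_{i,j}$ are what let the induction distinguish reachability by strictly increasing curves from reachability by merely non-decreasing ones.

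For the running time, there are $m_f\vdot m_g$ cells; for each cell computing the boundary parameters $\mya_{i,j}^q,\myb_{i,j}^q$ costs $H(\chi)$, while each dynamic-programming update (combining two previously computed intervals and a corner point) costs $O(1)$. Summing over all cells gives the claimed $O\!\left(m_f\vdot m_g\vdot H(\chi)\right)$ bound, with the final reachability test adding only $O(1)$. I expect the main obstacle to be the case analysis in the inductive step that shows the strictly increasing reachable set is always a line segment plus at most one corner point and that the update preserves this structure; this is where the distinction between strictly increasing and non-decreasing reparametrizations is most delicate, and it is the reason the corner-point formalism and the $\epsilon$-bookkeeping are needed on top of the convexity argument.
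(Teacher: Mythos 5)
Your proposal is correct and follows essentially the same route as the paper: reduce the first question to non-decreasing reachability via Lemma~\ref{lemma:BothFrechetSame} and the non-bijective free-space characterization, reduce the second (attainment) question to strictly increasing reachability via Lemma~\ref{lemma:FrechetDecisionParam}, and run the dynamic program over the $m_f\vdot m_g$ cells at cost $H(\chi)$ plus $O(1)$ per cell. The only difference is one of exposition: you spell out the inductive correctness of the cell-boundary updates (interval plus corner point, using Proposition~\ref{proposition:FreeSpaceConvex}), which the paper leaves implicit by deferring the update rules to its appendix.
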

\begin{proof}
The existence of the algorithm, and its complexity
follows from  
Lemmas~\ref{lemma:FrechetDecisionParam},~\ref{lemma:BothFrechetSame},
and the dynamic programming algorithm for computing the parameters on free space
boundaries of cells.

For the second objective of the algorithm, 
we first determine whether $\fre^{\nbij}(f,g) \leq \delta$, and if so
check whether there exists a strictly increasing monotone curve from
from $(0,0)$ to
$(m_f, m_g)$ in $\free_{\delta}(f,g)$.
If there exists such a curve, then the optimal \frechet distance can be achieved using
strictly increasing reparametrizations $\alpha_f, \alpha_g$
such that 
$\max_{0\leq \theta \leq 1} \left\lVert f\left(\alpha_f(\theta)\right) - g\left(\alpha_g(\theta)\right) \right\rVert \leq \delta$.
If there does not exist such a curve, then the value $\delta$ cannot be achieved
by any strictly increasing reparametrizations $\alpha_f, \alpha_g$.
%\qed

\end{proof}

\smallskip\noindent\textbf{Limiting the Reparametrizations using Windows.}
Given polygonal curves $f:[0, m_f] \rightarrow \reals^n$ and 
$g:[0, m_g] \rightarrow \reals^n$,
the \frechet reparametrizations allow portions of the line segment
$f_{[j]}$ to be matched to line segments $g_{[k]}$ for any $k\geq 0$.
An additional window 
requirement on the reparametrizations can be imposed 
which requires that the curve segment
$f_{[k]}$ be only be allowed to match the curves segments 
$g_{[\max(0, k-W)]}, g_{[\max(0, k-W+1)]}, \dots  g_{[\min(k+W, m_g)]}$, thus a window of
$W$ around each index, and $\abs{\alpha_f(\nu)-\alpha_g(\nu) }\leq W$ for all 
$\nu\in [0,1]$.
This means that only the free cells $i,j$  such that $\abs{i-j} \leq W $ are relevant.
There are at most $W\cdot \max(m_f, m_g)$ such cells.
The rationale behind the window requirement is that for practical applications, we
are often only interested in reparametrizations for which the maximal curve
parameter deviation is bounded by a constant.
The window requirement 
 can be used the bring down the complexity of the \frechet distance decision
problem to $O\left(W\vdot\max(m_f, m_g )\vdot H(\chi) \right)$,
where  $H(\chi)$ is as in Proposition~\ref{proposition:ReachParamDecision}.

\subsection{Algorithm for Determining the Value of 
the \frechet Distance}

% The results of the previous section imply that 
% the \frechet distance between two polygonal  curves is equal to the 
% the least possible
% $\delta$ such that there exists a  strictly increasing monotone curve 
% in the free space $\free_{\delta}(f,g)$ from $(0,0)$ to $(m_f, m_g)$.
% Since we require reparametrizations to be \emph{strictly} increasing, 
% such a least possible
% $\delta$ might not exist, in that case we want to compute the infimum
% ( and possibly determine
% whether this infimum can be achieved).

% First, we present a method to determine whether the \frechet
% distance for two given polygonal curves is an infimum which
% cannot be achieved.
% Consider 

% We now show how to compute the non-bijective \frechet distance for polygonal
% curves.
% As a corollary we will obtain the fact that the optimal distance is actualized
% with some $\alpha_f^{\nbij}, \alpha_g^{\nbij}$ reparametrizations.
% This gives us to determine whether the original \frechet distance is a value
% that can be achieved --- compute the optimal  non-bijective \frechet distance,
% and then check whether

In this section we use the non-bijective \frechet distance formulation.
%By Lemma~\ref{lemma:BothFrechetSame}, the values remain unchanged.

\smallskip\noindent\textbf{Critical Values of $\delta$.}
The free space $\free^{\nbij}_{\delta}(f,g)$ keeps increasing as we increase
$\delta$.
As the free space gets bigger, new paths open up which make
a non-decreasing curve from $(0,0)$ to
$(m_f, m_g)$ in the free space feasible.
Consider the free space diagram.
We explore when new paths become feasible.
%Observe that new paths become feasible at the following values of $\delta$
Based on the  sketch in~\cite{AltG95} for $\reals^2$, this happens at the following
values of $\delta$ for $\reals^n$.
\begin{enumerate}
\item Values of $\delta$ for which $ (0,0)$ and $(m_f, m_g)$ ``get'' into
the free space.
These two values of $\delta$  are $\lVert f(0) -g(0)\rVert$ and $\lVert f(m_f) -g(m_g)\rVert$.
\item 
 Values of $\delta$ which enable a monotone non-decreasing curve to enter a cell.
A curve can enter a cell $i,j$ if either $ \myb_{i,j}^3$ or   $ \myb_{i,j}^0$ is not equal
to $\bot$
(the case where a curve enters cell $i,j$  through the corner point $(i,j)$ is covered
by this condition).
Let $\delta$ be the least value which makes 
 $ \myb_{i,j}^3= (i, \Delta_j)\neq \bot$.
This is the least value of  $\delta$ for which we have $\myline( \myb_{i,j}^3, \mya_{i,j}^3)$
is non-empty.
Thus, this thee least value of  $\delta$ for which
the  \emph{point} $f(i)$ is at most $\delta$ away from
the \emph{line segment} $g_{[j]}$.
It follows that this value of $\delta$ is just the distance of the point
$f(i)$ from the line segment   $g_{[j]}$.
Similarly, the least value of $\delta$  which makes 
 $ \myb_{i,j}^0\neq \bot$  is  the distance of the point
$g(j)$ from the line segment   $f_{[i]}$.
Since there are $m_f\vdot m_g$ cells, there are  $2\vdot m_f\vdot m_g$ such critical
$\delta$ values.

\item 
Value of $\delta$ which enables a curve to go from cell $i,j$ to cell $k,j$ (for $ k>i$), 
that is a value
which makes the free space big enough so that at least a  horizontal line can
(possibly) go from  cell $i,j$ to cell $k,j$.
This happens when $\second(\mya_{i,j}^1) \leq  \second(\myb_{k,j}^3)$.
Observe that if  $\second(\mya_{i,j}^1) > \second(\myb_{k,j}^3)$, then even if the rest of the cells are fully free,
there cannot be a curve from cell  $i,j$ to cell $k,j$.
When $\second(\mya_{i,j}^1) $ becomes equal to, or greater than
$\second(\myb_{k,j}^3)$, by increasing $\delta$, it 
enables a curve to go from cell $i,j$ to cell $k,j$.
We can obtain the value of this special $\delta$ as follows.
For a point $s$, let $\ball(s, \delta)$ denote the set of points which are at most
$\delta$ away from $s$, formally
$\ball(s, \delta)= \set{q \mid \norm{q-s}\leq \delta}$.
We are interested in the least $\delta$ such that
there is some point on the line segment $ g_{[j]}$ that is at most $\delta$ away from
the point $f(i)$, and also from the point $f(k)$.
%$\ball(\mya_{i,j}^3, \delta) \cap \ball(\myb_{k,j}^3, \delta)  \cap g_{[j]}$ is non-empty.
Mathematically, this is equivalen to finding the least $\delta$ such that
$\ball(f(i), \delta) \cap \ball(f(k), \delta)  \cap g_{[j]}$ is non-empty.
We prove in the next lemma that such a least $\delta$  exists.
These $\delta$ values are called \emph{horizontally clamped} $\delta$ values.

See Figure~\ref{figure:HClamp} for a horizontally clamped situation,
where
$\mya_{i,j}^1 = (i+1, \Delta)$, and $\myb_{k,j}^3 = (k, \Delta)$, \emph{i.e.}, both points have the same $\rho_g$ values.
Note that the only monotone non-decreasing  curve which can pass from cell $i,j$ to
cell $k,j$ must have a horizontal 
straight line segment from point $\mya_{i,j}^1$ to point
$\myb_{k,j}^3$.
\begin{figure}[h]
%\vspace{-1em}
\strut\centerline{\setlength{\unitlength}{0.00034996in}
\begingroup\makeatletter\ifx\SetFigFont\undefined%
\gdef\SetFigFont#1#2#3#4#5{%
  \reset@font\fontsize{#1}{#2pt}%
  \fontfamily{#3}\fontseries{#4}\fontshape{#5}%
  \selectfont}%
\fi\endgroup%
{\renewcommand{\dashlinestretch}{30}
\begin{picture}(12624,3639)(0,-10)
\path(3612,3612)(9012,3612)(9012,12)
	(3612,12)(3612,3612)
\path(9012,3612)(12612,3612)(12612,12)
	(9012,12)(9012,3612)
\path(12,3612)(3612,3612)(3612,12)
	(12,12)(12,3612)
\texture{55888888 88555555 5522a222 a2555555 55888888 88555555 552a2a2a 2a555555 
	55888888 88555555 55a222a2 22555555 55888888 88555555 552a2a2a 2a555555 
	55888888 88555555 5522a222 a2555555 55888888 88555555 552a2a2a 2a555555 
	55888888 88555555 55a222a2 22555555 55888888 88555555 552a2a2a 2a555555 }
\shade\path(3612,1812)(3612,12)(12,12)
	(12,912)(3612,1812)(3612,1812)
\path(3612,1812)(3612,12)(12,12)
	(12,912)(3612,1812)(3612,1812)
\shade\path(12,2712)(957,3612)(12,3612)
	(12,2712)(12,2712)
\path(12,2712)(957,3612)(12,3612)
	(12,2712)(12,2712)
\shade\path(3612,3162)(1992,3612)(3612,3612)
	(3612,3162)(3612,3162)
\path(3612,3162)(1992,3612)(3612,3612)
	(3612,3162)(3612,3162)
\shade\path(9012,1812)(9012,3612)(10812,3612)
	(9012,1812)(9012,1812)
\path(9012,1812)(9012,3612)(10812,3612)
	(9012,1812)(9012,1812)
\shade\path(11262,3612)(12612,2442)(12612,3612)
	(11262,3612)(11262,3612)
\path(11262,3612)(12612,2442)(12612,3612)
	(11262,3612)(11262,3612)
\shade\path(9012,462)(9012,12)(10182,12)
	(9012,462)(9012,462)
\path(9012,462)(9012,12)(10182,12)
	(9012,462)(9012,462)
\path(3612,1812)(9012,1812)
\put(9237,597){\makebox(0,0)[lb]{\smash{{\SetFigFont{9}{10.8}{\familydefault}{\mddefault}{\updefault}$\mya_{k,j}^3$}}}}
\put(9462,1767){\makebox(0,0)[lb]{\smash{{\SetFigFont{9}{10.8}{\familydefault}{\mddefault}{\updefault}$\myb_{k,j}^3$}}}}
\put(3792,3162){\makebox(0,0)[lb]{\smash{{\SetFigFont{9}{10.8}{\familydefault}{\mddefault}{\updefault}${\myb}_{i,j}^1$}}}}
\put(3702,1947){\makebox(0,0)[lb]{\smash{{\SetFigFont{9}{10.8}{\familydefault}{\mddefault}{\updefault}${\mya}_{i,j}^1$}}}}
\end{picture}
}}
%\vspace*{-4mm}
 \caption{Horizontally clamped cells $(i,j)$ and  $(k,j)$ in $\free_{\delta}(f,g)$.}
\label{figure:HClamp}
% \vspace{-1em}
\end{figure}

\begin{comment}
 and the corresponding 
 reparametrizations $\alpha_f^{\nbij}, \alpha_g^{\nbij}$ are such that
the entire curve segment between $f(i)$ and $f(k)$ is mapped to a single
point on $g_{[j]}$.
Moreover, for smaller values of $\delta$ this is not possible, as 
for $\delta'<\delta$, there is no point on $g_{[j]}$ which is at most $\delta'$
away from \emph{both} $f(i)$ and $f(k)$ 
(for  $\delta'$, we may have two \emph{different} points on
$g_{[j]}$, one of which is at most $\delta'$ away from  $f(i)$, and another one which is
at most  $\delta'$ away from  $f(k)$).

For the least such $\delta$, there is only point in the intersection.
\mynote{prove}
\end{comment}
% This point $s^*$ lies on the intersection of the 
% \emph{perpendicular bisector plane} of the line segment
% %$L=\myline(\mya_{i,j}^3 , \myb_{k,j}^3)$  
% $L=\myline(f(i),f(k))$
% with the the line segment  $g_{[j]}$.
% The perpendicular bisector plane $P$ to the line $L$ is a plane
% which passes at a $90$ degree
% angle through the mid-point of $L$.
% All points on the perpendicular bisector are equidistant from the endpoints of $L$.
% The special value of $\delta$ is $\lVert \mya_{i,j}^3 - s^*\rVert$.
% \mynote{prove}

For each $0\leq i\leq m_f-1$ and  $0\leq j\leq m_g-1$,
 there are $m_f-1 -i$ such critical  $\delta$  values for cell $i,j$.
Thus, for each $j$, there are $\sum_{i=0}^{m_f-1} (m_f-1 -i)\  =\ (m_f-1)\vdot (m_f-2)/2 $
such critical $\delta$  values.
Hence overall there are $(m_g-1)\vdot (m_f-1)\vdot (m_f-2)/2 $ such values.

A similar analysis applies when we consider vertical lines, and in this case there are
$(m_f-1)\vdot (m_g-1)\vdot (m_g-2)/2 $ such critical \emph{vertically clamped} 
$\delta$ values.

% Consider the line segment $\myline(\mya_{i,j}^3 , \myb_{k,j}^3)$.
% Let $P$ be the \emph{perpendicular bisector} of the line segment
% $L=\myline(\mya_{i,j}^3 , \myb_{k,j}^3)$ .

\end{enumerate}

\begin{lemma}
Let $s_1, s_2, l_1,l_2$ be four points in the space $\reals^n$ with the norm
$L_{1}, L_2, L_{\infty},  L_{1}^{\skoro}, 
L_{2}^{\skoro}$, or $L_{\infty}^{\skoro}$.  
There exists $\delta^*$ such that
\[\delta^* = 
\min_{\delta \geq 0}\set{\delta \mid 
 \ball(s_1,\delta) \cap \ball(s_2, \delta) \cap \myline(l_1,l_2) \neq\emptyset}.\]
\end{lemma}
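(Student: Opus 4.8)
The plan is to recast the condition defining the set $S = \set{\delta \geq 0 \mid \ball(s_1,\delta) \cap \ball(s_2, \delta) \cap \myline(l_1,l_2) \neq\emptyset}$ as a scalar minimization over the segment, and then invoke compactness. First I would observe that a point $p$ lies in $\ball(s_1,\delta) \cap \ball(s_2, \delta)$ exactly when $\norm{p - s_1} \leq \delta$ and $\norm{p - s_2} \leq \delta$, that is, when $\max\big(\norm{p-s_1}, \norm{p-s_2}\big) \leq \delta$. Accordingly I would define the function $\phi : \myline(l_1,l_2) \to \reals_+$ by
\[
\phi(p) = \max\big(\norm{p-s_1},\, \norm{p-s_2}\big),
\]
and note that $\delta \in S$ if and only if there is some $p \in \myline(l_1,l_2)$ with $\phi(p) \leq \delta$, equivalently if and only if $\inf_{p \in \myline(l_1,l_2)} \phi(p) \leq \delta$. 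Since enlarging $\delta$ only enlarges each ball, $S$ is upward closed, so it suffices to pin down its left endpoint and show it is attained.

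Next I would argue that this infimum is in fact a minimum. The segment $\myline(l_1,l_2)$ is a closed and bounded subset of the ambient finite-dimensional normed space, hence compact (each of the six norms, including $L_1^{\skoro}$ and $L_2^{\skoro}$ which the earlier lemma establishes to be genuine norms, induces the usual topology). The map $\phi$ is continuous, being the maximum of the two continuous maps $p \mapsto \norm{p - s_i}$. By the extreme value theorem, $\phi$ attains its minimum on $\myline(l_1,l_2)$ at some point $p^*$, and I would set
\[
\delta^* = \phi(p^*) = \min_{p \in \myline(l_1,l_2)} \phi(p) \ \geq\ 0.
\]

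Finally I would verify that $\delta^*$ is the desired minimum of $S$. On the one hand, the witness $p^*$ satisfies $\phi(p^*) = \delta^*$, so $p^* \in \ball(s_1,\delta^*) \cap \ball(s_2,\delta^*) \cap \myline(l_1,l_2)$ and hence $\delta^* \in S$. On the other hand, for any $\delta < \delta^*$ no point $p$ of the segment can satisfy $\phi(p) \leq \delta < \delta^* = \min_p \phi(p)$, so $\delta \notin S$. Therefore $\delta^* = \min S$, which is exactly the assertion. Nonemptiness of $S$, needed for the claim to be meaningful, is immediate: for $\delta$ large enough both balls contain the entire bounded segment, so $S \neq \emptyset$ and $\delta^* = \inf S$ is a well-defined nonnegative real.

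The substance of the lemma is precisely the passage from infimum to minimum, i.e.\ the claim that the optimal $\delta$ is genuinely attained rather than merely approached; this is the only place where anything could fail, and it is handled uniformly across all six norms by the single observation that a closed segment is compact and $\phi$ is continuous. I do not anticipate any real obstacle here, since the norm-specific geometry (which is needed later to actually \emph{compute} $\delta^*$) plays no role in this existence argument, which depends only on the metric and topological structure shared by all the norms in question.
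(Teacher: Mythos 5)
Your proof is correct and follows essentially the same route as the paper: the paper parameterizes the segment by $\lambda \in [0,1]$, defines $h(\lambda)$ as the maximum of the distances to $s_1$ and $s_2$, and argues that $h([0,1])$ is compact (hence closed, hence contains its infimum), which is exactly your extreme-value-theorem argument for $\phi$ on the compact segment. The two write-ups differ only in whether the compactness is invoked for the parameter interval or for the segment itself, which is immaterial.
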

\begin{proof}
Consider any norm under consideration.
The equation of the line is $l_1 + \lambda\vdot (l_2-l_1)$ for $0\leq \lambda \leq 1$.
Define the function $h(\lambda) =
 \max \left(
\norm{l_1 + \lambda\vdot (l_2-l_1) -s_1}\, ,\,  \norm{l_1 + \lambda\vdot (l_2-l_1) -s_1}
\right) $.
In words, $\lambda$ gives a point on the line, and $h(\lambda)$ is the maximum
of the distances to points $s_1$ and $s_2$.
Observe that
\[
\inf_{\delta \geq 0}\set{\delta \mid 
 \ball(s_1,\delta) \cap \ball(s_2, \delta) \cap \myline(l_1,l_2) \neq\emptyset}\ =\ 
\inf\set{\delta\in  h([0,1])}.
\]
It can be shown that $h$ is continuous (and bounded) on $[0,1]$.
Since $[0,1]$ is compact, and $h$ is continuous, we have that $h([0,1])$ is compact, and
thus closed (and bounded). 
Thus, $h([0,1])$ contains the infimum $\inf h([0,1])$.
Thus, there is a point on the line $\myline(l_1,l_2)$
which is at most $\delta^*$ away from  $s_1$, and from
$s_2$.
This proves the statement of the lemma.
%\qed
\end{proof}

To compute the least possible value of $\delta$ which makes
a non-decreasing curve from $(0,0)$ to
$(m_f, m_g)$ in the free space $\free_{\delta}(f,g)$ 
feasible, we find the least critical value amongst the
 $O(m_g\vdot m_f^2 + m_f\vdot m_g^2)$ values of $\delta$ for which there is such a
curve.
To do this, we sort the $O(m_g\vdot m_f^2 + m_f\vdot m_g^2) $ values and perform a binary
search using the decision procedure from the previous section.

\begin{theorem}[Computing the \frechet distance]
\label{theorem:FrechetAlgo}
Let $f:[0, m_f] \rightarrow \reals^n$ and $g:[0, m_g] \rightarrow \reals^n$ be polygonal
curves, and let $ \reals^n$ be equipped with the 
norm $\chi$.
 There is an algorithm running in time
 \[
O\Big(
\left(m_g\vdot m_f^2 + m_f\vdot m_g^2\right)\vdot
\big( P(\chi) + \log(m_g\vdot m_f) \big)\ 
+\ 
m_f\vdot m_g \vdot H(\chi) 
 \Big)
\]
which computes the  value $\fre(f,g)$
where
(i)~$ H(\chi)$ is the time required to determine the parameters
${\mya}_{i,j}^q, {\myb}_{i,j}^q$ for $0\leq q\leq 3$
%, and $\cpoint_{i,j} $ 
for a cell $i,j$ (\emph{i.e.} to compute the free space boundaries
for two lines),
and
(ii)~$P(\chi)$ is the time required to compute a critical value of $\delta$ as outlined
previously
 for   the desired norm  $\chi$.\qed
%\in \set{L_1, L_2, L_{\infty}}$.
\end{theorem}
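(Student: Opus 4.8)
The plan is to show that $\fre(f,g)$ is attained at one of the finitely many critical values of $\delta$ catalogued in the preceding subsection, and then to pin down that value by sorting the critical values and binary-searching them against the decision procedure of Proposition~\ref{proposition:ReachParamDecision}. First I would set up the correctness backbone: by Lemma~\ref{lemma:BothFrechetSame} it is enough to compute $\fre^{\nbij}(f,g)$, and since $\free_\delta(f,g)$ grows monotonically with $\delta$, there is a threshold $\delta^\ast = \fre^{\nbij}(f,g)$ such that a non-decreasing monotone curve from $(0,0)$ to $(m_f,m_g)$ exists in $\free_\delta(f,g)$ exactly for $\delta \geq \delta^\ast$. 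The heart of the argument is the claim that $\delta^\ast$ equals one of the three enumerated families: the two endpoint values $\norm{f(0)-g(0)}$ and $\norm{f(m_f)-g(m_g)}$; the $2 m_f m_g$ cell-entry values, each a point-to-segment distance; and the horizontally and vertically clamped values, each obtained from the lemma guaranteeing a least $\delta$ with $\ball(s_1,\delta)\cap\ball(s_2,\delta)\cap\myline(l_1,l_2)\neq\emptyset$.

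To prove that claim I would run a minimality (\emph{binding constraint}) argument on the reachability parameters. For fixed $\delta$, reachability of $(m_f,m_g)$ is decided entirely by the reachable sub-intervals $\creach{\mya}_{i,j}^q,\creach{\myb}_{i,j}^q$ that Algorithm~\ref{algo:reach} propagates across cell boundaries, and these endpoints are continuous functions of $\delta$ for all six norms (the same continuity already used to justify the non-bijective decision procedure). At the infimal feasible value $\delta^\ast$, decreasing $\delta$ by any $\epsilon>0$ destroys reachability, so some constraint in the propagation must be tight at $\delta^\ast$: either an endpoint has just entered the free space, or a cell has just become enterable (a point-to-segment inequality holding with equality, i.e. $\myline(\myb_{i,j}^3,\mya_{i,j}^3)$ or $\myline(\myb_{i,j}^0,\mya_{i,j}^0)$ first becomes nonempty), or a purely horizontal or vertical passage between two cells of one row or column just opens, which by the clamping lemma happens when $\second(\mya_{i,j}^1)=\second(\myb_{k,j}^3)$ and the two balls meet the intervening segment at a single point. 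Each tightness condition is precisely one of the catalogued critical values, so $\delta^\ast$ lies in the catalogue.

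With correctness established, the algorithm and its running time follow by counting. The catalogue has $2 + 2 m_f m_g + (m_g-1)(m_f-1)(m_f-2)/2 + (m_f-1)(m_g-1)(m_g-2)/2 = O(m_g m_f^2 + m_f m_g^2)$ values; computing each requires one geometric primitive (endpoint norm, point-to-segment distance, or the clamped ball--ball--segment problem), costing $P(\chi)$ apiece, which yields the $(m_g m_f^2 + m_f m_g^2)\cdot P(\chi)$ term. Sorting the values costs $(m_g m_f^2 + m_f m_g^2)\cdot\log(m_g m_f)$, the second term. Finally I would binary-search the sorted list for the least value at which the decision procedure of Proposition~\ref{proposition:ReachParamDecision} reports a feasible non-decreasing curve; each query runs in $O(m_f m_g H(\chi))$ time, dominated by the per-cell free-space-boundary computations, and $O(\log(m_g m_f))$ queries suffice, so summing the three contributions gives the stated running time.

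The hard part will be the correctness claim of the second paragraph, namely that $\delta^\ast$ is always witnessed by exactly one of the three tightness conditions, and in particular that no feasibility event can occur strictly between two cells of a row or column except through a clamped configuration. This is where I would lean most heavily on the continuity of the boundary parameters across all six norms and on the existence result for the clamped $\delta^\ast$ proved in the preceding lemma; once the event classification is secured, the counting and the sort-and-search accounting are routine.
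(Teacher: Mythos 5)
Your proposal is correct and matches the paper's own proof essentially step for step: the paper likewise works with the non-bijective formulation, enumerates the same three families of critical values (endpoint values, cell-entry point-to-segment distances, and horizontally/vertically clamped values), and then sorts them and binary-searches with the decision procedure of Proposition~\ref{proposition:ReachParamDecision}; if anything, your tightness argument for why the optimum must be a critical value is spelled out more explicitly than in the paper. One minor remark: your own accounting of the binary search ($O(\log(m_f m_g))$ queries at $O(m_f\cdot m_g\cdot H(\chi))$ each) actually yields an $m_f\cdot m_g\cdot\log(m_f m_g)\cdot H(\chi)$ contribution, a logarithmic factor that the theorem's stated bound omits --- an imprecision the paper itself shares, since its windowed variant does include that factor.
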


\smallskip\noindent\textbf{The \frechet Distance with Windows.}
As for the decision problem, we can apply a window of $W$ in the
computation of the \frechet distance.
This allows us to restrict our attention to $W\cdot \max(m_f, m_g)$  cells.
For each of these cells, there are at most $W$ horizontal and $W$ vertical 
clamping $\delta$.
Thus, there are at most   $W^2\cdot \max(m_f, m_g)$  candidate values for
the \frechet distance.
We run a binary search on these, thus, we only have to run the
\frechet distance decision algorithm (with windows) at most
$O\left(\log\left(W\vdot\max(m_f, m_g)\right)\right)$ times.
Hence the complexity of the entire algorithm with windows is
$O\Big(
W^2\cdot M \cdot \big(P(\chi) + \log\left(W\vdot M \right)\big)+
W\cdot M\cdot \log\left(W\vdot M \right)\cdot H(\chi) 
\Big)
$, where $M= \max(m_f, m_g )$; and
$H(\chi),$ and $P(\chi)$ are as in Theorem~\ref{theorem:FrechetAlgo}.
If $W$ can be taken to be a constant, the complexity is
$O\Big(  M \cdot P(\chi)  + M\cdot \log(M)\cdot H(\chi) \Big)$.

\section{Computing the Geometric Primitives}
\label{section:Geometry}

This section is concerned with solving for the following two geometric primitives
for the $6$ norms $ L_1, L_2, L_{\infty}, L_1^{\skoro}, L_2^{\skoro}, L_{\infty}^{\skoro}$
in the space $\reals^n$.
As shown in the previous section, these primitves can be used to compute a set of
``critical'' value of $\delta$  which contains the \frechet distance value for polygonal
curves.
The last $3$ norms are required by the \frechet distance based algorithm of the previous
section for computing the Skorokhod distance between two linear interpolation
traces.
\begin{enumerate}
\item 
The distance of a point $s$ to a line $\myline(z,z')$.
\item 
Given four points $s_1, s_2, z,z'$, the least $\delta\geq 0$ such that 
$\ball(s_1,\delta)\cap \ball(s_2,\delta)\cap \myline(z,z')$ is non-empty.
\end{enumerate}

We present the solutions for the various norms.
The formal definition of the distance from a point to a set is given below.
\begin{definition}[Distance]
The distance of a point  $x$ to a set $ S$ in a metric space ${\myV}$  is defined to be 
$\inf_{y\in S} \dist_{\myV}(y,x)$, where $ \dist_{\myV}$ is the metric associated with the metric space
${\myV}$.
\end{definition}

\subsection{$L_1$-norm}

In this section, all norms are $L_1$-norms.

\begin{proposition}[Distance of point to line: $L_1$]
\label{proposition:DistancePointLineLwo}
The distance $\dist_{L_1}(s, \myline(z,z'))$
 from a point  $s\in \reals^n$ to the affine line segment between two
distinct  points $z$ and $z'$ in  $\reals^n$ is the solution of the following
linear program with $2\vdot n +1$ variables
$U_i^+ ,  U_i^-$ for $1\leq i\leq n$, and $\lambda$.
\begin{alignat*}{2}
     \textsf{minimize  }\  &  \sum_{i=1}^n (U_i^+  + U_i^-) \\
     \textsf{subject to }\  & U_i^+ - U_i^- 
     = z_i-s_i + \lambda\vdot (z'_i - z_i)  & \qquad &  \text{ for } 1\leq i \leq n\\
     & 0\leq \lambda \leq 1\\
     & U_i^+ \geq 0  & \qquad &  \text{ for } 1\leq i \leq n\\
     & U_i^- \geq 0  & \qquad &  \text{ for } 1\leq i \leq n
   \end{alignat*}
 \end{proposition}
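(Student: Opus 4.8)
The plan is to recognize the linear program as the standard encoding of an $L_1$-minimization over the segment, and to verify that the encoding is tight. First I would parameterize the segment: every point of $\myline(z,z')$ has the form $z + \lambda\vdot(z'-z)$ for some $\lambda \in [0,1]$, so by the definition of distance,
\[
\dist_{L_1}(s, \myline(z,z')) = \inf_{0\leq \lambda\leq 1} \sum_{i=1}^n \abs{z_i - s_i + \lambda\vdot(z'_i - z_i)}.
\]
Writing $d_i(\lambda) = z_i - s_i + \lambda\vdot(z'_i - z_i)$, the goal becomes showing that the LP computes $\inf_{0\leq\lambda\leq 1}\sum_{i=1}^n \abs{d_i(\lambda)}$.

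The key step is the elementary fact that for any real number $d$,
\[
\min\set{u^+ + u^- \mid u^+ - u^- = d,\ u^+\geq 0,\ u^-\geq 0} = \abs{d}.
\]
I would prove this by noting that any feasible pair satisfies $u^+ + u^- \geq \abs{u^+ - u^-} = \abs{d}$, while the value $\abs{d}$ is achieved by taking $u^+ = \max(d,0)$ and $u^- = \max(-d,0)$ (at an optimum at least one of the two variables is zero, since otherwise decreasing both by a common positive amount keeps feasibility and lowers the objective). Applying this coordinate-wise, I would argue that the full LP decomposes: for each fixed value of $\lambda$, the constraints on the $(U_i^+, U_i^-)$ pairs are independent across $i$, so minimizing $\sum_i (U_i^+ + U_i^-)$ over the $U$ variables yields exactly $\sum_i \abs{d_i(\lambda)}$. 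Hence the joint minimum over all variables $(U_i^+, U_i^-, \lambda)$ equals $\inf_{0\leq\lambda\leq 1}\sum_i \abs{d_i(\lambda)}$, which is the desired distance.

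To finish, I would confirm that the infimum is actually attained, so that the LP is well-posed as a minimization: the map $\lambda \mapsto \sum_i \abs{d_i(\lambda)}$ is continuous on the compact interval $[0,1]$, so it achieves its minimum, and correspondingly the LP feasible region (a bounded polytope, since $\lambda \in [0,1]$ and the $U$-variables are pinned to finite values at optimum) admits an optimal solution. I do not expect a genuinely hard obstacle here; the only point requiring care is the tightness of the absolute-value encoding, namely justifying that an optimal solution never wastes objective by making both $U_i^+$ and $U_i^-$ strictly positive. That is exactly the complementary-slackness-style observation above, and once it is in place the equality of the two optimization problems is immediate.
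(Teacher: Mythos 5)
Your proof is correct and takes essentially the same route as the paper: parameterize the segment by $\lambda \in [0,1]$, reduce to minimizing $\sum_i |z_i - s_i + \lambda(z_i'-z_i)|$, and encode each absolute value with the split variables $U_i^+, U_i^-$. The only difference is that you prove the tightness of that encoding yourself (via $u^+ + u^- \geq |u^+ - u^-|$ and the observation that an optimum never has both variables positive), whereas the paper delegates exactly this step to a citation; note only that your parenthetical calling the LP feasible region a bounded polytope is inaccurate (the $U$ variables are unbounded above), though this is immaterial since your value-equality argument never uses it.
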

 \begin{comment}
% with $3\vdot n$ variables $u_i^+, u_i^-, s_i $ with
% $1\leq i\leq n$.
 \begin{alignat*}{2}
\textsf{minimize  }\  &\    \sum_{i=1}^n  (u_i^+  + u_i^-) \\
 \textsf{subject to }\  &  
\left.\begin{array}{ll}
  \left.\begin{array}{l}
     s_i +  u_i^+ - u_i^- \,  =\,  z_i'-z_i\\
     s_i \, \leq\,   z_i'-z_i - ( z_i-d_i)
   \end{array}\right\} & \text{ if } z_i'-z_i  \geq  z_i-d_i \\
  \left.\begin{array}{l}
    s_i +  u_i^+ - u_i^-\,  = \,  z_i-d_i \\
    s_i\,  \leq\,   z_i-d_i  - (z_i'-z_i) 
  \end{array}\right\}& \text{ if } z_i'-z_i  <  z_i-d_i 
  \end{array}\right\}   & \qquad &  \text{ for } 1\leq i \leq n\\
  & \ u_i^+ \geq 0 & \qquad &  \text{ for } 1\leq i \leq n\\
  & \ u_i^- \geq 0 & \qquad &  \text{ for } 1\leq i \leq n\\
  & \ s_i \geq 0 & \qquad &  \text{ for } 1\leq i \leq n
\end{alignat*}
\end{lemma}
\end{comment}

\begin{proof}
The equation of the  affine line segment between two points $z$ and $z'$
is $z + \lambda\vdot v$ with  $0\leq \lambda  \leq 1$, where
$v = z'-z$.
By definition, we have 
\[
\dist_{L_1}(s, \myline(z,z')) =
 \inf_{0\leq \lambda  \leq 1} \sum_{i=1}^n \abs{z_i-s_i + \lambda\vdot v_i}\]
We transform 
the above into a standard linear program in two steps.
Let $U_i = z_i-s_i + \lambda\vdot v_i$ be $n$ new variables.
%We have $0\leq t \leq 1$ iff 
%(i)~$z_i-d_i \leq u_i \leq v_i$ if $v_i \geq  z_i-d_i $;
%(ii)~$v_i \leq u_i \leq z_i-d_i $ if $v_i <  z_i-d_i $.
%
Then, we have $\dist_{L_1}(s, \myline(z,z')) $ to be a solution of
the following constraint problem:
\begin{alignat*}{2}
     \text{minimize  }\  &  \sum_{i=1}^n \abs{U_i} \\
     \text{subject to }\  &  U_i = z_i-s_i + \lambda\vdot v_i  & \qquad &  \text{ for } 1\leq i \leq n\\
     & 0\leq \lambda \leq 1
   \end{alignat*}
   
%  \begin{alignat*}{2}
%     \text{minimize  }\  &  \sum_{i=1}^n \abs{U_i} \\
%     \text{subject to }\  & 
%     \begin{array}{l}
%     z_i-d_i \, \leq\,  u_i\,  \leq\,  v_i  \text{ if } v_i \geq  z_i-d_i \\
%     v_i \, \leq\,  u_i\,  \leq\,  z_i-d_i  \text{ if } v_i <  z_i-d_i 
%   \end{array}   & \qquad &  \text{ for } 1\leq i \leq n
% \end{alignat*}
The absolute values in the objective function can be removed as follows 
(based on the sketch in~\cite{AIMMS}).
Let $U_i = U_i^+ - U_i^-$ such that  $U_i^+ \geq 0$ and  $U_i^- \geq 0$; and
 $\abs{U_i } =  U_i^+  + U_i^-$.
The previous constraint problem has the same solution as the linear program in
the statement of the lemma.
 % \begin{alignat*}{2}
% \text{minimize  }\  &   \sum_{i=1}^n  (u_i^+  + u_i^-) \\
%  \text{subject to } \ &  \begin{array}{l}
%     z_i-d_i \, \leq\,  u_i^+ - u_i^- \,  \leq\,  v_i  \text{ if } v_i \geq  z_i-d_i \\
%     v_i \, \leq\,   u_i^+ - u_i^-\,  \leq\,  z_i-d_i  \text{ if } v_i <  z_i-d_i 
%   \end{array}   & \qquad &  \text{ for } 1\leq i \leq n\\
%   & u_i^+ \geq 0 & \qquad &  \text{ for } 1\leq i \leq n\\
%   & u_i^- \geq 0 & \qquad &  \text{ for } 1\leq i \leq n
% \end{alignat*}
The proof of correctness of the  last transformation can be found
 in~\cite{AIMMS}.
% For computational efficiency, we remove the lower bounds on constraints.
% This is standard and  can be done as follows.
% We introduce $n$ new variables $s_i$ for $1\leq i \leq n$, and
% reformulate the optimization problem as in the statement of the lemma.
%\qed
\end{proof}

\smallskip\noindent\textbf{Computation of least $\delta$ such that 
$\ball(q, \delta) \cap \ball(q', \delta)  \cap \myline(z,z')$ is non-empty.}\\
The least $\delta$ can be computed as follows.
If $q=q'$, then the least  $\delta$ is simply $\dist_{L_1}(q, \myline(z,z'))$
by definition.
If $z=z'$, then the least  $\delta$ is
$\min(\norm{q-z},\, \norm{q'-z})$.

Consider the remaining case where $q\neq q'$ and  $z\neq z'$.
The equation of the  affine line segment between two points $z$ and $z'$
is $z + \lambda\vdot v$ with  $0\leq \lambda \leq 1$, where
$v = z'-z$.
We have the following optimization problem involving two variables
 $\delta, t$:
 \begin{alignat*}{2}
\text{minimize  }\  &   \delta \\
 \text{subject to } \ &  
 \sum_{i=1}^n \abs{q_i-(z_i + \lambda\vdot v_i)} \ \leq\ \delta\\
      & \sum_{i=1}^n \abs{q_i'-(z_i + \lambda\vdot v'_i)} \ \leq\ \delta\\
  &  0 \leq \lambda \leq 1\\
  & 0 \leq \delta
\end{alignat*}
We add $2\vdot n$ new variables
$U_i, U_i'$ for $1\leq i \leq n$ such that
$U_i = q_i-(z_i + \lambda\vdot v_i)$, and $U_i = q_i'-(z_i + \lambda\vdot v_i)$.
The above optimization problem can then be written as:
\begin{equation}
\label{LP:OrigBall}
\begin{alignedat}{2}
\text{minimize  }\  &   \delta \\
 \text{subject to } \ &  
  \sum_{i=1}^n \abs{U_i}  - \delta \ \leq\  0\\
  & \sum_{i=1}^n \abs{U'_i} -\delta  \ \leq\  0\\
  & U_i = q_i-(z_i + \lambda\vdot v_i)  & \qquad &  \text{ for } 1\leq i \leq n\\
  & U_i' = q'_i-(z_i + \lambda\vdot v_i)  & \qquad &  \text{ for } 1\leq i \leq n\\
  &  0 \leq \lambda \leq 1\\
  & 0 \leq \delta
\end{alignedat}
\end{equation}
We remove the absolute values in the constraints as follows.
We introduce another $2\vdot n$ new variables
$\absU_i, \absU_i'$ for $1\leq i \leq n$ such that
$\absU_i \geq U_i$ and $\absU_i \geq -U_i$ and similarly for
 $\absU_i'$.
Consider the  linear program:
\begin{equation}
\label{LP:Ball}
\begin{alignedat}{2}
\text{minimize  }\  &   \delta \\
 \text{subject to } \ &  
  \left(\sum_{i=1}^n \absU_i\right)  - \delta \ \leq\  0\\
  & \left(\sum_{i=1}^n \absU'_i \right)-\delta  \ \leq\  0\\
  & U_i = q_i-(z_i + \lambda\vdot v_i)  & \qquad &  \text{ for } 1\leq i \leq n\\
  & U_i' = q'_i-(z_i + \lambda\vdot v_i)  & \qquad &  \text{ for } 1\leq i \leq n\\
  & U_i - \absU_i \leq 0 & \qquad &  \text{ for } 1\leq i \leq n\\
   & -U_i - \absU_i \leq 0 & \qquad &  \text{ for } 1\leq i \leq n\\
  & U'_i - \absU'_i \leq 0 & \qquad &  \text{ for } 1\leq i \leq n\\
   & -U'_i - \absU'_i \leq 0 & \qquad &  \text{ for } 1\leq i \leq n\\
  &  \absU_i \geq 0 & \qquad &  \text{ for } 1\leq i \leq n\\
  &  \absU'_i \geq 0 & \qquad &  \text{ for } 1\leq i \leq n\\
   &  0 \leq \lambda \leq 1\\
  & 0 \leq \delta
\end{alignedat}
\end{equation}

Note that $\absU_i \geq U_i$ and $\absU_i \geq -U_i$ only ensure
$\absU_i \geq \abs{U_i}$, equality is \emph{not} guaranteed. 
However, the new constraint
$\sum_{i=1}^n \absU_i  - \delta \ \leq\  0$ 
ensures that $\absU_i$ can be taken to be 
$  |U_i|$ in the computation of the optimum in the linear
program.
This can be seen as follows. 
Let $\delta^{\dagger}$ be the value of the original constraint 
problem~\ref{LP:OrigBall}.
Suppose $\delta= \delta^*$ is the optimum value of the  linear program~\ref{LP:Ball}.
It can be seen that $\delta^* \leq \delta^{\dagger}$ as the feasible
region is bigger in the  linear program~\ref{LP:Ball}.
For this optimum $\delta^*$, suppose we have $\absU_k= \alpha$ such that 
$\alpha > \abs{U_k}$ for some $k$.
Since $ \absU_i \geq 0 $ for all $i$, we must have
$\delta^* > 0$ by the inequality 
$\left(\sum_{i=1}^n \absU_i  \right) - \delta^* \ \leq\  0$.
We have the following two cases.
\begin{compactenum}
 \item  Suppose $\left(\sum_{i=1}^n \absU'_i \right) - \delta^* \ <  0$.
Consider a new value of  $\absU_k$ which is equal to  $\abs{U_k}$.
Observe that all the constraints are still satisfied with this new
value of  $\absU_k$.
Also observe that we can decrease the value of $\delta$, from
$ \delta^*$ by a non-zero amount 
$\min\left( \left( \delta^* - (\sum_{i=1}^n \absU'_i \right),  \alpha- \abs{U_k}\right)$,
and still satisfy the constraints of the  linear program~\ref{LP:Ball}.
This is a contradiction as $ \delta^*$ was assumed to the optimal value of the program.

\item  Suppose $\left(\sum_{i=1}^n \absU'_i \right) - \delta^* \ =  0$
in this instantiation of the variables.
Then, if we decrease the value of $\absU_k$ from $ \alpha$ to
$ \abs{U_k}$, all the constraints are still satisfied,
in particular $\left(\sum_{i=1}^n \absU_i \right) - \delta^* \ \leq\  0$ still holds.
Thus,  we can set $\absU_k = \abs{U_k}$ without changing the
optimal value of the objective function.
Iterating over $i$, we see that we can set  $\absU_i = \abs{U_i}$ 
for all $i$ without changing the
optimal value of the objective function.

\end{compactenum}
Repeating the argument for the  $\absU'_i $ variables, we get that $\absU_i$ 
and $\absU'_i$
can be taken to be 
$  |U_i|$ and $  |U'_i|$ respectively in the computation of the optimum in the linear
program~\ref{LP:Ball}.

The results of the proceeding discussion are summarized in the following
proposition.
\begin{proposition}[Computation of least $\delta$ such that 
$\ball(q, \delta) \cap \ball(q', \delta)  \cap \myline(z,z')$ is non-empty: $L_1$]
\label{lemma:LOneBall}
Let $q, q', z, z'$ be points in $\reals^n$.
The value of
$\inf_{\ball(q, \delta) \cap \ball(q', \delta)  \cap \myline(z,z') \neq \emptyset} \delta $
for the $L_1$ norm 
is:
\[
\begin{cases}
\dist_{L_1}(q, \myline(z,z')) & \text{ if } q=q'\\
\min(\norm{q-z},\, \norm{q'-z}) & \text{ if } q\neq q' \text{ and } z=z'\\
\text{The value of the linear program~\ref{LP:Ball}} & \text{ otherwise.} 
\end{cases}
\]
\qed
\end{proposition}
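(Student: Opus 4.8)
The plan is to follow the three-case split in the statement, dispatching the two degenerate cases by direct reasoning and reducing the generic case to the linear program~\ref{LP:Ball}. Throughout I may assume, by the preceding lemma, that the infimum defining the quantity is actually attained, so it is legitimate to speak of the \emph{least} feasible $\delta$ in each case. When $q = q'$ the two balls coincide, so $\ball(q,\delta)\cap\ball(q',\delta)\cap\myline(z,z') = \ball(q,\delta)\cap\myline(z,z')$; this is non-empty exactly when some point of the segment lies within $\delta$ of $q$, and the least such $\delta$ is by definition $\dist_{L_1}(q,\myline(z,z'))$, computable via Proposition~\ref{proposition:DistancePointLineLwo}. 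When $z = z'$ the segment collapses to the single point $z$, so the triple intersection is non-empty iff $z$ lies in both balls; this requirement is governed entirely by the two numbers $\norm{q-z}$ and $\norm{q'-z}$, yielding the stated closed form.

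The substance is the generic case $q\neq q'$, $z\neq z'$. Here I would parametrize the segment as $z + \lambda v$ with $v = z'-z$ and $\lambda\in[0,1]$, and observe that the least feasible $\delta$ equals $\min_{\lambda\in[0,1]}\max\bigl(\norm{q-(z+\lambda v)}_{L_1},\, \norm{q'-(z+\lambda v)}_{L_1}\bigr)$. Introducing $\delta$ as an explicit variable turns this min-max into an optimization problem with constraints $\sum_i\abs{U_i}\leq\delta$ and $\sum_i\abs{U_i'}\leq\delta$, where $U_i = q_i-(z_i+\lambda v_i)$ and $U_i' = q_i'-(z_i+\lambda v_i)$ are affine in $\lambda$. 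To linearize the $L_1$ objectives I would add the variables $\absU_i,\absU_i'$ together with $\absU_i\geq U_i$, $\absU_i\geq -U_i$ (and the primed analogues), and replace each $\sum_i\abs{U_i}$ by $\sum_i\absU_i$; this produces exactly the linear program~\ref{LP:Ball}, whose feasible region is nonempty and whose objective is bounded below by $0$, so the optimum exists.

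The main obstacle, and the only step that is not purely mechanical, is arguing that this relaxation is tight: the bounds $\absU_i\geq U_i$, $\absU_i\geq -U_i$ only force $\absU_i\geq\abs{U_i}$, so a priori the LP optimum could undershoot the true min-max value. I would close this gap by showing that at any optimal solution one may take $\absU_i=\abs{U_i}$ for every $i$. Suppose some $\absU_k = \alpha > \abs{U_k}$ at the optimum $\delta^*$; since every $\absU_i\geq 0$ and $\bigl(\sum_i\absU_i\bigr)-\delta^*\leq 0$, necessarily $\delta^*>0$. I would then split on whether the companion constraint $\bigl(\sum_i\absU_i'\bigr)-\delta^*$ is strict or tight: if strict, lowering $\absU_k$ to $\abs{U_k}$ keeps all constraints satisfied and lets $\delta^*$ be decreased by a positive amount, contradicting optimality; if tight, lowering $\absU_k$ to $\abs{U_k}$ preserves feasibility without changing $\delta^*$, so the substitution is free. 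Iterating this over all indices, and symmetrically over the $\absU_i'$, shows that the value of the linear program coincides with the true least $\delta$, completing the generic case and the proof.
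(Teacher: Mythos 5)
Your overall route coincides with the paper's: the same three-way case split, the same reduction of the generic case $q\neq q'$, $z\neq z'$ to a min--max over the segment parameter $\lambda$, the same linearization via $U_i, U_i'$ and $\absU_i, \absU_i'$ with $\absU_i \geq U_i$, $\absU_i \geq -U_i$, and---crucially---the same exchange argument for tightness of the relaxation: split on whether the companion constraint $\bigl(\sum_{i=1}^n \absU_i'\bigr)-\delta^* \leq 0$ is strict or tight, derive a contradiction with optimality of $\delta^*$ in the strict case, perform a free substitution $\absU_k := \abs{U_k}$ in the tight case, and iterate over all indices and symmetrically over the primed variables. This is exactly the argument the paper gives for the linear program~\ref{LP:Ball}, so the substance of your proof matches the paper's.

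There is, however, one genuine lapse, in the case $z=z'$. You correctly observe that the triple intersection is non-empty iff the single point $z$ lies in \emph{both} balls, i.e.\ iff $\norm{q-z}\leq\delta$ and $\norm{q'-z}\leq\delta$; but the least such $\delta$ is then $\max\bigl(\norm{q-z},\,\norm{q'-z}\bigr)$, not the $\min$ appearing in the stated closed form. Your own criterion contradicts the formula you then endorse, so ``yielding the stated closed form'' is a non sequitur. The error originates in the proposition as stated (and the paper's one-line treatment of this case repeats it), but note that the paper's own $L_2$ machinery---Function~\ref{function:BallIntersect} and Lemma~\ref{lemma:BallSphere}---returns $\max\bigl(\dist(s_1,z),\dist(s_2,z)\bigr)$ for a degenerate segment, confirming that $\max$ is what a correct proof must produce. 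A careful write-up should either prove the $\max$ formula and flag the discrepancy in the statement, or at minimum not claim that ``$z$ lies in both balls'' delivers the $\min$.
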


\subsection{$L_2$-norm}

In this section, all norms are $L_2$-norms.
Let $\reals^n$ be the $n$-dimensional vector space over $\reals$ with the
$L_2$-norm.
We identify an $n$-tuple $x=(x_1,\dots, x_n)$ with $x_i\in \reals$ for $1\leq i\leq n$
with the corresponding vector $\vec{x} = (x_1,\dots, x_n)$ in the vector space 
$\reals^n$.
The vector equation of the  affine line segment between two points $z$ and $z'$
is $\vec{z} + \lambda\vdot (\vec{z'}-\vec{z})$ with $0\leq \lambda \leq 1$.
Given two vectors $\vec{x }, \vec{y}$, we denote their dot product by $\vec{x } \odot \vec{y }$.
Formally, for  $\vec{x} = (x_1,\dots, x_n)$ and  $\vec{y} = (y_1,\dots, y_)$ we have
the \emph{dot product} $\vec{x } \odot \vec{y }$ to be the scalar
$\displaystyle \sum_{i=1}^n x_i\vdot y_i$.
Given two vectors  $\vec{x}, \vec{y}$,  the angle $\theta$ between
them is  \emph{defined}  by the relation 
$\cos(\theta) \triangleq 
\frac{\vec{x } \odot \vec{y }}{\norm{\vec{x }}\vdot \norm{\vec{y }}}$.
Two vectors are said to be \emph{perpendicular} if the angle between them
is $90$ degrees, \emph{i.e.} if $\vec{x } \odot \vec{y } = 0$.
For the  basics of $L_2$ vector geometry, we refer the reader 
to~\cite{BloomBook,SloughterBook}.
When we want to emphasize vector operations such as the dot product, we 
use the vector notation, \emph{e.g.}, $\vec{x}$.

\begin{proposition}[Distance of point to line: $L_2$]
\label{proposition:DistancePointLineTwo}
The distance from a point  $s\in \reals^n$ to the affine line segment between two
distinct  points $z$ and $z'$ in  $\reals^n$
is 
\[\dist_{L_2}\left(s, \myline(z,z')\right) =
\begin{cases}
% 0 & \text{ if for any } 1\leq k\leq n \text{ such that }
% z_k\neq z'_k\\
% & \text{ we have } 
% 0\leq \frac{s_k-z_k}{z'_k-z_k}\leq 1
% \text{ and }\\
% &\vec{s} = \vec{z} + \left(\frac{s_k-z_k}{z'_k-z_k}\right)\vdot (\vec{z'}-\vec{z})\\
\norm{ \vec{z} - \vec{s} + 
\left(\frac{(\vec{z'}- \vec{z})  \odot(\vec{s} - \vec{z})}{\norm{ \vec{z'}- \vec{z} }^2}
\right)\vdot (\vec{z'}-\vec{z}) }& 
\text{ if } 0\leq
 \frac{(\vec{z'}- \vec{z})  \odot(\vec{s} - \vec{z})}{\norm{ \vec{z'}- \vec{z} }^2}
 \leq 1\\
\min\left( \norm{ \vec{z} - \vec{s}},\ \norm{ \vec{z'} - \vec{s}} \right) &
\text{ otherwise}
\end{cases}
\]
Moreover, the only point on the line which is $\dist_{L_2}\left(s, \myline(z,z')\right)$
away from $s$ is 
$z + \lambda_p\vdot (z'-z)$ with
\[
\lambda_p =  
\frac{(\vec{z'}- \vec{z})  \odot(\vec{s} - \vec{z})}{\norm{ \vec{z'}- \vec{z} }^2}
.\]

\end{proposition}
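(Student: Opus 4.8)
The plan is to reduce the problem to minimizing a one-variable quadratic over the interval $[0,1]$, exploiting that squaring the $L_2$ distance preserves minimizers and turns the norm into a polynomial in the line parameter. Writing a generic point of the segment as $z + \lambda\vdot (z'-z)$ with $0\leq \lambda\leq 1$, I would set
\[
h(\lambda) = \norm{(\vec{z} - \vec{s}) + \lambda\vdot (\vec{z'}-\vec{z})}^2 ,
\]
so that $\dist_{L_2}\left(s,\myline(z,z')\right) = \sqrt{\min_{0\leq\lambda\leq 1} h(\lambda)}$, since $\sqrt{\cdot}$ is monotone on $\reals_+$. Expanding via the dot product gives the quadratic $h(\lambda) = \norm{\vec{z}-\vec{s}}^2 + 2\lambda\vdot \big((\vec{z}-\vec{s})\odot(\vec{z'}-\vec{z})\big) + \lambda^2\vdot \norm{\vec{z'}-\vec{z}}^2$, whose leading coefficient $\norm{\vec{z'}-\vec{z}}^2$ is strictly positive because $z\neq z'$.

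Strict positivity of the leading coefficient makes $h$ a strictly convex parabola, so it has a unique unconstrained minimizer obtained by setting $h'(\lambda)=0$, namely
\[
\lambda_p = -\frac{(\vec{z}-\vec{s})\odot(\vec{z'}-\vec{z})}{\norm{\vec{z'}-\vec{z}}^2} = \frac{(\vec{z'}-\vec{z})\odot(\vec{s}-\vec{z})}{\norm{\vec{z'}-\vec{z}}^2} ,
\]
matching the stated $\lambda_p$. Because $h$ is strictly convex, its minimizer over the closed interval $[0,1]$ is the projection of $\lambda_p$ onto $[0,1]$: it equals $\lambda_p$ when $0\le\lambda_p\le 1$, equals $0$ when $\lambda_p<0$, and equals $1$ when $\lambda_p>1$. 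In the first case the minimum value is $h(\lambda_p) = \norm{(\vec{z}-\vec{s})+\lambda_p\vdot(\vec{z'}-\vec{z})}^2$, whose square root is precisely the first branch of the claimed formula; this branch is also the foot of the perpendicular, which I would cross-check by verifying that the residual $(\vec{z}-\vec{s})+\lambda_p\vdot(\vec{z'}-\vec{z})$ is orthogonal to $\vec{z'}-\vec{z}$.

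For the remaining cases I would argue that when the vertex lies outside $[0,1]$ the parabola is monotone on $[0,1]$, so the minimum is attained at an endpoint, giving $\sqrt{h(0)}=\norm{\vec{z}-\vec{s}}$ when $\lambda_p<0$ and $\sqrt{h(1)}=\norm{\vec{z'}-\vec{s}}$ when $\lambda_p>1$. The one point needing care is the clean $\min(\norm{\vec{z}-\vec{s}},\norm{\vec{z'}-\vec{s}})$ form of the second branch: when $\lambda_p<0$ the vertex lies left of $0$, so $h$ is increasing on $[0,1]$, forcing $h(0)\le h(1)$ and hence $\min(\norm{\vec{z}-\vec{s}},\norm{\vec{z'}-\vec{s}})=\norm{\vec{z}-\vec{s}}$, and symmetrically for $\lambda_p>1$; thus the $\min$ always selects the correct endpoint, and the second branch needs no further case split. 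Uniqueness of the closest point (the ``moreover'' clause) follows immediately from strict convexity: $h$ has exactly one minimizer on $\reals$, namely $\lambda_p$, so $z+\lambda_p\vdot(z'-z)$ is the unique nearest point. No step presents a genuine obstacle; the only subtlety is the boundary case analysis and the justification of the $\min$ simplification just described.
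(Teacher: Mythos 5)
Your proof is correct, and it takes a somewhat different route from the paper's. The paper argues geometrically: it defines $\lambda_p$ by the perpendicularity condition $(\vec{z'}-\vec{z})\odot(\vec{z}-\vec{s}+\lambda_p\vdot(\vec{z'}-\vec{z}))=0$, invokes the standard fact that the distance from a point to an infinite line is attained at the foot of the perpendicular, and then uses the Pythagorean identity $\norm{\vec{l}-\vec{s}}^2=\norm{\vec{l_p}-\vec{s}}^2+(\lambda-\lambda_p)^2\vdot\norm{\vec{z'}-\vec{z}}^2$ to argue that on the segment the minimum is attained at whichever of $\lambda_p$, $0$, $1$ is feasible; because the perpendicularity condition degenerates when $s$ lies on the line, the paper must treat that situation in a separate case. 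You instead minimize the strictly convex quadratic $h(\lambda)=\norm{(\vec{z}-\vec{s})+\lambda\vdot(\vec{z'}-\vec{z})}^2$ over $[0,1]$ directly: the vertex formula produces the same $\lambda_p$, and clamping the vertex to $[0,1]$ gives the case split. The two arguments are algebraically equivalent (the Pythagorean identity is exactly completing the square in $h$), but yours is self-contained -- no appeal to a cited fact about point-to-line distance -- and it handles the degenerate case of $s$ lying on the line uniformly, with no extra case analysis. You also supply a detail the paper leaves implicit: that when $\lambda_p<0$ one has $h(0)\le h(1)$ (and symmetrically for $\lambda_p>1$), which is what licenses writing the second branch as a clean $\min$ of the two endpoint distances. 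One caveat applies equally to both proofs: the ``moreover'' clause as stated is only accurate when $0\le\lambda_p\le 1$, since otherwise $z+\lambda_p\vdot(z'-z)$ does not lie on the segment and the unique nearest point is the clamped endpoint; your strict-convexity argument, like the paper's, really establishes uniqueness of the minimizer rather than the clause's literal wording, which is the most one can do.
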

\begin{proof}
The vector equation of the  affine line segment between two points $z$ and $z'$
is $\vec{z} + \lambda\vdot (\vec{z'}-\vec{z})$ with $0\leq \lambda \leq 1$.
By letting $\lambda$ range over all reals, we get the equation of the infinite line 
passing through $z$ and $z'$.
Denoting $\vec{z'}-\vec{z} = \vec{v}$, the equation of the line
is $\vec{z} + \lambda\vdot \vec{v}$.
% The first case corresponds to the case when $s$ lies on the line
% $\myline(z,z')$.
% This means that there exists $\lambda \in [0,1]$ such that
% $\vec{s} = \vec{z} + \left(\frac{s_k-z_k}{z'_k-z_k}\right)\vdot (\vec{z'}-\vec{z})$.

Suppose $s$ does not lie on the line
$\myline(z,z')$.
Let $\lambda_p$ be such that the vector $\vec{l_p}- \vec{s}= 
\vec{z} -\vec{s} + \lambda_p\vdot \vec{v}$
is perpendicular to the vector $\vec{v}$, \emph{i.e.}
$\vec{v}\odot (\vec{l_p} -  \vec{s}) = 0$.
Intuitively, $\vec{v}$ gives the direction of the line, and the vector from
the point $ s$ to  the point $l_p$ is perpendicular to the direction of the line.
The distance of the point $s$ from the infinite line is
$\norm{\vec{l_p} - \vec{s}}$ (see \emph{e.g.},~\cite{SloughterBook}).

If $0\leq \lambda_p\leq 1$ then the distance of $s$ to  the infinite line segment corresponds to the distance
of $s$ to the affine line segment between  $z$ and $z'$.
Otherwise,  if  $\lambda_p< 0$ or $\lambda_p > 1$, we claim 
 the distance is either $\norm{ \vec{z} - \vec{s}}$ or
 $\norm{ \vec{z'} - \vec{s}}$.
This can be seen as follows.
It can be easily shown that for any vector $\vec{l}$ on the line that
$\norm{\vec{l} - \vec{s}}^2 = \norm{\vec{l_p} - \vec{s}}^2  +
\norm{\vec{l} - \vec{l_p}}^2$.
Thus,
\begin{equation}
\label{equation:LTwoPointLine}
\norm{\vec{l} - \vec{s}}^2  = \norm{\vec{l_p} - \vec{s}}^2  +
(\lambda-\lambda_p)^2\vdot \norm{\vec{z'}-\vec{z}}^2.
\end{equation}
The minimum is achieved when $|\lambda-\lambda_p|$ 
is minimum,  thus, if $\lambda_p< 0$ or $\lambda_p > 1$,
the minimum on the line segment $\myline{z, z'}$ (\emph{i.e.}, 
constraining $\lambda$ to be such that $0\leq \lambda \leq 1$
is achieved at one of the affine line segment boundaries;
if $\lambda_p< 0$ then the point $z$ is closese to $s$, and if
if $\lambda_p> 1$ then the point $z'$ is closese to $s$.

The quantity $ \lambda_p$ can be computed as follows.
By definition we have
$\vec{v}\odot (
\vec{z} -\vec{s} + \lambda_p\vdot \vec{v}) = 0$.
Expanding the dot product,
$\vec{v}\odot (
\vec{z} -\vec{s})  + \lambda_p\vdot \norm{\vec{v}}^2 = 0$.
Thus, 
$ \lambda_p = \frac{\vec{v}\odot(\vec{s} - \vec{z})}{\norm{\vec{v}}^2}
=  \frac{(\vec{z'}- \vec{z})  \odot(\vec{s} - \vec{z})}{\norm{ \vec{z'}- \vec{z} }^2}$.

If $s$ lies on the line $\myline(z',z)$, or its infintite extension, then
 there is no point $l_p$ on the line such that 
 the vector $\vec{l_p}- \vec{s}= 
\vec{z} -\vec{s} + \lambda_p\vdot (\vec{z'}-\vec{z}) $
is perpendicular to the vector $\vec{z'}-\vec{z} $.
However, we show the $ \lambda_p$ value computed previously gives the right result.
Suppose $s = z + \lambda_s \vdot(z'-z)$.
The value of  $ \lambda_p$ will be correct if $l_p = s$, \emph{i.e.} if
 $ \lambda_p= \lambda_s$. 
We have 
$\lambda_p = 
\frac{(\vec{z'}- \vec{z})  \odot(\vec{s} - \vec{z})}{\norm{ \vec{z'}- \vec{z} }^2}
=
\frac{(\vec{z'}- \vec{z})  \odot\left( \vec{z} + \lambda_s \cdot(\vec{z'}-\vec{z}) - \vec{z}\right)}{\norm{ \vec{z'}- \vec{z} }^2}$.
Simplifying, we get $\lambda_s$.

Putting everything together, we have the desired result.
%\qed
\end{proof}

\smallskip\noindent\textbf{Computation of the least $\delta$ such that 
$\ball(s_1, \delta) \cap \ball(s_2, \delta)  \cap \myline(z,z')$ is non-empty.}
We first show that the computation of the clamping values can be simplified for the $L_2$
norm.
Let $\sphere(s,\delta)$ denotes the set of points that are exactly $\delta$ distance
away from $s$.
We show that under certain cases,
\[
\min_{\delta \geq 0}\set{\delta \mid 
 \ball(s_1,\delta) \cap \ball(s_2, \delta) \cap \myline(z,z') \neq\emptyset}\ =\ 
\inf_{\delta \geq 0}\set{\delta \mid 
 \sphere(s_1,\delta) \cap \sphere(s_2, \delta) \cap \myline(z,z') \neq\emptyset}.
\]
The optimization over the sphere-intersections is an easier problem to
solve than the ball-intersections one.
The proof strategy for this change of optimization constraints is as follows.
Consider the set of points
$ \ball(s_1,\delta) \cap \myline(z,z') $; and $ \ball(s_2,\delta) \cap \myline(z,z') $.
Each set is a closed set of points on the line $ \myline(z,z')$.
In terms of the line parameter $\lambda$ (\emph{i.e}, where
a point on the line is $z+\lambda\vdot(z'-z)$), both sets can be represented as
closed subintervals of $[0,1]$ denoted $[\lambda_1, \lambda_1']$ and
$[\lambda_2, \lambda_2']$.
The least $\delta$ in the original optimization problem is the least $\delta$ such that
these two $\lambda$ sets intersect.
We show that under certain cases, the intersection point corresponds to the
boundary of the two balls, and thus, the value is the same as for the
 sphere-intersection optimization problem (intuitively, the intervals  $[\lambda_i, \lambda_i']$ expand in a strictly monotonic fashion).
Moreover, for the cases where this change cannot be made,
 the ball-intersection optimization problem has a charaterization which makes it
amenable to solve.
We present the formal proof next.

First, we need a technical lemma which characterizes
the set $ \ball(s,\delta) \cap \myline(z,z') $ for the norm $L_2$.
Given two points $z, z'$, let $\myline_{\infty}(z. z')$ denote the infinite straight
line passing throught the two points $z +\lambda \vdot (z'-z)$ for
$\lambda \in \reals$.

\begin{lemma}
\label{lemma:BallLineIntersect}
Let $s$ and $ z\neq z'$ be three  points in the space $\reals^n$ with the norm
$ z'$.
%$L_{1}, z', L_{\infty},  L_{1}^{\skoro}, L_{2}^{\skoro}$ or $L_{\infty}^{\skoro}$.  
For 
$ \delta \geq \dist(s, \myline(z, z'))$ let
$h(\delta) = \set{\lambda \mid 0\leq \lambda \leq 1\text{ and }\dist(s, z +\lambda \vdot (z'-z)) \leq \delta}$.
Let $h_{\max}(\delta) = \sup h(\delta)$, and
$h_{\min}(\delta) = \inf h(\delta)$,
We have the following facts.
\begin{compactenum}
\item 
$h(\delta)$ is closed.
\item Over $[\dist\left(s, \myline(z, z')\right)\, , \, \dist(s, z')] $,
we have that (a)~$h_{\max}() $ is  continuous and strictly increasing, and
(b)~if $h_{\max}(\delta) = \lambda $, then 
$ \dist(s, z +\lambda \vdot (z'-z)) = \delta$.
\item Over $[\dist\left(s, \myline(z, z')\right)\, , \, \dist(s, z)] $,
we have that
(a)~$h_{\min}(\delta) $ is continuous and strictly decreasing, and
(b)~if $h_{\min}(\delta) = \lambda $, then 
$ \dist(s, z +\lambda \vdot (z'-z)) = \delta$.
\end{compactenum}
\end{lemma}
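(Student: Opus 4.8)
The plan is to push everything through the one-variable squared-distance identity already established in the proof of Proposition~\ref{proposition:DistancePointLineTwo}. Write $v = z'-z$, let $\lambda_p = \frac{(\vec{z'}-\vec{z})\odot(\vec{s}-\vec{z})}{\norm{\vec{z'}-\vec{z}}^2}$ be the perpendicular-foot parameter from that proposition, and let $D = \dist\!\left(s,\myline_{\infty}(z,z')\right)$ be the distance of $s$ to the \emph{infinite} line. Then Equation~\ref{equation:LTwoPointLine} gives, for $l(\lambda) = z + \lambda\vdot v$,
\[
d(\lambda)^2 \ :=\ \norm{l(\lambda) - s}^2 \ =\ D^2 + (\lambda-\lambda_p)^2\vdot \norm{v}^2 .
\]
First I would note that, all quantities being nonnegative, $d(\lambda)\le\delta$ is equivalent to $(\lambda-\lambda_p)^2 \le (\delta^2-D^2)/\norm{v}^2$, and that $\dist(s,\myline(z,z'))\ge D$, so for every $\delta\ge D$ one obtains the closed form
\[
h(\delta) \ =\ \big[\lambda_p - r(\delta),\ \lambda_p + r(\delta)\big]\ \cap\ [0,1],
\qquad r(\delta) \ :=\ \frac{\sqrt{\delta^2-D^2}}{\norm{v}}\,.
\]
Claim (1) is then immediate: $h(\delta)$ is the intersection of two closed intervals, hence closed (in fact a closed subinterval of $[0,1]$).

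For claims (2) and (3) the central fact I would use is that $r(\delta)$ is continuous and \emph{strictly} increasing on $[D,\infty)$. From the closed form one reads off $h_{\max}(\delta)=\min\big(1,\ \lambda_p + r(\delta)\big)$ and $h_{\min}(\delta)=\max\big(0,\ \lambda_p - r(\delta)\big)$, so the real work is to verify that on the stated ranges the clamping is inactive, i.e.\ that $\lambda_p + r(\delta)\in[0,1]$ for $\delta\in[\dist(s,\myline(z,z')),\ \dist(s,z')]$, and symmetrically $\lambda_p - r(\delta)\in[0,1]$ for $\delta\in[\dist(s,\myline(z,z')),\ \dist(s,z)]$. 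I would check this at the endpoints using the case split on the position of $\lambda_p$: the nearest point of the \emph{segment} is $z$, $z'$, or the perpendicular foot according as $\lambda_p<0$, $\lambda_p>1$, or $0\le\lambda_p\le1$, so $\dist(s,\myline(z,z'))$ equals $d(0)$, $d(1)$, or $D$ respectively (and the stated range collapses to a point in the degenerate positions). For the upper branch, $r(d(1)) = |1-\lambda_p|$, so $\lambda_p+r(d(1))=1$ whenever $\lambda_p\le1$; strict monotonicity of $r$ keeps $\lambda_p+r(\delta)\le1$ below this, while the lower endpoint of the range forces $\lambda_p+r(\delta)\ge0$. Hence $h_{\max}(\delta)=\lambda_p+r(\delta)$ throughout, which is continuous and strictly increasing because $r$ is, giving (2a); and $d\big(h_{\max}(\delta)\big)^2 = D^2 + r(\delta)^2\norm{v}^2 = \delta^2$ gives (2b). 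Claim (3) is entirely symmetric, using $z$ in place of $z'$ and the left branch $\lambda_p - r(\delta)$, which is strictly \emph{decreasing} since $r$ increases.

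The one genuinely fiddly part — and the step I expect to demand the most care — is this endpoint/clamping bookkeeping: matching $\dist(s,\myline(z,z'))$ to the correct value of $d(\cdot)$ in each of the three positions of $\lambda_p$, and confirming that the $\min$ and $\max$ never switch branches in the interior of the stated ranges. Everything else (continuity, strict monotonicity, and the ``$d$ equals $\delta$ at the endpoint'' conclusion) follows mechanically from the closed form for $r(\delta)$ together with the identity $d(\lambda)^2 = D^2 + (\lambda-\lambda_p)^2\norm{v}^2$.
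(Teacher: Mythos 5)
Your proposal is correct and follows essentially the same route as the paper's proof: both reduce to the one-variable identity $d(\lambda)^2 = A + (\lambda-\lambda_p)^2\norm{v}^2$ from Proposition~\ref{proposition:DistancePointLineTwo}, write $h(\delta)$ as $[\lambda_p - r(\delta), \lambda_p + r(\delta)] \cap [0,1]$, and show via the endpoint values $\dist(s,z')$ (resp.\ $\dist(s,z)$) that the clamping at $1$ (resp.\ $0$) never binds on the stated range, so that monotonicity and continuity follow from those of $r(\delta) = \sqrt{(\delta^2-A)}/\norm{v}$. Your finish for parts (2b)/(3b) — directly computing $d\big(h_{\max}(\delta)\big)^2 = A + r(\delta)^2\norm{v}^2 = \delta^2$ — is in fact slightly cleaner than the paper's indirect argument via strict monotonicity, but it is a refinement within the same proof, not a different one.
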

\begin{proof}
The fact that $h(\delta)$ is closed follows from the fact that
$h(\delta)$ is equal to the intersection of
$ \ball(s,\delta) $ and $\myline(z, z')$.
It can be easily seen that if $h(\delta) \neq \emptyset$, it is of the 
form $ [h_{\min}(\delta), h_{\max}(\delta')]$.

We now prove the second part.
Assume $ \dist(s, z') > \dist\left(s, \myline(z, z')\right)$ 
(otherwise the claim is easy to prove).
If $s$ does not lie on the line $\myline(z, z')$, 
we have for any $\lambda \in \reals$,
\begin{align*}
\dist(s, z +\lambda \vdot (z'-z))^2  & =
\norm{\vec{z}-\vec{s} + \lambda\vdot( \vec{z'}-\vec{z})}^2\\
& =  \norm{\vec{z}-\vec{s} + + \lambda_p\vdot( \vec{z'}-\vec{z})  \ +
 (\lambda-\lambda_p)\vdot( \vec{z'}-\vec{z})}^2
\text{ where  } \lambda_p \in \reals\text{ is such that }\\
& \hspace{3cm} \vec{z}-\vec{s} + + \lambda_p\vdot( \vec{z'}-\vec{z})  \text{ is perpendicular to }
 \vec{z'}-\vec{z}.\\
& =  \norm{\vec{z}-\vec{s} + \lambda_p\vdot( \vec{z'}-\vec{z}) }^2
\ + \  \abs{\lambda-\lambda_p}\vdot\norm{ \vec{z'}-\vec{z}}^2
\text{ by Equation~\ref{equation:LTwoPointLine}}\\
& = A +  \abs{\lambda-\lambda_p}^2\vdot B \text{ where } A, B 
\text{ are constants.}
\end{align*}
We comment that $\lambda_p$ need not necessarily be in $[0,1]$.
If $s = z+\lambda_s(z'-z)$ is on  the line $\myline(z, z')$,
we have $\dist(s, z +\lambda \vdot (z'-z)) = 
\abs{\lambda_s-\lambda}\vdot \norm{ z'- z}$, which is
equal to $\abs{\lambda_s-\lambda}\vdot\sqrt{B}$.
Letting $\lambda_p = \lambda_s$ in this case when $s = z+\lambda_s(z'-z)$,
we have that  $\dist(s, z +\lambda \vdot (z'-z))^2 =
 A + \abs{\lambda-\lambda_p}^2\vdot B $ ($A$ is $0$ in this case).

Thus, in all cases, for $\delta \geq \dist\left(s, \myline(z, z')\right)$
(which means $\delta \geq \sqrt{A}$ since $\dist\left(s, \myline(z, z')\right)\geq \sqrt{A}$),
 we have, we have
\[
h(\delta) = \set{\lambda \mid  
0\leq \lambda\leq 1 \text{ such that } 
\abs{\lambda-\lambda_p} \leq \sqrt{\frac{\delta^2-A}{B}}}.\]
Since  $ \dist(s, z') >  \dist\left(s, \myline(z, z')\right)$,
we have that $\lambda_p < 1$.
This is because we must have  by the proof of
Lemma~\ref{proposition:DistancePointLineTwo}
and Equation~\ref{equation:LTwoPointLine}, that either
$0\leq \lambda_p < 1$, or
$\dist\left(s, \myline(z, z')\right) =\dist(s, z') $ (recall that we
we assumed $\dist\left(s, \myline(z, z')\right) <\dist(s, z') $).
% In the second case
% we have
% that  $\abs{0-\lambda_p} < \abs{1-\lambda_p}$ by
% Equation~\ref{equation:LTwoPointLine}.
% Moreover, $\abs{0-\lambda_p} < \abs{1-\lambda_p}$  iff $\lambda_p \leq 1/2$,
% thus  $\lambda_p \leq 1/2$.
% Thus, in all cases, $\lambda_p < 1$
Hence, for $\delta \geq \dist\left(s, \myline(z, z')\right)$ we have,
\[
h_{\max}(\delta) = \min\left(1, \lambda_p + \sqrt{\frac{\delta^2-A}{B}}\right)\]
Suppose $ \dist(s, z') \geq   \delta  \geq \dist(s, \myline(z, z'))$.
We show $\lambda_p + \sqrt{\frac{\delta^2-A}{B}} \leq 1$ by contradiction.
If $\lambda_p + \sqrt{\frac{\delta-A}{B}} > 1$,
then  $\delta^2 > A+(1-\lambda_p)^2\vdot B = \dist(s, z') ^2$, \emph{i.e.}
$ \dist(s, z') < \delta$ which is a contradiction.
Thus, for $ \dist(s, z') \geq   \delta  \geq \dist(s, \myline(z, z'))$, we have 
\begin{equation}
\label{equation:HMax}
h_{\max}(\delta)  = \lambda_p + \sqrt{\frac{\delta^2-A}{B}}.
\end{equation}
Hence $h_{\max}(\delta)  $ is continuous and stictly increasing
over $[\dist\left(s, \myline(z, z')\right)\, , \, \dist(s, z')] $.

If $\delta = \dist\left(s, \myline(z, z')\right)$, then
$h(\delta)$ is just a single point (this can be inferred from the proof of
Proposition~\ref{proposition:DistancePointLineTwo}.
Thus, for this $\delta$, we have $ \dist(s, z +h_{\max}(\delta)\vdot (z'-z)) = \delta$.
If $ \dist(s, z') \geq   \delta  > \dist(s, \myline(z, z'))$,
then since $h_{\max}()  $ is  stictly increasing, for all 
$ \dist(s, z') \geq   \delta  > \delta' \geq \dist(s, \myline(z, z'))$, we have
 $h_{\max}(\delta') <   h_{\max}(\delta)$, which means that
we must have  $ \dist(s, z +h_{\max}(\delta)\vdot (z'-z)) = \delta$.
This concludes the proof for  $h_{\max}$.
The proof for $h_{\min}$ is similar.
%\qed
\end{proof}

Using the previous lemma, we now show that the original  optimization
problem can be simplified using spheres instead of balls in contraiints.

\begin{lemma}
\label{lemma:BallSphere}
Let $s_1, s_2, z,z'$ be four points in the space $\reals^n$ with the norm
$ z'$.
%$L_{1}, z', L_{\infty},  L_{1}^{\skoro}, L_{2}^{\skoro}$ or $L_{\infty}^{\skoro}$.  
Suppose $\min_{\delta \geq 0}\set{\delta \mid 
 \ball(s_1,\delta) \cap \ball(s_2, \delta) \cap \myline(z,z') \neq\emptyset}$
is not equal to either
$\dist(s_1,  \myline(z,z') )$ or $\dist(s_2,  \myline(z,z') )$.
Then, $\min_{\delta \geq 0}\set{\delta \mid 
 \sphere(s_1,\delta) \cap \sphere(s_2, \delta) \cap \myline(z,z') \neq\emptyset}$
exists, and
\[
\min_{\delta \geq 0}\set{\delta \mid 
 \ball(s_1,\delta) \cap \ball(s_2, \delta) \cap \myline(z,z') \neq\emptyset}\ =\ 
\min_{\delta \geq 0}\set{\delta \mid 
 \sphere(s_1,\delta) \cap \sphere(s_2, \delta) \cap \myline(z,z') \neq\emptyset}.
\]
Moreover, for the minimum $\delta$, there is only one point in the
intersection $\sphere(s_1,\delta) \cap \sphere(s_2, \delta) \cap \myline(z,z') $.
\end{lemma}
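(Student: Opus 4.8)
The plan is to push everything down to the one–dimensional line parameter $\lambda$, where a point of the segment is $z+\lambda(z'-z)$ with $\lambda\in[0,1]$, and to track the two sets $I_i(\delta)=\ball(s_i,\delta)\cap\myline(z,z')$ for $i\in\{1,2\}$. By Lemma~\ref{lemma:BallLineIntersect} each $I_i(\delta)$, whenever non-empty, is the closed interval $[h^i_{\min}(\delta),h^i_{\max}(\delta)]$, with $h^i_{\max}$ continuous and strictly increasing on $[\dist(s_i,\myline(z,z')),\dist(s_i,z')]$ and $h^i_{\min}$ continuous and strictly decreasing on $[\dist(s_i,\myline(z,z')),\dist(s_i,z)]$. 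Writing $\delta^*$ for the ball-intersection minimum (which exists by the preceding existence lemma), the ball-intersection at a parameter $\delta$ is non-empty exactly when $I_1(\delta)\cap I_2(\delta)\neq\emptyset$. First I would record two consequences of the hypothesis: since each ball must already meet the line, $\delta^*\geq\max(\dist(s_1,\myline(z,z')),\dist(s_2,\myline(z,z')))$; and since by assumption $\delta^*$ equals neither point-to-line distance, in fact $\delta^*>\dist(s_i,\myline(z,z'))$ strictly for both $i$.

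Next I would analyze the configuration at $\delta^*$. For $\delta<\delta^*$ the intervals are disjoint, hence (being connected) linearly ordered, and by continuity of the endpoints this order cannot flip below $\delta^*$; so without loss of generality $I_2$ lies to the right of $I_1$ throughout $[\max_i\dist(s_i,\myline(z,z')),\delta^*)$. Applying continuity to $g(\delta)=h^2_{\min}(\delta)-h^1_{\max}(\delta)$, which is strictly positive below $\delta^*$, forces $g(\delta^*)=0$; hence $I_1(\delta^*)\cap I_2(\delta^*)$ is the single point $\lambda^*:=h^1_{\max}(\delta^*)=h^2_{\min}(\delta^*)$.

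The crux is to show $\lambda^*\in(0,1)$, i.e. that the touching happens strictly inside the segment, away from the clamping at the endpoints. I would argue $\lambda^*=h^1_{\max}(\delta^*)>0$ from $\delta^*>\dist(s_1,\myline(z,z'))$ (once $\delta$ exceeds the point–line distance the right endpoint of $I_1$ is strictly positive), and $\lambda^*=h^2_{\min}(\delta^*)<1$ from $\delta^*>\dist(s_2,\myline(z,z'))$. This is precisely where the hypothesis is essential: without it the optimal configuration could be a lone ball grazing the line at an endpoint, where the shared point sits strictly inside the other ball and so misses its sphere. Given $0<\lambda^*<1$, the strict bound $0<h^1_{\max}(\delta^*)<1$ places $\delta^*$ in the open unclamped range $(\dist(s_1,\myline(z,z')),\dist(s_1,z'))$, so part~(2b) of Lemma~\ref{lemma:BallLineIntersect} yields $\dist(s_1,z+\lambda^*(z'-z))=\delta^*$; symmetrically part~(3b) yields $\dist(s_2,z+\lambda^*(z'-z))=\delta^*$. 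Thus $z+\lambda^*(z'-z)\in\sphere(s_1,\delta^*)\cap\sphere(s_2,\delta^*)\cap\myline(z,z')$.

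Finally I would assemble the equality and the uniqueness. As every sphere is contained in its ball, the sphere-intersection is empty for each $\delta<\delta^*$ (the ball-intersection already is), while it is non-empty at $\delta^*$ by the previous step; hence $\delta^*$ is the minimum of the sphere problem and the two minima coincide. For uniqueness, the sphere-intersection at $\delta^*$ is contained in the ball-intersection $I_1(\delta^*)\cap I_2(\delta^*)=\{\lambda^*\}$, so it consists of exactly the one point $z+\lambda^*(z'-z)$. I expect the clamping analysis of the third paragraph — verifying $\lambda^*\in(0,1)$ and converting it into the unclamped $\delta$-ranges demanded by Lemma~\ref{lemma:BallLineIntersect} — to be the main obstacle, since it is the only place the delicate hypothesis is consumed and it needs a short case split on where each foot $\lambda_p^i$ sits relative to $[0,1]$.
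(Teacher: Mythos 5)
Your proposal is correct and follows essentially the same route as the paper's own proof: both reduce the problem to the interval endpoints $h^{s_1}_{\max}$ and $h^{s_2}_{\min}$ of Lemma~\ref{lemma:BallLineIntersect}, use the hypothesis to force $\delta^*$ strictly above $\max\left(\dist(s_1,\myline(z,z')),\dist(s_2,\myline(z,z'))\right)$, locate a unique touching parameter $\lambda^*$ by a continuity/intermediate-value argument on these monotone endpoint functions, and then apply parts (2b)/(3b) of that lemma to place the touching point on both spheres, finishing via the sphere-inside-ball containment. The only difference is bookkeeping: the paper first establishes the a priori bracket $\delta^*\le D_{\max}=\min\left(\dist(s_1,z'),\dist(s_2,z)\right)$ so that every continuity and monotonicity claim stays inside the lemma's stated domains, whereas you invoke continuity at $\delta^*$ before knowing $\delta^*$ lies in those domains and deduce the unclamped range only afterwards from $\lambda^*\in(0,1)$ — to make this airtight you should observe that the ordering ($I_1$ left of $I_2$) on $[D_{\min},\delta^*)$ already forces $h^{s_1}_{\max}(\delta)<1$ and $h^{s_2}_{\min}(\delta)>0$ there, hence $\delta<\min\left(\dist(s_1,z'),\dist(s_2,z)\right)$ throughout, which is exactly the paper's $D_{\max}$ bound.
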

\begin{proof}
If $s_1 = s_2$, we have
$\min_{\delta \geq 0}\set{\delta \mid 
 \ball(s_1,\delta) \cap \ball(s_2, \delta) \cap \myline(z,z') \neq\emptyset}\ =
\dist\left(s_1,  \myline(z,z') \right)$, and this is equal to the right hand side
of the equality in the lemma.
If $z = z'$, then $\min_{\delta \geq 0}\set{\delta \mid 
 \ball(s_1,\delta) \cap \ball(s_2, \delta) \cap \myline(z,z') \neq\emptyset}$
is equal to
$\min_{\delta \geq 0}\set{\delta \mid z \in  \ball(s_1,\delta) \text{ and }
z \in  \ball(s_2,\delta) }$ which is equal to
$\max\left(\dist(s_1, z), \dist(s_2, z)\right)$, and this violates the assumptions 
of the lemma.
Thus, assume  $s_1\neq s_2$ and  $z \neq z'$.

Let $\delta^* = \min_{\delta \geq 0}\set{\delta \mid 
 \ball(s_1,\delta) \cap \ball(s_2, \delta) \cap \myline(z,z') \neq\emptyset}$.
We have $\delta^* > 0$ (a value of $0$ can arise only if $s_1 = s_2$) 
%and this point
%lies on the line  $\myline(z,z')$ and thus $\delta^* = \dist(s_1,  \myline(z,z') )$).
% Let $l^*= z + \lambda\vdot(z'-z)$ be such that
% $\norm{l^* - s_1} = \norm{l^* - s_1}$
Let 
\[D_{\min} = \max\left(\dist(s_1,  \myline(z,z') ),\,  \dist(s_2,  \myline(z,z') )\right).
\]
Observe  that $\delta^*  \geq  D_{\min}$ by definition, and since $\delta^*$ is
not  equal to either
$\dist(s_1,  \myline(z,z') )$ or $\dist(s_2,  \myline(z,z') )$,  we must have
$\delta^*  > D_{\min}$.

Consider a  $\delta$ value $\delta \geq  D_{\min}$.
Consider the sets  $\ball(s_1, \delta) \cap \myline(z,z')$ 
and $\ball(s_2, \delta) \cap \myline(z,z')$.
These sets are the $h(\delta)$ sets of Lemma~\ref{lemma:BallLineIntersect}.
We denote them as $h^{s_1}(\delta)$ and $h^{s_2}(\delta)$ respectively.
Using the results from Lemma~\ref{lemma:BallLineIntersect},
we have that $h^{s_1}(\delta)$ is the closed
interval $ [h^{s_1}_{\min}(\delta), h^{s_1}_{\max}(\delta) ]$, and similarly for
$h^{s_2}(\delta)$.
% These sets are closed subportions of the line $ \myline(z,z')$.
% We identify by $[\lambda,\lambda']$   given $0\leq \lambda \leq \lambda' \leq 1$ the
% subline $z + \alpha\vdot(z'-z)$ with $\lambda \leq \alpha \leq \lambda'$.
% Let us thus denote  $\ball(s_1, \delta) \cap \myline(z,z')$  by
%  $[\lambda_{1,\delta},\lambda_{1,\delta}']$, and similarly
%  $\ball(s_2, \delta) \cap \myline(z,z')$  by
%  $[\lambda_{2,\delta},\lambda_{2,\delta}']$.
% We have
Observe that 
\begin{equation}
\label{equation:BallSphere}
\ball(s_1,\delta) \cap \ball(s_2, \delta) \cap \myline(z,z') \neq\emptyset \quad
\text{ iff }\quad
[h^{s_1}_{\min}(\delta), h^{s_1}_{\max}(\delta) ]
\, \cap\, 
[h^{s_2}_{\min}(\delta), h^{s_2}_{\max}(\delta) ] \ \neq \emptyset
%[\lambda_{1,\delta},\lambda_{1,\delta}'] \cap [\lambda_{2,\delta},\lambda_{2,\delta}'] \neq \emptyset.
\end{equation}
Also note that by defintion, for all $\delta >D_{\min}$, we have
$[h^{s_1}_{\min}(\delta), h^{s_1}_{\max}(\delta) ] \neq \emptyset$; and also
$[h^{s_2}_{\min}(\delta), h^{s_2}_{\max}(\delta) ] \ \neq \emptyset$.

% Consider  $\delta = \frac{\delta^*  + D_{\min}}{2}$.
% For this value of $\delta$, we must have 
Observe that 
(a)~$[h^{s_1}_{\min}(D_{\min}), h^{s_1}_{\max}(D_{\min}) ] \ \neq \emptyset$; and
(b)~$[h^{s_2}_{\min}(D_{\min}), h^{s_2}_{\max}(D_{\min}) ] \ \neq \emptyset$; and
(c)~$[h^{s_1}_{\min}(D_{\min}), h^{s_1}_{\max}(D_{\min}) ]
\, \cap\, 
[h^{s_2}_{\min}(D_{\min}), h^{s_2}_{\max}(D_{\min}) ] \ =\emptyset$,
since $\delta^* >  D_{\min}    $.
%$\delta^* >  \frac{\delta^*  + D_{\min}}{2}$.
% (a)~$[\lambda_{1,\delta},\lambda_{1,\delta}']  \neq \emptyset$; and
% (b)~$[\lambda_{2,\delta},\lambda_{2,\delta}']  \neq \emptyset$; and
% (c)~$[\lambda_{1,\delta},\lambda_{1,\delta}'] \cap [\lambda_{2,\delta},\lambda_{2,\delta}']  = \emptyset$, 
Consider the case when $h^{s_1}_{\max}(D_{\min}) < h^{s_2}_{\min}(D_{\min})$ 
(the other case is symmetric),
\emph{i.e.}, the interval
$[h^{s_1}_{\min}(D_{\min}), h^{s_1}_{\max}(D_{\min}) ] $ lies to the left of the interval
$[h^{s_2}_{\min}(D_{\min}), h^{s_2}_{\max}(D_{\min}) ] $.
Let 
\[D_{\max} =  \min\left(\dist(s_1, z'),\,  \dist(s_2, z  ) \right).\]
We claim $\delta^*  \leq  D_{\max}$.
The proof is as follows.
By assumption $h^{s_1}_{\max}(D_{\min}) < h^{s_2}_{\min}(D_{\min})$.
This means that  $h^{s_1}_{\max}(D_{\min}) < 1$, and
$ h^{s_2}_{\min}(D_{\min}) > 0$.
We have the following two cases.
\begin{compactenum}
\item $\dist(s_1, z') \leq  \dist(s_2, z  ) $.
Thus $D_{\max} = \dist(s_1, z') $ and hence
$1\in h^{s_1}( D_{\max})$, and so $h^{s_1}_{\max}( D_{\max})=1$.
Using Lemma~\ref{lemma:BallLineIntersect}, we have that
$D_{\max} > D_{\min}$ since $h^{s_1}_{\max}(D_{\min}) < 1$.
\item $\dist(s_2, z  )     \leq \dist(s_1, z') $.
Thus  $D_{\max} = \dist(s_2, z  )  $
and hence we have $0\in h^{s_2}(  D_{\max})$, and so $0 =  h^{s_2}_{\min}(D_{\max})$.
Using Lemma~\ref{lemma:BallLineIntersect}, we have that
$D_{\max} > D_{\min}$  since $ h^{s_2}_{\min}(D_{\min}) > 0$.
\end{compactenum}
Thus, in both cases, both the intervals
$[h^{s_1}_{\min}(D_{\max}), h^{s_1}_{\max}(D_{\max}) ] $  and 
$[h^{s_2}_{\min}(D_{\max}), h^{s_2}_{\max}(D_{\max}) ] $ are nonempty;
and moreover either  $h^{s_1}_{\max}( D_{\max})=1$ or
$0 =  h^{s_2}_{\min}(D_{\max})$.
This means that the intersection of the intervals 
$[h^{s_1}_{\min}(D_{\max}), h^{s_1}_{\max}(D_{\max}) ] $ and
$[h^{s_2}_{\min}(D_{\max}), h^{s_2}_{\max}(D_{\max}) ] $ is non-empty.
Using Equation~\ref{equation:BallSphere}, we get that $\delta^*  \leq  D_{\max}$.
Thus,
\[ D_{\min} < \delta^* \leq D_{\max}.\]
Consider the functions 
$h^{s_1}_{\max}()$ and $h^{s_2}_{\min}()$ over the interval
$[D_{\min}, D_{\max}]$. 
The functions are defined on the interval by the conditions of 
Lemma~\ref{lemma:BallLineIntersect}.
$h^{s_1}_{\max}()$ is continuous and strictly increasing, and
$h^{s_2}_{\min}()$  is continuous and strictly decreasing.
At $D_{\min}$ we have $h^{s_1}_{\max}(D_{\min} )< h^{s_2}_{\min}(D_{\min})$,
and at $D_{\max}$ we have $h^{s_1}_{\max}(D_{\max} )< h^{s_2}_{\min}(D_{\max})$.
Thus these two functions must have the same value at some point in
$(D_{\min}, D_{\max}]$, and this point is where the curves of the functions intersect.
This intersection point is when $\delta = \delta^*$.
The graphs of $h_1$ and $h_2$ are depicted in Figure~\ref{figure:CurveBall}.
\begin{figure}[h]
%\vspace{-1em}
\strut\centerline{\setlength{\unitlength}{0.00043745in}
\begingroup\makeatletter\ifx\SetFigFont\undefined%
\gdef\SetFigFont#1#2#3#4#5{%
  \reset@font\fontsize{#1}{#2pt}%
  \fontfamily{#3}\fontseries{#4}\fontshape{#5}%
  \selectfont}%
\fi\endgroup%
{\renewcommand{\dashlinestretch}{30}
\begin{picture}(4032,3349)(0,-10)
\path(870,3322)(870,577)(4020,577)
\dottedline{45}(2070,2122)(2070,585)
\path(870,2962)(872,2960)(875,2956)
	(882,2948)(892,2936)(906,2920)
	(923,2900)(944,2875)(967,2848)
	(993,2818)(1019,2787)(1047,2756)
	(1074,2724)(1101,2693)(1127,2663)
	(1152,2635)(1176,2608)(1199,2582)
	(1222,2557)(1243,2534)(1264,2512)
	(1284,2490)(1304,2469)(1324,2448)
	(1345,2428)(1365,2407)(1384,2388)
	(1404,2368)(1424,2349)(1445,2329)
	(1467,2308)(1489,2287)(1511,2266)
	(1535,2245)(1559,2223)(1583,2201)
	(1608,2179)(1634,2156)(1660,2134)
	(1686,2111)(1713,2089)(1739,2067)
	(1766,2045)(1792,2023)(1818,2002)
	(1844,1982)(1870,1962)(1895,1942)
	(1920,1923)(1945,1905)(1969,1887)
	(1993,1870)(2016,1853)(2040,1837)
	(2064,1821)(2087,1805)(2111,1789)
	(2136,1774)(2160,1758)(2185,1743)
	(2210,1727)(2236,1712)(2262,1696)
	(2289,1681)(2315,1666)(2342,1651)
	(2369,1636)(2396,1621)(2423,1607)
	(2449,1593)(2475,1579)(2501,1566)
	(2526,1553)(2551,1541)(2575,1528)
	(2599,1517)(2622,1506)(2645,1495)
	(2667,1484)(2688,1474)(2709,1464)
	(2730,1454)(2754,1443)(2778,1432)
	(2802,1422)(2826,1411)(2851,1400)
	(2876,1388)(2903,1376)(2932,1364)
	(2961,1351)(2993,1338)(3026,1323)
	(3059,1309)(3094,1294)(3129,1279)
	(3163,1265)(3195,1252)(3223,1239)
	(3248,1229)(3268,1221)(3282,1214)
	(3292,1210)(3298,1208)(3300,1207)
\path(870,1252)(873,1253)(879,1255)
	(890,1258)(907,1263)(930,1270)
	(958,1278)(991,1288)(1027,1299)
	(1065,1311)(1104,1323)(1143,1336)
	(1181,1348)(1218,1360)(1252,1371)
	(1285,1382)(1315,1392)(1344,1403)
	(1371,1413)(1397,1422)(1422,1432)
	(1446,1442)(1469,1452)(1493,1462)
	(1514,1472)(1535,1482)(1557,1492)
	(1579,1503)(1602,1514)(1624,1526)
	(1648,1538)(1671,1551)(1695,1564)
	(1720,1578)(1744,1592)(1769,1606)
	(1795,1621)(1820,1637)(1845,1652)
	(1870,1668)(1895,1684)(1920,1700)
	(1944,1716)(1968,1732)(1992,1748)
	(2015,1764)(2039,1780)(2062,1797)
	(2085,1813)(2108,1829)(2129,1845)
	(2151,1861)(2173,1877)(2195,1894)
	(2218,1912)(2242,1929)(2266,1948)
	(2290,1967)(2315,1986)(2340,2006)
	(2365,2027)(2391,2047)(2417,2068)
	(2443,2090)(2469,2111)(2494,2133)
	(2520,2154)(2545,2176)(2570,2197)
	(2594,2218)(2617,2239)(2640,2259)
	(2663,2279)(2685,2299)(2706,2319)
	(2727,2339)(2747,2358)(2768,2377)
	(2787,2396)(2807,2415)(2826,2435)
	(2846,2455)(2866,2475)(2886,2496)
	(2907,2519)(2929,2542)(2951,2566)
	(2975,2592)(2999,2619)(3024,2647)
	(3051,2677)(3078,2707)(3105,2738)
	(3133,2770)(3160,2801)(3186,2831)
	(3211,2859)(3233,2884)(3252,2906)
	(3268,2925)(3280,2939)(3290,2950)
	(3295,2957)(3299,2960)(3300,2962)
\put(510,622){\makebox(0,0)[lb]{\smash{{\SetFigFont{12}{14.4}{\familydefault}{\mddefault}{\updefault}$0$}}}}
\put(15,2962){\makebox(0,0)[lb]{\smash{{\SetFigFont{12}{14.4}{\familydefault}{\mddefault}{\updefault}$h_{\min}^{s_2}$}}}}
\put(15,1117){\makebox(0,0)[lb]{\smash{{\SetFigFont{12}{14.4}{\familydefault}{\mddefault}{\updefault}$h_{\max}^{s_1}$}}}}
\put(3067,127){\makebox(0,0)[lb]{\smash{{\SetFigFont{12}{14.4}{\familydefault}{\mddefault}{\updefault}$\delta \longrightarrow$}}}}
\put(1951,134){\makebox(0,0)[lb]{\smash{{\SetFigFont{12}{14.4}{\familydefault}{\mddefault}{\updefault}$\delta^*$}}}}
\put(780,135){\makebox(0,0)[lb]{\smash{{\SetFigFont{12}{14.4}{\familydefault}{\mddefault}{\updefault}$D_{\min}$}}}}
\end{picture}
}}
%\vspace*{-4mm}
 \caption{Graphs of $h^{s_1}_{\max}$ and   $h^{s_2}_{\min}  $.}
\label{figure:CurveBall}
% \vspace{-1em}
\end{figure}
% It can be seen that $h_1$ and $h_2$ must eventually intersect, and that
% this point of intersection is when $\delta = \delta^*$.
Thus, we have $h^{s_1}_{\max}(\delta^*) = h^{s_2}_{\min}(\delta^*) = \lambda^*$ 
(using Lemma~\ref{lemma:BallLineIntersect}).
Which means that for the point
$l^* = z + \lambda^*\vdot(z'-z)$, we have
$\dist(s_1, l^*) = \delta^*$, and $\dist(s_2, l^*) = \delta^*$.
Thus 
\[\inf_{\delta \geq 0}\set{\delta \mid 
 \sphere(s_1,\delta) \cap \sphere(s_2, \delta) \cap \myline(z,z') \neq\emptyset}
\ \geq \delta^*\]
Since we also have
\[
\min_{\delta \geq 0}\set{\delta \mid 
 \ball(s_1,\delta) \cap \ball(s_2, \delta) \cap \myline(z,z') \neq\emptyset}\ \leq \ 
\inf_{\delta \geq 0}\set{\delta \mid 
 \sphere(s_1,\delta) \cap \sphere(s_2, \delta) \cap \myline(z,z') \neq\emptyset}
\]
and
$
\min_{\delta \geq 0}\set{\delta \mid 
 \ball(s_1,\delta) \cap \ball(s_2, \delta) \cap \myline(z,z') \neq\emptyset} = \delta^*$, 
we have that
$\min_{\delta \geq 0}\set{\delta \mid 
 \sphere(s_1,\delta) \cap \sphere(s_2, \delta) \cap \myline(z,z') \neq\emptyset}
$ exists, and is equal to $\delta^*$, which proves the first part of the lemma.
For the second part, we  note that the point corresponding to $\lambda^*$
is the only point on the line $\myline(z, z')$ which  is exactly $\delta^*$ away frp,
$s_1$ (or from $s_2$). 
This is because the distance of a point on the line given by $\lambda$ to a point $s$
(not on the line)
is $ \sqrt{A +  \abs{\lambda-\lambda_p}^2\vdot B}$ where $A,B$ and $\lambda_p$
are constants (see the proof of Lemma~\ref{lemma:BallLineIntersect});
and thus different points on the line have different distances from $s$.
% \qed
\end{proof}

The following corollary states that in a horizontally clamped situation
for cells $(i,j)$ and $(k,j)$, we have only one unique horizontal montone non-decreasing
curve possible which can go from   cell $(i,j)$ to  $(k,j)$
(see Figure~\ref{figure:HClamp} for a horizontally clamped situation).
A similar result holds for vertically clamped situations.
\begin{corollary}
Given two polygonal curves
$f,g$, let $\delta^1\geq 0$ be such that in the free space   $\free_{\delta^1}(f,g)$, we have
$\second(\mya_{i,j}^1) > \second(\myb_{k,j}^3)$ for $k> i$; and 
$\delta^2\geq 0$ be such that in 
the free space   
$\free_{\delta^2}(f,g)$, we have
$\second(\mya_{i,j}^1) \leq \second(\myb_{k,j}^3)$.
Then there exists $\delta^1<\delta^*\leq \delta^2$, such that
 in the free space   
$\free_{\delta^*}(f,g)$, we have
$\second(\mya_{i,j}^1) =  \second(\myb_{k,j}^3)$.
\end{corollary}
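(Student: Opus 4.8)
The statement is an intermediate-value argument on the way the two coordinates $\second(\mya_{i,j}^1)$ and $\second(\myb_{k,j}^3)$ vary with $\delta$, so the plan is to read both as functions of $\delta$ and show that their difference changes sign continuously between $\delta^1$ and $\delta^2$. Fix $i,j,k$ and let $\phi(\delta)=\second(\mya_{i,j}^1)$ and $\psi(\delta)=\second(\myb_{k,j}^3)$, computed inside $\free_{\delta}(f,g)$. Geometrically, $\phi(\delta)$ is the least $\rho_g\in[j,j+1]$ with $\norm{f(i+1)-g(\rho_g)}\le\delta$, that is, the lower endpoint of the free interval $\ball(f(i+1),\delta)\cap g_{[j]}$ on the right edge of cell $(i,j)$; likewise $\psi(\delta)$ is the lower endpoint of $\ball(f(k),\delta)\cap g_{[j]}$ on the left edge of cell $(k,j)$. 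By Proposition~\ref{proposition:FreeSpaceConvex} each of these free sets is a closed interval, so both endpoints are well defined whenever the interval is nonempty.

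Next I would pin down the interval of $\delta$ on which I work. Because $\free_{\delta}(f,g)$ grows monotonically in $\delta$, the two edge intervals $\ball(f(i+1),\delta)\cap g_{[j]}$ and $\ball(f(k),\delta)\cap g_{[j]}$ only enlarge as $\delta$ increases, so the possibility of a monotone curve passing from cell $(i,j)$ to cell $(k,j)$ through otherwise-free cells is monotone in $\delta$; per the preceding discussion this possibility is exactly the condition $\phi(\delta)\le\psi(\delta)$. Hence any $\delta^1$ with $\phi(\delta^1)>\psi(\delta^1)$ precedes any $\delta^2$ with $\phi(\delta^2)\le\psi(\delta^2)$, giving $\delta^1<\delta^2$. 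Moreover both edge intervals are nonempty at $\delta^1$ (their endpoints are being compared there) and nonemptiness persists as the free space grows, so $\phi$ and $\psi$ are defined on all of $[\delta^1,\delta^2]$.

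The crux is the continuity of $\phi$ and $\psi$ in $\delta$. For the $L_2$ norm (and hence for $L_2^{\skoro}$) this is precisely Lemma~\ref{lemma:BallLineIntersect}: it shows that the extreme parameters $h_{\min},h_{\max}$ of $\ball(s,\delta)\cap\myline(z,z')$ are continuous functions of $\delta$ on the relevant range, and taking $s=f(i+1)$ or $s=f(k)$ with $\myline(z,z')=g_{[j]}$ identifies $\phi$ and $\psi$ with such endpoint functions. For $L_1$, $L_{\infty}$ and the derived $\skoro$-norms the same continuity follows from the endpoint-continuity already used when computing the free-space cell boundaries for the six norms. Granting continuity, set $D(\delta)=\phi(\delta)-\psi(\delta)$; then $D$ is continuous on $[\delta^1,\delta^2]$ with $D(\delta^1)>0$ and $D(\delta^2)\le 0$, so the intermediate value theorem gives $\delta^*\in[\delta^1,\delta^2]$ with $D(\delta^*)=0$, i.e.\ $\second(\mya_{i,j}^1)=\second(\myb_{k,j}^3)$ in $\free_{\delta^*}(f,g)$. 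Since $D(\delta^1)>0$ forces $\delta^*\neq\delta^1$, we conclude $\delta^1<\delta^*\le\delta^2$.

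The step I expect to be the main obstacle is exactly this continuity, argued uniformly across all six norms. Each endpoint is obtained by clipping a ball to the segment $g_{[j]}$ and may be pinned at an endpoint of the parameter range ($\rho_g=j$ or $\rho_g=j+1$); the $\min$/$\max$ with these boundary values must be carried through so that continuity is preserved precisely where the free interval reaches the edge of $[j,j+1]$. This is the delicate bookkeeping that Lemma~\ref{lemma:BallLineIntersect} performs for $L_2$, and the analogous clipping argument is what secures continuity for the remaining norms.
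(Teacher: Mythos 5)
Your proof takes essentially the same route as the paper's. The paper's (very terse) proof identifies the free interval on a vertical cell edge with $\ball(\cdot,\delta)\cap g_{[j]}$ and then invokes the proof of Lemma~\ref{lemma:BallSphere}, which is precisely the argument you spell out: continuity and monotonicity of the ball--segment endpoint functions from Lemma~\ref{lemma:BallLineIntersect}, followed by an intermediate-value argument. Your version is, if anything, a little more self-contained, since the sign change of $\phi-\psi$ at $\delta^1$ and $\delta^2$ is handed to you by the hypotheses rather than re-derived. Note also that the corollary sits inside the $L_2$ subsection, so only the $L_2$ case (where Lemma~\ref{lemma:BallLineIntersect} applies) is actually required; your remarks about the remaining norms are extra.

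One identification needs fixing. You describe both $\phi(\delta)=\second(\mya_{i,j}^1)$ and $\psi(\delta)=\second(\myb_{k,j}^3)$ as \emph{lower} endpoints of the respective free intervals. By the paper's convention (forced by the identity $(\mya_{i+1,j}^3,\myb_{i+1,j}^3)=(\mya_{i,j}^1,\myb_{i,j}^1)$ and by the horizontal-clamping discussion around Figure~\ref{figure:HClamp}), $\mya_{i,j}^1$ is indeed the lower endpoint of the free interval on the right edge of cell $(i,j)$, but $\myb_{k,j}^3$ is the \emph{upper} endpoint of the free interval on the left edge of cell $(k,j)$; that is, $\phi$ is an $h_{\min}$-type function (non-increasing in $\delta$) while $\psi$ is an $h_{\max}$-type function (non-decreasing in $\delta$). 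This matters for your second paragraph: the equivalence between ``a monotone passage through otherwise-free cells exists'' and ``$\phi(\delta)\le\psi(\delta)$'', and hence the monotonicity in $\delta$ from which you conclude $\delta^1<\delta^2$, hold precisely because $\phi$ is the lower endpoint of a growing interval and $\psi$ the upper endpoint of a growing interval; for two lower endpoints the difference $\psi-\phi$ need not be monotone and the equivalence fails. With the identification corrected, the rest of your argument --- continuity of both endpoint functions (with the clipping to $[j,j+1]$ handled by taking $\min$/$\max$ with the boundary values, as you note), the intermediate value theorem applied to $D=\phi-\psi$, and $D(\delta^1)>0$ forcing $\delta^*\neq\delta^1$ --- goes through and coincides with what the paper's citation of Lemma~\ref{lemma:BallSphere} amounts to.
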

\begin{proof}
For any $\delta \geq 0$, 
we have  $[\second(\mya_{i,j}^1) , \second(\myb_{i,j}^1) ] $ in
free space   
$\free_{\delta}(f,g)$
to be equal to $\ball(f[i], \delta) \cap g_{[j]}$. 
Using the proof of Lemma~\ref{lemma:BallSphere}, we have that there
is a $\delta^*$ for which $\second(\mya_{i,j}^1) =  \second(\myb_{k,j}^3)$.
%\qed
\end{proof}

% Let $\sphere(z^0, \delta)$ denote the set of points
% $z$ such that $\norm{z^0  -z} = \delta$, \emph{i.e.}, it denotes the boundary of
% $\ball(z^0, \delta) $.

% First consider the case when $f(i) = f(k)$.
% Then, the desired condition reduces to computing the least 
% $\delta$ such that $ \ball(f(k), \delta)  \cap g_{[j]}$ is non-empty.
% This is the definition of the distance of the point $f(k)$ from the line
% segment $g_{[j]}$, and can be computed used the procedure given in the
% proof of Lemma~\ref{lemma:DistancePointLineTwo}.

% Now assume $f(i) \neq f(k)$.
% \mynote{todo}

% \smallskip\noindent\textbf{Computation of least $\delta$ such that 
% $\sphere(s_1, \delta) \cap \sphere(s_1, \delta)  \cap \myline(z,z')$ is non-empty.}
Using Lemma~\ref{lemma:BallSphere},
we now compute $\delta^* = \min_{\delta \geq 0}\set{\delta \mid 
 \ball(s_1,\delta) \cap \ball(s_2, \delta) \cap \myline(z,z') \neq\emptyset}$.
We restrict our attention to the cases when
%$\min_{\delta \geq 0}\set{\delta \mid 
 %\ball(s_1,\delta) \cap \ball(s_2, \delta) \cap \myline(z,z') \neq\emptyset}$
$\delta^*$ is not equal to either
$\dist(s_1,  \myline(z,z') )$ or $\dist(s_2,  \myline(z,z') )$.
It can be checked from the proof of Lemma~\ref{lemma:BallSphere} that
this means that  $s_1, s_2, z,z'$ are all distinct points.
Also, we have  that $\delta^* = \min_{\delta \geq 0}\set{\delta \mid 
 \sphere(s_1,\delta) \cap \sphere(s_2, \delta) \cap \myline(z,z') \neq\emptyset}$.

Given any $\delta > 0$. 
If a point $z + \lambda\vdot(z'-z)$ on the line $\myline(z,z')$ is in
$\sphere(s_1, \delta) \cap \sphere(s_1, \delta)  \cap \myline(z,z')$, then we must have 
that it is exactly $\delta$ distance away from both $s_1$ and $s_2$.
Thus, for some $\delta\in [0,1]$,
\begin{equation} 
\label{equation:SphereCompute}
\norm{\vec{z} - \vec{s_1} + \lambda\vdot(\vec{z'}-\vec{z})}  = 
\norm{\vec{z} - \vec{s_2} + \lambda\vdot(\vec{z'}-\vec{z})}
\end{equation}
\begin{flalign*}
 &\text{Thus,}\   \sum_{k=1}^n\left(z_k-s_{1,k} + \lambda\vdot(z_k-z'_k)\right)^2
= \sum_{k=1}^n\left(z_k-s_{2,k} + \lambda\vdot(z_k-z'_k)\right)^2 &\\
&\text{Expanding, }\ 
 \sum_{k=1}^n\left( (z_k-s_{1,k})^2 + \lambda^2\vdot(z_k-z'_k)^2 +
\lambda\vdot 2\vdot  (z_k-s_{1,k}) \vdot (z_k-z'_k)\right)\  =  &\\
&  \hspace{6cm}\sum_{k=1}^n\left( (z_k-s_{2,k})^2 + \lambda^2\vdot(z_k-z'_k)^2 +
\lambda\vdot 2\vdot  (z_k-s_{2,k}) \vdot (z_k-z'_k)\right)&\\
&
\text{Cancelling out the common term and rearranging, }& \\
&\lambda\vdot 2\vdot \left(\sum_{k=1}^n(z_k-s_{1,k}) \vdot (z_k-z'_k)
- \sum_{k=1}^n (z_k-s_{2,k}) \vdot (z_k-z'_k) \right)\ = \ 
  \sum_{k=1}^n (z_k-s_{2,k})^2 -      \sum_{k=1}^n (z_k-s_{1,k})^2&  \\
&\text{Grouping } k\text{-terms together},  &\\
 &\lambda\vdot 2\vdot \sum_{k=1}^n \Big( (z_k-s_{1,k}) \vdot (z_k-z'_k) -
(z_k-s_{2,k}) \vdot (z_k-z'_k) \Big) = 
 \sum_{k=1}^n\Big(  (z_k-s_{2,k})^2 -  (z_k-s_{1,k})^2 \Big) &\\
& \text{Simplifying, }\ 
\lambda\vdot 2\vdot \sum_{k=1}^n (s_{2,k}- s_{1,k})\vdot (z_k-z'_k) \ = \ 
 \sum_{k=1}^n\Big( s_{2,k}^2 - s_{1,k}^2 + 2\vdot  s_{1,k}\vdot z_k -
 2\vdot s_{2,k}\vdot z_k \Big) &\\
& \text{Simplifying again, }\ 
\lambda\vdot 2\vdot \sum_{k=1}^n (s_{2,k}- s_{1,k})\vdot (z_k-z'_k) \ = \ 
 \sum_{k=1}^n\Big( s_{2,k}^2 - s_{1,k}^2  - 2\vdot z_k\vdot(s_{2,k}- s_{1,k})
 \Big)& \\
& \text{A final simplication gives us: }\ 
\lambda\vdot 2\vdot \sum_{k=1}^n (s_{2,k}- s_{1,k})\vdot (z_k-z'_k) \ = \ 
\sum_{k=1}^n (s_{2,k}- s_{1,k})\vdot  (s_{2,k} + s_{1,k} - 2\vdot  z_k)&
\end{flalign*}

If $\sum_{k=1}^n (s_{2,k}- s_{1,k})\vdot (z_k-z'_k) $ is equal to $0$, then
the value of the LHS in the last equation remains the same for all values of $\lambda$,
Thus, all vaues of $\lambda$ satisfy Equation~\ref{equation:SphereCompute}.
In particular $\min\Big(\dist(s_1,  \myline(z,z') )\, , \, \dist(s_2,  \myline(z,z') )\Big)$.
would satisfy Equation~\ref{equation:SphereCompute}, and thus
$\delta^* $ would be either either
$\dist(s_1,  \myline(z,z') )$ or $\dist(s_2,  \myline(z,z') )$ 
(recall that  $\min\Big(\dist(s_1,  \myline(z,z') )\, , \, \dist(s_2,  \myline(z,z') )\Big)$
is a lower bound on $\delta^*$); violating our intial assumption.
Thus, $\sum_{k=1}^n (s_{2,k}- s_{1,k})\vdot (z_k-z'_k) $ is not equal to $0$ given
the assumptions, and then,
\begin{equation}
\label{equation:OptLambda}
\lambda = \frac{\sum_{k=1}^n (s_{2,k}- s_{1,k})\vdot  (s_{2,k} + s_{1,k} - 2\vdot  z_k)}
{2\vdot \sum_{k=1}^n (s_{2,k}- s_{1,k})\vdot (z_k-z'_k) }
\end{equation}
Since this is the only value of $\lambda$ which satisfies
Equation~\ref{equation:SphereCompute},
we must have
$
 \min_{\delta \geq 0}\set{\delta \mid 
 \sphere(s_1,\delta) \cap \sphere(s_2, \delta) \cap \myline(z,z') \neq\emptyset}$
to be $\norm{\vec{z} - \vec{s_2} + \lambda\vdot(\vec{z'}-\vec{z})} $ for
$\lambda$ given by Equation~\ref{equation:OptLambda}.

%Combining everything together, we have 
The following 
function~\ref{function:BallIntersect} 
combines everything together giving us 
Proposition~\ref{proposition:BallIntersectFinalLTwo}
%for computing 
% $\min_{\delta \geq 0}\set{\delta \mid 
% \ball(s_1,\delta) \cap \ball(s_2, \delta) \cap \myline(z,z') \neq\emptyset}$.

%\IncMargin{1em}
\begin{function}
  \SetKwInOut{Input}{Input}
  \SetKwInOut{Output}{Output}
  \Input{Points $s_1, s_2. z, z'$ in $\reals^n$}
  \Output{$\delta^*$ where 
    $\delta^* = \min_{\delta \geq 0}\set{\delta \mid 
      \ball(s_1,\delta) \cap \ball(s_2, \delta) \cap \myline(z,z') \neq\emptyset}$}
  \Switch{ $s_1, s_2. z, z'$}{
    \lCase{ $z= z'$ }{
      \Return{$\max\left(\dist(s_1, z)\, ,\, \dist(s_2, z)\right)$\;}
    }
    \lCase{ $s_1= s_2$ and $z\neq z'$}{
      \Return{$\dist\left(s_1, \myline(z, z')\right)$\;}
    }
  \uCase{$s_1\neq s_2$ and $z\neq z'$}{
    $ \DSet:=\emptyset$\;
    \Begin( \tcp*[f]{Check if $ \delta= \dist\left(s_1, \myline(z, z')\right)$
      is such that})
    {\hspace{4cm}\tcp*[f]{$\ball(s_1,\delta) \cap \ball(s_2, \delta) \cap \myline(z,z') \neq\emptyset$}
      $D_1 := \dist\left(s_1, \myline(z, z')\right)$\;
      $\lambda_1 :=  
      \frac{(\vec{z'}- \vec{z})  \odot(\vec{s_1} - \vec{z})}{\norm{ \vec{z}- \vec{z'} }_{L_2}^2}$ 
      \tcp*[r]{This value of  $\lambda_1$ is such that the point}
      \hspace{4cm}\tcp*[f]{
        $z + \lambda_1\vdot(z' - z)$ is $ \dist\left(s_1, \myline_{\infty}(z, z')\right)$}\\
      \hspace{2cm}\tcp*[f]{distance away from  $s_1$}\\
      \lIf{$\lambda_1 < 0$}{$\lambda_1 :=0$
        \tcp*[r]{$z$ is the closest point on  $ \myline(z, z')$ to $s_1$}
      }
      \lElseIf{$\lambda_1 > 1$}{$\lambda_1 :=1$
        \tcp*[r]{$z'$ is the closest point on  $ \myline(z, z')$ to $s_1$}
      }
      \lIf{$\dist\left(s_2, z+ \lambda_1\vdot(z'-z) \right) \leq D_1$}{
        $ \DSet:=\set{D_1}$\;
      }
    }
    \Begin( \tcp*[f]{Check if $ \delta= \dist\left(s_2, \myline(z, z')\right)$
      is such that})
    {\hspace{4cm}\tcp*[f]{$\ball(s_1,\delta) \cap \ball(s_2, \delta) \cap \myline(z,z') \neq\emptyset$}
      $D_2 := \dist\left(s_2, \myline(z, z')\right)$\;
      $\lambda_2 :=  
      \frac{(\vec{z'}- \vec{z})  \odot(\vec{s_2} - \vec{z})}{\norm{ \vec{z}- \vec{z'} }_{L_2}^2}$ 
      \tcp*[r]{This value of  $\lambda_2$ is such that the point}
      \hspace{2cm}\tcp*[f]{
        $z + \lambda_2\vdot(z' - z)$ is $ \dist\left(s_2, \myline_{\infty}(z, z')\right)$}\\
      \hspace{2cm}\tcp*[f]{distance away from  $s_2$}\\
      \lIf{$\lambda_2 < 0$}{$\lambda_2 :=0$
        \tcp*[r]{$z'$ is the closest point on  $ \myline(z, z')$ to $s_2$}
      }
      \lElseIf{$\lambda_2 > 1$}{$\lambda_2 :=1$
        \tcp*[r]{$z'$ is the closest point on  $ \myline(z, z')$ to $s_2$}
      }
      \lIf{$\dist\left(s_2, z+ \lambda_2\vdot(z'-z) \right) \leq D_2$}{
        $ \DSet :=\DSet \cup \set{D_2}$\;
      }
    }
    \lIf{$\DSet\neq\emptyset$}{\Return{$\min\left(\DSet\right)$\;}}
    \lElse{\Return{
        \[\norm{\vec{z}-\vec{s_1} + \left(
            \frac{\sum_{k=1}^n (s_{2,k}- s_{1,k})\cdot  (s_{2,k} + s_{1,k} - 2\vdot  z_k)}
            {2\cdot \sum_{k=1}^n (s_{2,k}- s_{1,k})\cdot (z_k-z'_k) }
          \right)\cdot (\vec{z'}-\vec{z})}_{L_2}\;
        \]}
    }
}%uCase
}
\caption{$\Phi_{L_2}$($s_1, s_2, z, z'$)}
  \label{function:BallIntersect}
%Note the format in the caption, function, arguments
\end{function}
%\decmargin{1em}
%\DecMargin{1em}

\begin{proposition}[Computation of least $\delta$ such that 
$\ball(q, \delta) \cap \ball(q', \delta)  \cap \myline(z,z')$ is non-empty: $L_2$]
\label{proposition:BallIntersectFinalLTwo}
Given points $s_1, s_2. z, z'$ in $\reals^n$,
Function~\ref{function:BallIntersect} computes
$\min_{\delta \geq 0}\set{\delta \mid 
      \ball(s_1,\delta) \cap \ball(s_2, \delta) \cap \myline(z,z') \neq\emptyset}$
for the $L_2$ norm.
\qed
\end{proposition}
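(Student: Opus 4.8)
The plan is to verify that Function~\ref{function:BallIntersect} returns the correct value $\delta^* = \min_{\delta \geq 0}\set{\delta \mid \ball(s_1,\delta) \cap \ball(s_2, \delta) \cap \myline(z,z') \neq\emptyset}$ in each of the three cases of its \textsf{Switch} statement. The two degenerate cases are immediate. When $z = z'$ the segment $\myline(z,z')$ is the single point $z$, so the triple intersection is nonempty exactly when $z \in \ball(s_1,\delta) \cap \ball(s_2,\delta)$, which first occurs at $\delta = \max(\dist(s_1,z), \dist(s_2,z))$. When $s_1 = s_2$ (and $z \neq z'$) the two balls coincide, so $\delta^*$ is the least $\delta$ with $\ball(s_1,\delta) \cap \myline(z,z') \neq \emptyset$, namely $\dist(s_1, \myline(z,z'))$. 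Each matches the corresponding \textsf{Return}.

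For the main case $s_1 \neq s_2$, $z \neq z'$, I would first record the lower bound $\delta^* \geq \max(\dist(s_1,\myline(z,z')), \dist(s_2,\myline(z,z')))$: any point $p$ realizing the intersection lies on the line and satisfies $\dist(s_i,p) \leq \delta^*$, so $\dist(s_i,\myline(z,z')) \leq \delta^*$. Next I would justify the two \textsf{Begin} blocks, which test whether $\delta^*$ equals $D_i := \dist(s_i, \myline(z,z'))$. By Lemma~\ref{lemma:BallLineIntersect}, at $\delta = D_i$ the set $\ball(s_i, D_i) \cap \myline(z,z')$ is a single point, the foot of $s_i$ on the segment; Proposition~\ref{proposition:DistancePointLineTwo} shows this is $z + \lambda_i(z'-z)$ with $\lambda_i$ the projection parameter clamped to $[0,1]$, exactly as computed. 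Hence the triple intersection is nonempty at $\delta = D_i$ iff this point lies in $\ball(s_{3-i}, D_i)$, which is the test $\dist(s_{3-i}, z+\lambda_i(z'-z)) \leq D_i$ that adds $D_i$ to $\DSet$. Because $\delta^* \geq \max(D_1, D_2)$, a successful test at $D_i$ forces $\delta^* = D_i$; thus $\DSet \neq \emptyset$ iff $\delta^* \in \set{D_1, D_2}$, and in that case $\min(\DSet) = \delta^*$, matching the \textsf{Return}.

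It remains to treat $\DSet = \emptyset$, i.e.\ $\delta^* \notin \set{D_1, D_2}$. Here I would invoke Lemma~\ref{lemma:BallSphere} to replace balls by spheres, so that $\delta^* = \min_{\delta}\set{\delta \mid \sphere(s_1,\delta)\cap\sphere(s_2,\delta)\cap\myline(z,z')\neq\emptyset}$ with a \emph{unique} realizing point. Any such point $z+\lambda(z'-z)$ is equidistant from $s_1$ and $s_2$; equating the squared $L_2$ distances and simplifying (the computation preceding Equation~\ref{equation:OptLambda}) yields a linear equation in $\lambda$. The coefficient $\sum_k (s_{2,k}-s_{1,k})(z_k-z'_k)$ must be nonzero, for otherwise every $\lambda$ would satisfy the equidistance relation and $\delta^*$ would collapse to $\min(D_1,D_2) \in \set{D_1,D_2}$, contradicting the standing assumption; this gives the unique $\lambda$ of Equation~\ref{equation:OptLambda}, and the returned value $\norm{\vec{z}-\vec{s_1} + \lambda\vdot(\vec{z'}-\vec{z})}$ is the common distance, equal to $\delta^*$ by Lemma~\ref{lemma:BallSphere}.

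The main obstacle is the $\DSet = \emptyset$ branch: correctly reducing from balls to spheres and certifying that the denominator in Equation~\ref{equation:OptLambda} does not vanish. Both rest on Lemma~\ref{lemma:BallSphere} and on the strict monotonicity of $h_{\max}, h_{\min}$ from Lemma~\ref{lemma:BallLineIntersect}. I would also double-check that the boundary tests correctly handle the clamping of $\lambda_i$ and the single-point structure of $\ball(s_i,D_i)\cap\myline(z,z')$, so that the three cases are genuinely exhaustive and the \textsf{if}/\textsf{else} split on $\DSet$ is exactly the split on whether $\delta^*$ is attained at a segment endpoint-governed value or strictly between the two feet.
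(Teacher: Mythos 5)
Your proof is correct and follows essentially the same route as the paper's: the same two degenerate cases, the same lower bound $\delta^*\geq\max(D_1,D_2)$, the same identification (via Lemma~\ref{lemma:BallLineIntersect}) of $\ball(s_i,D_i)\cap\myline(z,z')$ with the single clamped-projection point to justify the two \textsf{Begin} tests, and the same reduction via Lemma~\ref{lemma:BallSphere} to the sphere/equidistance equation yielding Equation~\ref{equation:OptLambda} in the $\DSet=\emptyset$ branch. The only divergence is that you read the second \textsf{Begin} block's test as $\dist(s_1,\, z+\lambda_2\vdot(z'-z))\leq D_2$ (the intended cross-check against the other point), whereas the paper's pseudocode literally tests $\dist(s_2,\, z+\lambda_2\vdot(z'-z))\leq D_2$, which holds trivially since that point realizes $D_2$ -- a typo that your reading silently corrects and that is needed for the equivalence $\DSet\neq\emptyset$ iff $\delta^*\in\set{D_1,D_2}$ which you establish.
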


\subsection{$L_{\infty}$-norm}

We note that the $L_{\infty}^{\skoro}$ norm is equivalent to
the  $L_{\infty}$ norm:  $\norm{\tuple{\tuple{s_1, \dots, s_n},t}}_ {L_{\infty}^{\skoro}}= 
\norm{\tuple{s_1,\dots, s_n, t}}_ {L_{\infty}}$ for $s_i, t\in \reals$.
Thus, the results in this section are also applicable to the  $L_{\infty}^{\skoro}$ norm.

\begin{proposition}[Distance of point to line: $L_{\infty}$]
\label{proposition:DistancePointLineInfty}
The distance from a point  $s\in \reals^n$ to the affine line segment between two
distinct  points $z$ and $z'$ in  $\reals^n$ denoted
$\dist_{L_{\infty}}(s, \myline(z,z')) $ is a solution of the following linear program.
\begin{alignat*}{2} 
     \textsf{minimize  }\ & \delta \\ 
     \textsf{subject to }\  &
     z_i-s_i + \lambda\vdot (z'_i - z_i) \leq \delta & \qquad &  \text{ for } 1\leq i \leq n\\ 
    &  -\left(z_i-s_i + \lambda\vdot (z'_i - z_i) \right) \leq \delta 
                   & \qquad &  \text{ for } 1\leq i \leq n\\
     & \delta \geq 0\\
     & 0 \leq \lambda  \leq 1
   \end{alignat*}
 \end{proposition}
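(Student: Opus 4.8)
The plan is to unfold the definition of distance, rewrite it as a minimization of a maximum of absolute values over the single line parameter $\lambda$, and then apply the standard epigraph reformulation that turns such a problem into a linear program.

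First I would note that, directly from the definition of distance and of the $L_{\infty}$ norm, the equation of the affine segment is $z + \lambda\vdot(z'-z)$ for $0\leq\lambda\leq 1$, so
\[
\dist_{L_{\infty}}\big(s, \myline(z,z')\big)
= \inf_{0\leq \lambda \leq 1} \ \max_{1\leq i\leq n} \abs{z_i - s_i + \lambda\vdot (z'_i - z_i)}.
\]
Since $[0,1]$ is compact and the map $\lambda \mapsto \max_i \abs{z_i - s_i + \lambda\vdot(z'_i-z_i)}$ is continuous, the infimum is attained and may be written as a minimum.

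Next I would introduce an auxiliary variable $\delta$ intended to capture the inner maximum. The key elementary observation is that for any reals $a_1,\dots,a_n$ and any $\delta$, we have $\max_i \abs{a_i}\leq \delta$ if and only if $a_i\leq\delta$ and $-a_i\leq\delta$ for every $i$. Applying this with $a_i = z_i - s_i + \lambda\vdot(z'_i-z_i)$ replaces the single nonlinear constraint $\max_i\abs{a_i}\leq\delta$ by the $2n$ inequalities listed in the proposition, all of which are linear in the variables $\lambda$ and $\delta$ (the bounds $0\leq\lambda\leq 1$ and $\delta\geq 0$ being already linear).

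It then remains to check that minimizing $\delta$ over the feasible region of this linear program yields exactly the distance, which I would do in two directions. For the lower bound, any feasible pair $(\lambda,\delta)$ satisfies $\max_i\abs{a_i(\lambda)}\leq\delta$, hence $\delta \geq \norm{z - s + \lambda\vdot(z'-z)}_{L_{\infty}} \geq \dist_{L_{\infty}}(s,\myline(z,z'))$; taking the minimum over feasible pairs shows the optimal LP value is at least the distance. For the reverse bound, let $\lambda^{*}\in[0,1]$ attain the distance and set $\delta^{*} = \norm{z - s + \lambda^{*}\vdot(z'-z)}_{L_{\infty}}$; then $(\lambda^{*},\delta^{*})$ is feasible and its objective value equals the distance, so the optimal LP value is at most the distance. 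The two bounds coincide, giving the claim. The only points needing care are the equivalence between the absolute-value constraint and its two linear relaxations and the attainment of the infimum by compactness; everything else is a routine linear reformulation, so I do not anticipate a genuine obstacle here.
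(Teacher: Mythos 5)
Your proposal is correct and follows essentially the same route as the paper: unfold the definition of $\dist_{L_{\infty}}$ as $\inf_{0\leq\lambda\leq 1}\max_i\abs{z_i-s_i+\lambda\vdot(z'_i-z_i)}$, then replace the max of absolute values by the $2n$ linear constraints using $\abs{A}\leq B$ iff $A\leq B$ and $-A\leq B$, with $\delta$ as the epigraph variable. Your added details (attainment of the infimum by compactness and the explicit two-sided verification that the LP optimum equals the distance) are sound and merely make explicit what the paper's terser argument leaves implicit.
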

 \begin{proof}
The  equation of the  affine line segment between two points $z$ and $z'$
is $z + \lambda\vdot ( z'-z)$ with  $0\leq t\leq 1$.
By definition, we have 
\[
\dist_{L_{\infty}}(s, \myline(z,z')) =
 \inf_{0\leq t \leq 1} \max\set{\abs{z_i-s_i + \lambda\vdot(z'_i - z_i)} \mid 1\leq i\leq n}\]
This can be written as the system of linear constraints in the statement of the lemma,
using the fact that $\abs{A} \leq B $ iff $A \leq B$ and $-A\leq B$ for real numbers
$A,B$; and that $\max\set{A_1,\dots, A_n}$ is by definition the least $\delta$ such that
$A_i \leq \delta$ for every $i$.
%\qed
\end{proof}

\begin{proposition}[Computation of least $\delta$ such that 
$\ball(q, \delta) \cap \ball(q', \delta)  \cap \myline(z,z')$ is non-empty: $L_{\infty}$]
\label{proposition:BallsIntersectLInfty}
Let $q, q', z, z'$ be points in $\reals^n$.
The value of
$\inf_{\ball(q, \delta) \cap \ball(q', \delta)  \cap \myline(z,z') \neq \emptyset} \delta $
for the $L_{\infty}$ norm 
is
\[
\begin{cases}
\dist_{L_{\infty}}(q, \myline(z,z')) & \text{ if } q=q'\\
\min(\norm{q-z},\, \norm{q'-z}) & \text{ if } q\neq q' \text{ and } z=z'\\
\text{The value of the following  linear program~\ref{LP:InftyBall}} & \text{ otherwise.}\\
\end{cases}
\]
\begin{equation}
\label{LP:InftyBall}
\begin{alignedat}{2}
  \text{minimize  }\  &   \delta \\
  \text{subject to } \ &
  z_i-q_i + \lambda\vdot (z'_i - z_i) \leq \delta & \qquad &  \text{ for } 1\leq i \leq n\\ 
  &  -\left(z_i-q_i + \lambda\vdot (z'_i - z_i) \right) \leq \delta 
  & \qquad &  \text{ for } 1\leq i \leq n\\
  &z_i-q'_i + \lambda\vdot (z'_i - z_i) \leq \delta & \qquad &  \text{ for } 1\leq i \leq n\\ 
  &  -\left(z_i-q'_i + \lambda\vdot (z'_i - z_i) \right) \leq \delta 
  & \qquad &  \text{ for } 1\leq i \leq n\\
  & \delta \geq 0\\
  & 0 \leq \lambda  \leq 1
\end{alignedat}
\end{equation}
\end{proposition}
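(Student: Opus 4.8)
The plan is to reduce the geometric feasibility question to the linear program via the coordinate-wise characterization of the $L_{\infty}$ norm, after first disposing of the two degenerate cases. First I would handle $q=q'$: here $\ball(q,\delta)\cap\ball(q',\delta)=\ball(q,\delta)$, so the condition ``$\ball(q,\delta)\cap\ball(q',\delta)\cap\myline(z,z')\neq\emptyset$'' is simply that the segment meets the ball, and the least such $\delta$ is by definition $\dist_{L_{\infty}}(q,\myline(z,z'))$. Next I would handle $z=z'$: now $\myline(z,z')$ is the single point $z$, so the intersection is non-empty iff $\norm{q-z}\leq\delta$ and $\norm{q'-z}\leq\delta$ hold simultaneously, whence the least $\delta$ is $\max(\norm{q-z},\norm{q'-z})$. (I note that the statement writes $\min$ for this case, but the requirement that \emph{both} balls contain $z$ forces a $\max$; this also matches the $z=z'$ branch of Function~\ref{function:BallIntersect} for $L_2$.)

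For the main case $q\neq q'$ and $z\neq z'$, I would parametrize the segment as $z+\lambda\vdot(z'-z)$ with $\lambda\in[0,1]$. A point $p$ lies in $\ball(q,\delta)$ iff $\norm{p-q}_{L_{\infty}}\leq\delta$, which by the definition of the $L_{\infty}$ norm (as in Proposition~\ref{prop:LinftyGeneral}) holds iff $|p_i-q_i|\leq\delta$ for every coordinate $i$, i.e. iff $-\delta\leq p_i-q_i\leq\delta$ for all $i$. Substituting $p_i=z_i+\lambda\vdot(z'_i-z_i)$, and doing the same with $q'$ in place of $q$, yields precisely the four families of linear inequalities of the program~\ref{LP:InftyBall}. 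Hence, for a fixed $\delta$, the set $\ball(q,\delta)\cap\ball(q',\delta)\cap\myline(z,z')$ is non-empty iff there is some $\lambda\in[0,1]$ satisfying all of these constraints. Therefore the least feasible $\delta$ equals the optimum of the program, which minimizes $\delta$ subject to exactly these constraints over the variables $(\lambda,\delta)$ with $0\leq\lambda\leq1$ and $\delta\geq0$.

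Finally I would argue the optimum is attained so that ``the value of the linear program'' is well defined. Existence of $\delta^*=\min_{\delta\geq0}\{\delta\mid\ball(q,\delta)\cap\ball(q',\delta)\cap\myline(z,z')\neq\emptyset\}$ is already guaranteed by the earlier lemma establishing the existence of the clamping value; for the program itself, observe that for each fixed $\lambda\in[0,1]$ the smallest admissible $\delta$ is $\max\big(\norm{z+\lambda\vdot(z'-z)-q},\,\norm{z+\lambda\vdot(z'-z)-q'}\big)$, a continuous function of $\lambda$ on the compact interval $[0,1]$, so it attains its minimum and the program is solvable. Since the whole argument is a direct translation through the definition of $L_{\infty}$, I do not anticipate a genuine obstacle; the only point requiring care is keeping the two degenerate cases cleanly separated from the generic one, so that program~\ref{LP:InftyBall} is invoked exactly when $q\neq q'$ and $z\neq z'$ (the regime in which the clamping behaviour of the two balls against the segment is non-trivial). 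The same reasoning transfers verbatim to $L_{\infty}^{\skoro}$ by the stated coordinate identification $\norm{\tuple{\tuple{s_1,\dots,s_n},t}}_{L_{\infty}^{\skoro}}=\norm{\tuple{s_1,\dots,s_n,t}}_{L_{\infty}}$.
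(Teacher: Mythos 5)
Your proof is correct and follows essentially the same route as the paper's: split off the degenerate cases, then in the main case expand the $L_{\infty}$ condition coordinate-wise ($\abs{A}\leq\delta$ iff $A\leq\delta$ and $-A\leq\delta$) into exactly the constraints of program~\ref{LP:InftyBall}; the paper's own proof does precisely this, deferring the degenerate cases to the $L_1$ proposition. You go beyond the paper in two useful ways. First, you are right that for $q\neq q'$ and $z=z'$ the answer is $\max(\norm{q-z},\,\norm{q'-z})$: the single point $z$ must lie in \emph{both} balls, so the printed $\min$ is an erratum --- one that also appears in the $L_1$ statement (Proposition~\ref{lemma:LOneBall}), to which the paper's proof defers this case, and which is contradicted by the paper's own $L_2$ treatment (the $z=z'$ branch of Function~\ref{function:BallIntersect} and the proof of Lemma~\ref{lemma:BallSphere} both correctly use $\max$). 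Second, your attainment argument --- that for fixed $\lambda$ the least admissible $\delta$ is $\max\bigl(\norm{z+\lambda\vdot(z'-z)-q},\,\norm{z+\lambda\vdot(z'-z)-q'}\bigr)$, a continuous function on the compact interval $[0,1]$, so the program's optimum is achieved --- is a detail the paper leaves implicit.
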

\begin{proof}
We consider the case when $q\neq q'$ and $z\neq z'$ (the other cases are
as in Lemma~\ref{lemma:LOneBall}).
We want to find the least $\delta$ such that 
there exists some point on $\myline(z,z')$ such that that point is at most $\delta$ away 
from both $q$ and $q'$.
This can be written down as the constraint system~\ref{LP:InftyBall}, using similar 
reasonsing as in the proof of Lemma~\ref{proposition:DistancePointLineInfty}.
%\qed
\end{proof}

\subsection{$L_1^{\skoro}$-norm}

In this section, we compute the geometric primitives for
the $L_2^{\skoro}$-norm defined by  $\norm{\tuple{s,t}}_ {L_2^{\skoro}}= 
\max\left(\norm{s}_{L_2}, \abs{t}\right)$ for $\tuple{s,t}\in \reals^n\times \reals$.
We note the following identity that we will use:
\begin{equation}
\label{equation:MaxSkoroLOne}
\left(\norm{\tuple{s,t}}_ {L_1^{\skoro}} \leq \delta\right)
\ \text{ iff } \ 
\left(\norm{s}_{L_1} \leq \delta \text{ and } \abs{t} \leq \delta\right)
\end{equation}

\smallskip\noindent\textbf{Distance from a point to a Line.}
Let us be given the points $\tuple{s,t_s}, \tuple{z,t_z}, \tuple{z',t_{z'}}$
with $s,z,z'\in \reals^n$ and $t\in \reals$.
For the $L_1^{\skoro}$-norm, the distance
$\dist_{L_1^{\skoro}}\left(\tuple{s,t_s}, \myline(\tuple{z,t_z}, \tuple{z,t_{z'}})\right)$
is equal to the following 
\begin{align*} 
\dist_{L_1^{\skoro}}\left(\tuple{s,t_s}, \myline(\tuple{z,t_z}, \tuple{z',t_{z'}})\right)
& = \inf_{0\leq \lambda  \leq 1} \norm{\tuple{z,t_z} - \tuple{s,t_s}
+ \lambda \vdot ( \tuple{z',t_{z'}} - \tuple{z,t_z})}_{L_1^{\skoro}}\\
& =  \inf_{0\leq \lambda  \leq 1} \left(\max\left(
\norm{z-s + \lambda \vdot(z'-z)}_{L_1}\, , \, 
\abs{t_z-t_s + \lambda \vdot(t_{z'} - t_{z})} \right) \right)\\
& = \inf_{0\leq \lambda  \leq 1} \left\{
  Z \left| 
    \begin{array}{l}
      Z \geq \norm{z-s + \lambda \vdot(z'-z)}_{L_1}\text{ and }\\
      Z \geq \abs{t_z-t_s + \lambda \vdot(t_{z'} - t_{z})}
    \end{array}
    \right.
    \right\}
\end{align*}
This can be written as the following optimization problem.
\begin{alignat*}{2}
  \text{minimize  }\  &\  Z\\
  \text{subject to }\  & \left( \sum_{i=1}^n \abs{z_i-s_i + \lambda\vdot (z'-z)} \right) -Z\leq 0\\
  & \left( \abs{t_z-t_s + \lambda \vdot(t_{z'} - t_{z})}\right) -Z  \leq 0\\
  & 0\leq \lambda \leq 1\\
  & Z \geq 0 
   \end{alignat*}
The above optimization problem can then be written as:
\begin{alignat*}{2}
  \text{minimize  }\  &\  Z\\
  \text{subject to }\  & \left( \sum_{i=1}^n \abs{U_i} \right) -Z\leq 0\\
  & \left( \abs{U_t}\right) -Z  \leq 0\\
  & U_i = z_i-s_i + \lambda\vdot (z'-z)  & \qquad &  \text{ for } 1\leq i \leq n\\
  & U_t = t_z-t_s + \lambda \vdot(t_{z'} - t_{z})\\
  & 0\leq \lambda \leq 1\\
  & Z \geq 0
   \end{alignat*}

  We remove the absolute values in the constraints as follows.
We introduce $n+1$ new variables
$\absU_i$ for $1\leq i \leq n$, and $\absU_t$ such that
$\absU_i \geq U_i$ and $\absU_i \geq -U_i$ and similarly for
 $\absU_t$.
Consider the linear program
\begin{equation}
\label{LP:PointLineLOneSkoro}
\begin{alignedat}{2}
  \text{minimize  }\  &\  Z\\
 \text{subject to } \ &  
  \left(\sum_{i=1}^n \absU_i\right)  - Z\ \leq\  0\\
  & \absU -Z \leq 0\\
   & U_i - \absU_i \leq 0 & \qquad &  \text{ for } 1\leq i \leq n\\
   & -U_i - \absU_i \leq 0 & \qquad &  \text{ for } 1\leq i \leq n\\
   & U_t - \absU_t \leq 0\\
   & -U_t - \absU_t \leq 0\\
   &  \absU_i \geq 0 & \qquad &  \text{ for } 1\leq i \leq n\\
   & \absU_t \geq 0 \\
  & U_i = z_i-s_i + \lambda\vdot (z'-z)  & \qquad &  \text{ for } 1\leq i \leq n\\
  & U_t = t_z-t_s + \lambda \vdot(t_{z'} - t_{z})\\
   & 0\leq \lambda \leq 1\\
   & Z \geq 0
\end{alignedat}
\end{equation}

Note that $\absU_i \geq U_i$ and $\absU_i \geq -U_i$ only ensure
$\absU_i \geq \abs{U_i}$, equality is \emph{not} guaranteed. 
However, the new constraint
$\sum_{i=1}^n \absU_i  - Z \ \leq\  0$ 
ensures that $\absU_i$ can be taken to be 
$  |U_i|$ in the computation of the optimum in the linear
program.
The proof of correctness is as for the proof of correctness of the linear 
program~\ref{LP:Ball}.

The results of the proceeding discussion are summarized in the following
proposition.
\begin{proposition}[Distance of point to Line: $L_1^{\skoro}$]
\label{proposition:PointLineLOneSkoro}
Let $\tuple{s,t_s}, \tuple{z,t_z}, \tuple{z',t_{z'}}$ be points
with $s,z,z'\in \reals^n$ and $t\in \reals$.
For the $L_1^{\skoro}$-norm, the distance of the
point $\tuple{s,t_s}$ from the line
$\myline( \tuple{z,t_z}, \tuple{z',t_{z'}})$ is the
value of the linear program~\ref{LP:PointLineLOneSkoro}.
\qed
\end{proposition}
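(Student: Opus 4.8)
The plan is to carry the quantity of interest through the chain of equivalent reformulations already set up in the preceding discussion, turning the infimum of a pointwise maximum over the segment parameter $\lambda$ into a single linear program whose correctness rests on one relaxation being tight. First I would parametrize the segment as $\tuple{z,t_z} + \lambda\vdot(\tuple{z',t_{z'}}-\tuple{z,t_z})$ for $0\le\lambda\le1$, so that by the definition of distance the sought value is $\inf_{0\le\lambda\le1}\norm{\tuple{z,t_z}-\tuple{s,t_s}+\lambda\vdot(\tuple{z',t_{z'}}-\tuple{z,t_z})}_{L_1^{\skoro}}$. Unfolding $\norm{\cdot}_{L_1^{\skoro}}$ as the maximum of $\norm{\cdot}_{L_1}$ on the spatial coordinates and $\abs{\cdot}$ on the time coordinate turns the objective into the minimization of a maximum of two convex functions of $\lambda$.

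Next I would eliminate the outer maximum with the standard epigraph trick $\max(a,b)=\min\set{Z: Z\ge a,\ Z\ge b}$, introducing the scalar $Z$ together with the constraints $Z\ge\norm{z-s+\lambda\vdot(z'-z)}_{L_1}$ and $Z\ge\abs{t_z-t_s+\lambda\vdot(t_{z'}-t_z)}$. To make this linear I would introduce the affine auxiliary variables $U_i=z_i-s_i+\lambda\vdot(z'_i-z_i)$ and $U_t=t_z-t_s+\lambda\vdot(t_{z'}-t_z)$, so the two defining inequalities become $\sum_i\abs{U_i}\le Z$ and $\abs{U_t}\le Z$. The only remaining nonlinearity is the absolute value, which I would remove by adjoining variables $\absU_i,\absU_t$ with $\absU_i\ge U_i$, $\absU_i\ge -U_i$ (and likewise for $\absU_t$); this yields exactly program~\ref{LP:PointLineLOneSkoro}. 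Note that these constraints force only $\absU_i\ge\abs{U_i}$, not equality.

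The single step requiring care, and the main obstacle, is showing that this last relaxation is tight: that at an optimal solution one may take $\absU_i=\abs{U_i}$ and $\absU_t=\abs{U_t}$ without changing the optimal $Z$, since only then does the LP value equal the genuine distance. The argument mirrors the one already given for linear program~\ref{LP:Ball}. Suppose at the optimum $Z^*$ some $\absU_k$ strictly exceeds $\abs{U_k}$. Because $\absU_k$ enters only through $\absU_k\ge0$, the bounds $\absU_k\ge\pm U_k$, and the summed constraint $\sum_i\absU_i\le Z$, lowering $\absU_k$ down to $\abs{U_k}$ preserves feasibility while relaxing $\sum_i\absU_i\le Z$. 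Depending on whether the constraint $\abs{U_t}\le Z$ is slack or tight, this either permits a strict decrease of $Z$, contradicting optimality, or leaves $Z^*$ unchanged, in which case we may simply adopt $\absU_k=\abs{U_k}$. Iterating over all $i$ and over $\absU_t$ collapses every auxiliary variable to the corresponding absolute value at the optimum, so the optimum of~\ref{LP:PointLineLOneSkoro} equals $\inf_{0\le\lambda\le1}\max(\sum_i\abs{U_i},\abs{U_t})$, which is precisely $\dist_{L_1^{\skoro}}(\tuple{s,t_s},\myline(\tuple{z,t_z},\tuple{z',t_{z'}}))$, completing the proof.
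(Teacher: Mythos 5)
Your proposal is correct and follows essentially the same route as the paper's proof: the same parametrization of the segment, the same unfolding of the $L_1^{\skoro}$ norm into a maximum of the $L_1$ part and the time part, the same epigraph variable $Z$, the same auxiliary variables replacing absolute values, and the same tightness argument for the relaxation. The paper disposes of the tightness step by citing the argument already given for linear program~\ref{LP:Ball}, which is precisely the slack-versus-tight case analysis you spell out, so there is no gap.
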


\smallskip\noindent\textbf{Computation of the least $\delta$ such that 
$\ball(\tuple{s_1,t_{s_1}} \delta) \cap \ball(\tuple{s_2,t_{s_1}}, \delta)  
\cap \myline( \tuple{z,t_z}, \tuple{z',t_{z'}})$
is non-empty.}\\
The value of the least $\delta$ is specified by the following
optimization problem.
\begin{alignat*}{2}
\text{minimize  }\  &   \delta \\ 
\text{subject to } \ & \norm{\tuple{z,t_z} - \tuple{s_1,t_{s_1}} +
  \lambda \vdot (\tuple{z',t_{z'}} -  \tuple{z,t_z})}_{L_1^{\skoro}} \leq \delta\\
& \norm{\tuple{z,t_z} - \tuple{s_2,t_{s_2}} +
  \lambda \vdot (\tuple{z',t_{z'}} -  \tuple{z,t_z})}_{L_1^{\skoro}} \leq \delta\\
 &  0 \leq \lambda \leq 1\\
  & 0 \leq \delta
\end{alignat*}

Expanding the $L_1^{\skoro}$ norm, and using Equation~\ref{equation:MaxSkoroLOne},
\begin{alignat*}{2}
\text{minimize  }\  &   \delta \\ 
\text{subject to } \ & \left(\sum_{i=1}^n\abs{z_i - s_{1,i}  +
  \lambda \vdot (z_i' -  z_i)} \right)- \delta \leq 0\\
& \abs{t_z-t_{s_1} + \lambda \vdot(t_{z'} - t_{z})}  -\delta \leq 0\\
&\left( \sum_{i=1}^n\abs{z_i - s_{2,i}  +
  \lambda \vdot (z_i' -  z_i)} \right) - \delta \leq 0\\
& \abs{t_z-t_{s_2} + \lambda \vdot(t_{z'} - t_{z})} -\delta \leq 0\\
 &  0 \leq \lambda \leq 1\\
  & 0 \leq \delta
\end{alignat*}

As was done in the case of the distance of a point from a line, we rewrite the above
constraint system as:
\begin{alignat*}{2}
\text{minimize  }\  &   \delta \\ 
\text{subject to } \ & \left(\sum_{i=1}^n\abs{U_i} \right)- \delta \leq 0\\
& \abs{U_t}  -\delta \leq 0\\
&\left(\sum_{i=1}^n\abs{U_i'} \right)- \delta \leq 0\\
& \abs{U_t'}  -\delta \leq 0\\
 & U_i = z_i - s_{1,i}  +  \lambda \vdot (z_i' -  z_i) & \qquad &  \text{ for } 1\leq i \leq n\\
 & U_t = t_z-t_{s_1} + \lambda \vdot(t_{z'} - t_{z})\\
 & U_i' = z_i - s_{1,i}  +  \lambda \vdot (z_i' -  z_i) & \qquad &  \text{ for } 1\leq i \leq n\\
 & U_t' = t_z-t_{s_2} + \lambda \vdot(t_{z'} - t_{z})\\
 &  0 \leq \lambda \leq 1\\
  & 0 \leq \delta
\end{alignat*}

We remove the absolute values using the same trick as before, by introducing
$2\vdot n +2$ new variables $\absU_i, \absU'_i$  for $1\leq i \leq n$, and $\absU_t$
and $\absU'_t$.
The proof of correctness of this transformation is exactly the same as it was for the
case of the distance of a point from a line.
\begin{equation}
\label{LP:BallsLineLOneSkoro}
\begin{alignedat}{2}
\text{minimize  }\  &   \delta \\ 
\text{subject to } \ & \left(\sum_{i=1}^n\absU_i \right)- \delta \leq 0\\
& \absU_t  -\delta \leq 0\\
&\left(\sum_{i=1}^n\absU_i' \right)- \delta \leq 0\\
& \absU_t'  -\delta \leq 0\\
 & U_i = z_i - s_{1,i}  +  \lambda \vdot (z_i' -  z_i) & \qquad &  \text{ for } 1\leq i \leq n\\
 & U_t = t_z-t_{s_1} + \lambda \vdot(t_{z'} - t_{z})\\
 & U_i' = z_i - s_{1,i}  +  \lambda \vdot (z_i' -  z_i) & \qquad &  \text{ for } 1\leq i \leq n\\
 & U_t' = t_z-t_{s_2} + \lambda \vdot(t_{z'} - t_{z})\\
   & U_i - \absU_i \leq 0 & \qquad &  \text{ for } 1\leq i \leq n\\
   & -U_i - \absU_i \leq 0 & \qquad &  \text{ for } 1\leq i \leq n\\
   & U_t - \absU_t \leq 0\\
   & -U_t - \absU_t \leq 0\\
   & U'_i - \absU'_i \leq 0 & \qquad &  \text{ for } 1\leq i \leq n\\
   & -U'_i - \absU'_i \leq 0 & \qquad &  \text{ for } 1\leq i \leq n\\
   & U'_t - \absU'_t \leq 0\\
   & -U_t - \absU'_t \leq 0\\
   &  \absU_i \geq 0 & \qquad &  \text{ for } 1\leq i \leq n\\
   & \absU_t \geq 0 \\
   &  \absU'_i \geq 0 & \qquad &  \text{ for } 1\leq i \leq n\\
   & \absU'_t \geq 0 \\
   &  0 \leq \lambda \leq 1\\
  & 0 \leq \delta
\end{alignedat}
\end{equation}

\begin{proposition}
\label{proposition:BallsIntersectLOneSkoro}
Let $\tuple{s_1,t_{s_1}},  \tuple{s_2,t_{s_2}}, \tuple{z,t_z}, \tuple{z',t_{z'}}$ be points
with $s_1,s_2, z,z'\in \reals^n$ and $t_{s_1}, t_{s_2}, ,t_z, t_{z'}\in \reals$.
The value of the least $\delta \geq 0$ such that
$\ball(\tuple{s_1,t_{s_1}} \delta) \cap \ball(\tuple{s_2,t_{s_1}}, \delta)  
\cap \myline( \tuple{z,t_z}, \tuple{z',t_{z'}})$
is non-empty is given by the linear program~\ref{LP:BallsLineLOneSkoro}.
\qed
\end{proposition}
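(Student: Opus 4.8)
The plan is to mirror, for the two-ball primitive, the derivation already used for the point-to-line $L_1^{\skoro}$ distance in Proposition~\ref{proposition:PointLineLOneSkoro}, combining the norm-splitting identity~\ref{equation:MaxSkoroLOne} with the standard linearization of absolute values. First I would argue that the sought quantity equals the optimum of the nonlinear program displayed immediately before the proposition. By definition of $\ball(\cdot,\delta)$, a point $\tuple{z,t_z} + \lambda\vdot(\tuple{z',t_{z'}} - \tuple{z,t_z})$ on $\myline(\tuple{z,t_z},\tuple{z',t_{z'}})$ lies in both balls exactly when its $L_1^{\skoro}$ distance to each of $\tuple{s_1,t_{s_1}}$ and $\tuple{s_2,t_{s_2}}$ is at most $\delta$; since $\lambda$ ranging over $[0,1]$ sweeps out precisely the segment, the least feasible $\delta$ is the minimum of that program. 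Unlike the corresponding $L_1$ and $L_\infty$ two-ball statements (Propositions~\ref{lemma:LOneBall} and~\ref{proposition:BallsIntersectLInfty}), no explicit case split on $s_1 = s_2$ or $z = z'$ is needed here, because the program handles those degenerate configurations automatically: when $z = z'$ the $\lambda$-terms vanish and the value collapses to the $\max$ of the two point-to-point distances, and when $s_1 = s_2$ the two pairs of constraints coincide and the program reduces to the point-line distance program of Proposition~\ref{proposition:PointLineLOneSkoro}.

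Next I would apply the identity~\ref{equation:MaxSkoroLOne}, namely that $\norm{\tuple{s,t}}_{L_1^{\skoro}} \leq \delta$ holds iff $\norm{s}_{L_1} \leq \delta$ and $\abs{t} \leq \delta$, to each of the two $L_1^{\skoro}$ constraints. This replaces each norm constraint by the conjunction of one $L_1$ constraint on the $\reals^n$ component and one absolute-value constraint on the time component, yielding the four inequalities with the sums $\sum_i\abs{\cdot}$ and the single terms $\abs{\cdot}$. I would then name the affine expressions by introducing the auxiliary variables $U_i, U_t, U_i', U_t'$ through equality constraints, producing the intermediate formulation, and finally linearize every absolute value by adding the variables $\absU_i, \absU_t, \absU_i', \absU_t'$ together with the pair of inequalities $U - \absU \leq 0$ and $-U - \absU \leq 0$ (and nonnegativity). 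This is exactly system~\ref{LP:BallsLineLOneSkoro}.

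The main obstacle, and the only nonroutine point, is that the linearization enforces only $\absU \geq \abs{U}$ rather than equality, so I must show this relaxation does not lower the optimum. Here I would invoke the same argument given in detail for linear program~\ref{LP:Ball}: each variable $\absU_k$ (and each $\absU_t$, $\absU_i'$, $\absU_t'$) occurs only with a nonnegative coefficient, appearing on the left of its own sum-bound constraint and of the two inequalities $U_k - \absU_k \leq 0$, $-U_k - \absU_k \leq 0$, so any optimal solution can be modified by lowering $\absU_k$ to $\abs{U_k}$ while preserving feasibility and without increasing $\delta$. Concretely, at an optimum with value $\delta^*$, if some $\absU_k > \abs{U_k}$, then the two-case analysis of~\ref{LP:Ball} applies verbatim: when the sum-constraint containing $\absU_k$ (or one of the other three $\delta$-bounding constraints) is slack, setting $\absU_k = \abs{U_k}$ lets us strictly decrease $\delta$, contradicting optimality; when it is tight, setting $\absU_k = \abs{U_k}$ keeps all constraints satisfied with $\delta^*$ unchanged. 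Iterating over all space- and time-components of both balls produces an optimal solution with every $\absU$ equal to the corresponding $\abs{U}$, at which point the constraints of~\ref{LP:BallsLineLOneSkoro} coincide with those of the original nonlinear program. Hence the two share the same optimum, and the least such $\delta$ is the value of linear program~\ref{LP:BallsLineLOneSkoro}.
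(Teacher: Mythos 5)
Your proposal follows the paper's proof essentially step for step: the same reduction of the least $\delta$ to the nonlinear program via the max-splitting identity~\ref{equation:MaxSkoroLOne}, the same introduction of the $U$ and $\absU$ variables to linearize the absolute values, and the same appeal to the correctness argument given for linear program~\ref{LP:Ball} (the paper itself simply states that the proof of this transformation ``is exactly the same'' as in that case). The only blemish is your restatement of the slack/tight dichotomy --- to strictly decrease $\delta$ one needs \emph{all three} of the other $\delta$-bounding constraints to be slack, not merely one of them --- but this is harmless, since your preceding (correct) observation that lowering each $\absU_k$ to $\abs{U_k}$ always preserves feasibility without increasing $\delta$ already yields the required equality of optima on its own.
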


\subsection{$L_2^{\skoro}$-norm}

In this section, we compute the geometric primitives for
the $L_2^{\skoro}$-norm defined by  $\norm{\tuple{s,t}}_ {L_2^{\skoro}}= 
\max\left(\norm{s}_{L_2}, \abs{t}\right)$ for $\tuple{s,t}\in \reals^n\times \reals$.
We note the following identity that we will use:
\begin{equation}
\label{equation:MaxSkoroLTwo}
\left(\norm{\tuple{s,t}}_ {L_2^{\skoro}} \leq \delta\right)
\ \text{ iff } \ 
\left(\norm{s}_{L_2} \leq \delta \text{ and } \abs{t} \leq \delta\right)
\end{equation}

\smallskip\noindent\textbf{Distance from a point to a Line.}
Let us be given the points $\tuple{s,t_s}, \tuple{z,t_z}, \tuple{z',t_{z'}}$
with $s,z,z'\in \reals^n$ and $t\in \reals$.
For the $L_2^{\skoro}$-norm, the distance
$\dist_{L_2^{\skoro}}\left(\tuple{s,t_s}, \myline(\tuple{z,t_z}, \tuple{z',t_{z'}})\right)$
is equal to the following using Equation~\ref{equation:MaxSkoroLTwo}:
\begin{equation}
\label{equation:PointLineLTwoSkoro}
\min_{\delta \geq 0} \left(\text{There exists } \lambda \in [0,1]
\text{ such that }
\left\{
\begin{array}{c}
\abs{t_z-t_s + \lambda\vdot(t_{z'}-t_{z})}\,  \leq \delta\\
\text{and}\\
\norm{\vec{z}-\vec{s} + \lambda\vdot(\vec{z'}-\vec{z})}_{L_2} \, \leq \delta
\end{array}
\right\}
\right)
\end{equation}
Note that the minimum exists because $\lambda\in [0,1]$, a closed set.
To compute this quantity, we  follow  a strategy similar to that for computing
the optimization problem in $L_2$ of two balls intersecting with a line --
we reason over $\lambda$ sets.
We have the following cases.

\noindent $\blacktriangleright$
If $\tuple{z,t_z} = \tuple{z',t_{z'}} $, the mininal $\delta$  is just
$\max\left(\norm{z-s}_{L_2}, \abs{t_z-t_s}\right)$.
So, assume $\tuple{z,t_z} \neq  \tuple{z',t_{z'}} $.

\noindent $\blacktriangleright$
If $\tuple{z,t_z} \neq  \tuple{z',t_{z'}} $, but $z=z'$, then we show
that Equation~\ref{equation:PointLineLTwoSkoro} has
the value $D= \max\left(\norm{z-s}_{L_2}, \dist\left(t_s, \myline(t_z, t_{z'})\right)\right)$
as follows.
\begin{compactitem}
\item[$D \geq \dist_{L_2^{\skoro}}\left(\tuple{s,t_s}, \myline(\tuple{z,t_z}, \tuple{z,t_{z'}})\right)$:]
We show there exists a $\lambda \in [0,1]$ such that both the clauses of
Equation~\ref{equation:PointLineLTwoSkoro} are satisfied.
By defintion, we have that there exists a $\lambda \in [0,1]$ such that
$\abs{t_z-t_s + \lambda\vdot(t_{z'}-t_{z})}\,  \leq D$.
Moreover, this value of $\lambda$ gives the point $z$ on the line
$\myline(z,z)$ (both endpoints are the same).
And by definition, we have $\norm{z-s}_{L_2} \leq D$.
This proves $D \geq \dist_{L_2^{\skoro}}\left(\tuple{s,t_s}, \myline(\tuple{z,t_z}, \tuple{z,t_{z'}})\right)$.
\item [$D \leq \dist_{L_2^{\skoro}}\left(\tuple{s,t_s}, \myline(\tuple{z,t_z}, \tuple{z,t_{z'}})\right)$:]
Let $\lambda$ be the value corresponding to the optimal $\delta$ according to
Equation~\ref{equation:PointLineLTwoSkoro} .
This means that $\delta \geq \norm{z-s}_{L_2}$ (from the second clause,
and noting $z+\lambda(z-z) = z$).
Also,  $\delta \geq  \dist\left(t_s, \myline(t_z, t_{z'})\right)$, othewise
the first clause cannot be satisfied for any $\lambda \in [0,1]$.
This proves $D \leq \dist_{L_2^{\skoro}}\left(\tuple{s,t_s}, \myline(\tuple{z,t_z}, \tuple{z,t_{z'}})\right)$.
\end{compactitem}

\medskip
\noindent $\blacktriangleright$
If $\tuple{z,t_z} \neq  \tuple{z',t_{z'}} $, and  $z\neq z'$, but
$t_z = t_{z'}$, we claim
that Equation~\ref{equation:PointLineLTwoSkoro} has
the value
$ \max\left( \dist_{L_2}\left(s, \myline(z,z')\right)\, ,\, \abs{t_s-t_z}\right)$.
The proof is similar to the previous case.

\noindent $\blacktriangleright$
Assume $z\neq z'$ and $t_z \neq  t_{z'}$.
First we compute 
$\dist_{L_2}\left(s, \myline(z,z')\right)$ as in 
Proposition~\ref{proposition:DistancePointLineTwo}.
We also compute $\lambda_p$, such that
the point $z+\lambda_p\vdot(z'-z)$ is $\dist_{L_2}\left(s, \myline{z,z'}\right)$ distance
away from $s$.
This can be done as done in the proof of 
 Proposition~\ref{proposition:DistancePointLineTwo}:
we compute $\frac{(\vec{z'}- \vec{z})  \odot(\vec{s} - \vec{z})}{\norm{ \vec{z'}- \vec{z} }^2}$; and 
(a)~if this value is in  the interval $[0,1]$, then it is $\lambda_p$;
(b)~If this value is less than $0$, then $\lambda_p=0$ (in this
case $z$ is the closest point to $s$);
(b)~If this value is greater than $1$, then $\lambda_p=1$ (in this
case $z'$ is the closest point to $s$).

Let $t_s = t_z + \lambda_{t_s} \vdot(t_{z'} - t_z)$.
Thus, $\lambda_{t_s} = \frac{t_s - t_z}{t_{z'} - t_z}$ (note that by 
assumption, $t_z \neq  t_{z'}$).
Note that $\lambda_{t_s} $ will in general \emph{not} be in
the interval $[0,1]$.
Assume $ \lambda_{t_s}  \geq  \lambda_p$ (otherwise, we just switch 
$\tuple{z,t_z}, \tuple{z,t_{z'}}$).
Thus on the real line, the point where $ \lambda_{t_s}$
lies is to the right of the minimum distance point $\lambda_p$ which
minimizes the $L_2$ norm for the $\reals^n$ components.
The value $\lambda_{t_s}$ can thus be   assumed to be in the interval $[0, \infty)$.

Let $h^1:  [\dist_{L_2}\left(s, \myline(z,z')\right), \infty) \mapsto [0,1]$ be the function
$h^1(\delta) =  \set{\lambda \mid 0\leq\lambda\leq 1 \text{ and a}\norm{z-s + \lambda \vdot (z'-z)}_{L_2} \leq \delta}$;
thus, $h^1()$ is the $h()$ function from Lemma~\ref{lemma:BallLineIntersect}.
The value $h_1(\delta)$ indicates the range of
$\lambda$ values in $[0,1]$ such that
for these $\lambda$ values, the points $z+ \lambda \vdot (z'-z)_{L_2}$ are at most
$\delta$ away (in the $L_2$ metric)  from $s$.
Note that for $\delta\geq \dist_{L_2}\left(s, \myline(z,z')\right)$, the set
$h^1(\delta)$ is never empty.
Let $h^t: \reals_+ \mapsto \reals$ be the function
$h^t(\delta) =  \set{\lambda \mid\abs{ t_z-t_s + \lambda\vdot(t_{z'}- t_z)} \leq \delta}$,
\emph{i.e.} the $\lambda$ values in $\reals$ such that the corresponding point
$ t_z + \lambda\vdot(t_{z'}- t_z)$ on the infinite line is at most $\delta$ away from
$t_s$.
Observe that the solution to Equation~\ref{equation:PointLineLTwoSkoro}
is $\min_{\delta \geq \dist_{L_2}\left(s, \myline(z,z')\right)} h^1(\delta)  \cap h^t(\delta) \neq \emptyset$.
We take $\delta \geq \dist_{L_2}\left(s, \myline(z,z')\right)$ because clearly
for $\delta <  \dist_{L_2}\left(s, \myline(z,z')\right)$
there is no $\lambda\in [0,1]$ such that
$\norm{\vec{z}-\vec{s} + \lambda\vdot(\vec{z'}-\vec{z})}_{L_2} \, \leq \delta$.
Let $h_{\max}^1(\delta)$ be as in Lemma~\ref{lemma:BallLineIntersect},
\emph{i.e.}  $h_{\max}^1(\delta) = \min h^1(\delta)$
 (the maximum exists since
$ h^1(\delta)$  is closed).
Let $h_{\min}^t(\delta) = \min h^t(\delta) $  (the minimum exists since
$ h^t(\delta)$  is closed).
We thus have the solution to Equation~\ref{equation:PointLineLTwoSkoro}
to be 
\begin{equation}
\label{equation:PointLineLTwoSkoroMinMax}
\min_{\delta \geq \dist_{L_2}\left(s, \myline(z,z')\right)}  \text{ such that }h_{\max}^1(\delta) \geq h_{\min}^t(\delta) .
\end{equation}

%Since,  $\lambda_s = \frac{t_s - t_z}{t_{z'} - t_z}$, 
We have
$h^t(\delta)  $, which is defined to be 
$  \set{\lambda \mid\abs{ t_z-t_s + \lambda\vdot(t_{z'}- t_z)} \leq \delta}$,
to be equal to:
\begin{flalign*}
&\set{\lambda \mid\abs{ t_z- (t_z + \lambda_s\vdot(t_{z'}- t_z)) + \lambda\vdot(t_{z'}- t_z)} \leq \delta} \text{ since } t_s=t_z + \lambda_s\vdot(t_{z'}- t_z)&\\
&%\text{which is  equal to }
=
\set{\lambda \mid\abs{ (\lambda-\lambda_s)\vdot (t_{z'}- t_z)} \leq \delta}&\\
& %\text{which equals } 
=
\set{\lambda \mid\abs{ (\lambda-\lambda_s)}\vdot \abs{(t_{z'}- t_z)} \leq \delta}&\\
& %\text{which equals }  
=
\set{\lambda \mid\abs{ (\lambda-\lambda_s) }\leq \frac{\delta}{\abs{(t_{z'}- t_z)}}}
\text{ note that by assumption } t_z \neq  t_{z'} &
\end{flalign*}
% [\lambda_s-\delta, \lambda_s\delta,
Thus, $h_{\min}^t(\delta) $ is $\lambda_s -  \frac{\delta}{\abs{t_{z'}- t_z}}$;
and since $\lambda_s = \frac{t_s - t_z}{t_{z'} - t_z}$, we have
\begin{equation}
\label{equation:DeltaMinTime}
h_{\min}^t(\delta)  =  \frac{t_s - t_z}{t_{z'} - t_z} - \frac{\delta}{\abs{t_{z'}- t_z}}
\end{equation}
Observe that $h_{\min}^t(\delta)$ is continuous and strictly decreasing;
and that if $h_{\min}^t(\delta) = \lambda_{\delta}$, then
$\abs{t_z-t_s + \lambda_{\delta}\vdot(t_{z'}- t_z)} = \delta$.
We have the following cases.
\begin{enumerate}[(a)]
\item $h_{\min}^t\left(\dist_{L_2}\left(s, z'\right) \right) > 1$.
This means that  
$\min_{\delta \geq \dist_{L_2}\left(s, \myline(z,z')\right)}  h_{\max}^1(\delta) \geq h_{\min}^t(\delta) $
is equal to
$\min_{\delta > \dist_{L_2}\left(s, z'\right)}  h_{\max}^1(\delta) \geq h_{\min}^t(\delta) $
(since the maximum value of $h_{\max}^1()$ is $1$, and $h_{\min}^t$ is strictly
decreasing).
Since $h_{\max}^1(\delta) = 1 $ for all  $\delta \geq \dist_{L_2}\left(s, z'\right)$,
we have:
\[
\left(\min_{\delta > \dist_{L_2}\left(s, z'\right)} \text{ such that } h_{\max}^1(\delta) \geq h_{\min}^t(\delta)  \right)\ = \ 
\left( \min_{\delta \geq 0} \text{ such that } h_{\min}^t(\delta) \leq 1\right).
\]
Using Equation~\ref{equation:DeltaMinTime}, we get the above minimum to be
when $ \frac{t_s - t_z}{t_{z'} - t_z} - \frac{\delta}{\abs{t_{z'}- t_z}} = 1$,
\emph{i.e.}, the minimum value for $\delta$ is
$\delta =  \abs{t_{z'}- t_z}\cdot \left(\frac{t_s - t_z}{t_{z'} - t_z} - 1\right)$.

\item $h_{\min}^t\left(\dist_{L_2}\left(s, z'\right) \right)  \leq 1$.
Note that $h_{\max}^1\left(\dist_{L_2}\left(s, z'\right)\right)  = 1$,
thus, the solution to
Equation~\ref{equation:PointLineLTwoSkoroMinMax} must be $\delta \leq 1$.
Hence, we can restrict the range of 
Equation~\ref{equation:PointLineLTwoSkoroMinMax} as follows.
\begin{equation}
\label{equation:PointLineLTwoSkoroMinMaxRestrictOne} 
\min_{\dist_{L_2}\left(s, z'\right)\geq \delta \geq \dist_{L_2}\left(s, \myline(z,z')\right)}  \text{ such that }
h_{\max}^1(\delta) \geq h_{\min}^t(\delta) .
\end{equation}
Suppose $h_{\min}^t\left(\dist_{L_2}\left(s, \myline(z,z')\right)\right) $
has the value $\lambda$ such that $\lambda\leq \lambda_p$ where
$\lambda_p$ is such that
$\dist_{L_2}\left(z-s + \lambda_p\vdot(z'-z)\right) $
is equal to  $\dist_{L_2}\left(s, \myline(z,z')\right)$, that is
$\lambda_p$  gives the point on the line $ \myline(z,z')$ which
minimizes the distance to $s$.
In this case, the minimum value of $\delta$  satisfying 
Equation~\ref{equation:PointLineLTwoSkoroMinMaxRestrictOne} is
$\dist_{L_2}\left(s, \myline(z,z')\right)$.

Suppose   $h_{\min}^t\left(\dist_{L_2}\left(s, \myline(z,z')\right)\right) $
has the value $\lambda$ such that $\lambda> \lambda_p$ where
$\lambda_p$ is as before.
Over the range 
$\dist_{L_2}\left(s, z'\right)\geq \delta \geq \dist_{L_2}\left(s, \myline(z,z')\right)$,
the function  $h_{\max}^1()$ is continuous and strictly increasing by
Lemma~\ref{lemma:BallLineIntersect}.
Also, the function $ h_{\min}^t()$ is continuous and strictlly decreasing.
At $\delta = \dist_{L_2}\left(s, \myline(z,z')\right)$, the value of 
$ h_{\min}^t(\delta)$ is greater than that of $h_{\max}^1(\delta)$.
And at $\delta'= \dist_{L_2}\left(s, z'\right) $,
 the value of 
$ h_{\min}^t(\delta')$ is greater than that of $h_{\max}^1(\delta')$.
Thus, $h_{\min}^t()$ and $h_{\max}^1()$ must intersect at some point 
$\delta^*$ in between $\delta $ and $\delta'$.
Let $h_{\min}^t(\delta^*) = \lambda^* = h_{\max}^1(\max)$.
Moreover, by the properties of $h_{\min}^t()$ and $h_{\max}^1()$,
%we must have that for some $\lambda^*$, 
%such that
we have the following two
equalities:
$\dist_{L_2}\left(z-s + \lambda^*\vdot(z'-z)\right) = \delta^*$;
and
$\abs{t_z-t_s + \lambda^*\vdot(t_{z'}- t_z)} = \delta^*$.
Thus,
\begin{flalign*}
& \norm{\left(z-s + \lambda^*\vdot(z'-z)\right) }^2 = 
\left(t_z-t_s + \lambda^*\vdot(t_{z'}- t_z)\right)^2&\\
& \text{Expanding, }\ 
\sum_{k=1}^n\Big(
(z_k-s_k)^2 +(\lambda^*)^2\vdot (z'_k-z_k)^2  + \lambda^*\vdot 2 \vdot (z_k-s_k)\vdot (z'_k-z_k) \Big)\ =\ &\\
&\hspace{5cm}
\sum_{k=1}^n
\Big(
(t_{z,k}- t_{s,k})^2 +  (\lambda^*)^2\vdot (t_{z',k}- t_{z,k})^2 + 
\lambda^*\vdot 2\vdot (t_{z,k}- t_{s,k})\vdot (z'_k-z_k) 
\Big)&\\
& \text{Rearranging, }\ 
(\lambda^*)^2\vdot \sum_{k=1}^n \big((z'_k-z_k)^2 - (t_{z',k}- t_{z,k})^2\big) \quad + &\\
&\hspace{4cm}
\lambda^* \vdot 2\vdot \sum_{k=1}^n \big((z_k-s_k)\vdot (z'_k-z_k) -  (t_{z,k}- t_{s,k})\vdot (z'_k-z_k) \big) \quad +&\\
&\hspace{8cm}
\sum_{k=1}^n \big((z_k-s_k)^2
- (t_{z,k}- t_{s,k})^2 \big) \quad =\  0 &
\end{flalign*}
We solve the quadratic equation above to obtain $\lambda^*$.
The assumed conditions imply that there will be exactly one root in the interval 
$[0,1]$ (otherwise we would get two intersections of
$h_{\min}^t()$ and $h_{\max}^1()$, which is not possible since the first one is
strictly decreasing, and the second one strictly increasing.
The optimal value $\delta^*$ is then $\abs{t_z-t_s + \lambda^*\vdot(t_{z'}- t_z)} $.

\end{enumerate}

We put everything together in Function~\ref{function:PointLineSkoroLTwo}.

\begin{function}
  \SetKwInOut{Input}{Input}
  \SetKwInOut{Output}{Output}
  \Input{Points $\tuple{s,t_s}, \tuple{z,t_z}, \tuple{z,t_{z'}}$ in $\reals^n\times\reals$}
  \Output{$\dist_{L_2^{\skoro}}\left(\tuple{s,t_s}, \myline(\tuple{z,t_z}, \tuple{z,t_{z'}})\right)$}
% \tcc*[f]{Minimum  distance of the point $\tuple{s,t_s}$ from the line
% $\myline(\tuple{z,t_z}, \tuple{z,t_{z'}})$ for the
% $L_2^{\skoro}$-norm}\\
  \Switch{$\tuple{z,t_z}, \tuple{z,t_{z'}}$}{
        \lCase{ $\tuple{z,t_z} = \tuple{z',t_{z'}} $ }{
          \Return{$\max\left(\norm{z-s}_{L_2}, \abs{t_z-t_s}\right)$\;}
        }
        \lCase{ $\tuple{z,t_z} = \tuple{z',t_{z'}} $ and $z=z'$}{
          \Return{$\max\left(\norm{z-s}_{L_2}, \dist\left(t_s, \myline(t_z, t_{z'})\right)\right)$\;}
        }
        \lCase{$\tuple{z,t_z} \neq  \tuple{z',t_{z'}} $, and  $z\neq z'$, and
          $t_z = t_{z'}$}{
          \Return{$ \max\left( \dist_{L_2}\left(s, \myline(z,z')\right)\, ,\, \abs{t_s-t_z}\right)$\;}
        }
        \Case{$z\neq z'$, and $t_z \neq t_{z'}$} {
          $\lambda_p:=   \frac{(\vec{z'}- \vec{z})  \odot(\vec{s} - \vec{z})}{\norm{ \vec{z'}- \vec{z} }^2}$\;
          \lIf{$\lambda_p < 0$}{$\lambda_p :=0$\;}
          \lElseIf{$\lambda_p > 1$}{$\lambda_p :=1$\;}
          $\lambda_{t_s}: = \frac{t_s - t_z}{t_{z'} - t_z}$ \;
          \If{$\lambda_{t_s}  < \lambda_p$}{
            swap  $\tuple{z,t_z}$  and  $\tuple{z,t_{z'}}$ 
            \tcp*{Make $\lambda_{t_s}  \geq  \lambda_p$}
            $\lambda_p:= 1-\lambda_p$\;
            $\lambda_{t_s}:=1-\lambda_{t_s}$\;
          }
          $\alpha:= \left( \frac{t_s - t_z}{t_{z'} - t_z} - \frac{\norm{s - z'}_{L_2}}{\abs{t_{z'}- t_z}}\right)$
          \tcp*[r]{$\alpha= h_{\min}^t\left(\dist_{L_2}\left(s, z'\right) \right) $}
          \lIf{$\alpha> 1$}
          {
            \Return{$ \abs{t_{z'}- t_z}\cdot \left(\frac{t_s - t_z}{t_{z'} - t_z} - 1\right)$\;}
          }
          \Else{
            \lIf{$\alpha \leq  \lambda_p$}{
              \Return{$\dist_{L_2}\left(s,\myline(z,z')\right)$}\;
            }
            \Else{
              Solve the following quadratic equation for $\lambda^*\in [0,1]$
              \begin{flalign*}
               &\hspace{1cm} (\lambda^*)^2\vdot \sum_{k=1}^n \big((z'_k-z_k)^2 - (t_{z',k}- t_{z,k})^2\big) \quad + &\\
                &\hspace{2cm}
                \lambda^* \vdot 2\vdot \sum_{k=1}^n \big((z_k-s_k)\vdot (z'_k-z_k) -  (t_{z,k}- t_{s,k})\vdot (z'_k-z_k) \big) \quad +&\\
                &\hspace{4cm}
                \sum_{k=1}^n \big((z_k-s_k)^2
                - (t_{z,k}- t_{s,k})^2 \big) \quad =\  0 \;&
              \end{flalign*}
              \Return{$\abs{t_z-t_s + \lambda^*\vdot(t_{z'}- t_z)} $\;}

            }
        }
      }
}

 \caption{
$\Psi_{L_2^{\skoro}}$($\tuple{s,t_s},\tuple{z,t_z}, \tuple{z,t_{z'}}$)
}
  \label{function:PointLineSkoroLTwo}
%Note the format in the caption, function, arguments
\end{function}
%\decmargin{1em}
%\DecMargin{1em}

\begin{proposition}[Distance of point to line: $L_2^{\skoro}$]
\label{proposition:PointLineLTwoSkoro}
Given points $\tuple{s,t_s}, \tuple{z,t_z}, \tuple{z,t_{z'}}$ in $\reals^n\times\reals$,
Function~\ref{function:PointLineSkoroLTwo} computes
the minimum distance of the point $\tuple{s,t_s}$ from the line
$\myline(\tuple{z,t_z}, \tuple{z,t_{z'}})$ for the
$L_2^{\skoro}$-norm.
\qed
\end{proposition}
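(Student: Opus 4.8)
The plan is to show that Function~\ref{function:PointLineSkoroLTwo} returns the correct value on each of its four branches. By Equation~\ref{equation:MaxSkoroLTwo} the quantity to compute is the least $\delta \geq 0$ for which there exists $\lambda \in [0,1]$ satisfying both $\norm{z - s + \lambda\vdot(z'-z)}_{L_2} \leq \delta$ and $\abs{t_z - t_s + \lambda\vdot(t_{z'}-t_z)} \leq \delta$ simultaneously. First I would dispatch the three degenerate branches. When $\tuple{z,t_z} = \tuple{z',t_{z'}}$ the line is a single point and the value is $\max(\norm{z-s}_{L_2}, \abs{t_z-t_s})$; when $z=z'$ but $t_z \neq t_{z'}$ the $\reals^n$-component is frozen at $z$, so the two clauses decouple into $\norm{z-s}_{L_2} \leq \delta$ and a one-dimensional point-to-segment problem in the time coordinate, giving $\max(\norm{z-s}_{L_2},\dist(t_s,\myline(t_z,t_{z'})))$; the case $t_z = t_{z'}$ is symmetric. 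Each reduces to a maximum of two independent distances, matching the returned expressions, and the two-inequality arguments already sketched in the surrounding text make this precise.

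The substance is the branch $z \neq z'$ and $t_z \neq t_{z'}$. Here I would introduce, as in the text, the feasible-$\lambda$ sets $h^1(\delta) = \set{\lambda\in[0,1] \mid \norm{z - s + \lambda\vdot(z'-z)}_{L_2} \leq \delta}$ and $h^t(\delta) = \set{\lambda \mid \abs{t_z - t_s + \lambda\vdot(t_{z'}-t_z)} \leq \delta}$, so that the target value is $\min\set{\delta \geq \dist_{L_2}(s,\myline(z,z')) \mid h^1(\delta) \cap h^t(\delta) \neq \emptyset}$, the lower cutoff being forced because $h^1(\delta)$ is empty below $\dist_{L_2}(s,\myline(z,z'))$. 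After the swap (performed without loss of generality, reindexing by $\lambda \mapsto 1-\lambda$, which preserves distances) that puts $\lambda_{t_s} \geq \lambda_p$, the interval $h^t(\delta)$ sits to the right of the $L_2$-minimizer $\lambda_p$, so nonemptiness of the intersection is equivalent to $h^1_{\max}(\delta) \geq h^t_{\min}(\delta)$. I would then invoke Lemma~\ref{lemma:BallLineIntersect} to get that $h^1_{\max}$ is continuous, strictly increasing on $[\dist_{L_2}(s,\myline(z,z')), \dist_{L_2}(s,z')]$ with the tightness property $\norm{z-s+h^1_{\max}(\delta)\vdot(z'-z)}_{L_2} = \delta$, and derive the closed form $h^t_{\min}(\delta) = \frac{t_s-t_z}{t_{z'}-t_z} - \frac{\delta}{\abs{t_{z'}-t_z}}$ (Equation~\ref{equation:DeltaMinTime}), which is continuous, strictly decreasing, and enjoys the analogous tightness.

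With these two monotone curves in hand the remaining case split is an intermediate-value argument. Writing $\alpha := h^t_{\min}(\dist_{L_2}(s,z'))$ as in the function: in case (a), $\alpha > 1$ forces the binding constraint to be $h^t_{\min}(\delta) = 1$, since $h^1_{\max}$ saturates at $1$ for $\delta \geq \dist_{L_2}(s,z')$ while $h^t_{\min}$ keeps decreasing, and solving this with the closed form yields the returned $\abs{t_{z'}-t_z}\vdot(\frac{t_s-t_z}{t_{z'}-t_z}-1)$. In case (b), $\alpha \leq 1$, the optimum lies in $[\dist_{L_2}(s,\myline(z,z')), \dist_{L_2}(s,z')]$, on which $h^1_{\max}$ is strictly increasing and $h^t_{\min}$ strictly decreasing; if the curves already meet at the left endpoint (the function's comparison against $\lambda_p$) the answer is $\dist_{L_2}(s,\myline(z,z'))$, and otherwise there is a unique crossing $\delta^*$ with $h^1_{\max}(\delta^*) = h^t_{\min}(\delta^*) = \lambda^*$. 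Equating the two tightness identities, $\norm{z - s + \lambda^*\vdot(z'-z)}_{L_2}^2 = (t_z - t_s + \lambda^*\vdot(t_{z'}-t_z))^2$, and expanding produces exactly the quadratic in the function, with $\delta^* = \abs{t_z-t_s+\lambda^*\vdot(t_{z'}-t_z)}$.

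The main obstacle I anticipate is the bookkeeping around this last step. I must justify rigorously that the quadratic has \emph{exactly one} root in $[0,1]$: this follows because two continuous functions that are strictly monotone in opposite directions cross at most once, so a second admissible root would contradict uniqueness of the crossing $\delta^*$. I also need to check that the preliminary swap, combined with the clamping of $\lambda_p$ to $[0,1]$, never invalidates the monotonicity and tightness hypotheses drawn from Lemma~\ref{lemma:BallLineIntersect}, and that the boundary comparisons distinguishing cases (a) and (b) are exhaustive and exclusive. These are the delicate points; the algebra of the quadratic and the degenerate branches is routine once the $h^1_{\max}$ / $h^t_{\min}$ picture is set up.
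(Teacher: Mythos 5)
Your plan reconstructs the paper's own derivation almost step for step: the same three degenerate branches, the same feasible-$\lambda$ sets, the same monotone functions $h^1_{\max}$ and $h^t_{\min}$ with an intermediate-value crossing, and the same quadratic. But the step you yourself single out as delicate --- that the quadratic has exactly one root in $[0,1]$ --- is false, and the argument you give for it (the same one the paper gives) does not close the gap. A root of the quadratic is any $\lambda$ with $\norm{z-s+\lambda\cdot(z'-z)}_{L_2} = \abs{t_z-t_s+\lambda\cdot(t_{z'}-t_z)}$, and such a $\lambda$ need not be a crossing of $h^1_{\max}$ with $h^t_{\min}$: it can be a crossing with $h^t_{\max}$ (the increasing branch of the V-shaped time distance), or a crossing on the decreasing branch of the space distance left of $\lambda_p$. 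So two roots do not contradict the single-crossing property of two oppositely monotone functions. Concretely, take $n=1$, $\tuple{s,t_s}=\tuple{-1,5}$, $\tuple{z,t_z}=\tuple{0,0}$, $\tuple{z',t_{z'}}=\tuple{1,10}$. Here $\lambda_p=0$, $\lambda_{t_s}=1/2$, no swap occurs, and every guard routes to the quadratic branch; the quadratic $(1+\lambda)^2=(10\lambda-5)^2$ has roots $\lambda=4/11$ and $\lambda=2/3$, \emph{both} in $[0,1]$. The root $4/11$ gives the correct distance $15/11$, while $2/3$ gives $5/3$. A complete proof needs a selection rule plus its correctness: the genuine optimum is the unique root in $[\lambda_p,\lambda_{t_s}]$ (unique there because the space distance is strictly increasing and the time distance strictly decreasing on that interval). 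Note that ``take the smaller root'' is not the right rule: with $z=0$, $z'=10$, $s=5$, $t_z=0$, $t_{z'}=1$, $t_s=3/5$, the roots are $22/45$ and $28/55$ and the genuine one is the \emph{larger}.

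A second mismatch: you paraphrase the middle guard as testing whether ``the curves already meet at the left endpoint,'' i.e.\ whether $h^t_{\min}\bigl(\dist_{L_2}(s,\myline(z,z'))\bigr) \leq \lambda_p$. That is indeed the correct condition, and it is what the paper's prose derivation uses; but Function~\ref{function:PointLineSkoroLTwo} as written computes $\alpha = \frac{t_s-t_z}{t_{z'}-t_z} - \frac{\norm{s-z'}_{L_2}}{\abs{t_{z'}-t_z}}$, which is $h^t_{\min}$ evaluated at the distance to the \emph{endpoint} $z'$, not to the segment. These differ: in the second example above, $\alpha = 3/5 - 5 \leq \lambda_p = 1/2$, so the function returns $\dist_{L_2}(s,\myline(z,z'))=0$, while the true $L_2^{\skoro}$ distance is $1/11$. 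Since the proposition asserts correctness of the function as written, a proof must either emend this guard or fail on that branch; your plan, by silently substituting the prose condition for the function's, would not detect this.
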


\smallskip\noindent\textbf{Computation of the least $\delta$ such that 
$\ball(\tuple{s_1,t_{s_1}} \delta) \cap \ball(\tuple{s_2,t_{s_1}}, \delta)  
\cap \myline( \tuple{z,t_z}, \tuple{z',t_{z'}})$
is non-empty.}\\
We use the following result to solve the optimization problem.

\begin{proposition}[Helley's theorem~\cite{HandbookConvex}]
\label{proposition:Helley}
Let $X_1, \dots, X_r$ be a finite collection of convex subsets of $\reals^d$ with $r > d$.
If the intersection of every $d+1$ of these sets is nonempty, then 
$\displaystyle\bigcap_{i=1}^r X_i\  \neq \emptyset$. 
\qed

\end{proposition}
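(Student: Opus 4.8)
The plan is to prove this classical result (Helly's theorem) by induction on the number $r$ of convex sets, with the inductive step resting on a separate combinatorial lemma, Radon's theorem. The base case $r = d+1$ is immediate: the hypothesis that every $d+1$ of the sets intersect is exactly the assertion that $\bigcap_{i=1}^{d+1} X_i \neq \emptyset$. For the inductive step I would assume the statement for collections of $r-1$ sets (with $r > d+1$, so that $r-1 \geq d+1$ and the hypothesis transfers to every subcollection) and deduce it for $r$ sets.

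Before the induction I would establish Radon's theorem: any collection of $d+2$ points $x_1, \dots, x_{d+2}$ in $\reals^d$ admits a partition of the index set into two disjoint nonempty parts $\set{1, \dots, d+2} = I \sqcup J$ such that $\mathrm{conv}\set{x_i \mid i \in I}$ and $\mathrm{conv}\set{x_j \mid j \in J}$ share a common point. This follows from a dimension count: the homogeneous system $\sum_{i} \mu_i x_i = \vec{0}$ together with $\sum_i \mu_i = 0$ has $d+1$ equations in $d+2$ unknowns, hence a nontrivial solution $(\mu_1, \dots, \mu_{d+2})$. Taking $I = \set{i \mid \mu_i > 0}$ and $J = \set{j \mid \mu_j \leq 0}$, and setting $S = \sum_{i \in I} \mu_i = -\sum_{j \in J} \mu_j > 0$ (both parts are nonempty since a nonzero solution with zero coordinate sum must have entries of both signs), the point $\sum_{i \in I} (\mu_i/S) x_i = \sum_{j \in J} (-\mu_j/S) x_j$ is simultaneously a convex combination of the $I$-points and of the $J$-points, giving the required common point.

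With Radon in hand, the inductive step goes as follows. By the induction hypothesis, for each index $k$ the collection $\set{X_j \mid j \neq k}$ of $r-1$ sets still satisfies the $(d+1)$-wise intersection hypothesis, so its total intersection is nonempty; I would pick a point $p_k \in \bigcap_{j \neq k} X_j$. This yields $r \geq d+2$ points $p_1, \dots, p_r$, to which I apply Radon's theorem to obtain a partition $I \sqcup J$ and a point $x$ lying in both $\mathrm{conv}\set{p_i \mid i \in I}$ and $\mathrm{conv}\set{p_j \mid j \in J}$. The key claim is that $x \in X_k$ for every $k$: if $k \in I$, then every $j \in J$ satisfies $j \neq k$, so $p_j \in X_k$, whence $\mathrm{conv}\set{p_j \mid j \in J} \subseteq X_k$ by convexity and thus $x \in X_k$; the case $k \in J$ is symmetric, using the $I$-points. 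Hence $x \in \bigcap_{k=1}^r X_k$, completing the induction.

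The hard part is not any single calculation but getting the bookkeeping of the Radon step exactly right: one must ensure the partition is into two genuinely nonempty parts, and one must track that each $p_\ell$ fails to lie only in $X_\ell$ (it lies in every other $X_j$), so that for a fixed $k$ the convex-hull-containment argument applies to \emph{whichever} side ($I$ or $J$) does not contain $k$. Once the linear-algebra content of Radon's theorem and this index bookkeeping are pinned down, the argument is essentially complete.
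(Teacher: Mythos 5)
Your proof is correct, but there is no proof in the paper to compare it with: the paper imports Helly's theorem as a known result, citing the Handbook of Convex Geometry, and states it without argument. What you have written is the classical self-contained proof --- Radon's theorem established by a dimension count, followed by induction on $r$, picking $p_k \in \bigcap_{j \neq k} X_j$ for each $k$ and observing that the Radon point $x$ lies in every $X_k$ because the side of the partition not containing $k$ consists entirely of points of $X_k$, whose convex hull is then contained in $X_k$. That argument is sound, including the sign bookkeeping in Radon's theorem (a nontrivial solution of $\sum_i \mu_i x_i = \vec{0}$, $\sum_i \mu_i = 0$ must have entries of both signs, so both parts of the partition are nonempty). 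The one imprecision worth fixing: you state Radon's theorem for exactly $d+2$ points but then apply it to all $r \geq d+2$ points $p_1, \dots, p_r$. Either remark that your dimension count works verbatim for any $m \geq d+2$ points (the homogeneous system has $d+1$ equations in $m$ unknowns), or apply Radon to a fixed subset of $d+2$ of the $p_k$ and note that any index $k$ outside that subset also satisfies $x \in X_k$, since every chosen point $p_j$ has $j \neq k$ and hence lies in $X_k$. Either patch is one line; with it your argument is complete, and it supplies a proof the paper itself leaves to the literature.
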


Let us be given the points 
$\tuple{s_1,t_{s_1}},  \tuple{s_2,t_{s_2}}, \tuple{z,t_z}, \tuple{z',t_{z'}}$
with $s_1,s_2, z,z'\in \reals^n$ and $t_{s_1}, t_{s_2}, ,t_z, t_{z'}\in \reals$.
For the $L_2^{\skoro}$-norm,  given a $\delta\geq 0$, we have
$\ball(\tuple{s_1,t_{s_1}}, \delta) \cap \ball(\tuple{s_2,t_{s_2}}, \delta)  \cap 
\myline(\tuple{z,t_z}, \tuple{z,t_{z'}})$ to be non-empty iff
there is some point $\tuple{l,t_l}$ on the line $\myline(\tuple{z,t_z}, \tuple{z,t_{z'}})$
such that both the following two conditions hold: 
(i)~$\norm{\tuple{s_1,t_{s_1}} - \tuple{l,t_l}}_{L_2^{\skoro}} \leq \delta$; and
(ii)~$\norm{\tuple{s_2,t_{s_2}} - \tuple{l,t_l}}_{L_2^{\skoro}} \leq \delta$; and
Using~\ref{equation:MaxSkoroLOne}, the least $\delta$ such that the previous two 
conditions hold  is thus
given by the following expression:
\begin{equation}
\label{equation:BallsLineLTwoSkoro}
\min_{\delta \geq 0} \left(\text{There exists } \lambda \in [0,1]
\text{ such that }
\left\{
\begin{array}{l}
\norm{\vec{z}-\vec{s_1} + \lambda\vdot(\vec{z'}-\vec{z})}_{L_2} \, \leq \delta; 
\text{ and}\\
\abs{t_z-t_{s_1} + \lambda\vdot(t_{z'}-t_{z})}\,  \leq \delta; \text{ and}\\
\norm{\vec{z}-\vec{s_2} + \lambda\vdot(\vec{z'}-\vec{z})}_{L_2} \, \leq \delta; 
\text{ and}\\
\abs{t_z-t_{s_2} + \lambda\vdot(t_{z'}-t_{z})}\,  \leq \delta
\end{array}
\right\}
\right)
\end{equation}

A possible approach to solve the above optimization problem is to proceed as in $L_2$, 
and reason over $\lambda$ sets.
However,  in this case we have $4$ $\lambda$ sets at play, so at first glance 
we cannot just
use the old approach.
However, we show that  Helly's theorem~\ref{proposition:Helley} allows us to
restrict our attention to only two $\lambda$ sets at a time.
We proceed as follows.

\noindent\underline{\textsf{Symmetry}}:  Observe from 
Equation~\ref{equation:BallsLineLTwoSkoro} that we can swap $t_{s_1}$ and $t_{s_2}$
without changing the value of the least $\delta$, \emph{i.e.}, the minimum $\delta$
such that
$\ball(\tuple{s_1,t_{s_1}}, \delta) \cap \ball(\tuple{s_2,t_{s_2}}, \delta)  \cap 
\myline(\tuple{z,t_z}, \tuple{z',t_{z'}})$ is non-empty equals the
the minimum $\delta$
such that
$\ball(\tuple{s_1,t_{s_2}}, \delta) \cap \ball(\tuple{s_2,t_{s_1}}, \delta)  \cap 
\myline(\tuple{z,t_{z}}, \tuple{z',t_{z'}})$  is non-empty.
We will use this symmetry to reduce the number of cases that need to be considered.

\noindent\underline{\textsf{Swapping $\tuple{z,t_z} $ with $ \tuple{z',t_{z'}}$}}:
Clearly we can swap $\tuple{z,t_z} $  with   $ \tuple{z',t_{z'}}$, as the
line between the two points remains the same.
Observe that if $t= t_z + \lambda \vdot (t_{z'} - t_{z})$,
then we can rewrite the expression as
$t= t_{z'} + (1-\lambda)\vdot (t_{z} - t_{z'})$ (and similarly for $z,z'$), 
thus after the swap, the new
$\lambda$ values are $1$ minus the old ones.

\noindent\underline{\textsf{The function $h^{\infty}$}}:
Recall the function $h(\delta)$ from Lemma~\ref{lemma:BallLineIntersect}.
We define a similar function which can have all elements from reals, not just from
$[0,1]$.
Suppose we are given a infinite line $\myline_{\infty}(l_1, l_2)$, passing through
$l_1$ and $l_2$; and a point $l$ (not necessarily on that line).
Givem $\delta \geq 0$, we want a representation of all points on the line
that are at most $\delta$ away from $l$.
\[
h^{\infty, l_1, l_2}_l(\delta) = \set{\lambda \mid \norm{l_1-l + \lambda\vdot(l_2-l_1)}\leq \delta}
\]
We omit the annotations $l_1, l_2$ which specify the line when the line is clear from
context.
For the norms $L_2$ and $\abs{\cdot}$, it can be show that
for $\delta \geq \dist(, \myline(l_1, l_2)$, 
(a)~the function $\min(h_l^{\infty})$ is strictly 
decreasing, and the function  $\max(h_l^{\infty})$ is strictly increasing; and
(b)~if  $\max(h_l^{\infty})(\delta) = \lambda$, then
$\norm{ l_1-l + \lambda\vdot(l_2-l_1)} = \delta$; and similarly for
 $\min(h_l^{\infty})$.

For the (one dimensional)  line $\myline(t_{z'} , t_z )$,  where $t_z\neq t_{z'}$,
and a point $t$,
we have $h^{\infty}_l(\delta)$ to contains $\lambda$ such that
$\abs{t_z-t + \lambda\vdot(t_{z'} - t_z)} \leq \delta$.
This holds iff both the following hold:
(a)~$\lambda\vdot(t_{z'} - t_z) \leq \delta+t-t_z$; and
(b)~$-\lambda\vdot(t_{z'} - t_z) \leq  \delta - (t-t_z)$ which
is equilvalent to ~$\lambda\vdot(t_{z'} - t_z) \geq  -\left(\delta - (t-t_z)\right)$.
Thus,
\begin{equation}
\label{equation:HInftyBoundaryOneDim}
h^{\infty}_{t}(\delta) = 
\begin{cases} 
[-\frac{\delta - (t-t_z) }{t_{z'} - t_z}\, , \frac{\delta+t-t_z}{t_{z'} - t_z} ] & \text{ if }
(t_{z'} - t_z) > 0\\
[\frac{\delta+t-t_z}{t_{z'} - t_z}\, , -\frac{\delta - (t-t_z)}{t_{z'} - t_z} ] & \text{ if }
(t_{z'} - t_z) < 0
\end{cases}
\end{equation}

We obtain the value of the minimum $\delta$ (denoted as $\delta^*$) as follows.
We let $\lambda^*$ be the $\lambda$ value when we have $\delta^*$
in Equation~\ref{equation:BallsLineLTwoSkoro} (it can be shown that this
$\lambda^*$ is unique).
We define the following parameters, and use them in the computation of the
least $\delta$.
\begin{compactitem}
\item $\lambda_p^{s_1}$: this is the value, defined when  $z\neq z'$,  such that 
the point $z + \lambda_p^{s_1}\vdot(z'-{z})$ on the line $\myline(z,z')$ 
is the least distance away from $s_1$ in the $L_2$ norm.
Proposition~\ref{proposition:DistancePointLineTwo} gives us 
$\lambda_p^{s_1} = \frac{(\vec{z'}- \vec{z})  \odot(\vec{s_1} - \vec{z})}{\norm{ \vec{z'}- \vec{z} }^2}$ 

\item  $\lambda_{t_{s_1}}$: this is the value, defined when $ t_{z'} \neq  t_z$,
 such that
$t_{s_1} = t_z +  \lambda_{t_{s_1}}\vdot ( t_{z'} -  t_z )$. It
 can be seen that
$ \lambda_{t_{s_1}} = \frac{t_{s_1} - t_z }{t_{z'} -  t_z }$.
\end{compactitem}
We also similarly  have the parameters $\lambda_p^{s_2}$ and $\lambda_{t_{s_1}}$.

\noindent $\blacktriangleright$ 
Suppose $\tuple{s_1,t_{s_1}} =  \tuple{s_2,t_{s_2}} $.
Then  $\delta^*$ is 
$\dist_{L_2^{\skoro}}\left(\tuple{s_1,t_{s_1}} , \myline( \tuple{z,t_z}, \tuple{z',t_{z'}})\right)$.
This can be computed using Proposition~\ref{proposition:PointLineLTwoSkoro}.

\noindent $\blacktriangleright$ 
Suppose $\tuple{z,t_z} = \tuple{z',t_{z'}}$.
Then $\delta^* = \max\left(\norm{\tuple{s_1,t_{s_1}- \tuple{z,t_z}}}_{L_2^{\skoro}}\, , \,
\norm{\tuple{s_2,t_{s_2}- \tuple{z,t_z}}}_{L_2^{\skoro}}\right)$.
Thus 
$\delta^* = \max\left(\norm{s_1-z}_{L_2}, \abs{t_{s_1}- t_z}, 
\norm{s_2-z}_{L_2}, \abs{t_{s_2}- t_z}\right)$.

\noindent $\blacktriangleright$ 
Suppose $\tuple{z,t_z} \neq  \tuple{z',t_{z'}}$, but
$z= z'$ (and thus $ t_{z'} \neq  t_z$).
Then,
\[\delta^* =
\min_{\delta \geq 0} \left(\text{There exists } \lambda \in [0,1]
\text{ such that }
\left\{
\begin{array}{l}
\norm{z-s_1}_{L_2} \leq \delta; \text{ and}\\
\abs{t_z-t_{s_1} + \lambda\vdot(t_{z'}-t_{z})}\,  \leq \delta; \text{ and}\\
\norm{z-s_2}_{L_2} \leq \delta; \text{ and}\\
\abs{t_z-t_{s_2} + \lambda\vdot(t_{z'}-t_{z})}\,  \leq \delta
\end{array}
\right\}
\right).
\]
This can be simplified to:
\[\delta^* =
\min_{\delta \geq \max(\norm{z-s_1}_{L_2}, \norm{z-s_2}_{L_2})} \left(\text{There exists } \lambda \in [0,1]
\text{ such that }
\left\{
\begin{array}{l}
\abs{t_z-t_{s_1} + \lambda\vdot(t_{z'}-t_{z})}\,  \leq \delta; \text{ and}\\
\abs{t_z-t_{s_2} + \lambda\vdot(t_{z'}-t_{z})}\,  \leq \delta
\end{array}
\right\}
\right).
\]

Using $h^{\infty}$ functions, the above can be written as
\begin{equation}
\label{equation:OnlyTwoLinesLTwoSkoroPre}
\delta^* = \min_{\delta \geq \max(\norm{z-s_1}_{L_2}, \norm{z-s_2}_{L_2})} \left(
h^{\infty}_{ t_{s_1}}(\delta) \, \cap\,  h^{\infty}_{ t_{s_2}}(\delta) \cap [0,1] \neq \emptyset\right).
\end{equation}

The above equation is equivalent to the following:
\begin{gather}
  \delta^{\dagger} \left( t_{s_1},  t_{s_2}, \myline(t_{z'},t_{z})\right) = 
  \min_{\delta \geq 0} \left(
    h^{\infty}_{ t_{s_1}}(\delta) \, \cap\,  h^{\infty}_{ t_{s_2}}(\delta) \cap [0,1] \neq \emptyset
  \right)\label{gather:OnlyTwoLinesSkoroDagger}\\
  \delta^* = \max\left( 
  \norm{z-s_1}_{L_2}, \norm{z-s_2}_{L_2}, 
  \delta^{\dagger}\left( t_{s_1},  t_{s_2}, \myline(t_{z'},t_{z})
  \right) \right)
\end{gather}

We now show how to compute $\delta^{\dagger}$ (we omit the arguments
$t_{s_1},  t_{s_2}, \myline(t_{z'},t_{z})$ for simplicity).
The intervals around $ \lambda_{t_{s_1}}$ and $  \lambda_{t_{s_2}}$  keep getting bigger
as $\delta$ gets bigger -- we want the least $\delta$ such that there
is an intersection point in $[0,1]$.
Moreover, the rate at which the boundaries of the intervals increase or decrease
 is the same
from Equation~\ref{equation:HInftyBoundaryOneDim}.
We assume $ \lambda_{t_{s_2}}\geq 0$ (otherwise, we swap
$\tuple{z,t_z} $ with $ \tuple{z',t_{z'}}$ to make this so).
We also assume $ \lambda_{t_{s_2}}\geq \lambda_{t_{s_1}}$ (this can be ensured
by swapping  $t_{s_1}$ and $t_{s_2}$ if necessary).
The following cases arise (see Figure~\ref{figure:oneLTwoSkoro}
for a pictorial representation of the $\lambda$ value placements).

\begin{figure}[h]
%\vspace{-1em}
\strut\centerline{\setlength{\unitlength}{0.00061242in}
\begingroup\makeatletter\ifx\SetFigFont\undefined%
\gdef\SetFigFont#1#2#3#4#5{%
  \reset@font\fontsize{#1}{#2pt}%
  \fontfamily{#3}\fontseries{#4}\fontshape{#5}%
  \selectfont}%
\fi\endgroup%
{\renewcommand{\dashlinestretch}{30}
\begin{picture}(9789,2547)(0,-10)
\path(132.000,2043.000)(12.000,2013.000)(132.000,1983.000)
\path(12,2013)(4062,2013)
\path(3942.000,1983.000)(4062.000,2013.000)(3942.000,2043.000)
\path(1362,2193)(1362,2013)
\path(2712,2193)(2712,2013)
\path(132.000,558.000)(12.000,528.000)(132.000,498.000)
\path(12,528)(4062,528)
\path(3942.000,498.000)(4062.000,528.000)(3942.000,558.000)
\path(1362,708)(1362,528)
\path(2712,708)(2712,528)
\path(3702,2013)(3702,1833)
\path(3072,2013)(3072,1833)
\path(1722,528)(1722,348)
\path(2487,528)(2487,348)
\path(8607,2013)(8607,1833)
\path(6402,528)(6402,348)
\path(6537,708)(6537,528)
\path(7887,708)(7887,528)
\path(6537,2193)(6537,2013)
\path(7887,2193)(7887,2013)
\path(5307.000,558.000)(5187.000,528.000)(5307.000,498.000)
\path(5187,528)(9777,528)
\path(9657.000,498.000)(9777.000,528.000)(9657.000,558.000)
\path(5307.000,2043.000)(5187.000,2013.000)(5307.000,1983.000)
\path(5187,2013)(9777,2013)
\path(9657.000,1983.000)(9777.000,2013.000)(9657.000,2043.000)
\path(9597,528)(9597,348)
\path(6762,2013)(6762,1833)
\put(1362,2373){\makebox(0,0)[lb]{\smash{{\SetFigFont{9}{10.8}{\familydefault}{\mddefault}{\updefault}$0$}}}}
\put(2667,2373){\makebox(0,0)[lb]{\smash{{\SetFigFont{9}{10.8}{\familydefault}{\mddefault}{\updefault}$1$}}}}
\put(1362,888){\makebox(0,0)[lb]{\smash{{\SetFigFont{9}{10.8}{\familydefault}{\mddefault}{\updefault}$0$}}}}
\put(2667,888){\makebox(0,0)[lb]{\smash{{\SetFigFont{9}{10.8}{\familydefault}{\mddefault}{\updefault}$1$}}}}
\put(3027,1563){\makebox(0,0)[lb]{\smash{{\SetFigFont{9}{10.8}{\familydefault}{\mddefault}{\updefault}$\lambda_{t_{s_1}}$}}}}
\put(3702,1563){\makebox(0,0)[lb]{\smash{{\SetFigFont{9}{10.8}{\familydefault}{\mddefault}{\updefault}$\lambda_{t_{s_2}}$}}}}
\put(1632,78){\makebox(0,0)[lb]{\smash{{\SetFigFont{9}{10.8}{\familydefault}{\mddefault}{\updefault}$\lambda_{t_{s_1}}$}}}}
\put(6312,78){\makebox(0,0)[lb]{\smash{{\SetFigFont{9}{10.8}{\familydefault}{\mddefault}{\updefault}$\lambda_{t_{s_1}}$}}}}
\put(2442,78){\makebox(0,0)[lb]{\smash{{\SetFigFont{9}{10.8}{\familydefault}{\mddefault}{\updefault}$\lambda_{t_{s_2}}$}}}}
\put(6492,2373){\makebox(0,0)[lb]{\smash{{\SetFigFont{9}{10.8}{\familydefault}{\mddefault}{\updefault}$0$}}}}
\put(6537,888){\makebox(0,0)[lb]{\smash{{\SetFigFont{9}{10.8}{\familydefault}{\mddefault}{\updefault}$0$}}}}
\put(7842,888){\makebox(0,0)[lb]{\smash{{\SetFigFont{9}{10.8}{\familydefault}{\mddefault}{\updefault}$1$}}}}
\put(7842,2373){\makebox(0,0)[lb]{\smash{{\SetFigFont{9}{10.8}{\familydefault}{\mddefault}{\updefault}$1$}}}}
\put(9552,78){\makebox(0,0)[lb]{\smash{{\SetFigFont{9}{10.8}{\familydefault}{\mddefault}{\updefault}$\lambda_{t_{s_2}}$}}}}
\put(8562,1563){\makebox(0,0)[lb]{\smash{{\SetFigFont{9}{10.8}{\familydefault}{\mddefault}{\updefault}$\lambda_{t_{s_2}}$}}}}
\put(6717,1563){\makebox(0,0)[lb]{\smash{{\SetFigFont{9}{10.8}{\familydefault}{\mddefault}{\updefault}$\lambda_{t_{s_1}}$}}}}
\end{picture}
}}
%\vspace*{-4mm}
 \caption{$\lambda$ positions for $ t_{z'} \neq  t_z$.}
\label{figure:oneLTwoSkoro}
% \vspace{-1em}
\end{figure}

\begin{compactenum}
\item $ \lambda_{t_{s_1}},  \lambda_{t_{s_2}}$ both $\geq 1$.\\
Since the rate of decrease of $\min(h^{\infty}_{\lambda_{t_{s_1}}})$ is the same
as that of  $\min(h^{\infty}_{\lambda_{t_{s_2}}})$, in this case
$\delta^{\dagger}$ is when $\min(h^{\infty}_{\lambda_{t_{s_2}}})(\delta^{\dagger}) = 1$.
That is, when $\sphere(t_{s_2}, \delta^{\dagger}) = t_{z'}$.
Solving, we get $\delta^{\dagger} = \abs{t_{s_2} - t_{z'}}$.
% Using Equation~\ref{equation:HInftyBoundaryOneDim},
% we have:
% \[\delta^{\dagger} =
% \begin{cases} 
% t_{s_2} - t_{z'} & \text{ if } (t_{z'} - t_z) > 0\\
% t_{z'} - t_{s_2} & \text{ if } (t_{z'} - t_z) < 0.
% \end{cases}
% \]

\item $ \lambda_{t_{s_1}},  \lambda_{t_{s_2}}$ both between $0$ and $1$.\\
In this, the minimum $\delta$ is obtained when
 $\min(h^{\infty}_{\lambda_{t_{s_2}}}) =  \max(h^{\infty}_{\lambda_{t_{s_1}}})$.
Using Equation~\ref{equation:HInftyBoundaryOneDim}, we have two cases.
\begin{compactenum}
\item  $ (t_{z'} - t_z) > 0$. In this case
  $-\frac{\delta^{\dagger} - (t_{s_2}-t_z) }{t_{z'} - t_z} \, =\, 
  \frac{\delta^{\dagger} + (t_{s_1}-t_z) }{t_{z'} - t_z} $.
  Simplifying, $2\vdot\delta^{\dagger} =  (t_{s_2}-t_z)  -  (t_{s_1}-t_z)$, thus,
  $\delta^{\dagger} = (t_{s_2}-t_{s_1})/2$.
\item
  $(t_{z'} - t_z) < 0$. In this case
 $\frac{\delta^{\dagger} + (t_{s_2}-t_z) }{t_{z'} - t_z} \, =\, 
  -\frac{\delta^{\dagger} - (t_{s_1}-t_z) }{t_{z'} - t_z} $.
  Solving, we get  $\delta^{\dagger} = (t_{s_1}-t_{s_2})/2$.
\end{compactenum}
Putting the two cases, together,
\[\delta^{\dagger} =
\begin{cases} 
  (t_{s_2}-t_{s_1})/2& \text{ if } (t_{z'} - t_z) > 0\\
 (t_{s_1}-t_{s_2})/2& \text{ if } (t_{z'} - t_z) < 0.
\end{cases}
\]

Or, equivalently, $\delta^{\dagger} =  \abs{t_{s_2}-t_{s_1}}/2$.
\item $ \lambda_{t_{s_2}} > 1, $ and $0\leq \lambda_{t_{s_1}} < 1$\\
We need to consider two sub-case.
% $(t_{z'} - t_z) < 0$ or not.
% \begin{compactenum}
% \item Suppose  $ (t_{z'} - t_z) > 0$.  
%  Two further subcases arise.
  \begin{compactenum}
  \item 
    If $\delta_1^1$ such that $\max(h^{\infty}_{\lambda_{t_{s_1}}}) (\delta_1^1) = 1$
    is less than, or equal to 
    $\delta_2^1$ such that $\min(h^{\infty}_{\lambda_{t_{s_2}}}) (\delta_1^1) = 1$,
    then, when the two intervals around $ \lambda_{t_{s_1}}$ and $ \lambda_{t_{s_2}}$
    intersect, the intersection point is going to be in $[0,1]$.
    Intuitively, the min boundary of $h^{\infty}_{\lambda_{t_{s_2}}}$ hits $1$
    before the max boundary of $h^{\infty}_{\lambda_{t_{s_1}}}$ does.
    In this case, the least $\delta$ is when
    the boundaries $\min(h^{\infty}_{\lambda_{t_{s_2}}})$
    and $  \max(h^{\infty}_{\lambda_{t_{s_1}}})$ intersect.
    The $\delta^{\dagger}$ value   can then be extracted as in the previous case.
    % We can compute $\delta_1^1$ and $\delta_2^1$ from
    % Equation~\ref{equation:HInftyBoundaryOneDim}.
    We have  $\sphere(t_{s_2}, \delta_1^1) = t_{z'}$; and
    $\sphere(t_{s_1}, \delta_1^1) = t_{z'}$; \emph{i.e.}
    $\delta_1^1 = \abs{t_{s_1} -  t_{z'}}$ and
    $\delta_2^1 = \abs{t_{s_2} -  t_{z'}}$.
    Thus, if $\abs{t_{s_1} -  t_{z'}} \leq  \abs{t_{s_2} -  t_{z'}}$, then
    $\delta^{\dagger} =  \abs{t_{s_2}-t_{s_1}}/2$.
    % \[\delta^{\dagger} =
    % \begin{cases} 
    %   (t_{s_2}-t_{s_1})/2& \text{ if } (t_{z'} - t_z) > 0\\
    %   (t_{s_1}-t_{s_2})/2& \text{ if } (t_{z'} - t_z) < 0.
    % \end{cases}
    % \]

  \item If $\delta_1^1$ is greater than $\delta_2^1$, where these values are as
    defined in the previous case.
    In this case, $  \max(h^{\infty}_{\lambda_{t_{s_1}}})$  hits
    $1$ before $\min(h^{\infty}_{\lambda_{t_{s_2}}})$ does, \emph{i.e.}
     $  \max(h^{\infty}_{\lambda_{t_{s_1}}})$  hits
     $1$ before  $\min(h^{\infty}_{\lambda_{t_{s_2}}})$ reaches 
     $  \max(h^{\infty}_{\lambda_{t_{s_1}}})$.
     Thus, the least $\delta$ occurs when
     $\min(h^{\infty}_{\lambda_{t_{s_2}}}) (\delta^{\dagger})= 1$.
     That is, when $\sphere(t_{s_2}, \delta^{\dagger}) = t_{z'}$.
     Thus, $\delta^{\dagger} = \abs{t_{s_2} - t_{z'}}$.
    
   \end{compactenum}

Putting the two cases together, we get
 \[\delta^{\dagger} =
\begin{cases}
 \abs{t_{s_2} - t_{z'}} & \text{ if } \abs{t_{s_1} -  t_{z'}} >  \abs{t_{s_2} -  t_{z'}}\\
    % \begin{cases} 
    %   (t_{s_2}-t_{s_1})/2& \text{ if } (t_{z'} - t_z) > 0\\
    %   (t_{s_1}-t_{s_2})/2& \text{ if } (t_{z'} - t_z) < 0.
    % \end{cases}
 \abs{t_{s_2}-t_{s_1}}/2
    \text{ otherwise. }
  \end{cases}
  \]

\item $ \lambda_{t_{s_2}} > 1, $ and $\lambda_{t_{s_1}} < 0 $\\
  The analysis of this case depends on when the boundaries of 
  $h^{\infty}_{\lambda_{t_{s_1}}}$ and $h^{\infty}_{\lambda_{t_{s_2}}}$ hit
  $0$ and $1$.
  \begin{compactenum}
  \item 
    Suppose the min boundary of $h^{\infty}_{\lambda_{t_{s_2}}}$ hits
    $0$ before the max boundary of $h^{\infty}_{\lambda_{t_{s_1}}}$ hits $0$.
    Let us call the respective $\delta$ values $\delta_1^0$ and $\delta_2^0$.
    In this case, the least desired $\delta$ is going to be when
    the max boundary of $h^{\infty}_{\lambda_{t_{s_1}}}$ hits $0$.
    Thus, if $ \abs{t_{s_1} -  t_{z} }<  \abs{t_{s_2} -  t_{z}}$, then
    $\delta^{\dagger} =  \abs{t_{s_1} -  t_{z} }$.
  \item 
    If the previous case does not hold, and if 
    the min boundary of $h^{\infty}_{\lambda_{t_{s_2}}}$ hits
    $1$ before the max boundary of $h^{\infty}_{\lambda_{t_{s_1}}}$ hits $1$,
    then the intersection point of the two intervals in going to be in $[0,1]$.
    The value can be obtained from one of the previous case.
    Thus, if $ \abs{t_{s_1} -  t_{z} }\geq  \abs{t_{s_2} -  t_{z}}$, and
    $ \abs{t_{s_1} -  t_{z'}} \leq  \abs{t_{s_2} -  t_{z'}}$, then
 $\delta^{\dagger} =  \abs{t_{s_2}-t_{s_1}}/2$.
    %  \[\delta^{\dagger} =
    % \begin{cases} 
    %   (t_{s_2}-t_{s_1})/2& \text{ if } (t_{z'} - t_z) > 0\\
    %   (t_{s_1}-t_{s_2})/2& \text{ if } (t_{z'} - t_z) < 0.
    % \end{cases}
    % \]
  \item If the previous two case do not hold, \emph{i.e.}
    $\abs{t_{s_1} -  t_{z} }\geq  \abs{t_{s_2} -  t_{z}}$, and
    $ \abs{t_{s_1} -  t_{z'}} >  \abs{t_{s_2} -  t_{z'}}$, then
    the minimum $\delta$ occurs when the min boundary 
    of $h^{\infty}_{\lambda_{t_{s_2}}}$ hits $1$, \emph{i.e.} when
    $\delta^{\dagger} =  \abs{t_{s_2} - t_{z'}}$.
  \end{compactenum}
  Putting the above three cases together,
  \[
  \delta^{\dagger}\left( t_{s_1},  t_{s_2}, \myline(t_{z'},t_{z})\right)  =
  \begin{cases}
    \abs{t_{s_1} -  t_{z} } & \text{ if }  \abs{t_{s_1} -  t_{z} }<  \abs{t_{s_2} -  t_{z}}\\
    % \begin{cases} 
    %   (t_{s_2}-t_{s_1})/2& \text{ if } (t_{z'} - t_z) > 0\\
    %   (t_{s_1}-t_{s_2})/2& \text{ if } (t_{z'} - t_z) < 0.
    % \end{cases}
     \abs{t_{s_2}-t_{s_1}}/2
    & \text{ if }  \abs{t_{s_1} -  t_{z} }\geq   \abs{t_{s_2} -  t_{z}} \text{ and }
    \abs{t_{s_1} -  t_{z'}} \leq  \abs{t_{s_2} -  t_{z'}}\\
    \abs{t_{s_2} - t_{z'}} & 
    \text{ otherwise. }
  \end{cases}
  \]

\end{compactenum} 

\medskip
\noindent $\blacktriangleright$ 
Suppose $\tuple{z,t_z} \neq  \tuple{z',t_{z'}}$, but
$t_z= t_{z'}$ (and thus $ z' \neq  z$).
Then, 
\[\delta^* =
\min_{\delta \geq 0} \left(\text{There exists } \lambda \in [0,1]
\text{ such that }
\left\{
\begin{array}{l}
\norm{\vec{z}-\vec{s_1} + \lambda\vdot(\vec{z'}-\vec{z})}_{L_2} \, \leq \delta; 
\text{ and}\\
\abs{t_z-t_{s_1}}\leq  \delta; \text{ and}\\
\norm{\vec{z}-\vec{s_2} + \lambda\vdot(\vec{z'}-\vec{z})}_{L_2} \, \leq \delta; 
\text{ and}\\
\abs{t_z-t_{s_2}}\leq  \delta
\end{array}
\right\}
\right)
\]
The above is equivalent to:
\[
\delta^* =
\min_{\delta \geq \max\left(\abs{t_z-t_{s_1}}, \abs{t_z-t_{s_2}}\right)} \left(\text{There exists } \lambda \in [0,1]
\text{ such that }
\left\{
\begin{array}{l}
\norm{\vec{z}-\vec{s_1} + \lambda\vdot(\vec{z'}-\vec{z})}_{L_2} \, \leq \delta; 
\text{ and}\\
\norm{\vec{z}-\vec{s_2} + \lambda\vdot(\vec{z'}-\vec{z})}_{L_2} \, \leq \delta; 
\end{array}
\right\}
\right)
\]
The above is equilavalent to
\[
\delta^* = \max\left(
\abs{t_z-t_{s_1}}, \abs{t_z-t_{s_2}}, \,
 \min_{\delta \geq 0}\set{\delta \mid 
      \ball_{L_2}(s_1,\delta) \cap \ball_{L_2}(s_2, \delta) \cap \myline(z,z') \neq\emptyset}
\right)
\]
This can be computed using the algorithms for the $L_2$-norm.

\medskip
\noindent $\blacktriangleright$ 
Suppose $\tuple{z,t_z} \neq  \tuple{z',t_{z'}}$, and 
$t_z\neq t_{z'}$, and  $ z' \neq  z$.
The analysis of this case depends on the relative positions of
$\lambda_p^{s_1}, \lambda_{t_{s_1}}, \lambda_p^{s_2}$, and  $\lambda_{t_{s_1}}$.
\begin{comment}
We first use symmetry arguments to reduce the number of cases we have to
consider.
\begin{compactitem}
\item 
We assume $ \lambda_{t_{s_2}}\geq 0$ (otherwise, we swap
$\tuple{z,t_z} $ with $ \tuple{z',t_{z'}}$ to make this so).
\item We assume $\lambda_p^{s_1} \leq  \lambda_p^{s_2}$
(otherwise, we just swap the points 
$\tuple{s_1,t_{s_1}} $ and $  \tuple{s_2,t_{s_2}}$.
\item  We assume
 $ \lambda_{t_{s_1}}\leq \lambda_{t_{s_2}}$ (this can be ensured
by swapping  $t_{s_1}$ and $t_{s_2}$ if necessary).
\end{compactitem}
\end{comment}
Figure~\ref{figure:oneLTwoSkoroGeneral} illustrates 
%for a pictorial representation of 
some of the $\lambda$ value placements).
Note that  $\lambda_p^{s_1} $ and  $\lambda_p^{s_2} $ belong to $[0,1]$
by definition.

\begin{figure}[h]
%\vspace{-1em}
\strut\centerline{\setlength{\unitlength}{0.00061242in}
\begingroup\makeatletter\ifx\SetFigFont\undefined%
\gdef\SetFigFont#1#2#3#4#5{%
  \reset@font\fontsize{#1}{#2pt}%
  \fontfamily{#3}\fontseries{#4}\fontshape{#5}%
  \selectfont}%
\fi\endgroup%
{\renewcommand{\dashlinestretch}{30}
\begin{picture}(9684,2877)(0,-10)
\path(3710,2317)(3710,2137)
\path(3080,2317)(3080,2137)
\path(140.000,2347.000)(20.000,2317.000)(140.000,2287.000)
\path(20,2317)(4070,2317)
\path(3950.000,2287.000)(4070.000,2317.000)(3950.000,2347.000)
\path(2720,2497)(2720,2317)
\path(920,2497)(920,2317)
\blacken\path(1364,2328)(1325,2163)(1403,2163)
	(1364,2320)(1364,2328)
\path(1364,2328)(1325,2163)(1403,2163)
	(1364,2320)(1364,2328)
\blacken\path(2391,2336)(2352,2171)(2430,2171)
	(2391,2328)(2391,2336)
\path(2391,2336)(2352,2171)(2430,2171)
	(2391,2328)(2391,2336)
\path(3702,528)(3702,348)
\path(132.000,558.000)(12.000,528.000)(132.000,498.000)
\path(12,528)(4062,528)
\path(3942.000,498.000)(4062.000,528.000)(3942.000,558.000)
\path(2712,708)(2712,528)
\path(912,708)(912,528)
\blacken\path(1356,539)(1317,374)(1395,374)
	(1356,531)(1356,539)
\path(1356,539)(1317,374)(1395,374)
	(1356,531)(1356,539)
\blacken\path(2383,547)(2344,382)(2422,382)
	(2383,539)(2383,547)
\path(2383,547)(2344,382)(2422,382)
	(2383,539)(2383,547)
\path(1910,528)(1910,348)
\path(9312,2343)(9312,2163)
\path(5742.000,2373.000)(5622.000,2343.000)(5742.000,2313.000)
\path(5622,2343)(9672,2343)
\path(9552.000,2313.000)(9672.000,2343.000)(9552.000,2373.000)
\path(8322,2523)(8322,2343)
\path(6522,2523)(6522,2343)
\blacken\path(7993,2362)(7954,2197)(8032,2197)
	(7993,2354)(7993,2362)
\path(7993,2362)(7954,2197)(8032,2197)
	(7993,2354)(7993,2362)
\path(6635,2358)(6635,2178)
\blacken\path(7158,2351)(7119,2186)(7197,2186)
	(7158,2343)(7158,2351)
\path(7158,2351)(7119,2186)(7197,2186)
	(7158,2343)(7158,2351)
\blacken\path(881,2524)(920,2524)(920,2305)
	(881,2305)(881,2524)
\path(881,2524)(920,2524)(920,2305)
	(881,2305)(881,2524)
\blacken\path(888,748)(927,748)(927,529)
	(888,529)(888,748)
\path(888,748)(927,748)(927,529)
	(888,529)(888,748)
\blacken\path(2688,755)(2727,755)(2727,536)
	(2688,536)(2688,755)
\path(2688,755)(2727,755)(2727,536)
	(2688,536)(2688,755)
\blacken\path(8299,2545)(8338,2545)(8338,2326)
	(8299,2326)(8299,2545)
\path(8299,2545)(8338,2545)(8338,2326)
	(8299,2326)(8299,2545)
\blacken\path(2712,2515)(2751,2515)(2751,2296)
	(2712,2296)(2712,2515)
\path(2712,2515)(2751,2515)(2751,2296)
	(2712,2296)(2712,2515)
\blacken\path(6507,2538)(6546,2538)(6546,2319)
	(6507,2319)(6507,2538)
\path(6507,2538)(6546,2538)(6546,2319)
	(6507,2319)(6507,2538)
\blacken\path(6477,746)(6516,746)(6516,527)
	(6477,527)(6477,746)
\path(6477,746)(6516,746)(6516,527)
	(6477,527)(6477,746)
\blacken\path(8276,755)(8315,755)(8315,536)
	(8276,536)(8276,755)
\path(8276,755)(8315,755)(8315,536)
	(8276,536)(8276,755)
\path(5720.000,588.000)(5600.000,558.000)(5720.000,528.000)
\path(5600,558)(9650,558)
\path(9530.000,528.000)(9650.000,558.000)(9530.000,588.000)
\path(8300,738)(8300,558)
\path(6500,738)(6500,558)
\path(7182,565)(7182,385)
\blacken\path(6764,565)(6725,400)(6803,400)
	(6764,557)(6764,565)
\path(6764,565)(6725,400)(6803,400)
	(6764,557)(6764,565)
\blacken\path(7634,573)(7595,408)(7673,408)
	(7634,565)(7634,573)
\path(7634,573)(7595,408)(7673,408)
	(7634,565)(7634,573)
\path(8075,565)(8075,385)
\put(3710,1867){\makebox(0,0)[lb]{\smash{{\SetFigFont{9}{10.8}{\familydefault}{\mddefault}{\updefault}$\lambda_{t_{s_2}}$}}}}
\put(2675,2677){\makebox(0,0)[lb]{\smash{{\SetFigFont{9}{10.8}{\familydefault}{\mddefault}{\updefault}$1$}}}}
\put(830,2677){\makebox(0,0)[lb]{\smash{{\SetFigFont{9}{10.8}{\familydefault}{\mddefault}{\updefault}$0$}}}}
\put(1262,1889){\makebox(0,0)[lb]{\smash{{\SetFigFont{9}{10.8}{\familydefault}{\mddefault}{\updefault}$\lambda_p^{s_1}$}}}}
\put(3035,1867){\makebox(0,0)[lb]{\smash{{\SetFigFont{9}{10.8}{\familydefault}{\mddefault}{\updefault}$\lambda_{t_{s_1}}$}}}}
\put(2345,1877){\makebox(0,0)[lb]{\smash{{\SetFigFont{9}{10.8}{\familydefault}{\mddefault}{\updefault}$\lambda_p^{s_2}$}}}}
\put(2667,888){\makebox(0,0)[lb]{\smash{{\SetFigFont{9}{10.8}{\familydefault}{\mddefault}{\updefault}$1$}}}}
\put(822,888){\makebox(0,0)[lb]{\smash{{\SetFigFont{9}{10.8}{\familydefault}{\mddefault}{\updefault}$0$}}}}
\put(1254,100){\makebox(0,0)[lb]{\smash{{\SetFigFont{9}{10.8}{\familydefault}{\mddefault}{\updefault}$\lambda_p^{s_1}$}}}}
\put(3702,78){\makebox(0,0)[lb]{\smash{{\SetFigFont{9}{10.8}{\familydefault}{\mddefault}{\updefault}$\lambda_{t_{s_2}}$}}}}
\put(1826,78){\makebox(0,0)[lb]{\smash{{\SetFigFont{9}{10.8}{\familydefault}{\mddefault}{\updefault}$\lambda_{t_{s_1}}$}}}}
\put(2344,89){\makebox(0,0)[lb]{\smash{{\SetFigFont{9}{10.8}{\familydefault}{\mddefault}{\updefault}$\lambda_p^{s_2}$}}}}
\put(8277,2703){\makebox(0,0)[lb]{\smash{{\SetFigFont{9}{10.8}{\familydefault}{\mddefault}{\updefault}$1$}}}}
\put(6432,2703){\makebox(0,0)[lb]{\smash{{\SetFigFont{9}{10.8}{\familydefault}{\mddefault}{\updefault}$0$}}}}
\put(6582,1897){\makebox(0,0)[lb]{\smash{{\SetFigFont{9}{10.8}{\familydefault}{\mddefault}{\updefault}$\lambda_{t_{s_1}}$}}}}
\put(7111,1909){\makebox(0,0)[lb]{\smash{{\SetFigFont{9}{10.8}{\familydefault}{\mddefault}{\updefault}$\lambda_p^{s_1}$}}}}
\put(7948,1875){\makebox(0,0)[lb]{\smash{{\SetFigFont{9}{10.8}{\familydefault}{\mddefault}{\updefault}$\lambda_p^{s_2}$}}}}
\put(9259,1885){\makebox(0,0)[lb]{\smash{{\SetFigFont{9}{10.8}{\familydefault}{\mddefault}{\updefault}$\lambda_{t_{s_2}}$}}}}
\put(8255,918){\makebox(0,0)[lb]{\smash{{\SetFigFont{9}{10.8}{\familydefault}{\mddefault}{\updefault}$1$}}}}
\put(6410,918){\makebox(0,0)[lb]{\smash{{\SetFigFont{9}{10.8}{\familydefault}{\mddefault}{\updefault}$0$}}}}
\put(8082,115){\makebox(0,0)[lb]{\smash{{\SetFigFont{9}{10.8}{\familydefault}{\mddefault}{\updefault}$\lambda_{t_{s_2}}$}}}}
\put(6692,123){\makebox(0,0)[lb]{\smash{{\SetFigFont{9}{10.8}{\familydefault}{\mddefault}{\updefault}$\lambda_p^{s_1}$}}}}
\put(7129,123){\makebox(0,0)[lb]{\smash{{\SetFigFont{9}{10.8}{\familydefault}{\mddefault}{\updefault}$\lambda_{t_{s_1}}$}}}}
\put(7602,135){\makebox(0,0)[lb]{\smash{{\SetFigFont{9}{10.8}{\familydefault}{\mddefault}{\updefault}$\lambda_p^{s_2}$}}}}
\end{picture}
}}
%\vspace*{-4mm}
 \caption{A few of the $\lambda$ positions for $ z' \neq  z$ and $ t_{z'} \neq  t_z$.}
\label{figure:oneLTwoSkoroGeneral}
% \vspace{-1em}
\end{figure}

Consider Equation~\ref{equation:BallsLineLTwoSkoro}.
There are four innermost clauses.
Suppose for each $\delta$ we define a set $H(\delta)$ which
denote the $\lambda$ points in $[0,1]$ that satisfy all the four contraints 
of  Equation~\ref{equation:BallsLineLTwoSkoro} as follows:
\[
H(\delta) = 
\left\{
\lambda \ \left|\ 
\begin{array}{l}
\norm{\vec{z}-\vec{s_1} + \lambda\vdot(\vec{z'}-\vec{z})}_{L_2} \, \leq \delta; 
\text{ and}\\
\abs{t_z-t_{s_1} + \lambda\vdot(t_{z'}-t_{z})}\,  \leq \delta; \text{ and}\\
\norm{\vec{z}-\vec{s_2} + \lambda\vdot(\vec{z'}-\vec{z})}_{L_2} \, \leq \delta; 
\text{ and}\\
\abs{t_z-t_{s_2} + \lambda\vdot(t_{z'}-t_{z})}\,  \leq \delta;
\text{ and}\\
\lambda \in [0,1]
\end{array}
\right .
\right\}.
\]
Then, $\delta^*$, the optimal value of the
solution of Equation~\ref{equation:BallsLineLTwoSkoro} is:
\[
\delta^* = \min_{\delta} \left(H(\delta) \neq \emptyset\right).\]
We can break down $H(\delta)$ into sets defined by individual constraints as follows:
\begin{align*}
H_1(\delta) & = \set{\lambda \mid \norm{\vec{z}-\vec{s_1} + \lambda\vdot(\vec{z'}-\vec{z})}_{L_2} \, \leq \delta}\\
H_2(\delta) & = \set{\lambda \mid \abs{t_z-t_{s_1} + \lambda\vdot(t_{z'}-t_{z})}\,  \leq \delta}\\
H_3(\delta) & = \set{\lambda \mid \norm{\vec{z}-\vec{s_2} + \lambda\vdot(\vec{z'}-\vec{z})}_{L_2} \, \leq \delta}\\
H_4(\delta) & = \set{\lambda \mid \abs{t_z-t_{s_2} + \lambda\vdot(t_{z'}-t_{z})}\,  \leq \delta}\\
H(\delta) & = H_1(\delta) \, \cap \,  H_2(\delta) \, \cap \,  H_3(\delta) \, \cap \,  H_4(\delta) \, \cap \,  [0,1]
\end{align*}
For $1\leq i < j \leq 4$, consider the equation defined by:
\[
\delta^*_{i,j} = \min_{\delta} \big(H_i(\delta)\, \cap\, H_j(\delta)\, \cap\, [0,1]
\ \neq \emptyset\big)
\]
It is clear that
\begin{equation}
 \label{Equation:DeltaStarGreaterMax}
\delta^* \geq \max_{1\leq i < j \leq 4} \delta^*_{i,j}.
\end{equation}
This is because if 
$\lambda \in H(\delta) $, then $\lambda \in H_i(\delta)\, \cap\, H_j(\delta)\, \cap\, [0,1]$.
We show that we in fact have equality:
\begin{equation}
 \label{Equation:DeltaStarEqualMax}
\delta^* =  \max_{1\leq i < j \leq 4} \delta^*_{i,j}.
\end{equation}
Observe that for any $\delta \geq  \max_{1\leq i < j \leq 4} \delta^*_{i,j}$, we have:
\begin{compactitem}
\item $\big(H_i(\delta)  \cap [0,1]\big)\,  \cap\, \big(H_j(\delta)\, \cap\, [0,1]\big)
\ \neq \emptyset$ for all $1\leq i < j \leq 4$ since 
$\delta >  \delta^*_{i,j}$.
\item $H_i(\delta)  \cap [0,1]$ is an interval $[\alpha_i, \alpha_i']$ with
$\alpha_i' \geq \alpha_i$ for all $1\leq i\leq 4$.
\end{compactitem}
The second fact, which states that
the set of $\lambda$ values on a line such that the corresponding line points are at
most $\delta$ away from some some point is a closed interval,
 follows from
Equation~\ref{equation:HInftyBoundaryOneDim} and
Lemma~\ref{lemma:BallLineIntersect}.
Thus, from Helley's theorem (Proposition~\ref{proposition:Helley}) with $d=1$, we have that
$\displaystyle\cap_{i=1}^4\left(H_i(\delta)  \cap [0,1]\right)$ is not
empty.
This means that $\delta^* \leq \max_{1\leq i < j \leq 4} \delta^*_{i,j}$.
Using Equation~\ref {Equation:DeltaStarGreaterMax}, we get
$\delta^* =  \max_{1\leq i < j \leq 4} \delta^*_{i,j}$.

We now observe the following facts.
\begin{compactitem}
\item 
$ \min_{\delta} \big(H_1(\delta)\, \cap\, H_2(\delta)\, \cap\, [0,1]
\ \neq \emptyset\big)$ which equals
\[
\min_{\delta \geq 0} \left(\text{There exists } \lambda \in [0,1]
\text{ such that }
\left\{
\begin{array}{l}
\norm{\vec{z}-\vec{s_1} + \lambda\vdot(\vec{z'}-\vec{z})}_{L_2} \, \leq \delta; 
\text{ and}\\
\abs{t_z-t_{s_1} + \lambda\vdot(t_{z'}-t_{z})}\,  \leq \delta
\end{array}
\right\}
\right)
\]
is equal to
$\dist_{L_2^{\skoro}}\big(\tuple{s_1,t_{s_1}}, \myline(\tuple{z,t_z}, \tuple{z,t_{z'}})\big)$
from Equation~\ref{equation:PointLineLTwoSkoro}.

\item $ \min_{\delta} \big(H_1(\delta)\, \cap\, H_3(\delta)\, \cap\, [0,1]
\ \neq \emptyset\big)$ which equals
\[
\min_{\delta \geq 0} \left(\text{There exists } \lambda \in [0,1]
\text{ such that }
\left\{
\begin{array}{l}
\norm{\vec{z}-\vec{s_1} + \lambda\vdot(\vec{z'}-\vec{z})}_{L_2} \, \leq \delta; 
\text{ and}\\
\norm{\vec{z}-\vec{s_2} + \lambda\vdot(\vec{z'}-\vec{z})}_{L_2} \, \leq \delta; 
\end{array}
\right\}
\right)
\]
is equal to $\min_{\delta \geq 0}\set{\delta \mid 
 \ball_{L_2}(s_1,\delta) \cap \ball_{L_2}(s_2, \delta) \cap \myline(z,z') \neq\emptyset}$.

\item $ \min_{\delta} \big(H_1(\delta)\, \cap\, H_4(\delta)\, \cap\, [0,1]
\ \neq \emptyset\big)$ which equals
\[
\min_{\delta \geq 0} \left(\text{There exists } \lambda \in [0,1]
\text{ such that }
\left\{
\begin{array}{l}
\norm{\vec{z}-\vec{s_1} + \lambda\vdot(\vec{z'}-\vec{z})}_{L_2} \, \leq \delta; 
\text{ and}\\
\abs{t_z-t_{s_2} + \lambda\vdot(t_{z'}-t_{z})}\,  \leq \delta
\end{array}
\right\}
\right)
\]
is equal to
$\dist_{L_2^{\skoro}}\big(\tuple{s_1,t_{s_2}}, \myline(\tuple{z,t_z}, \tuple{z,t_{z'}})\big)$
from Equation~\ref{equation:PointLineLTwoSkoro}.

\item $ \min_{\delta} \big(H_2(\delta)\, \cap\, H_3(\delta)\, \cap\, [0,1]
\ \neq \emptyset\big)$ which equals
\[
\min_{\delta \geq 0} \left(\text{There exists } \lambda \in [0,1]
\text{ such that }
\left\{
\begin{array}{l}
\abs{t_z-t_{s_1} + \lambda\vdot(t_{z'}-t_{z})}\,  \leq \delta; \text{ and}\\
\norm{\vec{z}-\vec{s_2} + \lambda\vdot(\vec{z'}-\vec{z})}_{L_2} \, \leq \delta
\end{array}
\right\}
\right)
\]
which is equal to
$\dist_{L_2^{\skoro}}\big(\tuple{s_2,t_{s_1}}, \myline(\tuple{z,t_z}, \tuple{z,t_{z'}})\big)$
from Equation~\ref{equation:PointLineLTwoSkoro}.

\item 
$ \min_{\delta} \big(H_2(\delta)\, \cap\, H_4(\delta)\, \cap\, [0,1]
\ \neq \emptyset\big)$ which equals
\[
\min_{\delta \geq 0} \left(\text{There exists } \lambda \in [0,1]
\text{ such that }
\left\{
\begin{array}{l}
\abs{t_z-t_{s_1} + \lambda\vdot(t_{z'}-t_{z})}\,  \leq \delta; \text{ and}\\
\abs{t_z-t_{s_2} + \lambda\vdot(t_{z'}-t_{z})}\,  \leq \delta
\end{array}
\right\}
\right)
\]
is equal to $\delta^{\dagger}\left( t_{s_1},  t_{s_2}, \myline(t_{z'},t_{z})\right)  $ from
Equation~\ref{gather:OnlyTwoLinesSkoroDagger}.

\item 
$ \min_{\delta} \big(H_3(\delta)\, \cap\, H_4(\delta)\, \cap\, [0,1]
\ \neq \emptyset\big)$ which equals
\[
\min_{\delta \geq 0} \left(\text{There exists } \lambda \in [0,1]
\text{ such that }
\left\{
\begin{array}{l}
\norm{\vec{z}-\vec{s_2} + \lambda\vdot(\vec{z'}-\vec{z})}_{L_2} \, \leq \delta; 
\text{ and}\\
\abs{t_z-t_{s_2} + \lambda\vdot(t_{z'}-t_{z})}\,  \leq \delta
\end{array}
\right\}
\right)
\]
is equal to
$\dist_{L_2^{\skoro}}\big(\tuple{s_2,t_{s_2}}, \myline(\tuple{z,t_z}, \tuple{z,t_{z'}})\big)$
from Equation~\ref{equation:PointLineLTwoSkoro}.

\end{compactitem}

Thus, in case
$\tuple{z,t_z} \neq  \tuple{z',t_{z'}}$, and 
$t_z\neq t_{z'}$, and  $ z' \neq  z$, putting everything together, we have:
\[
\delta^* = \max\left(
\begin{array}{c}
\dist_{L_2^{\skoro}}\big(\tuple{s_1,t_{s_1}}, \myline(\tuple{z,t_z}, \tuple{z,t_{z'}})\big)\, , \\
\min_{\delta \geq 0}\set{\delta \mid 
 \ball_{L_2}(s_1,\delta) \cap \ball_{L_2}(s_2, \delta) \cap \myline(z,z') \neq\emptyset}\, ,\\
 \dist_{L_2^{\skoro}}\big(\tuple{s_1,t_{s_2}}, \myline(\tuple{z,t_z}, \tuple{z,t_{z'}})\big)\, , \\
 \dist_{L_2^{\skoro}}\big(\tuple{s_2,t_{s_1}}, \myline(\tuple{z,t_z}, \tuple{z,t_{z'}})\big)\, , \\
\delta^{\dagger}\left( t_{s_1},  t_{s_2}, \myline(t_{z'},t_{z})\right)\, ,\\
\dist_{L_2^{\skoro}}\big(\tuple{s_2,t_{s_2}}, \myline(\tuple{z,t_z}, \tuple{z,t_{z'}})
\end{array}
\right).
\]
We note that we have shown how to compute 
all the individual values inside the max set before.
% \medskip
% \mynote{In Progress}

The following 
function~\ref{function:BallsSkoroLTwo} 
combines everything together giving us 
Proposition~\ref{proposition:BallIntersectFinalLTwoSkoro}.

%\IncMargin{1em}
\begin{function}
  \SetKwInOut{Input}{Input}
  \SetKwInOut{Output}{Output}
  \Input{Points $\tuple{s_1,t_{s_1}}, \tuple{s_1,t_{s_2}},  \tuple{z,t_z}, \tuple{z',t_{z'}}$ in $\reals^n\times\reals$}
  \Output{Least $\delta$ such that 
    $\ball(\tuple{s_1,t_{s_1}} \delta) \cap \ball(\tuple{s_2,t_{s_1}}, \delta)  
    \cap \myline( \tuple{z,t_z}, \tuple{z',t_{z'}})$ is non-empty in $L_2^{\skoro}$ norm}
  \If(\tcp*[f]{ensure $ \lambda_{t_{s_2}} \geq 0$ and
      $ \lambda_{t_{s_2}} \geq\lambda_{t_{s_1}}$})
    {$ t_{z'} \neq  t_z$ }{
      $ \lambda_{t_{s_1}} = \frac{t_{s_1} - t_z }{t_{z'} -  t_z }$\;
      $ \lambda_{t_{s_2}} = \frac{t_{s_2} - t_z }{t_{z'} -  t_z }$\;
      \If{$ \lambda_{t_{s_2}} < 0$}{
        swap $ \tuple{z,t_z}$ with $\tuple{z',t_{z'}}$\;
        $ \lambda_{t_{s_1}} = 1-  \lambda_{t_{s_1}}$\;
        $ \lambda_{t_{s_2}} = 1-  \lambda_{t_{s_2}}$\;
      }
      \If{$ \lambda_{t_{s_2}} <  \lambda_{t_{s_1}}$}{
        swap $t_{s_2}$ with $t_{s_1}$\;
        swap $ \lambda_{t_{s_2}} $ with $  \lambda_{t_{s_1}}$\;
      } 
    }
    \If(\tcp*[f]{we need $\delta^{\dagger}$   twice})
    {$ t_{z'} \neq  t_z$ }{
      $\delta^{\dagger}:=  \begin{cases}
        \abs{t_{s_1} -  t_{z} } & \text{ if }  \abs{t_{s_1} -  t_{z} }<  \abs{t_{s_2} -  t_{z}}\\
        \abs{t_{s_2}-t_{s_1}}/2
        & \text{ if }  \abs{t_{s_1} -  t_{z} }\geq   \abs{t_{s_2} -  t_{z}} \text{ and }
        \abs{t_{s_1} -  t_{z'}} \leq  \abs{t_{s_2} -  t_{z'}}\\
        \abs{t_{s_2} - t_{z'}} & 
        \text{ otherwise. }
      \end{cases}$\;
    }
    \Switch{ $\tuple{s_1,t_{s_1}}, \tuple{s_1,t_{s_2}},  \tuple{z,t_z}, \tuple{z',t_{z'}}$}{
      \lCase{$\tuple{s_1,t_{s_1}} =  \tuple{s_1,t_{s_2}}$}{
        \Return{$\dist_{L_2^{\skoro}}\left(\tuple{s_1,t_{s_1}} , \myline( \tuple{z,t_z}, \tuple{z',t_{z'}})\right)$\;}
      }
      \lCase{ $\tuple{z,t_z} = \tuple{z',t_{z'}}$}{
        \Return{ $\max\big(\norm{s_1-z}_{L_2}, \abs{t_{s_1}- t_z}, 
          \norm{s_2-z}_{L_2}, \abs{t_{s_2}- t_z}\big)$\;}}
      \lCase{$z= z'$ and $ t_{z'} \neq  t_z$}{
        \Return{ $\max\big( \norm{z-s_1}_{L_2}, \norm{z-s_2}_{L_2}, \delta^{\dagger}\big)$\;}
      }
      \uCase{$z\neq  z'$ and $ t_{z'} = t_z$}{
        $ \delta_{L_2}:= \min_{\delta \geq 0}\set{\delta \mid 
          \ball_{L_2}(s_1,\delta) \cap \ball_{L_2}(s_2, \delta) \cap \myline(z,z') \neq\emptyset}$\;
        \tcp*[f]{ $\delta_{L_2}$ can be computed using procedure for $L_2$}\\
        \Return{$\max\big(\abs{t_z-t_{s_1}}, \abs{t_z-t_{s_2}}, \delta_{L_2}\big)$\;}
 }
 \uCase{$z\neq  z'$ and $ t_{z'} \neq t_z$}{
   $ \delta_{L_2}:= \min_{\delta \geq 0}\set{\delta \mid 
     \ball_{L_2}(s_1,\delta) \cap \ball_{L_2}(s_2, \delta) \cap \myline(z,z') \neq\emptyset}$\;
   \Return{$
     \max\left(
       \begin{array}{c}
         \dist_{L_2^{\skoro}}\big(\tuple{s_1,t_{s_1}}, \myline(\tuple{z,t_z}, \tuple{z,t_{z'}})\big)\, , \\
         % \min_{\delta \geq 0}\set{\delta \mid 
         % \ball_{L_2}(s_1,\delta) \cap \ball_{L_2}(s_2, \delta) \cap \myline(z,z') \neq\emptyset}\, ,\\
         \delta_{L_2}\, \\
         \dist_{L_2^{\skoro}}\big(\tuple{s_1,t_{s_2}}, \myline(\tuple{z,t_z}, \tuple{z,t_{z'}})\big)\, , \\
         \dist_{L_2^{\skoro}}\big(\tuple{s_2,t_{s_1}}, \myline(\tuple{z,t_z}, \tuple{z,t_{z'}})\big)\, , \\
          \delta^{\dagger}\, ,\\
          \dist_{L_2^{\skoro}}\big(\tuple{s_2,t_{s_2}}, \myline(\tuple{z,t_z}, \tuple{z,t_{z'}})
        \end{array}
      \right)$\;}

  }
 
  }
  \caption{$\Phi_{L_2^{\skoro}}$($\tuple{s_1,t_{s_1}}, \tuple{s_1,t_{s_2}},  \tuple{z,t_z}, \tuple{z',t_{z'}}$)}
%s_1, s_2, z, z'$)}
  \label{function:BallsSkoroLTwo}
%Note the format in the caption, function, arguments
\end{function}
%\decmargin{1em}
%\DecMargin{1em}

\begin{proposition}[Computation of least $\delta$ such that 
$\ball(\tuple{s_1,t_{s_1}} \delta) \cap \ball(\tuple{s_2,t_{s_1}}, \delta)  
\cap \myline( \tuple{z,t_z}, \tuple{z',t_{z'}})$ is non-empty: $L_2^{\skoro}$]
\label{proposition:BallIntersectFinalLTwoSkoro}
Given points $\tuple{s_1,t_{s_1}},  \tuple{s_2,t_{s_2}}, \tuple{z,t_z}, \tuple{z',t_{z'}}$
with $s_1,s_2, z,z'\in \reals^n$ and $t_{s_1}, t_{s_2}, ,t_z, t_{z'}\in \reals$.
Function~\ref{function:BallsSkoroLTwo} computes
\[\min_{\delta \geq 0}\set{\delta \mid 
\ball(\tuple{s_1,t_{s_1}}, \delta) \cap \ball(\tuple{s_2,t_{s_2}}, \delta)  \cap 
\myline(\tuple{z,t_z}, \tuple{z,t_{z'}}) \neq \emptyset}\]
for the $L_2^{\skoro}$ norm.
\qed
\end{proposition}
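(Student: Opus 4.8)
The plan is to prove correctness by a case analysis that mirrors the branches of Function~\ref{function:BallsSkoroLTwo}, showing in each branch that the returned expression equals $\delta^* = \min_{\delta\geq 0}\set{\delta \mid \ball(\tuple{s_1,t_{s_1}},\delta)\cap \ball(\tuple{s_2,t_{s_2}},\delta)\cap\myline(\tuple{z,t_z},\tuple{z',t_{z'}})\neq\emptyset}$. First I would note that $\delta^*$ is well defined: the feasible $\lambda$ range is the compact interval $[0,1]$ and every constraint in Equation~\ref{equation:BallsLineLTwoSkoro} is a closed condition, so the minimum is attained. I must also check that the normalization at the top of the function preserves $\delta^*$; this is exactly the content of the two symmetry observations recorded in the text, namely that swapping $t_{s_1}$ with $t_{s_2}$ leaves Equation~\ref{equation:BallsLineLTwoSkoro} invariant, and that swapping $\tuple{z,t_z}$ with $\tuple{z',t_{z'}}$ merely replaces each $\lambda$ by $1-\lambda$. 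The two fully degenerate branches are then immediate: when $\tuple{s_1,t_{s_1}}=\tuple{s_2,t_{s_2}}$ the two balls coincide and $\delta^*$ is a single $L_2^{\skoro}$ point-to-line distance computed by Proposition~\ref{proposition:PointLineLTwoSkoro}; when $\tuple{z,t_z}=\tuple{z',t_{z'}}$ the line degenerates to a point and, via Equation~\ref{equation:MaxSkoroLTwo}, $\delta^*$ expands to the returned maximum of the four scalars $\norm{s_1-z}_{L_2}$, $\abs{t_{s_1}-t_z}$, $\norm{s_2-z}_{L_2}$, $\abs{t_{s_2}-t_z}$.

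Next I would verify the two partially degenerate branches. When $z=z'$ but $t_z\neq t_{z'}$, the two $L_2$ constraints collapse to the constant bounds $\norm{z-s_1}_{L_2}\leq\delta$ and $\norm{z-s_2}_{L_2}\leq\delta$, leaving only the two one-dimensional timing constraints; these are precisely the pair governed by $\delta^{\dagger}$ of Equation~\ref{gather:OnlyTwoLinesSkoroDagger}, so $\delta^*$ is the maximum of the two $L_2$ bounds and $\delta^{\dagger}$, matching the branch. Symmetrically, when $t_z=t_{z'}$ but $z\neq z'$, the timing constraints become constant and the residual problem is the $L_2$ ball-intersection primitive, so $\delta^*$ is the maximum of $\abs{t_z-t_{s_1}}$, $\abs{t_z-t_{s_2}}$, and the $L_2$ value. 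Each of these reductions is a short algebraic rewriting of Equation~\ref{equation:BallsLineLTwoSkoro} once the degenerate constraints are recognized as independent of $\lambda$.

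The crux is the general branch $z\neq z'$ and $t_z\neq t_{z'}$, where all four constraints are active. Here I would define the four $\lambda$-sets $H_1,\dots,H_4$ as in the text and observe, using Equation~\ref{equation:HInftyBoundaryOneDim} together with Lemma~\ref{lemma:BallLineIntersect}, that each $H_i(\delta)\cap[0,1]$ is a closed interval in $\reals^1$. The key step, and the one I expect to be the main obstacle, is the application of Helly's theorem (Proposition~\ref{proposition:Helley}) with $d=1$: for any $\delta\geq\max_{i<j}\delta^*_{i,j}$ every pairwise intersection $H_i(\delta)\cap H_j(\delta)\cap[0,1]$ is nonempty, whence the full four-fold intersection is nonempty and $\delta^*\leq\max_{i<j}\delta^*_{i,j}$; the reverse inequality holds trivially since $H(\delta)\subseteq H_i(\delta)\cap H_j(\delta)$, giving $\delta^*=\max_{i<j}\delta^*_{i,j}$. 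This is what lets us avoid reasoning simultaneously about four expanding intervals and instead consider only pairs. Finally I would identify each of the six pairwise values with a previously computed primitive: the four pairs that mix an $L_2$ spatial constraint with a timing constraint, namely $(H_1,H_2)$, $(H_1,H_4)$, $(H_2,H_3)$, $(H_3,H_4)$, are $L_2^{\skoro}$ point-to-line distances computed via Equation~\ref{equation:PointLineLTwoSkoro} and Proposition~\ref{proposition:PointLineLTwoSkoro}; the spatial pair $(H_1,H_3)$ is the $L_2$ ball-intersection primitive; and the timing pair $(H_2,H_4)$ is $\delta^{\dagger}$ from Equation~\ref{gather:OnlyTwoLinesSkoroDagger}. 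Taking the maximum of these six values is exactly the expression returned by the last branch of Function~\ref{function:BallsSkoroLTwo}, which completes the proof.
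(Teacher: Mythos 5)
Your proposal is correct and follows essentially the same route as the paper: the same symmetry normalizations, the same case analysis mirroring the branches of Function~\ref{function:BallsSkoroLTwo}, and in the general case the same decomposition into the four $\lambda$-interval sets $H_1,\dots,H_4$ with Helly's theorem in dimension one reducing the problem to the six pairwise minima, each identified with a previously computed primitive. No gaps beyond those already present in the paper's own treatment.
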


\section{The Skorokhod Distance Algorithm}
\label{section:SkorokhodFinal}

Using the reduction from Proposition~\ref{proposition:SkoroToFrechet}
and the algorithm from Theorem~\ref{theorem:FrechetAlgo}, 
together with the procedures from
Propositions~\ref{proposition:PointLineLOneSkoro}, \ref{proposition:BallsIntersectLOneSkoro}, \ref{proposition:PointLineLTwoSkoro}, \ref{proposition:BallIntersectFinalLTwoSkoro}, \ref{proposition:DistancePointLineInfty} and~\ref{proposition:BallsIntersectLInfty}
for computing geometric primitives,
we obtain
the complete algorithm for computing the continuous time  Skorokhod distance between
two polygonal traces for the $L_1, L_2$ and $L_{\infty}$ norms.

\begin{theorem}
\label{theorem:SkoroFinal}
Let $x : [0,T_x] \mapsto \reals^n$  and $y : [0,T_y] \mapsto \reals^n$ be two polygonal  traces  with $m_x$ and $m_y$ affine segments respectively.
The continuous time Skorokhod distance between them, denoted
$\skoro_{\chi}(x, y) $ for the norm $\chi\in \set{L_1, L_2, L_{\infty}}$ can be computed in time:
\[
O\Big(
\left(m_y\vdot m_x^2 + m_x\vdot m_y^2\right)\vdot
\big( P(\chi)  + \log(m_y\vdot m_x) \big)\ 
+\ 
m_x\vdot m_y \vdot H(\chi) 
 \Big)
\]
where
\begin{compactitem}
\item for $L_1$, we have $P(L_1) = \LP(n)$ and $H(L_1) = n^2$.
\item for $L_2$, we have  $P(L_2) = n $ and   $H(L_2) = n$.
\item for $L_{\infty}$, we have $P(L_{\infty}) = \LP(n)$ and $H(L_{\infty}) = n$.
\end{compactitem}
and where $\LP(n)$ is the (polynomial-time) upper bound for linear programming.
The corresponding decision problem can be solved in time
$O(m_x\vdot m_y \vdot H(\chi) )$.
\qed
\end{theorem}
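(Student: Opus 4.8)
The plan is to assemble the theorem mechanically from the reduction of Section~\ref{section:FrechetSetting} together with the Fréchet machinery of Section~\ref{section:FrechetMain}, instantiating the abstract primitive costs $H(\cdot)$ and $P(\cdot)$ with the concrete bounds established in Section~\ref{section:Geometry} for the three $\skoro$ norms. First I would invoke Proposition~\ref{proposition:SkoroToFrechet} to rewrite $\skoro_\chi(x,y) = \fre(\curve_x, \curve_y)$, where $\curve_x, \curve_y$ are the curves of Definition~\ref{def:CurveFromTrace} living in $\reals^n \times \reals$ under the $\chi^{\skoro}$ norm. The key observation is that appending the time coordinate $\curve_x^{\mytime}(\rho) = \rho$ to a polygonal trace is a linear (identity) map in the extra coordinate, so it introduces no new breakpoints: $\curve_x$ and $\curve_y$ remain polygonal curves in $\reals^{n+1}$ with exactly $m_x$ and $m_y$ affine segments. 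Thus computing the Skorokhod distance is \emph{exactly} the problem of computing a Fréchet distance between two polygonal curves, to which Theorem~\ref{theorem:FrechetAlgo} applies with $m_f = m_x$, $m_g = m_y$, and ambient norm $\chi^{\skoro}$.

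Next I would read off the overall shape of the bound directly from Theorem~\ref{theorem:FrechetAlgo}: the running time is
\[
O\Big((m_y m_x^2 + m_x m_y^2)\bigl(P(\chi^{\skoro}) + \log(m_y m_x)\bigr) + m_x m_y\, H(\chi^{\skoro})\Big),
\]
which matches the claimed expression once we set $P(\chi) := P(\chi^{\skoro})$ and $H(\chi) := H(\chi^{\skoro})$. It then remains only to bound the two primitive costs for each norm.

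For $H(\chi^{\skoro})$, the cost of computing the free-space cell boundaries of a single cell, I would cite the paragraph on computing cell boundaries for $L_1^{\skoro}, L_2^{\skoro}, L_{\infty}^{\skoro}$, which first computes the $\reals^n$-component boundary and then intersects it with the single-coordinate time boundary, yielding $O(n^2)$ for $L_1^{\skoro}$ and $O(n)$ for $L_2^{\skoro}$ and $L_{\infty}^{\skoro}$; hence $H(L_1) = n^2$ and $H(L_2) = H(L_{\infty}) = n$. For $P(\chi^{\skoro})$, the cost of one critical value of $\delta$, I would note that every such value is either a point-to-line distance or a two-balls-meet-a-line value, with the latter dominating. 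Proposition~\ref{proposition:BallsIntersectLOneSkoro} expresses the $L_1^{\skoro}$ clamping value as a size-$O(n)$ linear program, giving $P(L_1) = \LP(n)$; Proposition~\ref{proposition:BallsIntersectLInfty}, applicable since $L_{\infty}^{\skoro}$ coincides with $L_{\infty}$ on $\reals^{n+1}$, likewise gives a linear program, so $P(L_{\infty}) = \LP(n)$; and Proposition~\ref{proposition:BallIntersectFinalLTwoSkoro} supplies the closed-form $L_2^{\skoro}$ value in $O(n)$ operations, so $P(L_2) = n$. The point-to-line distances (Propositions~\ref{proposition:PointLineLOneSkoro}, \ref{proposition:PointLineLTwoSkoro}, \ref{proposition:DistancePointLineInfty}) are no more expensive and are absorbed. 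The decision-problem bound $O(m_x m_y H(\chi))$ follows immediately from Proposition~\ref{proposition:ReachParamDecision} under the same substitution.

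The main obstacle is not any single calculation but the bookkeeping that links the abstract primitives named in Theorem~\ref{theorem:FrechetAlgo} to the concrete $\skoro$-norm procedures: I must check that \emph{every} critical value enumerated in the discussion of critical $\delta$ values is computable by one of the cited $\skoro$-norm propositions, and that the time-augmentation preserves polygonality so that no extra segments (and hence no extra factors of $m_x$ or $m_y$) creep into the count. I expect the subtlest point to be confirming that the $L_{\infty}^{\skoro}$ case genuinely inherits the plain $L_{\infty}$ bounds in dimension $n+1$, so that its primitives need no separate treatment.
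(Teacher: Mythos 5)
Your proposal is correct and follows essentially the same route as the paper: reduce via Proposition~\ref{proposition:SkoroToFrechet} to a Fr\'echet computation in the $\chi^{\skoro}$ norm, apply Theorem~\ref{theorem:FrechetAlgo}, and instantiate $H(\chi)$ and $P(\chi)$ with the cell-boundary costs and the geometric-primitive procedures of Section~\ref{section:Geometry} (Propositions~\ref{proposition:PointLineLOneSkoro}, \ref{proposition:BallsIntersectLOneSkoro}, \ref{proposition:PointLineLTwoSkoro}, \ref{proposition:BallIntersectFinalLTwoSkoro}, \ref{proposition:DistancePointLineInfty}, \ref{proposition:BallsIntersectLInfty}), with the decision bound coming from Proposition~\ref{proposition:ReachParamDecision}. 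In fact your write-up is more explicit than the paper's one-sentence assembly, particularly in noting that time-augmentation preserves polygonality and segment counts, and that $L_{\infty}^{\skoro}$ coincides with $L_{\infty}$ in dimension $n+1$ --- both points the paper uses but states only in passing.
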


\smallskip\noindent\textbf{The Skorokhod  Distance with Windows.}
As for the \frechet distance, it is often of interest to
apply a window of $W$ in the the retimings.
The complexity of computing the value of the Skorokhod distance  in this case is 
$O\Big(
W^2\vdot M \vdot \big(P(\chi) + \log\left(W\vdot M \right)\big)+
W\vdot M\vdot \log\left(W\vdot M \right)\cdot H(\chi) 
\Big)
$, where $M= \max(m_f, m_g )$; and
$H(\chi),$ and $P(\chi)$ are as in Theorem~\ref{theorem:SkoroFinal}.
The corresponding decision problem runs in time $O(M\vdot W^2\vdot H(\chi) )$.
If $W$ can be taken to be a constant, the value computation
algorithm takes 
$O\Big(  M \cdot P(\chi)  + M\cdot \log(M)\cdot H(\chi) \Big)$ time,
and the decision problem can be solved in 
$O(M \vdot H(\chi) )$ time.

\section{Conclusion}
\label{section:Conclusion}

Our work presents the first algorithm for computing the
Skorokhod distance between polygonal traces in $\reals^n$; such traces
arise when sampled-time traces are completed by linear interpolation.
The individual dimensional values, and also the time, can be scaled by different constants
to suit the needs of a particular problem.
%\mynote{RM: changes I discussed in the email:}
Our  algorithm is fully polynomial time for the three norms
$L_1, L_2$ and $ L_{\infty}$, and runs in $O\left(m^3\vdot\log(m) \vdot \poly(n)\right)$ time,
where values come from $\reals^n$ and $m$ is the number of affine segments. 
%Assuming a constant dimension for $\reals^n$, the algorithm
%runs in $O(m^3\log(m))$ time, where $m$ is the number of affine segments.
\footnote{The algorithm can be improved to run in 
$O\left(m^2\vdot \log(m)\vdot \poly(n)\right)$ time using parametric search
and parallel sorting, as mentioned in~\cite{AltG95}  for \frechet metrics in $\reals^2$; 
however these improvements (which can be added on top
of the work presented) are not easily implementable and often involve huge
constants.}
The polynomial factor $\poly(n)$ depends on the norm, and represents the cost
of computing the two geometric primitives.
For example, for the $L_1$, $L_{\infty}$, and $L_1^{\skoro}$ norms, the computation involves linear programming, and 
for $L_2$ and $L_2^{\skoro}$, it requires solving multiple quadratic equations. 
Synergistically, our work also provides complete (and fully polynomial-time) algorithms
for computing the \frechet distance between curves in $\reals^n$ for
the five norms.

In many practical applications, one can restrict attention to retimings within a ``sliding window''
of constant size 
(\emph{e.g.}, we may want to match a point in one trace to a point at most 5 units away in the other trace).
We present a  window optimization that can be used to  gain further efficiency
(the decision problem can be then be solved in linear time assuming
a constant dimension for $\reals^n$).

% The algorithm computes the distance by computing 
% the \frechet metric in related normed spaces,
% which in turn is done by  dynamic programming involving computation of
% two geometric primitives.
% We provide polynomial time algorithms for computing these geometric
% primitives for five norms  on $\reals^n$; thus obtaining the full algorithm for
% the Skorokhod distance.
% For the $L_1, L_{\infty}$ and   $L_1^{\skoro}$ norms, the procedures for the
% geometric primitives use linear 
% programming; and  for $L_2$ and $L_2^{\skoro}$, the procedures involve solving 
% multiple quadratic equations.

While we consider the max norm modulo retimings in this work, an interesting question is to design
polynomial time algorithms for norms that involve summing the differences of two traces after retiming.

\bibliographystyle{alpha}
\bibliography{skoro}

\section{Appendix}
We show how to compute the parameters 
$\creach{\mya}_{i,j}^q, \creach{\myb}_{i,j}^q$ for $0\leq q\leq 3$, and $\cpoint_{i,j} $ 
 as follows.
We use the convention that when we we say
 a coordinate of
either $\creach{\mya}_{i,j}^q, \creach{\myb}_{i,j}^q$ has the value $d\in\reals$
 we mean the value $\tuple{d,\cdot}$;
we also let $\overline{d}$ stand for either $\tuple{d, \cdot}$ or
$\tuple{d, +}$.
We also define $\tuple{d, \psi} + \tuple{0, +} = \tuple{d, +}$ for 
$\psi\in\set{\cdot, +}$.
%; and
%$\tuple{d,\phi} + \tuple{0, -} = \tuple{d, -}$ for $\psi\in\set{\cdot, -}$.
The value $\bot$ added to anything is $\bot$.
We use the expected ordering for $\set{\cdot, +}$; namely
$\tuple{d,\cdot} < \tuple{d,+}$.
We observe that for  $i>0, j>0$, there exists a monotone curve from $(0,0)$
 to the free space
boundaries of cell $i,j$ iff one of the following three hold:

\begin{enumerate}

\item 
There exists a free space monotone curve to the right edge of cell $i-1,j$ which
can continue onto the free space of cell $i,j$.
See Figure~\ref{figure:CellLeft}
\begin{figure}[h]
%\vspace{-1em}
\strut\centerline{\setlength{\unitlength}{0.00034996in}
\begingroup\makeatletter\ifx\SetFigFont\undefined%
\gdef\SetFigFont#1#2#3#4#5{%
  \reset@font\fontsize{#1}{#2pt}%
  \fontfamily{#3}\fontseries{#4}\fontshape{#5}%
  \selectfont}%
\fi\endgroup%
{\renewcommand{\dashlinestretch}{30}
\begin{picture}(9387,6442)(0,-10)
\texture{55888888 88555555 5522a222 a2555555 55888888 88555555 552a2a2a 2a555555 
	55888888 88555555 55a222a2 22555555 55888888 88555555 552a2a2a 2a555555 
	55888888 88555555 5522a222 a2555555 55888888 88555555 552a2a2a 2a555555 
	55888888 88555555 55a222a2 22555555 55888888 88555555 552a2a2a 2a555555 }
\shade\path(4512,5707)(9012,5707)(9012,1207)
	(4512,1207)(4512,5707)
\path(4512,5707)(9012,5707)(9012,1207)
	(4512,1207)(4512,5707)
\path(1812,5977)(1812,5572)
\path(4512,937)(4512,1342)
\path(4512,937)(4512,1342)
\path(3612,5977)(3612,5572)
\path(6762,5977)(6762,5572)
\path(8112,5977)(8112,5572)
\path(9237,4807)(8787,4807)
\path(9192,3007)(8787,3007)
\path(6762,982)(6762,1387)
\path(6762,982)(6762,1387)
\path(4512,2107)(4062,2107)
\shade\path(12,5707)(4512,5707)(4512,1207)
	(12,1207)(12,5707)
\path(12,5707)(4512,5707)(4512,1207)
	(12,1207)(12,5707)
\whiten\path(4512,1207)(4512,3907)(6762,5707)
	(8112,5707)(9012,4807)(9012,3007)
	(6762,1207)(4512,1207)
\path(4512,1207)(4512,3907)(6762,5707)
	(8112,5707)(9012,4807)(9012,3007)
	(6762,1207)(4512,1207)
\whiten\path(12,3457)(1812,5707)(3612,5707)
	(4512,3907)(4512,2107)(12,2107)(12,3457)
\path(12,3457)(1812,5707)(3612,5707)
	(4512,3907)(4512,2107)(12,2107)(12,3457)
\path(4512,3907)(4062,3907)
\path(4422,3907)(4872,3907)
\put(4332,127){\makebox(0,0)[lb]{\smash{{\SetFigFont{9}{10.8}{\familydefault}{\mddefault}{\updefault}$\mya_{i,j}^3$}}}}
\put(4332,577){\makebox(0,0)[lb]{\smash{{\SetFigFont{9}{10.8}{\familydefault}{\mddefault}{\updefault}$\mya_{i,j}^0$}}}}
\put(9327,2962){\makebox(0,0)[lb]{\smash{{\SetFigFont{9}{10.8}{\familydefault}{\mddefault}{\updefault}$\mya_{i,j}^1$}}}}
\put(9372,4762){\makebox(0,0)[lb]{\smash{{\SetFigFont{9}{10.8}{\familydefault}{\mddefault}{\updefault}$\myb_{i,j}^1$}}}}
\put(8022,6112){\makebox(0,0)[lb]{\smash{{\SetFigFont{9}{10.8}{\familydefault}{\mddefault}{\updefault}$\myb_{i,j}^2$}}}}
\put(6627,6112){\makebox(0,0)[lb]{\smash{{\SetFigFont{9}{10.8}{\familydefault}{\mddefault}{\updefault}$\mya_{i,j}^2$}}}}
\put(6582,577){\makebox(0,0)[lb]{\smash{{\SetFigFont{9}{10.8}{\familydefault}{\mddefault}{\updefault}$\myb_{i,j}^0$}}}}
\put(1632,6112){\makebox(0,0)[lb]{\smash{{\SetFigFont{9}{10.8}{\familydefault}{\mddefault}{\updefault}$\creach{\mya}_{i-1,j}^2$}}}}
\put(3477,6112){\makebox(0,0)[lb]{\smash{{\SetFigFont{9}{10.8}{\familydefault}{\mddefault}{\updefault}$\creach{\myb}_{i-1,j}^2$}}}}
\put(3027,2332){\makebox(0,0)[lb]{\smash{{\SetFigFont{9}{10.8}{\familydefault}{\mddefault}{\updefault}$\creach{\mya}_{i-1,j}^1$}}}}
\put(5007,3907){\makebox(0,0)[lb]{\smash{{\SetFigFont{9}{10.8}{\familydefault}{\mddefault}{\updefault}$\myb_{i,j}^3$}}}}
\put(3027,4042){\makebox(0,0)[lb]{\smash{{\SetFigFont{9}{10.8}{\familydefault}{\mddefault}{\updefault}$\creach{\myb}_{i-1,j}^1$}}}}
\end{picture}
}}
%\vspace*{-4mm}
 \caption{Cell $i,j$ and  $i-1,j$ in $\free_{\delta}(f,g)$.}
\label{figure:CellLeft}
% \vspace{-1em}
\end{figure}
Note that if $\myline(\creach{\mya}_{i-1,j}^1, \creach{\myb}_{i-1,j}^1)$
is non-empty
% and
%$\creach{\mya}_{i-1,j}^1 \neq \creach{\myb}_{i-1,j}^1$,
 then $  \creach{\myb}_{i-1,j}^1 = {\myb}_{i,j}^3$.
This is because any strict monotone increasing curve from $(0,0)$
that can reach $\creach{\mya}_{i-1,j}^1$, passing through the interior of cell
$i-1, j$,  can be modified in cell $i-1, j$ to reach ${\myb}_{i-1,j}^1 (= {\myb}_{i,j}^3)$.

\item There exists a free space monotone curve to the top edge of cell $i,j-1$ which
can continue onto the free space of cell $i,j$.
See Figure~\ref{figure:CellBottom}
\begin{figure}[h]
%\vspace{-1em}
\strut\centerline{\setlength{\unitlength}{0.00034996in}
\begingroup\makeatletter\ifx\SetFigFont\undefined%
\gdef\SetFigFont#1#2#3#4#5{%
  \reset@font\fontsize{#1}{#2pt}%
  \fontfamily{#3}\fontseries{#4}\fontshape{#5}%
  \selectfont}%
\fi\endgroup%
{\renewcommand{\dashlinestretch}{30}
\begin{picture}(4905,9702)(0,-10)
\texture{55888888 88555555 5522a222 a2555555 55888888 88555555 552a2a2a 2a555555 
	55888888 88555555 55a222a2 22555555 55888888 88555555 552a2a2a 2a555555 
	55888888 88555555 5522a222 a2555555 55888888 88555555 552a2a2a 2a555555 
	55888888 88555555 55a222a2 22555555 55888888 88555555 552a2a2a 2a555555 }
\shade\path(30,9012)(4530,9012)(4530,4512)
	(30,4512)(30,9012)
\path(30,9012)(4530,9012)(4530,4512)
	(30,4512)(30,9012)
\whiten\path(912,4503)(12,5403)(12,7203)
	(912,8553)(1812,9003)(3612,9003)
	(4512,8103)(4512,6303)(3612,4953)
	(2712,4503)(912,4503)
\path(912,4503)(12,5403)(12,7203)
	(912,8553)(1812,9003)(3612,9003)
	(4512,8103)(4512,6303)(3612,4953)
	(2712,4503)(912,4503)
\path(1830,9237)(1830,8832)
\path(1830,4602)(1830,4197)
\path(885,4782)(885,4377)
\path(2730,4332)(2730,4737)
\path(2730,4332)(2730,4737)
\path(3585,9237)(3585,8832)
\path(4755,8112)(4305,8112)
\path(4710,6312)(4305,6312)
\shade\path(30,4512)(4530,4512)(4530,12)
	(30,12)(30,4512)
\path(30,4512)(4530,4512)(4530,12)
	(30,12)(30,4512)
\whiten\path(1380,12)(1380,4512)(2730,4512)
	(4530,3162)(4530,1587)(3180,12)
	(1335,12)(1380,12)
\path(1380,12)(1380,4512)(2730,4512)
	(4530,3162)(4530,1587)(3180,12)
	(1335,12)(1380,12)
\put(1605,9372){\makebox(0,0)[lb]{\smash{{\SetFigFont{9}{10.8}{\familydefault}{\mddefault}{\updefault}$\mya_{i,j}^2$}}}}
\put(3540,9372){\makebox(0,0)[lb]{\smash{{\SetFigFont{9}{10.8}{\familydefault}{\mddefault}{\updefault}$\myb_{i,j}^2$}}}}
\put(4890,8067){\makebox(0,0)[lb]{\smash{{\SetFigFont{9}{10.8}{\familydefault}{\mddefault}{\updefault}$\myb_{i,j}^1$}}}}
\put(4845,6267){\makebox(0,0)[lb]{\smash{{\SetFigFont{9}{10.8}{\familydefault}{\mddefault}{\updefault}$\mya_{i,j}^1$}}}}
\put(750,4917){\makebox(0,0)[lb]{\smash{{\SetFigFont{9}{10.8}{\familydefault}{\mddefault}{\updefault}$\mya_{i,j}^0$}}}}
\put(2505,4917){\makebox(0,0)[lb]{\smash{{\SetFigFont{9}{10.8}{\familydefault}{\mddefault}{\updefault}$\myb_{i,j}^0$}}}}
\put(2730,4017){\makebox(0,0)[lb]{\smash{{\SetFigFont{9}{10.8}{\familydefault}{\mddefault}{\updefault}$\creach{\myb}_{i,j-1}^2$}}}}
\put(1425,4017){\makebox(0,0)[lb]{\smash{{\SetFigFont{9}{10.8}{\familydefault}{\mddefault}{\updefault}$\creach{\mya}_{i,j-1}^2$}}}}
\end{picture}
}}
%\vspace*{-4mm}
 \caption{Cell $i,j$ and  $i,j-1$ in $\free_{\delta}(f,g)$.}
\label{figure:CellBottom}
% \vspace{-1em}
\end{figure}
We observe that if $\myline(\creach{\mya}_{i,j-1}^2, \creach{\myb}_{i,j-1}^2)$
is non-empty 
%and $\creach{\mya}_{i,j-1}^2 \neq \creach{\myb}_{i,j-1}^2$,
 then $  \creach{\myb}_{i,j-1}^2 = {\myb}_{i,j}^0$.
This is because any strict monotone increasing curve from $(0,0)$
that can reach $\creach{\mya}_{i,j-1}^2$, passing through the interior of cell
$i, j-1$,  can be modified in cell $i, j-1$ to reach ${\myb}_{i,j-1}^2 (= {\myb}_{i,j}^0)$.

\item There exists a free space monotone curve to the point $i,j$ of cell $i-1,j-1$
 which can continue onto the free space of cell $i,j$.
See Figure~\ref{figure:CellCorner}.
\begin{figure}[h]
%\vspace{-1em}
\strut\centerline{\setlength{\unitlength}{0.00034996in}
\begingroup\makeatletter\ifx\SetFigFont\undefined%
\gdef\SetFigFont#1#2#3#4#5{%
  \reset@font\fontsize{#1}{#2pt}%
  \fontfamily{#3}\fontseries{#4}\fontshape{#5}%
  \selectfont}%
\fi\endgroup%
{\renewcommand{\dashlinestretch}{30}
\begin{picture}(11100,10537)(0,-10)
\texture{55888888 88555555 5522a222 a2555555 55888888 88555555 552a2a2a 2a555555 
	55888888 88555555 55a222a2 22555555 55888888 88555555 552a2a2a 2a555555 
	55888888 88555555 5522a222 a2555555 55888888 88555555 552a2a2a 2a555555 
	55888888 88555555 55a222a2 22555555 55888888 88555555 552a2a2a 2a555555 }
\shade\path(6225,9802)(10725,9802)(10725,5302)
	(6225,5302)(6225,9802)
\path(6225,9802)(10725,9802)(10725,5302)
	(6225,5302)(6225,9802)
\whiten\path(6225,5302)(6225,8002)(8475,9802)
	(9825,9802)(10725,8902)(10725,7102)
	(8475,5302)(6225,5302)
\path(6225,5302)(6225,8002)(8475,9802)
	(9825,9802)(10725,8902)(10725,7102)
	(8475,5302)(6225,5302)
\shade\path(1725,5302)(6225,5302)(6225,802)
	(1725,802)(1725,5302)
\path(1725,5302)(6225,5302)(6225,802)
	(1725,802)(1725,5302)
\whiten\path(1725,3502)(3525,5302)(6225,5302)
	(6225,2602)(4425,802)(2625,802)
	(1725,1702)(1725,3502)
\path(1725,3502)(3525,5302)(6225,5302)
	(6225,2602)(4425,802)(2625,802)
	(1725,1702)(1725,3502)
\path(3525,5527)(3525,5122)
\path(1455,3502)(1905,3502)
\path(1410,1702)(1860,1702)
\path(6045,2602)(6495,2602)
\path(2625,532)(2625,937)
\path(2625,532)(2625,937)
\path(4425,532)(4425,937)
\path(4425,532)(4425,937)
\path(8475,10027)(8475,9622)
\path(9825,10027)(9825,9622)
\path(10995,8902)(10545,8902)
\path(10950,7102)(10545,7102)
\path(8475,5077)(8475,5482)
\path(8475,5077)(8475,5482)
\path(6045,8002)(6495,8002)
\put(8340,10207){\makebox(0,0)[lb]{\smash{{\SetFigFont{9}{10.8}{\familydefault}{\mddefault}{\updefault}$\mya_{i,j}^2$}}}}
\put(9735,10207){\makebox(0,0)[lb]{\smash{{\SetFigFont{9}{10.8}{\familydefault}{\mddefault}{\updefault}$\myb_{i,j}^2$}}}}
\put(11085,8857){\makebox(0,0)[lb]{\smash{{\SetFigFont{9}{10.8}{\familydefault}{\mddefault}{\updefault}$\myb_{i,j}^1$}}}}
\put(11040,7057){\makebox(0,0)[lb]{\smash{{\SetFigFont{9}{10.8}{\familydefault}{\mddefault}{\updefault}$\mya_{i,j}^1$}}}}
\put(5460,8002){\makebox(0,0)[lb]{\smash{{\SetFigFont{9}{10.8}{\familydefault}{\mddefault}{\updefault}$\myb_{i,j}^3$}}}}
\put(8295,4627){\makebox(0,0)[lb]{\smash{{\SetFigFont{9}{10.8}{\familydefault}{\mddefault}{\updefault}$\myb_{i,j}^0$}}}}
\put(3390,5662){\makebox(0,0)[lb]{\smash{{\SetFigFont{9}{10.8}{\familydefault}{\mddefault}{\updefault}$\creach{\mya}_{i-1,j-1}^2$}}}}
\put(6585,2512){\makebox(0,0)[lb]{\smash{{\SetFigFont{9}{10.8}{\familydefault}{\mddefault}{\updefault}$\creach{\mya}_{i-1,j-1}^1$}}}}
\put(2400,127){\makebox(0,0)[lb]{\smash{{\SetFigFont{9}{10.8}{\familydefault}{\mddefault}{\updefault}$\creach{\mya}_{i-1,j-1}^0$}}}}
\put(4155,127){\makebox(0,0)[lb]{\smash{{\SetFigFont{9}{10.8}{\familydefault}{\mddefault}{\updefault}$\creach{\myb}_{i-1,j-1}^0$}}}}
\put(15,3592){\makebox(0,0)[lb]{\smash{{\SetFigFont{9}{10.8}{\familydefault}{\mddefault}{\updefault}$\creach{\myb}_{i-1,j-1}^3$}}}}
\put(15,1747){\makebox(0,0)[lb]{\smash{{\SetFigFont{9}{10.8}{\familydefault}{\mddefault}{\updefault}$\creach{\mya}_{i-1,j-1}^3$}}}}
\end{picture}
}}
%\vspace*{-4mm}
 \caption{Cell $i,j$ and  $i-1,j-1$ in $\free_{\delta}(f,g)$.}
\label{figure:CellCorner}
% \vspace{-1em}
\end{figure}
However, this case can be subsumed in the previous two, as the point
 $i,j$ of cell $i-1,j-1$ is also a point of cell $i-1, j$, \emph{i.e.}, the cell to
the left of cell  $i,j$.

\end{enumerate}

 Note that in contrast to the free space, the portion of the cell reachable by a 
 strict monotone increasing curve from $(0,0)$ need not be convex.
Based on the above analysis, we define  monotone curve reachable free space
cell boundaries as follows recursively. In the following, we assume
that the left, or bottom  free space cell boundaries of cell $i,j$ are non-empty. 
If they both are empty, then the $\creach{\mya}_{i,j}^q, \creach{\myb}_{i,j}^q$
parameters get the value $\bot$, as in this case no monotone increasing
curve can enter cell  $i,j$.
Moreover, for all $q$, if ${\mya}_{i,j}^q = \bot$ then
$\creach{\mya}_{i,j}^q = \bot$; and
if ${\myb}_{i,j}^q = \bot$ then
$\creach{\myb}_{i,j}^q = \bot$.

\begin{itemize}

\item[\textbf{Boundary Point.}]
  Intuitively, this set contains the left and bottommost boundary point if it
  is reachable from a strict monotone curve
  We define $\cpoint_{0,0}= \set{(0,0)}$.
  For $i>0, j>0$, we define   $\cpoint_{i,0} = \cpoint_{0,j} = \emptyset$.
  Note that $(i,0)$ and $(0,j)$ for $i>0, j>0$
  are not reachable by a monotone curve from $(0,0)$ since both coordinates must
  keep \emph{strictly} increasing.
  For $i>0, j>0$ we define 
  $\cpoint_{i,j} = \set{(i,j)}$ iff $\first( \creach{\myb}_{i-1,j-1}^2) = i$ or
  $ \second(\creach{\myb}_{i-1,j-1}^1) = j$. 
  See Figure~\ref{figure:CellCorner} which illustrates
  the case when $\cpoint_{i,j} = \set{(i,j)}$.

\item[\textbf{Left Subline.}] 
We define
$\creach{\mya}_{0,0}^3 =  \creach{\myb}_{0,0}^3 = 0 $ if
$(0,0) \in \free_{\delta}(f,g)$; otherwise these values are $\bot$.
For all $j>0$ we define
$\creach{\mya}_{0,j}^3 = \creach{\myb}_{i,0}^3 =  \bot$.
For  $i>0, j\geq 0$, we define
% $\creach{\mya}_{i,j}^3 = \creach{\myb}_{i,j}^3 =  \bot$ if the intersection of the
% line segment $\myline(\creach{\mya}_{i-1,j}^1, \creach{\myb}_{i-1,j}^1) $
% (this line segment is empty if any of the endpoints are $\bot$)
%  with
% $\myline(\mya_{i,j}^3, \myb_{i,j}^3) $ is empty. 
%Otherwise 
$\creach{\mya}_{i,j}^3 =\creach{\mya}_{i-1,j}^1$, and
$\creach{\myb}_{i,j}^3 = \creach{\myb}_{i-1,j}^1$.
%  $\max(\creach{\mya}_{i-1,j}^1, {\mya}_{i,j}^3)$;
% and
% $\creach{\myb}_{i,j}^3 = \min(\creach{\myb}_{i-1,j}^1, {\myb}_{i,j}^3)$ where
% we use the lexicographic ordering to compare the points.
That is, we reduce the computation of the reachable portion of the  left boundary
of cell $i,j$
to the computation of the reachable portion of the right boundary of cell
$i-1, j$.
See Figure~\ref{figure:CellLeft} which illustrates this case.

\item[\textbf{Bottom Subline.}] 
We define
 $\creach{\mya}_{0,0}^0 =  \creach{\myb}_{0,0}^0 = 0 $ if
$(0,0) \in \free_{\delta}(f,g)$; otherwise these values are $\bot$.
We define
$\creach{\mya}_{i,0}^0 = \creach{\myb}_{i,0}^0 =  \bot$ for all $i>0$.
For  $i\geq 0, j>0$, we define
% $\creach{\mya}_{i,j}^0 = \creach{\myb}_{i,j}^0 =  \bot$ if the intersection of the
% line segment $\myline(\creach{\mya}_{i,j-1}^2, \creach{\myb}_{i,j-1}^2) $
% (this line segment is empty if any of the endpoints are $\bot$)
%  with
% $\myline(\mya_{i,j}^0, \myb_{i,j}^0) $ is empty.
%  Otherwise 
$\creach{\mya}_{i,j}^0 =\creach{\mya}_{i,j-1}^2$,
and $\creach{\myb}_{i,j}^0 =\creach{\myb}_{i,j-1}^2$.
 % $\max(\creach{\mya}_{i,j-1}^2, {\mya}_{i,j}^0)$;
% and
% $\creach{\myb}_{i,j}^0 = \min(\creach{\myb}_{i,j-1}^2, {\myb}_{i,j}^0)$ where
% we use the lexicographic ordering to compare the points.
That is, we reduce the computation of the reachable portion of the  bottom boundary
of cell $i,j$
to the computation of the reachable portion of the upper boundary of cell
$i, j-1$.

See Figure~\ref{figure:CellBottom} which illustrates this case.

\item[\textbf{Right Subline.}] 
We observe the following facts.
\begin{enumerate}
\item 
For cell $i,j$, if a point $(d,j)$ with $d<i+1$  on the bottom boundary of the cell $i,j$
is reachable by a  free space monotone curve, then
all points on $\myline({\mya}_{i,j}^1, {\myb}_{i,j}^1)$ are reachable
by a free space monotone continuation of that curve.
\item 
For cell $i,j$, let a point $p$ be reachable by a  free space monotone curve
passing through the left boundary point $(i,d)$.
Then any other free space monotone curve passing through a lower
left boundary point $(i,d')$ with $d'<d$ can be extended to be  a
free space monotone curve to $p$. 
Thus, the lower the point on the left boundary, the  ``better'' it is 
for a monotone curve to reach the most points on the right boundary.
\end{enumerate}
Using the above facts, we obtain the parameters 
$\creach{\mya}_{i,j}^1,  \creach{\myb}_{i,j}^1$ as follows.
We assume $(0,0)$ belongs to the free space, if not, then all parameters are $\bot$.
\begin{enumerate}
\item 
We define
$\creach{\mya}_{0,0}^1 = {\mya}_{0,0}^1$ if  $\second({\mya}_{0,0}^1)>  0$,
otherwise $\creach{\mya}_{0,0}^1 = (1, \tuple{0,+})$.
%(\tuple{1,\cdot}, \tuple{0,+})$.
Similarly, we also define $\creach{\myb}_{0,0}^1 = {\myb}_{0,0}^1$ if  
$\second({\myb}_{0,0}^1)> 0$,
otherwise $\creach{\myb}_{0,0}^1 = (1, \tuple{0,+})$.
%(\tuple{1,\cdot}, \tuple{0,+})$.
The reason is that any point on the 
 line segment $\myline({\mya}_{0,0}^1, {\myb}_{0,0}^1)$ can
be reached from $(0,0)$ by a  strict monotone curve, except the point $(1, 0)$.

\item For $i>0, j=0$ (\emph{i.e}, the first row), we define $\creach{\mya}_{i,0}^1$ and $ \creach{\myb}_{i,0}^1$ 
as follows using $\creach{\mya}_{i,0}^3$ and $\creach{\mya}_{i,0}^3$;
thus, we obtain the parameters for the right boundary using the parameters for the
left bpundary of the same cell. 
Note that in this case a monotone curve can enter the free space of cell $i,0$ only
from the cell $i-1,0$ (the cell to the left).
\begin{itemize}
\item 
If $\second(\creach{\mya}_{i,0}^3) \geq \second({\myb}_{i,0}^1)$, then 
 $\creach{\mya}_{i,0}^1 =  \creach{\myb}_{i,0}^1=  \bot$ as in this case
a strictly increasing  monotone curve cannot reach the right boundary of the cell $i,0$
at all.
\item Otherwise if  $\second(\creach{\mya}_{i,0}^3) < \second({\myb}_{i,0}^1)$,
\begin{compactitem}
\item 
If $\second(\creach{\mya}_{i,0}^3) =  \second({\mya}_{i,0}^1)$,
%(and $\second(\creach{\mya}_{i,0}^3) < \second({\myb}_{i,0}^1)$),  
then
$\creach{\mya}_{i,0}^1 = {\mya}_{i,0}^1  +  \tuple{+, 0}$ (we add a $+$
as the point  ${\mya}_{i,0}^1$ itself is unreachable by a strict monotone curve);
 and $\creach{\myb}_{i,0}^1 = {\myb}_{i,0}^1$.
If after this, we get $\creach{\mya}_{i,0}^1 > \creach{\myb}_{i,0}^1 $, we set both
to $\bot$.

\item 
Otherwise if $\second(\creach{\mya}_{i,0}^3) <  \second({\mya}_{i,0}^1)$,
%two cases do not hold),  then
then
$\creach{\mya}_{i,0}^1 = {\mya}_{i,0}^1$
%\max\left( \creach{\mya}_{i,0}^3 + \tuple{0,+}, {\mya}_{i,0}^1\right)$;
and $\creach{\myb}_{i,0}^1 = {\myb}_{i,0}^1$.
%If after this, we get $\creach{\mya}_{i,0}^1 > \creach{\myb}_{i,0}^1 $, we set both
%to $\bot$.

\end{compactitem}

\end{itemize}

\item  For $0, j>0$ (the first column), 
 we define $\creach{\mya}_{0,j}^1$ and $ \creach{\myb}_{0,j}^1$ 
as follows. 
Note that in this case a monotone curve can enter the free space of cell $0,j$ only
from the  top end of cell $0,j-1$.
If $\myline(\creach{\mya}_{0,j}^0,  \creach{\myb}_{0,j}^0)$ is empty,
then both 
$\creach{\mya}_{0,j}^1$ and $ \creach{\myb}_{0,j}^1$ 
have the value $\bot$.
Otherwise,
\begin{compactitem}
\item 
If $\first(\creach{\mya}_{0,j}^0) =\tuple{1, \cdot}$, 
(in this case we also have $\first(\creach{\myb}_{0,j}^0) =\tuple{1, \cdot})$, then
$\creach{\mya}_{0,j}^1 = \creach{\myb}_{0,j}^1 = (1, \tuple{j, \cdot}$, as the
bottom point of the right edge is the only point allowed by a  strictly increasing
monotone curve.
\item $\first(\creach{\mya}_{0,j}^0) <\tuple{1, \cdot}$, 
then,
$\creach{\mya}_{0,j}^1 = {\mya}_{0,j}^1$ and
$\creach{\myb}_{0,j}^1 = {\myb}_{0,j}^1$.
This is because all point on the right bpundary
 $\myline( {\mya}_{0,j}^1 ,  {\myb}_{0,j}^1$ 
can be reached from the point on the bottom  boundary
$\creach{\mya}_{0,j}^0$.
\end{compactitem}

\item  For $i>0, j>0$,

  \begin{compactitem}
  \item if $\cpoint_{i,j}  = \set{(i,j)}$, we have 
    $\creach{\mya}_{i,j}^1 = \max({\mya}_{i,j}^1, (i+1, \tuple{j,+}))$, and
    $\creach{\myb}_{i,j}^1 = \max({\myb}_{i,j}^1, (i+1, \tuple{j,+}))$.
    This is because the entire line segment
    $\myline({\mya}_{i,j}^1,{\myb}_{i,j}^1)$ except for the point
    $ (i+1, \tuple{j,\cdot})$ can be reached by a strictly increasing monotone
    curve from the corner point $(i,j)$.

  \item  if  $\myline(\creach{\mya}_{i,j}^0,  \creach{\myb}_{i,j}^0)$
    is non-empty  and $\cpoint_{i,j}  = \emptyset$:
    \begin{compactitem}
    \item 
      If $\first(\creach{\mya}_{i,j}^0) <\tuple{i+1, \cdot}$, then
%      or $\cpoint_{i,j}  = \set{(i,j)}$, then
      $\creach{\mya}_{i,j}^1 = {\mya}_{i,j}^1$ and
      $\creach{\myb}_{i,j}^1 = {\myb}_{i,j}^1$.
      This corresponds to the case where the free space monotone curve enters
      cell $i,j$ through either the corner point $i,j$, or through the bottom edge 
      (through a point which is not $(i+1,j)$ of the cell.
    \item 
      If  $\first(\creach{\mya}_{i,j}^0) = \tuple{i+1, \cdot}$ and
      $\cpoint_{i,j}  = \emptyset$, then
       $\creach{\mya}_{i,j}^1 = \creach{\myb}_{i,j}^1 = (i+1\tuple{j, \cdot})$.
       \emph{intially}.
       We change these values if step $\dagger$ below adds any more reachable
       points on the right boundary.
       The point $(i+1, j)$ will be covered by the corner point of cell $i+1,j$
       \footnote{
         Here we see the utility of keeping track of the corner points.
         If we required only non-decreasing monotone curves, then the whole
         segment $\myline({\mya}_{i,j}^1,  {\myb}_{i,j}^1)$ would be reachable
         from the corner point  $(i+1, j)$, and there would not be a  need to treat corner
         points differently.}.
     \end{compactitem}

   \item 
     Otherwise ($\myline(\creach{\mya}_{i,j}^0,  \creach{\myb}_{i,j}^0) $
     and $\cpoint_{i,j}  $ are empty, and the analysis is
          as in the case $i>0, j=0$  done previously.
          (the monotone increasing curve enters the cell at the left edge):
          \begin{compactitem}
          \item 
            If $\second(\creach{\mya}_{i,j}^3) \geq \second({\myb}_{i,j}^1)$, then 
            $\creach{\mya}_{i,j}^1 =  \creach{\myb}_{i,j}^1=  \bot$ as in this case
            a strictly increasing  monotone curve cannot reach the right boundary of the cell $i,j$
            at all.
          \item Otherwise if  $\second(\creach{\mya}_{i,j}^3) < \second({\myb}_{i,j}^1)$,
            (\textbf{\textsf{Step }$\dagger$})
            \begin{compactitem}
            \item 
              If $\second(\creach{\mya}_{i,j}^3) =  \second({\mya}_{i,j}^1)$,
              % (and $\second(\creach{\mya}_{i,0}^3) < \second({\myb}_{i,0}^1)$),  
              then
              $\creach{\mya}_{i,j}^1 = {\mya}_{i,j}^1  +  \tuple{+, 0}$ (we add a $+$
              as the point  ${\mya}_{i,j}^1$ itself is unreachable by a strict monotone curve);
              and $\creach{\myb}_{i,j}^1 = {\myb}_{i,j}^1$.
              If after this, we get $\creach{\mya}_{i,j}^1 > \creach{\myb}_{i,j}^1 $, we set both
              to $\bot$.
              
            \item 
              Otherwise if $\second(\creach{\mya}_{i,j}^3) <  \second({\mya}_{i,j}^1)$,
              % two cases do not hold),  then
              then
              $\creach{\mya}_{i,j}^1 =  {\mya}_{i,0}^1$
%              \max\left( \creach{\mya}_{i,j}^3 + \tuple{0,+}, {\mya}_{i,0}^1\right)$;
              and $\creach{\myb}_{i,j}^1 = {\myb}_{i,j}^1$.
            %  If after this, we get $\creach{\mya}_{i,j}^1 > \creach{\myb}_{i,j}^1 $, we set both
     %         to $\bot$.
              
            \end{compactitem}
            
          \end{compactitem}

          \begin{comment}
\begin{compactitem}
          \item 
            If $\creach{\mya}_{i,j}^3 \geq {\myb}_{i,j}^1$, then 
            $\creach{\mya}_{i,j}^1 =  \creach{\myb}_{i,j}^1=  \bot$ as in this case
            a strict monotone curve cannot reach the right boundary of the cell $i,j$.
          \item 
            Otherwise if $\creach{\mya}_{i,j}^3 =  {\mya}_{i,j}^1$ 
            (and $\creach{\mya}_{i,j}^3 < {\myb}_{i,j}^1$),  then
            $\creach{\mya}_{i,j}^1 = {\mya}_{i,j}^1  +  \tuple{+, 0}$ (we add a $+$
            as the point  ${\mya}_{i,j}^1$ itself is unreachable by a strict monotone curve);
            and $\creach{\myb}_{i,j}^1 = {\myb}_{i,j}^1$.
          \item 
            Otherwise if $\creach{\mya}_{i,j}^3 \neq  {\mya}_{i,j}^1$ (and the previous
          two cases do not hold),  then
          $\creach{\mya}_{i,j}^1 ={\mya}_{i,j}^1$;
          and $\creach{\myb}_{i,j}^1 = {\myb}_{i,j}^1$.
          Note that in this case a monotone curve can enter the free space of cell $i,j$ only
          from the cell $i-1,j$.          
\end{compactitem}
\end{comment}

\end{compactitem}
%\end{itemize}
\end{enumerate}

\item[\textbf{Top Subline.}]
The analysis of this case is similar to that for the right subline.
We observe the following facts.
\begin{enumerate}
\item 
For cell $i,j$, if a point $(i,d)$ with $d<j+1$  on the left boundary of the cell
is reachable by a  free space monotone curve, then
all points on $\myline({\mya}_{i,j}^2, {\myb}_{i,j}^2)$ are reachable
by a free space monotone continuation of that curve.
\item 
For cell $i,j$, let a point $p$ be reachable by a  free space monotone curve
passing through the bottom boundary point $(d,j)$.
Then any other free space monotone curve passing through a ``more left''
bottom boundary point $(d', j)$ with $d'<d$ can be extended to be  a
free space monotone curve to $p$. 
Thus, the more left the  point on the bottom boundary, the  ``better'' it is 
for a monotone curve to reach the most points on the top boundary.
\end{enumerate}
Using the above facts, we obtain the parameters 
$\creach{\mya}_{i,j}^2,  \creach{\myb}_{i,j}^2$ as follows.
\begin{enumerate}
\item 
We define
$\creach{\mya}_{0,0}^2 = {\mya}_{0,0}^2$ if  ${\mya}_{0,0}^2> ( 0, 1)$,
otherwise $\creach{\mya}_{0,0}^1 = (\tuple{0,+}, \tuple{1,\cdot})$.
Similarly, we also define $\creach{\myb}_{0,0}^2 = {\myb}_{0,0}^2$ if  
${\myb}_{0,0}^2> ( 0,1)$,
otherwise $\creach{\myb}_{0,0}^2 = (\tuple{0,+}, \tuple{1,\cdot})$.
The reason is that any point on the 
line segment $\myline({\mya}_{0,0}^2, {\myb}_{0,0}^2)$ can
be reached from $(0,0)$ by a  strict monotone curve, except the point $(0,1)$.

\item For $i=0, j>0$, we define $\creach{\mya}_{i,0}^1$ and $ \creach{\myb}_{i,0}^1$ 
as follows. 
Note that in this case a monotone curve can enter the free space of cell $i,0$ only
from the cell $0,j-1$.
\begin{itemize}
\item 
If $\creach{\mya}_{0,j}^0 \geq {\myb}_{0,j}^2$, then 
 $\creach{\mya}_{0, j}^2 =  \creach{\myb}_{0,j}^2=  \bot$ as in this case
a strictly increasing 
 monotone curve cannot reach the top boundary of the cell $i,0$.
\item 
Otherwise if $\creach{\mya}_{0,j}^0 = {\mya}_{0,j}^2$, then 
(and $\creach{\mya}_{0,j}^0 < {\myb}_{0,j}^2$), then 
$\creach{\mya}_{0,j}^2 = {\mya}_{0,j}^2  +  \tuple{0,+}$ (we add a $+$
as the point  ${\mya}_{0, j}^2$ itself is unreachable by a strict monotone curve);
 and $\creach{\myb}_{0,j}^2 = {\myb}_{0,j}^2$.

\item 
Otherwise if $\creach{\mya}_{0,j}^0  \neq {\mya}_{0,j}^2$,  
(and the previous
two cases do not hold),  then
$\creach{\mya}_{0,j}^2 = \max\left( \creach{\mya}_{0, j}^2, {\mya}_{0, j}^2 \right)$;
and $\creach{\myb}_{0,j}^2 = {\myb}_{0,j}^2$.
\end{itemize}

\item  For $i>0, j=0$, 
 we define $\creach{\mya}_{i,0}^2$ and $ \creach{\myb}_{i, 0}^2$ 
as follows. 
Note that in this case a monotone curve can enter the free space of cell $i, 0$ only
from the  right  end of cell $i-1, 0$.
If $\myline(\creach{\mya}_{i,0}^3,  \creach{\myb}_{i, 0}^3)$ is non-empty,
then $\creach{\mya}_{i,0}^2 = {\mya}_{i, 0}^2$ and
$\creach{\myb}_{i, 0}^2 = {\myb}_{i,0}^2$; otherwise both 
$\creach{\mya}_{i,0}^2$ and $ \creach{\myb}_{i,0}^2$ 
have the value $\bot$.

\item  For $i>0, j>0$, 
\begin{itemize}
\item 
if  either $\myline(\creach{\mya}_{i,j}^3,  \creach{\myb}_{i,j}^3)$ is
non-empty, or $\cpoint_{i,j} = \set{(i,j)}$, then
$\creach{\mya}_{i,j}^3 = {\mya}_{i,j}^3$ and
$\creach{\myb}_{i,j}^3 = {\myb}_{i,j}^3$.
This corresponds to the case where the free space monotone curve enters
cell $i,j$ through either the corner point $i,j$, or through the left edge of the cell.
\item 
Otherwise, we have as in the case $i=0, j>0$ above
(the monotone increasing curve enters the cell at the bottom edge):
\begin{itemize}
\item 
If $\creach{\mya}_{i,j}^0 \geq {\myb}_{i,j}^2$, then 
 $\creach{\mya}_{i,j}^2 =  \creach{\myb}_{i,j}^2=  \bot$ as in this case
a strict monotone curve cannot reach the top boundary of the cell $i,j$.
\item 
Otherwise if $\creach{\mya}_{i,j}^0 =  {\mya}_{i,j}^2$ 
(and $\creach{\mya}_{i,j}^0 < {\myb}_{i,j}^2$),  then
$\creach{\mya}_{i,j}^2 = {\mya}_{i,j}^0  +  \tuple{+, 0}$ (we add a $+$
as the point  ${\mya}_{i,j}^2$ itself is unreachable by a strict monotone curve);
 and $\creach{\myb}_{i,j}^2 = {\myb}_{i,j}^2$.

\item 
Otherwise if $\creach{\mya}_{i,j}^0 \neq  {\mya}_{i,j}^2$ (and the previous
two cases do not hold),  then
$\creach{\mya}_{i,j}^2 = \max\left( \creach{\mya}_{i,j}^0, {\mya}_{i,j}^2\right)$;
and $\creach{\myb}_{i,j}^2 = {\myb}_{i,j}^2$.
\end{itemize}
Note that in this case a monotone curve can enter the free space of cell $i,j$ only
from the cell $i,j-1$.
\end{itemize}

\end{enumerate}

\end{itemize}

\end{document}